\newcommand{\indrulename}[1]{\texttt{\textup{#1}}}
\newcommand{\emptyPremise}{\vphantom{@}}
\newcommand{\indruleNPos}[4]{
\begin{array}[#1]{c@{}r}
\!\!
 #3
\vspace{-.1cm}
\\
& \,#2\!\hspace{-.2cm}\vspace{-.2cm} \\
\cline{1-1}\vspace{-.3cm} \\
 \!\!#4\,\, 
\end{array}
}
\newcommand{\indrule}[3]{{\small\indruleNPos{b}{#1}{#2}{#3}}}
\renewcommand{\theenumi}{\arabic{enumi}}
\renewcommand{\theenumii}{\arabic{enumii}}
\renewcommand{\theenumiii}{\arabic{enumiii}}
\renewcommand\p@enumii{\theenumi.}
\renewcommand\p@enumiii{\theenumi.\theenumii.}
\renewcommand\p@enumiv{\theenumi.\theenumii.\theenumiii.}
\newcommand{\llem}[1]{\label{lemma:#1}}
\newcommand{\rlem}[1]{Lem.~\ref{lemma:#1}}
\newcommand{\ldef}[1]{\label{def:#1}}
\newcommand{\rdef}[1]{Def.~\ref{def:#1}}
\newcommand{\lprop}[1]{\label{prop:#1}}
\newcommand{\rprop}[1]{Prop.~\ref{prop:#1}}
\newcommand{\lthm}[1]{\label{thm:#1}}
\newcommand{\rthm}[1]{Thm.~\ref{thm:#1}}
\newcommand{\lremark}[1]{\label{remark:#1}}
\newcommand{\rremark}[1]{Rem.~\ref{remark:#1}}
\newcommand{\lcoro}[1]{\label{coro:#1}}
\newcommand{\rcoro}[1]{Coro.~\ref{coro:#1}}
\newcommand{\lsec}[1]{\label{section:#1}}
\newcommand{\rsec}[1]{Section~\ref{section:#1}}
\newcommand{\lexample}[1]{\label{example:#1}}
\newcommand{\rexample}[1]{Ex.~\ref{example:#1}}
\newcommand{\NotNow}[1]{}
\newcommand{\eg}{{\em e.g.}\xspace}
\newcommand{\ie}{{\em i.e.}\xspace}
\newcommand{\etal}{{\em et al.}}
\newcommand{\ih}{IH\xspace}
\newcommand{\ST}{\ |\ }
\newcommand{\HS}{\hspace{.5cm}}
\newcommand{\emptyseq}{\epsilon}
\newcommand{\set}[1]{\{#1\}}
\newcommand{\Nat}{\mathbb{N}}
\newcommand{\Natz}{\mathbb{N}_0}
\newcommand{\eqdef}{\overset{\mathrm{def}}{=}}
\newcommand{\lam}[2]{\lambda#1.\,#2}
\newcommand{\ulam}[2]{\underline{\lambda}#1.\,#2}
\newcommand{\app}[2]{#1\,#2}
\newcommand{\sub}[2]{[#1:=#2]}
\newcommand{\fv}[1]{\mathsf{fv}(#1)}
\newcommand{\garb}[1]{\bm{\{}#1\bm{\}}}
\newcommand{\bin}[2]{#1\garb{#2}}
\newcommand{\symg}{\mathbf{m}}
\newcommand{\lambdaG}{\lambda^{\symg}}
\newcommand{\tobeta}{\to_\beta}
\newcommand{\tobetad}[1]{\xrightarrow{#1}_\beta}
\newcommand{\rtobetad}[1]{\xrightarrow{#1}^{\raisebox{-1em}{{\scriptsize$\ast$}}}_\beta}
\newcommand{\tog}{\to_\symg}
\newcommand{\ptog}{\Rightarrow_\symg}
\newcommand{\tod}[1]{\xrightarrow{#1}_\symg}
\newcommand{\todplus}[1]{\xrightarrow{#1}^{\raisebox{-1em}{{\scriptsize$+$}}}_\symg}
\newcommand{\rtod}[1]{\xrightarrow{#1}^{\raisebox{-1em}{{\scriptsize$\ast$}}}_\symg}
\newcommand{\ptod}[1]{\overset{\!#1\,}{\Longrightarrow}_\symg}
\newcommand{\ptodplus}[1]{\ptod{#1}^{\raisebox{-1em}{{\scriptsize$+$}}}}
\newcommand{\tctx}{\Gamma}
\newcommand{\sctx}{\mathtt{L}}
\newcommand{\ctxhole}{\Box}
\newcommand{\gctx}{\mathtt{C}}
\newcommand{\of}[2]{#1[#2]}
\newcommand{\var}{x}
\newcommand{\vartwo}{y}
\newcommand{\varthree}{z}
\newcommand{\varfour}{w}
\newcommand{\ltm}{M}
\newcommand{\ltmtwo}{N}
\newcommand{\ltmthree}{P}
\newcommand{\tm}{t}
\newcommand{\tmtwo}{s}
\newcommand{\tmthree}{u}
\newcommand{\tmfour}{r}
\newcommand{\bbtyp}{0}
\newcommand{\btyp}{\alpha}
\newcommand{\btyptwo}{\beta}
\newcommand{\typ}{A}
\newcommand{\typtwo}{B}
\newcommand{\judg}[3]{#1\vdash#2:#3}
\newcommand{\typeof}[1]{\mathsf{type}(#1)}
\newcommand{\height}[1]{\mathsf{h}(#1)}
\newcommand{\weight}[1]{\mathsf{w}(#1)}
\newcommand{\maxdeg}[1]{\mathsf{maxdeg}(#1)}
\newcommand{\simp}[2]{\mathtt{S}_{#1}(#2)}
\newcommand{\simpk}[1]{\simp{k}{#1}}
\newcommand{\simpd}[1]{\simp{d}{#1}}
\newcommand{\simpfull}[1]{\simp{*}{#1}}
\newcommand{\shone}{\mathrel{\rhd}}
\newcommand{\meassym}{\mathcal{W}}
\newcommand{\meas}[1]{\meassym(#1)}
\newcommand{\rulePVar}{\indrulename{p-var}}
\newcommand{\rulePAbs}{\indrulename{p-abs}}
\newcommand{\rulePAppOne}{\indrulename{p-app$_1$}}
\newcommand{\rulePAppTwo}{\indrulename{p-app$_2$}}
\newcommand{\rulePBin}{\indrulename{p-wrap}}
\newcommand{\rulePCtxHole}{\indrulename{p-ctx-hole}}
\newcommand{\rulePCtxBin}{\indrulename{p-ctx-wrap}}
\newcommand{\redex}{R}
\newcommand{\redextwo}{S}
\newcommand{\redexthree}{T}
\newcommand{\mstep}{\mathbf{R}}
\newcommand{\msteptwo}{\mathbf{S}}
\newcommand{\redseq}{\rho}
\newcommand{\redseqtwo}{\sigma}
\newcommand{\steptomstep}[1]{\mathtt{sim}(#1)}
\newcommand{\msteptoderiv}[1]{\mathtt{red}(#1)}
\newcommand{\protract}[2]{#1^\curvearrowright#2}
\newcommand{\retract}[2]{#1^\curvearrowleft#2}
\newcommand{\src}[1]{#1^{\mathsf{src}}}
\newcommand{\tgt}[1]{#1^{\mathsf{tgt}}}
\newcommand{\mset}{\mathfrak{m}}
\newcommand{\msettwo}{\mathfrak{n}}
\newcommand{\ms}[1]{[#1]}
\newcommand{\msempty}{\ms{\,}}
\newcommand{\msb}[2]{\ms{#1 \ ||\ #2}}
\newcommand{\Multi}[1]{\mathbb{M}(#1)}
\newcommand{\mgt}{\succ}
\newcommand{\mgeq}{\succeq}
\newcommand{\mgtmap}{\bm{:}\succ\bm{:}}
\newcommand{\mleq}{\preceq}
\newcommand{\mgtone}{\mathrel{\mgt^1}}
\newcommand{\mtimes}{\otimes}
\newcommand{\ameset}[1]{\mathbb{T}_{#1}}
\newcommand{\bmeset}[1]{\mathbb{R}_{#1}}
\newcommand{\amesym}{\mathcal{T}^{\symg}}
\newcommand{\bmesym}{\mathcal{R}^{\symg}}
\newcommand{\ame}[2]{\amesym_{\leq#1}(#2)}
\newcommand{\amefull}[1]{\amesym(#1)}
\newcommand{\bme}[2]{\bmesym_{#1}(#2)}
\newcommand{\eme}[3]{\amesym_{#1}(#2,#3)}
\newcommand{\turingmesym}{\mathcal{T}}
\newcommand{\turingme}[1]{\turingmesym(#1)}
\newcommand{\turingmep}[1]{\turingmesym'(#1)}
\newcommand{\turingmetwo}[2]{\turingmesym_{#1}(#2)}
\newcommand{\turingmethreea}[2]{\mathcal{T}^{\beta}_{\leq#1}(#2)}
\newcommand{\turingmethreeb}[2]{\mathcal{R}^{\beta}_{#1}(#2)}
\newcommand{\anon}{\mathtt{X}}
\newcommand{\semtyp}[1]{[\![#1]\!]}
\newcommand{\sem}[1]{[#1]}
\newcommand{\Item}[1]{\textcolor{lipicsGray}{\textsf{\textbf{\textup{#1}}}}}
\newcommand{\incf}[1]{|#1|}
\newcommand{\counter}{{\tau}}
\newcommand{\SeeAppendix}[1]{\textcolor{purple}{#1}}
\title{Two Decreasing Measures for Simply Typed $\lambda$-Terms (Extended Version)} 
\author{Pablo Barenbaum}{ICC, Universidad de Buenos Aires, Argentina \and Universidad Nacional de Quilmes (CONICET), Argentina}{pbarenbaum@dc.uba.ar}{}{}
\author{Cristian Sottile}{ICC, Universidad de Buenos Aires (CONICET), Argentina \and Universidad Nacional de Quilmes, Argentina}{csottile@icc.fcen.uba.ar}{}{}
\authorrunning{P. Barenbaum and C. Sottile} 
\keywords{Lambda Calculus, Rewriting, Termination, Strong Normalization, Simple Types} 
\begin{document}

\maketitle

\begin{abstract}
This paper defines two decreasing measures for terms of the simply typed
$\lambda$-calculus, called the $\meassym$-measure and the $\amesym$-measure.
A decreasing measure is a function that
maps each typable $\lambda$-term to an element of a
well-founded ordering, in such a way that
contracting any $\beta$-redex decreases the value of the function,
entailing strong normalization.
Both measures are defined constructively, relying on an auxiliary calculus,
a non-erasing variant of the $\lambda$-calculus.
In this system, dubbed the $\lambdaG$-calculus,
each $\beta$-step creates a ``wrapper'' containing a copy of
the argument that cannot be erased and cannot interact with the context
in any other way.
Both measures rely crucially on the observation, known to Turing
and Prawitz, that contracting a redex cannot create redexes of higher degree,
where the degree of a redex is defined as the height of the type of its
$\lambda$-abstraction.
The $\meassym$-measure maps each $\lambda$-term to a natural number,
and it is obtained by evaluating the term in the $\lambdaG$-calculus
and counting the number of remaining wrappers.
The $\amesym$-measure maps each $\lambda$-term to a structure
of nested multisets, where the nesting depth is proportional
to the maximum redex degree.
\end{abstract}

\section{Introduction}

In this paper we revisit a fundamental question,
that of {\bf strong normalization} of the simply typed $\lambda$-calculus (STLC).
We begin by recalling
that a reduction relation is {\em weakly normalizing} (WN)
if every term can be reduced to normal form in a finite number
of steps,
whereas it is {\em strongly normalizing} (SN) if there are no infinite
reduction sequences ($a_1 \to a_2 \to a_3 \to \hdots$).
Let us review three proof techniques for proving strong
normalization of the STLC.
\medskip


One of the better known ways to prove that the STLC 
is SN is through arguments based on {\bf reducibility models}.
The idea is to interpret each type $\typ$ as a set $\semtyp{\typ}$
of strongly normalizing terms,
and to prove that each term $\ltm$ of type $\typ$
is an element of $\semtyp{\typ}$.
Many variants of these ideas can be found in the literature,
including Girard's reducibility candidates~\cite{thesisgirard}
and Tait's saturated sets~\cite{Tait75}.
These techniques provide relatively succint proofs and
they generalize well to extensions of the STLC,
\eg to dependent type theory~\cite{Coquand19} or
classical calculi~\cite{DownenJA20}.
On the other hand, the abstract nature of reducibility arguments
does not provide a ``tangible'' insight
on why a $\beta$-reduction step brings a term closer to normal form.
More specifically, reducibility arguments do not construct explicit
{\bf decreasing measures}. By decreasing measure we mean a function
``$\#$'' mapping each $\lambda$-term to a well-founded ordering $(X,>)$
such that $\ltm \tobeta \ltmtwo$ implies $\#(M) > \#(N)$.
\medskip


Another way to prove strong normalization is based on {\bf redex degrees}.
A {\em redex} in the STLC is an applied abstraction,
\ie a term of the form $(\lam{\var}{\ltm})\,\ltmtwo$.
The {\em degree} of a redex is defined as the {\em height}
of the type of its abstraction.
A crucial observation,
that can be attributed to an unpublished note of Turing
(as reported by Gandy~\cite{gandy80turing}; see also~\cite{BarendregtM13}),
is that {\em contracting a redex cannot create a redex of higher or equal degree}.
Recall that a redex $S$ is {\em created} by the contraction of a redex $R$
if $S$ has no {\em ancestor} before $R$.
Indeed, as shown by Lévy~\cite{Tesis:Levy:1978}, in the $\lambda$-calculus,
redexes can be created in exactly one of the three ways below:
\[
  \begin{array}{rrcl}
    \Item{1} &
      (\ulam{\var}{\var})\,(\lam{\vartwo}{\ltm})\,\ltmtwo
      & \tobeta &
      (\ulam{\vartwo}{\ltm})\,\ltmtwo
  \\
    \Item{2} & 
      (\ulam{\var}{\lam{\vartwo}{\ltm}})\,\ltmtwo\,\ltmthree
      & \tobeta &
      (\ulam{\vartwo}{\ltm\sub{\var}{\ltmtwo}})\,\ltmthree
  \\
    \Item{3} & 
      (\ulam{\var}{\hdots \var\,\ltm \hdots})\,(\lam{\vartwo}{\ltmtwo})
      & \tobeta &
      \hdots
      (\ulam{\vartwo}{\ltmtwo})\,\ltm\sub{\var}{\lam{\vartwo}{\ltmtwo}}
      \hdots
  \end{array}
\]
where we underline the $\lambda$ of the contracted redex on the left, 
and the $\lambda$ of the created redex on the right.
In each of these cases, it can be seen that the degree of the created
redex is strictly lower than the degree of the contracted redex.
For instance, in creation case~\Item{1},
the type of the contracted redex
is of the form $(\typ\to\typtwo)\to(\typ\to\typtwo)$,
while the type of the created redex is $\typ\to\typtwo$,
so the height strictly decreases.

With this fact in mind, for each term $\ltm$
one can define
what we call {\bf Turing's measure},
\ie the multiset $\turingme{\ltm}$ of the degrees
of all the redexes of $\ltm$.
One may hope that any reduction step $\ltm \tobeta \ltmtwo$
decreases the measure, \ie $\turingme{\ltm} \mgt \turingme{\ltmtwo}$,
where ``$\mgt$'' is the usual
well-founded multiset ordering
induced by the ordering $(\Nat,>)$ of its elements~\cite{dershowitz1979proving}.
Unfortunately, this is not the case:
even though contracting a redex can only create redexes of strictly lower
degree, it can still make an arbitrary number of {\em copies}
of redexes of arbitrarily large degrees.

In his notes, Turing observed that one can follow a reduction strategy
that always selects the {\em rightmost} redex of highest degree.
This strategy ensures that the contracted redex does not copy redexes
of higher or
equal degree, which makes the $\turingme{-}$ measure strictly decrease,
thus proving that the $\lambda$-calculus is WN.
An even simpler measure that also decreases, if one follows this strategy, is
$\turingmep{\ltm} = (D,n)$, where $D$ is the maximum degree of the redexes in $\ltm$
and $n$ is the number of redexes of degree $D$ in $\ltm$.
Similar ideas were exploited by Prawitz~\cite{prawitz1965natural}
and Gentzen (as reported by von Plato~\cite{plato08bsl})
to normalize proofs in natural deduction.
After WN has been established, an indirect proof of SN can be obtained
by translating each typable $\lambda$-term $\ltm$ to
a typable term $\ltm'$ of the {\em $\lambda{I}$-calculus};
see for instance~\cite[Section 3.5]{sorensen2006lectures}.

In summary, redex degrees can be used to define
concrete measures such as $\turingme{\ltm}$
and $\turingmep{\ltm}$,
that are computable in linear time
and decrease when following a particular reduction strategy.
As already mentioned,
these measures do not necessarily decrease when contracting arbitrary $\beta$-redexes.
\medskip


A third way to prove SN relies
on an interpretation that maps terms to {\bf increasing functionals}.
This approach was pioneered by Gandy~\cite{gandy80sn}
and refined by de Vrijer~\cite{dv1987exactly}.
Each type $\typ$ is mapped to a partially ordered set $\semtyp{\typ}$.
Specifically, base types are mapped to $(\Nat,\leq)$,
and $\semtyp{\typ \to \typtwo}$ is defined as the set of
strictly increasing functions $\semtyp{\typ} \to \semtyp{\typtwo}$,
partially ordered by the point-wise order.
Each term $\ltm$ of type $\typ$ is interpreted as
an element $\sem{\ltm} \in \semtyp{\typ}$.
Moreover, an element $f \in \semtyp{\typ}$
can be projected to a natural number $f\star \in \Nat$
in such a way that $\ltm \tobeta \ltmtwo$
implies $\sem{\ltm}\star > \sem{\ltmtwo}\star$.
This indeed provides a decreasing measure.
One of the downsides of this measure
is that computing $\sem{\ltm}\star$
is essentially as difficult as evaluating $\ltm$,
because $\sem{\ltm}$ is defined as a higher-order functional
with a similar structure as the $\lambda$-term $\ltm$ itself.

\medskip

In this work we propose {\bf two decreasing measures for the STLC},
that we dub the $\meassym$-measure and the $\amesym$-measure,
and we prove that they are decreasing.
An ideal decreasing measure should fulfill multiple (partly subjective)
requirements:
\Item{1.}~
  the measure should be easy to calculate,
  in terms of computational complexity;
\Item{2.}~
  its codomain (a well-founded ordering)
  should be simple, in terms of its ordinal type;
\Item{3.}~
  it should give us insight on why $\beta$-reduction terminates;
\Item{4.}~
  it should be easy to prove that the measure is decreasing.
A measure that excels simultaneously at all these requirements is elusive,
and perhaps unattainable.
The proposed measures have different strengths and weaknesses.

\subparagraph{Contributions and structure of this document}
The $\meassym$-measure and the $\amesym$-measure are defined by means
of on an auxiliary calculus that we dub the $\lambdaG$-calculus.
The remainder of the paper is structured as follows.
\medskip


In~{\bf \rsec{lambdaG}}
we {\bf define the $\lambdaG$-calculus}.
It is an extension of the STLC with terms\footnote{Note that terms of the
$\lambdaG$-calculus are ranged over by $\tm,\tmtwo,\hdots$
(rather than $\ltm,\ltmtwo,\hdots$).}
of the form $\bin{\tm}{\tmtwo}$,
called {\em wrappers}.
A wrapper $\bin{\tm}{\tmtwo}$ should be understood as
essentially the term $\tm$
in which $\tmtwo$ is a {\em memorized term}, that is, leftover garbage that can
be reduced but cannot interact with the context in any way.
The type of $\bin{\tm}{\tmtwo}$ is the same as the type of $\tm$,
disregarding the type of $\tmtwo$.

The $\beta$-reduction rule is modified so that
contracting a redex $(\lam{\var}{\tm})\,\tmtwo$,
besides substituting the free occurrences of $\var$ by $\tmtwo$ in $\tm$,
produces a wrapper that contains a copy of the argument~$\tmtwo$.
The reduction rule is
$
  (\lam{\var}{\tm})\garb{\tmthree_1}\hdots\garb{\tmthree_n}\,\tmtwo
  \,\,\tog\,\,
  \tm\sub{\var}{\tmtwo}\garb{\tmtwo}\garb{\tmthree_1}\hdots\garb{\tmthree_n}
$.
Note that we allow the presence of an arbitrary number of memorized terms
mediating between the abstraction and the application.
This is to avoid memorized terms {\em blocking} redexes.
For example, if $I = \lam{\var}{\var}$:
\[
  (\lam{\var}{\var(\var\vartwo)})I
  \tog (I(I\vartwo))\garb{I}
  \tog (I\vartwo)\garb{I\vartwo}\garb{I}
  \tog (I\vartwo)\garb{\vartwo\garb{\vartwo}}\garb{I}
  \tog \vartwo\garb{\vartwo}\garb{\vartwo\garb{\vartwo}}\garb{I}
\]
Then we study some syntactic properties of $\lambdaG$.
In particular, we define a relation $\tm \shone \tmtwo$
of {\em forgetful reduction}, meaning that $\tmtwo$ is obtained from $\tm$
by erasing one memorized subterm.
For example,
$\var\,\garb{\var\garb{\vartwo}}\garb{\vartwo\garb{\varthree}}
\shone \var\,\garb{\vartwo\garb{\varthree}}$.
Forgetful reduction is used as a technical tool to prove that the measures
are decreasing in the following sections.

\medskip


In~{\bf \rsec{meas_measure}},
we {\bf propose the $\meassym$-measure}~(\rdef{meassym_measure}),
and we prove that it is decreasing.
To define the $\meassym$-measure,
we resort to an operation $\simp{d}{\tm}$
that simultaneously
contracts all the redexes of degree $d$ in a term of the $\lambdaG$-calculus,
that is, the result of the {\em complete development}
of all the redexes of degree~$d$.
The degree of a redex
$(\lam{\var}{\tm})\garb{\tmthree_1}\hdots\garb{\tmthree_n}\,\tmtwo$
is defined similarly as for the STLC,
as the height of the type of the abstraction.
To calculate the $\meassym$-measure
of a $\lambda$-term $\ltm$,
let $D$ be the maximum degree of the redexes in $\ltm$,
and define $\meas{\ltm}$
as the number of wrappers in $\simp{1}{\simp{2}{\hdots\simp{D}{\ltm}}}$.
For example, if
$\ltm = (\lam{\var}{\var\,(\var\,\vartwo)})\,(\lam{\varthree}{\varfour})$,
it turns out that
$\simp{1}{\simp{2}{\ltm}} =
  \bin{
    \bin{
      \varfour
    }{
      \bin{
        \varfour
      }{
        \vartwo
      }
    }
  }{
    \lam{\varthree}{\varfour}
  }$
which has three wrappers,
so $\meas{\ltm} = 3$.
The $\meassym$-measure maps each typable $\lambda$-term
to a natural number.
The main result of~\rsec{meas_measure}
is~\rthm{z_measure_decreases},
stating that {\bf $\meassym$ is decreasing},
\ie that $\ltm \tobeta \ltmtwo$ implies $\meas{\ltm} > \meas{\ltmtwo}$.

\medskip


In {\bf \rsec{reduction_by_degrees}} we study {\bf reduction by degrees},
a restricted notion of reduction in the $\lambdaG$-calculus,
written $\tm \tod{d} \tmtwo$,
meaning that $\tm$ reduces to $\tmtwo$ by contracting a redex of degree $d$.
This section contains technical commutation, termination, and postponement results.

\medskip


In~{\bf \rsec{a_measure}},
we {\bf propose the $\amesym$-measure}, and we prove that it is decreasing.
To define the $\amesym$-measure,
we define two auxiliary measures $\ame{D}{\tm}$ and $\bme{D}{\tm}$,
indexed by a natural number $D \in \Natz$, mutually recursively:
\begin{itemize}
\item
  $\ame{D}{\tm}$
  is the multiset of pairs $(d,\bme{d}{\tm})$,
  for each redex occurrence of degree $d \leq D$ in $\tm$;
\item
  $\bme{D}{\tm}$
  is the multiset of elements $\ame{D-1}{\tm'}$,
  for each reduction sequence $\tm \rtod{D} \tm'$.
\end{itemize}
The measure $\ame{D}{\tm}$ is defined for every $D \geq 0$,
while $\bme{D}{\tm}$ is defined only for $D \geq 1$.
Multisets are ordered according to the usual multiset ordering,
and pairs according to the lexicographic ordering.
To calculate the $\amesym$-measure of a $\lambda$-term $\ltm$,
let $D$ be the maximum degree of the redexes in $\ltm$,
and define $\ame{}{\ltm} \eqdef \ame{D}{\ltm}$.
The measure $\ame{}{\ltm}$ yields a structure of
nested multisets of nesting depth at most $2D$.
The main theorem of \rsec{meas_measure}
is~\rthm{ame_decreasing},
stating that {\bf $\amesym$ is decreasing},
\ie that $\ltm \tobeta \ltmtwo$ implies $\ame{}{\ltm} > \ame{}{\ltmtwo}$.

\medskip

Finally, in~{\bf \rsec{conclusion}}, we conclude.

\section{The $\lambdaG$-calculus}
\lsec{lambdaG}

As mentioned in the introduction, the $\lambdaG$-calculus is an extension
of the STLC in which the $\beta$-reduction rule keeps an extra memorized
copy of the argument in a ``wrapper'' $\bin{\tm}{\tmtwo}$,
in such a way that contracting a redex
like $(\lam{\var}{\tm})\,\tmtwo$ does not erase $\tmtwo$,
even if $\var$ does not occur free in $\tm$.
In this section we define the $\lambdaG$-calculus and we prove some
of the properties that are needed in the following sections
to prove that the $\meassym$-measure
and the $\amesym$-measure are decreasing.
In particular, we discuss {\em subject reduction}~(\rprop{subject_reduction})
and {\em confluence}~(\rprop{confluence});
we define an operation of {\bf simplification}~(\rdef{simplification})
which turns out to calculate
the normal form of a term~(\rprop{simplification_is_normalization});
and we define the relation called {\bf forgetful reduction}~(\rdef{shrinking}),
which is shown to commute with reduction~(\rprop{shrinking_simulation}).

\medskip

First we fix the notation and nomenclature.
{\em Types} of the STLC are either
base types ($\btyp,\btyptwo,\hdots$) or arrow types ($\typ \to \typtwo$).
{\em Terms} are either variables~($\var^\typ,\vartwo^\typ,\hdots$),
abstractions~($\lam{\var^\typ}{\ltm}$),
or applications~($\ltm\,\ltmtwo$), with the usual typing rules.
Terms are defined up to $\alpha$-renaming of bound variables.
We adopt an {\em à la} Church presentation of the STLC,
but we omit most type decorations on variables as long as there is little
danger of confusion.
The $\beta$-reduction rule is
$(\lam{\var}{\ltm})\,\ltmtwo \tobeta \ltm\sub{\var}{\ltmtwo}$
where $\ltm\sub{\var}{\ltmtwo}$ is the capture-avoiding substitution
of the free occurrences of $\var$ in $\ltm$ by $\ltmtwo$.

\subparagraph{The $\lambdaG$-calculus: syntax and reduction}
The set of {\em $\lambdaG$-terms} ---or just {\em terms}---
is given by
$
  \tm,\tmtwo,\hdots
      ::= \var^\typ
     \mid \lam{\var^\typ}{\tm}
     \mid \tm\,\tmtwo
     \mid \bin{\tm}{\tmtwo}
$.
The four kinds of terms are respectively called
{\em variables}, {\em abstractions}, {\em applications},
and {\em wrappers}.
In a wrapper $\bin{\tm}{\tmtwo}$,
the subterm $\tm$ is called the {\em body}
and $\tmtwo$ is called the {\em memorized term}.
As in the STLC, we usually omit type annotations
and terms are regarded up to $\alpha$-renaming.
A {\em context} is a term $\gctx$ with a single free
occurrence of a distinguished variable~$\ctxhole$,
and $\of{\gctx}{\tm}$ is the variable-capturing substitution
of the occurrence of $\ctxhole$ in $\gctx$ by $\tm$.

{\em Typing judgments} are of the form $\judg{\tctx}{\tm}{\typ}$
where $\tctx$ is a partial function mapping variables to types.
Derivable typing judgments are defined by the following rules:
\[
  {\small
  \indrule{}{
         \vphantom{\judg{\tctx}{}{}}
  }{
    \judg{\tctx,\var:\typ}{\var^\typ}{\typ}
         \vphantom{\bin{}{}}
  }
  \HS
  \indrule{}{
    \judg{\tctx,\var:\typ}{\tm}{\typtwo}
  }{
    \judg{\tctx}{\lam{\var^\typ}{\tm}}{\typ\to\typtwo}
         \vphantom{\bin{}{}}
  }
  \HS
  \indrule{}{
    \judg{\tctx}{\tm}{\typ\to\typtwo}
    \HS
    \judg{\tctx}{\tmtwo}{\typ}
  }{
    \judg{\tctx}{\tm\,\tmtwo}{\typtwo}
         \vphantom{\bin{}{}}
  }
  \HS
  \indrule{}{
    \judg{\tctx}{\tm}{\typ}
    \HS
    \judg{\tctx}{\tmtwo}{\typtwo}
  }{
    \judg{\tctx}{\bin{\tm}{\tmtwo}}{\typ}
  }
  }
\]
A term $\tm$ is {\em typable} if $\judg{\tctx}{\tm}{\typ}$ holds
for some $\tctx$ and some $\typ$. Unless otherwise specified, when we
speak of ``terms'' we mean ``typable terms''.
It is straightforward to show that a typable term has a unique type.
We write $\typeof{\tm}$ for the type of $\tm$.

A {\em memory}, written $\sctx$, is a list of memorized terms,
given by the grammar
$\sctx ::= \ctxhole \mid \bin{\sctx}{\tm}$.
If $\tm$ is a term and $\sctx$ is a memory, we write
$\tm\sctx$ for the term that results from appending all the memorized terms
in $\sctx$ to $\tm$, that is,
$(\tm)(\ctxhole\garb{\tmtwo_1}\hdots\garb{\tmtwo_n}) =
\tm\garb{\tmtwo_1}\hdots\garb{\tmtwo_n}$.
We write $\tm\sub{\var}{\tmtwo}$
for the operation of capture-avoiding substitution of the free occurrences
of $\var$ in $\tm$ by $\tmtwo$.
The $\lambdaG$-calculus is the rewriting system whose objects
are typable $\lambdaG$-terms, endowed with the following 
notion of reduction, closed by compatibility under arbitrary contexts:
\begin{definition}[Reduction in the $\lambdaG$-calculus]
$
  (\lam{\var}{\tm})\sctx\,\tmtwo
  \,\,\tog\,\,
  \bin{\tm\sub{\var}{\tmtwo}}{\tmtwo}\sctx
$
\end{definition}
Abstractions followed by lists of memorized terms,
\ie terms of the form $(\lam{\var}{\tm})\sctx$,
are called {\em $\symg$-abstractions}.
Note that all abstractions are also $\symg$-abstractions, as $\sctx$ may
be empty.
A {\em redex} is an expression matching the left-hand side of the
$\tog$-reduction rule, which must be an {\em applied $\symg$-abstraction},
\ie a term of the form
$(\lam{\var}{\tm})\sctx\,\tmtwo$.
The {\em height} of a type is given by
$\height{\btyp} \eqdef 0$
and
$
  \height{\typ\to\typtwo} \eqdef 1+\max(\height{\typ},\height{\typtwo})
$.
The {\em degree} of a $\symg$-abstraction $(\lam{\var}{\tm})\sctx$
is defined as the height of its type;
note that this number is always strictly positive, since the type
must be of the form $\typ\to\typtwo$.
Moreover, this type is unique, so the operation is well-defined.
The {\em degree} of a redex
$(\lam{\var}{\tm})\sctx\,\tmtwo$
is defined as the degree of the $\symg$-abstraction
$(\lam{\var}{\tm})\sctx$.
The {\em max-degree} of a term $\tm$
is written $\maxdeg{\tm}$
and it is defined as the maximum degree of the redexes in $\tm$,
or $0$ if $\tm$ has no redexes.
The {\em weight} $\weight{\tm}$ of a $\lambdaG$-term $\tm$
is the number of wrappers in $\tm$.

\begin{example}
\lexample{reduction_in_lambdaG}
Let $0$ be a base type
and let
$\tm := (\lam{\var^{\bbtyp\to\bbtyp}}{
  \lam{\vartwo^\bbtyp}{
    \bin{\vartwo^\bbtyp}{
      \var^{\bbtyp\to\bbtyp}\,(\var^{\bbtyp\to\bbtyp}\,\varthree^\bbtyp)
    }
   }})\,I\,\varfour^\bbtyp$,
where $I := \lam{\var^\bbtyp}{\var^\bbtyp}$.
One possible way to reduce $\tm$ is:
\[
  \begin{array}{r@{\,\,\,\,}l@{\,\,\,\,}l@{\,\,\,\,}l@{\,\,\,\,}l}
    (\lam{\var}{\lam{\vartwo}{\bin{\vartwo}{\var\,(\var\,\varthree)}}})\,I\,\varfour
  & \tog &
    (\lam{\vartwo}{\bin{\vartwo}{I\,(I\,\varthree)}})\garb{I}\,\varfour
  & \tog &
    \bin{\varfour}{I\,(I\,\varthree)}\garb{\varfour}\garb{I}
  \\
  & \tog &
    \bin{\varfour}{I\,(\varthree\garb{\varthree})}\,\garb{\varfour}\,\garb{I}
  & \tog &
    \bin{\varfour}{
      \varthree\garb{\varthree}\garb{\varthree\garb{\varthree}}
    }\garb{\varfour}\garb{I} = \tmtwo
  \end{array}
\]
The degrees of the redexes contracted in each step are $2$, $1$, $1$, and $1$,
in that order. Note that $\maxdeg{\tm} = 2$ and that the weight of the
resulting term is $\weight{\tmtwo} = 6$.
\end{example}

Two basic properties of the $\lambdaG$-calculus are
subject reduction and confluence.
These are immediate consequences of the fact that the $\lambdaG$-calculus
can be understood as an {\em orthogonal HRS} in the sense of
Nipkow~\cite{nipkow1991higher},
\ie a left-linear higher-order rewriting system without critical pairs.

\begin{proposition}[Subject reduction]
\lprop{subject_reduction}
Let $\judg{\tctx}{\tm}{\typ}$ and $\tm \tog \tmtwo$.
Then $\judg{\tctx}{\tmtwo}{\typ}$.
\end{proposition}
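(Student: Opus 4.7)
The proof follows the well-trodden pattern for subject reduction: first a substitution lemma, then a case analysis for the root reduction step, then a straightforward induction closing under contexts. The only mild novelty compared to pure STLC is that redexes of the form $(\lam{\var}{\tm})\sctx\,\tmtwo$ have a memory $\sctx$ interposed between the abstraction and its argument, and the reduct creates a new wrapper. Both phenomena are handled by observing that the wrapper typing rule preserves the type of the body while merely demanding that the memorized term be typable at \emph{some} type.

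First I would prove the standard \textbf{substitution lemma}: if $\judg{\tctx,\var:\typ}{\tm}{\typtwo}$ and $\judg{\tctx}{\tmtwo}{\typ}$ then $\judg{\tctx}{\tm\sub{\var}{\tmtwo}}{\typtwo}$, by induction on the derivation of $\judg{\tctx,\var:\typ}{\tm}{\typtwo}$. All four cases are routine; the wrapper case uses the IH on the body and directly on the memorized term. Next I would prove an auxiliary inversion principle for memories: if $\judg{\tctx}{\tm\sctx}{\typ}$ then $\judg{\tctx}{\tm}{\typ}$ and every memorized term in $\sctx$ is typable under $\tctx$. This is immediate by induction on $\sctx$ using the wrapper rule's shape. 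Dually, given $\judg{\tctx}{\tm}{\typ}$ and typability of every memorized term in $\sctx$ under $\tctx$, one obtains $\judg{\tctx}{\tm\sctx}{\typ}$.

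With these in hand, the \textbf{base case} (root reduction) goes as follows. Suppose $\judg{\tctx}{(\lam{\var}{\tm})\sctx\,\tmtwo}{\typ}$. Inversion of the application rule gives some $\typthree$ with $\judg{\tctx}{(\lam{\var}{\tm})\sctx}{\typthree\to\typ}$ and $\judg{\tctx}{\tmtwo}{\typthree}$. The memory inversion above yields $\judg{\tctx}{\lam{\var}{\tm}}{\typthree\to\typ}$ and typability of each memorized term in $\sctx$. Inversion of the abstraction rule gives $\judg{\tctx,\var:\typthree}{\tm}{\typ}$, so the substitution lemma delivers $\judg{\tctx}{\tm\sub{\var}{\tmtwo}}{\typ}$. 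Applying the wrapper rule with memorized term $\tmtwo$ (typable at $\typthree$) yields $\judg{\tctx}{\bin{\tm\sub{\var}{\tmtwo}}{\tmtwo}}{\typ}$, and re-appending the typable memory $\sctx$ preserves this type, giving the required $\judg{\tctx}{\bin{\tm\sub{\var}{\tmtwo}}{\tmtwo}\sctx}{\typ}$.

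The \textbf{inductive step}, closing $\tog$ under arbitrary contexts, is routine: by induction on the context $\gctx$ such that $\tm = \of{\gctx}{\redex}$ and $\tmtwo = \of{\gctx}{\redex'}$ with $\redex \tog \redex'$ at the root. Each of the four compatibility cases (abstraction, left and right of application, body and memorized side of a wrapper) uses the IH on the subterm being reduced, retyping the enclosing node with the same rule. I do not expect any real obstacle; the only place to be careful is checking that the newly introduced wrapper in the base case is well-typed, which is precisely what the wrapper rule's asymmetric treatment of body and memorized term is designed to accommodate.
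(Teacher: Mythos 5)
Your proposal is correct and follows essentially the same route as the paper: a substitution lemma proved by induction, applied at the root case of an induction on the reduction step, with the context-closure cases handled routinely. The explicit memory inversion/reconstruction lemma you add is a reasonable way to package the handling of $\sctx$, and the rest matches the paper's (only sketched) argument.
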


\begin{proposition}[Confluence]
\lprop{confluence}
If $\tm_1 \tog^* \tm_2$ and $\tm_1 \tog^* \tm_3$,
there exists a term $\tm_4$ such that
$\tm_2 \tog^* \tm_4$ and $\tm_3 \tog^* \tm_4$.
\end{proposition}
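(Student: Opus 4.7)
The plan is to follow the Tait--Martin-Löf strategy: introduce a parallel reduction relation $\ptog$ on $\lambdaG$-terms, show that $\tog \;\subseteq\; \ptog \;\subseteq\; \tog^*$, and then establish the diamond property for $\ptog$. Confluence of $\tog$ then follows by the standard diagram chase, since the reflexive--transitive closure of a relation with the diamond property is confluent and $\ptog^{*} = \tog^{*}$.

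First I would define $\ptog$ inductively. The structural clauses are the expected ones: $\var \ptog \var$; $\lam{\var}{\tm} \ptog \lam{\var}{\tm'}$ when $\tm \ptog \tm'$; and $\tm\,\tmtwo \ptog \tm'\,\tmtwo'$ and $\bin{\tm}{\tmtwo} \ptog \bin{\tm'}{\tmtwo'}$ when $\tm \ptog \tm'$ and $\tmtwo \ptog \tmtwo'$. I would extend parallel reduction to memories component-wise ($\ctxhole \ptog \ctxhole$ and $\bin{\sctx}{\tm} \ptog \bin{\sctx'}{\tm'}$ when $\sctx \ptog \sctx'$ and $\tm \ptog \tm'$), and add the key contraction clause: if $\tm \ptog \tm'$, $\tmtwo \ptog \tmtwo'$, and $\sctx \ptog \sctx'$, then $(\lam{\var}{\tm})\sctx\,\tmtwo \ptog \bin{\tm'\sub{\var}{\tmtwo'}}{\tmtwo'}\sctx'$. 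The sandwich $\tog \subseteq \ptog \subseteq \tog^{*}$ is routine: the first inclusion uses reflexivity and compatibility of $\ptog$, and the second is by induction on the derivation of a $\ptog$-step, using that $\tog^{*}$ is closed under arbitrary contexts.

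Next I would establish the substitution lemma: if $\tm \ptog \tm'$ and $\tmtwo \ptog \tmtwo'$ then $\tm\sub{\var}{\tmtwo} \ptog \tm'\sub{\var}{\tmtwo'}$, by induction on the derivation of $\tm \ptog \tm'$. This is needed to close the diamond in the case where one derivation fires an outer redex and the other has rewritten inside the abstraction body or the argument.

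The main work is the diamond property: if $\tm \ptog \tmtwo$ and $\tm \ptog \tmthree$ then there exists $\tmfour$ with $\tmtwo \ptog \tmfour$ and $\tmthree \ptog \tmfour$. I would argue by induction on $\tm$, inspecting the last rule used in each derivation. The only nontrivial clash occurs when both derivations act on an applied $\symg$-abstraction $(\lam{\var}{\tm_0})\sctx\,\tmtwo_0$, either both firing the contraction clause or one firing it while the other proceeds purely structurally. In every case one builds $\tmfour$ by firing all outer contractions in parallel and closing the subsquares componentwise via the induction hypothesis and the substitution lemma. The main obstacle will be the bookkeeping around the memory list $\sctx$, because the redex pattern matches an arbitrarily long list of wrappers interposed between the abstraction and the argument, so rewrites inside $\sctx$ must be threaded carefully with rewrites on the outer $\beta$-pattern. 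Since memories are inert---they cannot occur as part of any other redex pattern---they behave like a passive list of subterms, no critical pair arises, and the diamond closes. This is precisely the content of the remark that $\lambdaG$ is an orthogonal HRS in the sense of Nipkow~\cite{nipkow1991higher}, which one could alternatively invoke as a black box.
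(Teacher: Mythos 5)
Your proposal is correct and follows essentially the same route as the paper: a Tait--Martin-L\"of parallel reduction $\ptog$ with the sandwich $\tog \subseteq \ptog \subseteq \tog^*$, a substitution lemma for $\ptog$, and the diamond property, with the orthogonal-HRS observation available as an alternative black-box argument. The contraction clause you give for applied $\symg$-abstractions with interposed memories matches the paper's key diagrams exactly.
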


\subparagraph{Full simplification}
Next, we define an operation written $\simpfull{\tm}$
and called {\em full simplification}.

Let $d \geq 1$ be a natural number.
The {\em simplification of degree $d$}, written $\simp{d}{\tm}$,
is the result of simultaneously contracting
all the redexes of degree $d$ in $\tm$,
that is,
the result of the {\em complete development} of all redexes of degree $d$.
Formally, for each $\lambdaG$-term $\tm$
we define $\simp{d}{\tm}$,
and, for each memory $\sctx$, we define $\simp{d}{\sctx}$
as follows:
\begin{definition}[Simplification]
\ldef{simplification}
\[
  \begin{array}{r@{\,}c@{\,}l}
    \simp{d}{\var}
    & \eqdef &
    \var
  \\
    \simp{d}{\lam{\var}{\tm}}
    & \eqdef &
    \lam{\var}{\simp{d}{\tm}}
  \\
    \simp{d}{\tm\,\tmtwo}
    & \eqdef &
    \begin{cases}
        \simp{d}{\tm'}
          \sub{\var}{\simp{d}{\tmtwo}}
          \garb{\simp{d}{\tmtwo}}
          \simp{d}{\sctx}
      &
          \text{if $\tm = (\lam{\var}{\tm'})\sctx$ and it is of degree $d$}
    \\
      \simp{d}{\tm}\,\simp{d}{\tmtwo}
      &
        \text{otherwise}
    \end{cases}
  \\
    \simp{d}{\bin{\tm}{\tmtwo}}
    & \eqdef &
    \bin{\simp{d}{\tm}}{\simp{d}{\tmtwo}}
  \end{array}
\]
\end{definition}
where if $\sctx$ is a memory, $\simp{d}{\sctx}$
is defined by $\simp{d}{\ctxhole} \eqdef \ctxhole$
and $\simp{d}{\sctx\garb{\tm}} \eqdef \simp{d}{\sctx}\garb{\simp{d}{\tm}}$.
Furthermore, if $\tm$ is a $\lambdaG$-term of max-degree $D$, we define
the {\em full simplification} of $\tm$ as the term that results from
iteratively taking the simplification of degree $i$ from $D$ down to $1$.
More precisely,
$
  \simpfull{\tm} \eqdef \simp{1}{\hdots\simp{D-1}{\simp{D}{\tm}}}
$.

\begin{example}
\lexample{full_simplification}
Consider the $\lambda$-term
$\ltm = (\lam{\var^{\bbtyp\to\bbtyp}}{\var^{\bbtyp\to\bbtyp}(\var^{\bbtyp\to\bbtyp}\,\vartwo^{\bbtyp})})(\lam{\varthree^{\bbtyp}}{\varfour^{\bbtyp}})$.
It can be regarded also as a $\lambdaG$-term, and we have:
\[
 \begin{array}{rcl}
    \simp{2}{\ltm}
    & = &
      \bin{
        (
          (\lam{\varthree^\bbtyp}{\varfour^\bbtyp})
        \,
          ((\lam{\varthree^\bbtyp}{\varfour^\bbtyp})\,\vartwo^\bbtyp)
        )
      }{
        \lam{\varthree^\bbtyp}{\varfour^\bbtyp}
      }
  \\
    \simpfull{\ltm} = \simp{1}{\simp{2}{\ltm}}
    & = &
      \bin{
        \bin{
          \varfour^\bbtyp
        }{
          \bin{
            \varfour^\bbtyp
          }{
            \vartwo^\bbtyp
          }
        }
      }{
        \lam{\varthree^\bbtyp}{\varfour^\bbtyp}
      }
  \end{array}
\]
Note that $\ltm$ has only one redex,
whose abstraction is of type $(\bbtyp\to\bbtyp)\to\bbtyp$ 
and hence of degree $2$,
and that $\simp{2}{\ltm}$
has two redexes, whose abstractions are of type $\bbtyp\to\bbtyp$ 
and hence of degree $1$.
Moreover, consider the $\lambda$-term
$\ltmtwo = (\lam{\varthree^{\bbtyp}}{\varfour^{\bbtyp}})\,((\lam{\varthree^{\bbtyp}}{\varfour^{\bbtyp}})\,\vartwo^{\bbtyp})$. Then
$
  \simpfull{\ltmtwo} = \simp{1}{\ltmtwo}
  =
    \bin{
      \varfour
    }{
      \bin{
        \varfour
      }{
        \vartwo
      }
    }
$.
Note that $\ltmtwo$
has two redexes whose abstraction is of type $\bbtyp\to\bbtyp$ 
and hence of degree $1$.
As an additional note, in the $\lambda$-calculus
there is a reduction step $\ltm \tobeta \ltmtwo$,
and we have that
$\weight{\simpfull{\ltm}}
 = 3
 > 2
 = \weight{\simpfull{\ltmtwo}}$.
So this example illustrates that the $\meassym$-measure
(as defined in~\rdef{meassym_measure}) is decreasing
(as we will show in~\rthm{z_measure_decreases}).
\end{example}

As it turns out, {\bf full simplification corresponds
to reduction to normal form}. More precisely, we have the
following result, which entails in particular that the $\lambdaG$-calculus
is weakly normalizing:

\begin{proposition}
\lprop{simplification_is_normalization}
$\tm \tog^* \simpfull{\tm}$,
and moreover $\simpfull{\tm}$ is a $\tog$-normal form.
\end{proposition}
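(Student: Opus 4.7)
The plan is to decompose the statement into a per-degree claim about $\simp{d}{-}$ and then iterate. First I would prove, by induction on the structure of $\tm$, that $\tm \tog^* \simp{d}{\tm}$ for every $d \geq 1$. The nontrivial case is $\tm = \tm_1\,\tmtwo$ with $\tm_1 = (\lam{\var}{\tm'})\sctx$ of degree exactly $d$: the inductive hypotheses give $\tm' \tog^* \simp{d}{\tm'}$, $\tmtwo \tog^* \simp{d}{\tmtwo}$, and $\sctx \tog^* \simp{d}{\sctx}$, after which a single head $\tog$-step fires the degree-$d$ redex and yields exactly the right-hand side of the definition of $\simp{d}{\tm\,\tmtwo}$. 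Along the way one also checks that $\simp{d}{-}$ commutes with substitution so that the clause for applications is well-formed.

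Second, I would establish the key invariant that if $D = \maxdeg{\tm} \geq 1$ then $\maxdeg{\simp{D}{\tm}} \leq D-1$. Since $\simp{D}{\tm}$ is a complete development of all degree-$D$ redexes of $\tm$, no residual of a degree-$D$ redex survives, so it suffices to verify that no $\tog$-step with a contracted redex of degree $D$ creates a new redex of degree $\geq D$. This is the $\lambdaG$-analogue of the Turing/Prawitz observation: the three Lévy creation patterns carry over verbatim, because a wrapper $\bin{\cdot}{\cdot}$ preserves the type of its body and is otherwise inert, and because the interposed memory $\sctx$ in a $\symg$-abstraction $(\lam{\var}{\tm'})\sctx$ does not affect its type. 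A direct inspection of the three cases shows that the type of any created abstraction is a strict subtype of the type of the contracting abstraction, so the degree drops strictly.

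Third, with these two facts in hand, the proposition follows by induction on $D = \maxdeg{\tm}$. If $D = 0$, then $\tm$ has no redexes and by an immediate structural induction $\simpfull{\tm} = \tm$, which is a $\tog$-normal form. If $D \geq 1$, the first step gives $\tm \tog^* \simp{D}{\tm}$, and by the invariant $\maxdeg{\simp{D}{\tm}} \leq D-1$, so the inductive hypothesis applies to $\simp{D}{\tm}$ and yields $\simp{D}{\tm} \tog^* \simpfull{\simp{D}{\tm}}$ with the latter a normal form. Since $\simpfull{\tm} = \simp{1}{\hdots\simp{D-1}{\simp{D}{\tm}}} = \simpfull{\simp{D}{\tm}}$ by definition, concatenating the two reductions proves both halves of the statement.

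The main obstacle will be the second step: verifying, in the presence of arbitrary memories $\sctx$ and wrappers, that the simultaneous contraction of all degree-$D$ redexes really leaves no degree-$D$ redex behind. Residuals are unproblematic because $\simp{D}{-}$ is defined as a complete development, but one has to rule out creation of degree-$D$ redexes carefully when the head of an application, after substitution, turns into an applied $\symg$-abstraction of degree $D$. The type-preservation of wrappers (already used in \rprop{subject_reduction}) together with the observation that memories do not change the type of the surrounding $\symg$-abstraction reduces the check to the standard STLC analysis of Lévy's three creation cases, in each of which the created abstraction has type strictly of lower height than the contracted one.
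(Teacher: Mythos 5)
Your decomposition coincides with the paper's: a per-degree lemma $\tm \tog^* \simp{d}{\tm}$ proved by structural induction, a key invariant that simplification at the top degree strictly lowers the max-degree, and an iteration from $D$ down to $1$ using the fact that there are no redexes of degree $0$. Where you genuinely diverge is in the proof of the invariant. You argue it rewriting-theoretically: $\simp{D}{\tm}$ is the complete development of the degree-$D$ redexes, so no residuals survive, and no step of degree $D$ creates a redex of degree $\geq D$ by Lévy's three creation cases. This matches the paper's \emph{informal} explanation, but its \emph{formal} proof (\rlem{simpk_maxdeg_decrease}) is a direct structural induction on $\tm$ that never mentions residuals or creation; it instead relies on two elementary auxiliary lemmas --- that substituting a term of strictly lower type height into a non-$\symg$-abstraction cannot produce a $\symg$-abstraction, and that $\simp{k}{-}$ does not turn a non-$\symg$-abstraction of sufficiently high type into one --- together with a bound on $\maxdeg{\tm\sub{\var}{\tmtwo}}$. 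The paper's route is more self-contained: your route would require either formalizing residuals and verifying that Lévy's creation taxonomy is exhaustive in the presence of memories and wrappers (the reduct $\tm\sub{\var}{\tmtwo}\garb{\tmtwo}\sctx$ can become an applied $\symg$-abstraction through the wrapper, which is an extra case to check), or importing the finite-developments machinery for orthogonal HRSs that the paper only deploys later, in \rsec{reduction_by_degrees}. Two small nits: the claim that one must check ``$\simp{d}{-}$ commutes with substitution'' for the application clause to be well-formed is unnecessary --- the clause is well-formed by structural recursion as given; and the identity $\simpfull{\tm} = \simpfull{\simp{D}{\tm}}$ is not quite ``by definition'' when $\maxdeg{\simp{D}{\tm}} < D-1$, since the two sides then differ by extra applications of $\simp{k}{-}$ at degrees $k$ for which no redexes exist --- one needs the (easy) observation that $\simp{k}{\tmtwo} = \tmtwo$ whenever $\tmtwo$ has no redex of degree $k$, which is why the paper instead phrases the iteration directly in terms of the chain $\maxdeg{\simp{i}{\hdots\simp{D}{\tm}}} < i$.
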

\begin{proof}
To show that $\tm \tog^* \simpfull{\tm}$, it suffices to prove a lemma 
stating that $\tm \tog^* \simp{d}{\tm}$ for all $d \geq 1$.
This implies that
$\tm
  \tog^* \simp{D}{\tm}
  \tog^* \simp{D-1}{\simp{D}{\tm}}
  \hdots
  \tog^* \simp{1}{\hdots\simp{D-1}{\simp{D}{\tm}}} = \simpfull{\tm}$,
where $D$ is the max-degree of $\tm$.
The lemma itself is straightforward by induction on $\tm$.
\\\indent
To show that $\simpfull{\tm}$ is a $\tog$-normal form,
the key property is that, after performing a simplification of order $d$,
no redexes of order $d$ remain.
The reason is that contracting a redex of order $d$ can only
create redexes of lower degree.
More precisely, we prove a key lemma stating that
if $d \geq 1$ and $\maxdeg{\tm} \leq d$,
then $\maxdeg{\simp{d}{\tm}} < d$.
If we let $\maxdeg{\tm} \leq D$,
we can iterate this lemma, to obtain
that $\maxdeg{\simp{D}{\tm}} < D$,
and $\maxdeg{\simp{D-1}{\simp{D}{\tm}}} < D - 1$,
$\hdots$,
and finally
$\maxdeg{\simp{1}{\hdots\simp{D-1}{\simp{D}{\tm}}}} < 1$.
This means that $\simpfull{\tm} = \simp{1}{\hdots\simp{D-1}{\simp{D}{\tm}}}$
does not contain redexes, since there are no redexes of degree $0$,
so $\simpfull{\tm}$ must be a $\tog$-normal form.
\SeeAppendix{See~\rprop{appendix:simplification_is_normalization} in the appendix for detailed proofs.}
\end{proof}

\subparagraph{Forgetful reduction}
To conclude this section, we introduce the relation of {\em forgetful reduction}
$\tm \shone^+ \tmtwo$, and we prove that it commutes with reduction.

\begin{definition}
\ldef{shrinking}
A $\lambdaG$-term $\tm$
{\em reduces via a forgetful step} to $\tmtwo$,
written $\tm \shone \tmtwo$,
according to the following axiom,
closed by compatibility under arbitrary contexts:
\[
  \bin{\tm}{\tmtwo} \shone \tm
\]
We say that $\tm$ {\em reduces via forgetful reduction} to $\tmtwo$
if and only if $\tm \mathrel{\shone^+} \tmtwo$,
where $\shone^+$ denotes the transitive closure of $\shone$.
\end{definition}

\begin{example}
$(\lam{\var}{\var\garb{\vartwo\garb{\vartwo}}})\garb{\varthree\garb{\varthree}}
 \shone
 (\lam{\var}{\var\garb{\vartwo\garb{\vartwo}}})\garb{\varthree}
 \shone
 (\lam{\var}{\var})\garb{\varthree}
 \shone
 \lam{\var}{\var}$.
\end{example}

\begin{proposition}[Forgetful reduction commutes with reduction]
\lprop{shrinking_simulation}
If $\tm \shone^+ \tmtwo$ and $\tm \tog^* \tm'$,
there exists a term $\tmtwo'$ such that
$\tm' \shone^+ \tmtwo'$ and $\tmtwo \tog^* \tmtwo'$.
Furthermore,
if $\tm \shone^+ \tmtwo$ and $\tm$ is a $\tog$-normal form,
then $\tmtwo$ is also a normal form.
\end{proposition}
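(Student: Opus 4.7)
The plan is to prove a one-step local commutation lemma by case analysis on the relative positions of the forgotten wrapper and the contracted redex, and then lift to the statement by standard diagram chasing; the normal-form clause is handled separately by a contrapositive argument showing that a forgetful step cannot create new $\tog$-redexes.

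\textbf{Local commutation.} First I would show: if $\tm \shone \tmtwo$ and $\tm \tog \tm'$, then there exists $\tmtwo'$ with $\tmtwo \tog^{\leq 1} \tmtwo'$ and $\tm' \shone^+ \tmtwo'$. Write $\tm = \of{\gctx}{\bin{\tm_1}{\tm_2}}$ with the forgotten wrapper at position $p$ and $\tmtwo = \of{\gctx}{\tm_1}$, and let the contracted redex $\redex$ sit at position $q$. If $p$ and $q$ are disjoint, or $q$ lies strictly inside $\tm_1$, the two steps act on disjoint parts of $\tm$ and commute trivially; one $\tog$-step on $\tmtwo$ and one forgetful step on $\tm'$ close the diagram. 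If $q$ lies strictly inside $\tm_2$, the contracted redex sits in the discarded memorized subterm, so $\tm' \shone \tmtwo$ and we take $\tmtwo' \eqdef \tmtwo$ with zero $\tog$-steps. The interesting case is $q$ strictly above $p$: the wrapper then lies inside the redex $\redex = (\lam{\var}{\tm_3})\sctx\,\tm_3'$, and if it sits inside the argument $\tm_3'$ with $\var$ having $n \geq 0$ free occurrences in $\tm_3$, the reduction rule copies $\tm_3'$ once per occurrence and once more inside the freshly created wrapper, producing $n+1$ duplicates of $\bin{\tm_1}{\tm_2}$ in $\tm'$ that are then erased by $n+1$ forgetful steps; for every other placement of the wrapper within the redex (inside $\tm_3$, inside a memorized term of $\sctx$, or along the wrapper spine of the $\symg$-abstraction), a single forgetful step suffices. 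Since $n + 1 \geq 1$ throughout, the $\shone^+$ clause is maintained, and $p = q$ is impossible because a wrapper is not an applied $\symg$-abstraction.

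\textbf{Iteration and normal forms.} Two standard diamond-chasing inductions --- first on the length of $\tm \tog^* \tm'$, then on the number of forgetful steps in $\tm \shone^+ \tmtwo$ --- lift the local lemma to the commutation statement. For the furthermore clause I argue contrapositively: if $\tm \shone \tmtwo$ and $\tmtwo$ contains a redex at position $q'$, then $\tm$ does too. If $q'$ is disjoint from $p$ or strictly below $p$ inside $\tm_1$, the same redex already lives in $\tm$. If $q'$ is at or above $p$, reinstating the wrapper $\bin{\tm_1}{\tm_2}$ either hides itself inside one of the redex's immediate subterms or thickens the $\symg$-abstraction of the redex with an extra $\garb{\tm_2}$ entry in its memory, so a $\tog$-redex still lives in $\tm$ (possibly as a subterm of a wrapper). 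Iterating along $\shone^+$ finishes the proof.

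The principal subtlety is the duplication in the ``wrapper inside the argument'' subcase of the local lemma: a single $\tog$-step can multiply the wrapper arbitrarily many times, which is precisely why the right side of the commutation diagram must be $\shone^+$ rather than a single $\shone$.
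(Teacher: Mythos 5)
Your proposal is correct and follows essentially the same route as the paper: a local commutation lemma with conclusion $\tmtwo \tog^= \tmtwo'$ and $\tm' \shone^+ \tmtwo'$, whose two non-trivial cases are exactly the ones the paper highlights (the wrapper inside the argument of the contracted redex, forcing $\shone^+$ because the argument is duplicated $n+1$ times including the copy kept in the fresh wrapper; and the contracted redex inside the erased memorized term, forcing the reflexive closure on the $\tog$ side), followed by the standard two-level induction and a separate structural argument that a forgetful step cannot create redexes. The only cosmetic difference is that you organize the case analysis by positions where the appendix proceeds by induction on the term, but the diagrams closed are the same.
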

\begin{proof}
The result can be reduced to a local commutation result,
stating that if $\tm \shone \tmtwo$ and $\tm \tog \tm'$,
there exists a term $\tmtwo'$ such that
$\tm' \shone^+ \tmtwo'$ and $\tmtwo \tog^= \tmtwo'$,
where $\tog^=$ is the reflexive closure of $\tog$.
Local commutation can be proved by case analysis.
The interesting cases are when a shrinking step $\tmtwo \shone \tmtwo'$
lies inside the argument of a redex,
and when a reduction step $\tmfour \tog \tmfour'$ is inside erased garbage:
\[
  \xymatrix@C=.25cm@R=.5cm{
    \app{(\lam{\var}{\tm})\sctx}{\tmtwo}
    \ar[d]
    & \shone &
    \app{(\lam{\var}{\tm})\sctx}{\tmtwo'}
    \ar[d]
  \\
    \tm\sub{\var}{\tmtwo}\garb{\tmtwo}\sctx
    & \shone^+ &
    \tm\sub{\var}{\tmtwo'}\garb{\tmtwo'}\sctx
  }
  \HS\HS\HS\HS
  \xymatrix@C=.25cm@R=.5cm{
    \bin{\tmthree}{\tmfour}
    \ar[d]
    & \shone &
    \tmthree
    \ar@{=}[d]
  \\
    \bin{\tmthree}{\tmfour'}
    & \shone^+ &
    \tmthree
  }
\]
For the last part of the statement,
it suffices to show that
if $\tm \shone \tmtwo$ in one step and $\tm$ is a $\tog$-normal form,
then $\tmtwo$ is also a normal form,
which is straightforward by induction on $\tm$.
\SeeAppendix{See \rprop{appendix:shrinking_simulation} in the appendix
for detailed proofs.}
\end{proof}

Each step in the STLC has a {\em corresponding}
step in the $\lambdaG$-calculus, that contracts the redex in the
same position.
For instance the step $(\lam{\var}{\var\,\vartwo})\,I \tobeta I\,\vartwo$
in the STLC
has a corresponding step $(\lam{\var}{\var\,\vartwo})\,I \tog (I\,\vartwo)\garb{I}$
in the $\lambdaG$-calculus.
In this example, $(I\,\vartwo)\garb{I} \shone I\,\vartwo$.
The following easy lemma confirms that this is a general fact:

\begin{lemma}[Reduce/forget lemma]
\llem{reduce_shrink_lemma}
Let $\ltm \tobeta \ltmtwo$ be a $\beta$-step,
and let $\ltm \tog \tmtwo$ be the corresponding step in $\lambdaG$.
Then $\tmtwo \shone \ltmtwo$.
\end{lemma}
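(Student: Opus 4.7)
The plan is a short, direct verification by decomposing both steps along the same reduction position. Let the contracted $\beta$-redex in $\ltm$ occur inside some context $\gctx$, so that $\ltm = \of{\gctx}{(\lam{\var}{\ltmthree})\,\ltmtwo'}$ and $\ltmtwo = \of{\gctx}{\ltmthree\sub{\var}{\ltmtwo'}}$. The key observation is that $\ltm$, being a pure $\lambda$-term, contains no wrappers, so every $\symg$-abstraction occurring in $\ltm$ has an empty memory $\sctx = \ctxhole$. In particular, the ``corresponding'' $\lambdaG$-redex at position $\gctx$ is just the same applied abstraction $(\lam{\var}{\ltmthree})\,\ltmtwo'$ with $\sctx = \ctxhole$.

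First I would unfold the $\tog$-rule on this redex: it gives $(\lam{\var}{\ltmthree})\,\ltmtwo' \tog \bin{\ltmthree\sub{\var}{\ltmtwo'}}{\ltmtwo'}$, so by closure under contexts the corresponding $\lambdaG$-step is
\[
  \tmtwo \;=\; \of{\gctx}{\bin{\ltmthree\sub{\var}{\ltmtwo'}}{\ltmtwo'}}.
\]
Next I would apply one forgetful step $\bin{\ltmthree\sub{\var}{\ltmtwo'}}{\ltmtwo'} \shone \ltmthree\sub{\var}{\ltmtwo'}$, coming from the axiom $\bin{\tm}{\tmtwo} \shone \tm$, and close it under the context $\gctx$ (forgetful reduction is closed under arbitrary contexts by \rdef{shrinking}). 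This yields
\[
  \tmtwo \;\shone\; \of{\gctx}{\ltmthree\sub{\var}{\ltmtwo'}} \;=\; \ltmtwo,
\]
which is exactly what we want.

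There is no real obstacle here; the only minor point to keep honest is the claim that the $\lambdaG$-redex corresponding to a $\beta$-redex has empty memory, which follows since $\ltm$ contains no wrapper subterms and hence no non-empty memories. A completely formal rendering would proceed by induction on $\gctx$ (or equivalently on the derivation of $\ltm \tobeta \ltmtwo$) to justify both the determination of $\tmtwo$ and the applicability of the forgetful step inside $\gctx$, but both are routine.
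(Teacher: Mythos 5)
Your proof is correct and matches the paper's argument in substance: the paper formalizes the same idea as a structural induction on $\ltm$ (equivalently, on the context $\gctx$), with the root case being exactly your observation that the created wrapper $\garb{\ltmtwo'}$ is removed by one forgetful step and the memory $\sctx$ is empty because $\ltm$ is a pure $\lambda$-term. No gap; the context-based phrasing and the inductive phrasing are interchangeable here.
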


\section{The $\meassym$-measure}
\lsec{meas_measure}

In this section, we {\bf define the $\meassym$-measure}~(\rdef{meassym_measure})
and we {\bf prove that it is decreasing}~(\rthm{z_measure_decreases}).
Let us try to convey some ideas that led to the definition of the
$\meassym$-measure.
Recall that an abstract rewriting system $(A,\to)$
is {\em weakly Church--Rosser} (WCR) if
$\leftarrow\rightarrow \,\subseteq\, \rightarrow^*\leftarrow^*$,
{\em Church--Rosser} (CR) if
$\leftarrow^*\rightarrow^* \,\subseteq\, \rightarrow^*\leftarrow^*$,
and {\em increasing} (Inc)
if there exists a function $\incf{\cdot} : A \to \Nat$ such that
$a \to b$ implies $\incf{a} < \incf{b}$.
Let us also recall
{\em Klop--Nederpelt's lemma}~\cite[Theorem~1.2.3~(iii)]{Terese},
which states that Inc $\land$ WCR $\land$ WN $\implies$ SN $\land$ CR.

Let $(A,\to)$ be increasing and WCR.
Given a reduction $a \to^* b$, where $b$ is a normal form,
we can find a {\em decreasing} measure for
the set of objects reachable from $a$, that is,
the set $\set{c \in A \ST a \to^* c}$.
In fact, by Klop--Nederpelt's lemma,
we know that for every $c \in A$ such that $a \to^* c$
we have that $c \to^* b$,
which implies that $\incf{c} \leq \incf{b}$,
and hence we can define $\#(c) := \incf{b} - \incf{c}$.
It is easy to see that $\#(-)$ is a decreasing measure,
since $c \to c'$ implies that $\incf{c} < \incf{c'}$
so $\#(c) := \incf{b} - \incf{c} > \incf{b} - \incf{c'} = \#(c')$.
Furthermore, the value of $\#(c)$ does not depend on the choice
of $a$, by uniqueness of normal forms.

The idea behind the $\meassym$-measure is that
the construction of a {\em decreasing} measure
can be based on an {\em increasing} measure,
according to the previous observation.
It is not possible to build an increasing measure directly
for the STLC; \eg the following 
infinite sequence of expansions
$\tm \leftarrow I\,\tm \leftarrow I\,(I\,\tm) \leftarrow \hdots$
would induce an infinite decreasing chain of natural numbers
$\incf{\tm} > \incf{I\,\tm} > \incf{I\,(I\,\tm)} > \hdots$.

One could try to define an increasing measure
in a variant of the STLC
such as Endrullis \etal's clocked $\lambda$-calculus~\cite{EndrullisHKP17},
in which
the $\beta$-rule becomes
$(\lam{\var}{\tm})\,\tmtwo \to \counter(\tm\sub{\var}{\tmtwo})$,
that is, contracting a $\beta$-redex produces a counter ``$\counter$'' that
keeps track of the number of contracted redexes.
One could then count the number of $\tau$'s:
for example, in the reduction sequence
$(\lam{\var}{\var\,(\var\,\vartwo)})\,I
 \to \counter(I\,(I\,\vartwo))
 \to \counter\counter(I\,\vartwo)
 \to \counter\counter\counter\vartwo$
the number of counters strictly increases with each step.
Unfortunately, this does not define an increasing measure, due to {\em erasure}.
For example, $(\lam{\var}{\vartwo})\,\tm \to \counter\vartwo$
erases all the counters in $\tm$.

This is the motivation behind the definition of the $\lambdaG$-calculus,
which avoids erasure by always keeping an extra copy of the argument in a wrapper.
The $\lambdaG$-calculus is indeed increasing: in a step $\tm \tog \tmtwo$
one has that $\weight{\tm} < \weight{\tmtwo}$,
where we recall that $\weight{\tm}$ denotes the {\em weight},
\ie the number of wrappers in~$\tm$.
For example, the step
$(\lam{\var}{\vartwo})\,(\bin{\varthree}{\varthree})
 \tog \bin{\vartwo}{\bin{\varthree}{\varthree}}$
increases the number of wrappers.
The decreasing measure $\meas{\ltm}$ is defined essentially
by reducing $\ltm$ to normal form in the $\lambdaG$-calculus
and counting the number of wrappers in the result:

\begin{definition}[The $\meassym$-measure]
\ldef{meassym_measure}
For each typable $\lambda$-term $\ltm$,
define $\meas{\ltm} \eqdef \weight{\simpfull{\ltm}}$.
\end{definition}
As we show below, $\simpfull{\ltm}$ turns out to be
exactly the normal form of $\ltm$ in the $\lambdaG$-calculus.
We insist in writing $\simpfull{\ltm}$
to emphasize that the {\em definition} of the $\meassym$-measure
does not require to prove that the $\lambdaG$-calculus is weakly normalizing.
Indeed, the simplification $\simp{d}{\tm}$
can be defined by structural induction on $\tm$,
and the full simplification
$\simpfull{\tm} = \simp{1}{\simp{2}{\hdots\simp{D}{\tm}}}$
can be calculated in exactly $D$ iterations.
On the other hand, the {\em proof} that the $\meassym$-measure is
decreasing does rely on the fact that
$\simpfull{\ltm}$ is the normal form of $\ltm$.
\medskip

In the remainder of this section, we prove that the $\meassym$-measure
is indeed decreasing.
The following lemma states that forgetful reduction decreases weight,
and it is straightforward to prove:

\begin{lemma}
\llem{shrinking_decrases_weight}
If $\tm \shone^+ \tmtwo$ then $\weight{\tm} > \weight{\tmtwo}$.
\end{lemma}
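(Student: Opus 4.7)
The plan is to reduce the statement to the one-step case, and then prove that one-step case by induction on terms (equivalently, on the context in which the forgetful axiom is applied).

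First I would observe that the $\weight{-}$ function extends straightforwardly by structural recursion: $\weight{\var} = 0$, $\weight{\lam{\var}{\tm}} = \weight{\tm}$, $\weight{\tm\,\tmtwo} = \weight{\tm} + \weight{\tmtwo}$, and $\weight{\bin{\tm}{\tmtwo}} = 1 + \weight{\tm} + \weight{\tmtwo}$. The key single-step claim is: if $\tm \shone \tmtwo$, then $\weight{\tm} > \weight{\tmtwo}$. The base case of the axiom $\bin{\tmthree}{\tmfour} \shone \tmthree$ is immediate since $\weight{\bin{\tmthree}{\tmfour}} = 1 + \weight{\tmthree} + \weight{\tmfour} \geq 1 + \weight{\tmthree} > \weight{\tmthree}$. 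For the inductive/compatibility cases (abstraction, application on either side, wrapper in either position), the structural definition of $\weight{-}$ ensures that the strict inequality at the reduced subterm lifts verbatim to the enclosing term, since all other summands are unchanged and weight is additive.

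Having the one-step lemma, I would then conclude the $\shone^+$ case by a trivial induction on the length $n \geq 1$ of the forgetful reduction sequence $\tm = \tm_0 \shone \tm_1 \shone \hdots \shone \tm_n = \tmtwo$: each step strictly decreases weight, and strict inequality in $(\Nat, >)$ is transitive.

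I do not expect any genuine obstacle here: everything follows from the fact that the forgetful axiom literally removes one wrapper (and also discards the weight of the memorized term, which only helps), and that $\weight{-}$ is a sum over subterm occurrences so it is monotone under contexts. The only thing worth being careful about is writing out the case for compatibility inside a wrapper, since there the enclosing $\garb{\cdot}$ contributes its own $+1$ that must not be confused with the $+1$ being removed inside; but the accounting is immediate once $\weight{-}$ is written recursively as above.
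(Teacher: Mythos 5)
Your proof is correct and is exactly the ``straightforward'' argument the paper has in mind (the paper omits the proof of this lemma entirely, deeming it immediate): the recursive definition of $\weight{-}$ as a sum over subterms makes the root case $\bin{\tmthree}{\tmfour} \shone \tmthree$ drop at least one wrapper, the strict inequality lifts through every context case by additivity, and transitivity handles $\shone^+$. No gaps.
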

The proof that the $\meassym$-measure decreases relies on the two following
properties that relate full simplification $\simpfull{-}$
respectively with reduction ($\tog$) and forgetful reduction ($\shone^+$):
\begin{lemma}
\llem{simpfull_properties}
\Item{1.}
  If $\tm \tog \tmtwo$ then $\simpfull{\tm} = \simpfull{\tmtwo}$.
\Item{2.}
  If $\tm \rhd^+ \tmtwo$ then $\simpfull{\tm} \rhd^+ \simpfull{\tmtwo}$.
\end{lemma}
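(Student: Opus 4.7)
The plan is to derive both parts directly from the results already established in~\rsec{lambdaG}, namely confluence (\rprop{confluence}), the fact that full simplification computes the normal form (\rprop{simplification_is_normalization}), and the commutation of forgetful reduction with $\tog$-reduction (\rprop{shrinking_simulation}). The key meta-observation I will use throughout is that a confluent rewriting system has \emph{unique} normal forms: if $\tmthree \tog^* \tmfour_1$ and $\tmthree \tog^* \tmfour_2$ with $\tmfour_1,\tmfour_2$ both $\tog$-normal, then confluence yields a common reduct, but normal forms do not reduce, so $\tmfour_1 = \tmfour_2$. In particular, whenever $\tmthree \tog^* \tmfour$ with $\tmfour$ a normal form, necessarily $\tmfour = \simpfull{\tmthree}$.

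For~\textbf{Part 1}, I start from $\tm \tog \tmtwo$. By \rprop{simplification_is_normalization}, $\tmtwo \tog^* \simpfull{\tmtwo}$, so composing with $\tm \tog \tmtwo$ yields $\tm \tog^* \simpfull{\tmtwo}$; and $\simpfull{\tmtwo}$ is a $\tog$-normal form. Since $\tm$ also reaches the normal form $\simpfull{\tm}$, uniqueness of normal forms (from \rprop{confluence}) gives $\simpfull{\tm} = \simpfull{\tmtwo}$.

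For~\textbf{Part 2}, I start from $\tm \shone^+ \tmtwo$ and apply \rprop{shrinking_simulation} to the pair $\tm \shone^+ \tmtwo$ and $\tm \tog^* \simpfull{\tm}$. This yields some $\tmtwo'$ with $\simpfull{\tm} \shone^+ \tmtwo'$ and $\tmtwo \tog^* \tmtwo'$. Now I exploit the second clause of \rprop{shrinking_simulation}: since $\simpfull{\tm}$ is a $\tog$-normal form and $\simpfull{\tm} \shone^+ \tmtwo'$, the term $\tmtwo'$ is also a $\tog$-normal form. Combining $\tmtwo \tog^* \tmtwo'$ with $\tmtwo \tog^* \simpfull{\tmtwo}$ and uniqueness of normal forms, I conclude $\tmtwo' = \simpfull{\tmtwo}$, which plugged back into $\simpfull{\tm} \shone^+ \tmtwo'$ gives $\simpfull{\tm} \shone^+ \simpfull{\tmtwo}$.

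The only subtle point, and the one I would flag as the main potential obstacle, is that I need uniqueness of $\tog$-normal forms as a consequence of~\rprop{confluence}; this is standard and essentially free, but if the paper's convention were to allow non-terminating branches or non-unique normal forms this argument would collapse. Granted the standard reading of confluence used here, both parts are essentially one-line arguments built on top of the previously proved properties, so no new induction or case analysis is required.
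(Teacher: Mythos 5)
Your proof is correct and follows essentially the same route as the paper's: both parts are obtained by combining \rprop{simplification_is_normalization}, uniqueness of normal forms via \rprop{confluence}, and (for part~2) the commutation and normal-form-preservation clauses of \rprop{shrinking_simulation}. The ``subtle point'' you flag is indeed harmless, since confluence plus the fact that normal forms do not reduce gives uniqueness of normal forms exactly as you use it.
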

\begin{proof}
For the first item,
note that by \rprop{simplification_is_normalization},
we know that
$\tm \tog^* \simpfull{\tm}$
and that $\tm \tog \tmtwo \tog^* \simpfull{\tmtwo}$,
where moreover $\simpfull{\tm}$ and $\simpfull{\tmtwo}$
are $\tog$-normal forms.
By confluence~(\rprop{confluence}), this means that
$\simpfull{\tm} = \simpfull{\tmtwo}$.

For the second item,
note that by \rprop{simplification_is_normalization},
we know that
$\tm \tog^* \simpfull{\tm}$.
Since we also know $\tm \shone^+ \tmtwo$ by hypothesis,
and since forgetful reduction commutes with reduction~(\rprop{shrinking_simulation}),
there exists a term $\tmthree$
such that $\tmtwo \tog^* \tmthree$ and $\simpfull{\tm} \shone^+ \tmthree$.
By \rprop{simplification_is_normalization} we know that
$\simpfull{\tm}$ is in normal form, so by \rprop{shrinking_simulation}
$\tmthree$ must also be a normal form.
On the other hand,
by \rprop{simplification_is_normalization} we know that
$\tmtwo \tog^* \simpfull{\tmtwo}$,
where $\simpfull{\tmtwo}$ must also be a normal form.
In summary, we have that 
$\tmtwo \tog^* \tmthree$
and
$\tmtwo \tog^* \simpfull{\tmtwo}$,
where both $\tmthree$ and $\simpfull{\tmtwo}$
are normal forms.
By confluence~(\rprop{confluence}) $\tmthree = \simpfull{\tmtwo}$,
and from this we obtain that
$\simpfull{\tm} \shone^+ \tmthree = \simpfull{\tmtwo}$,
as required.
\end{proof}

\begin{theorem}
\lthm{z_measure_decreases}
Let $\ltm,\ltmtwo$ be typable $\lambda$-terms
such that $\ltm \tobeta \ltmtwo$.
Then $\meas{\ltm} > \meas{\ltmtwo}$.
\end{theorem}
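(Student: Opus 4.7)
The plan is to relate the $\beta$-step $\ltm \tobeta \ltmtwo$ in the STLC to a corresponding $\tog$-step in the $\lambdaG$-calculus and then push the comparison through full simplification, using the two items of \rlem{simpfull_properties} together with \rlem{shrinking_decrases_weight}.

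First, I would invoke the Reduce/forget lemma~(\rlem{reduce_shrink_lemma}): the $\beta$-step $\ltm \tobeta \ltmtwo$ has a corresponding $\lambdaG$-step $\ltm \tog \tmtwo$ with $\tmtwo \shone \ltmtwo$. Intuitively, contracting the redex in $\lambdaG$ produces the ``same'' term as in the STLC except that a wrapper remembering the argument has been added; erasing that wrapper is exactly one forgetful step back to $\ltmtwo$.

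Next, I would combine these two ``parallel tracks'' by pushing them through $\simpfull{-}$. From $\ltm \tog \tmtwo$, item~1 of \rlem{simpfull_properties} gives $\simpfull{\ltm} = \simpfull{\tmtwo}$. From $\tmtwo \shone \ltmtwo$ (which is in particular a $\shone^+$-step), item~2 of the same lemma gives $\simpfull{\tmtwo} \shone^+ \simpfull{\ltmtwo}$. Chaining these, $\simpfull{\ltm} \shone^+ \simpfull{\ltmtwo}$.

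Finally, \rlem{shrinking_decrases_weight} tells us that any nonempty forgetful reduction strictly decreases the weight, so $\weight{\simpfull{\ltm}} > \weight{\simpfull{\ltmtwo}}$, which by \rdef{meassym_measure} is exactly $\meas{\ltm} > \meas{\ltmtwo}$. No step in this chain is actually the hard one — all the real work has already been done: the ``honest'' heavy lifting is in \rlem{simpfull_properties} (which needs confluence, weak normalization via \rprop{simplification_is_normalization}, and the commutation of forgetful reduction with $\tog$), and in the Reduce/forget observation that a single $\beta$-step corresponds to a single $\tog$-step followed by one forgetful step. Given those tools, the proof is just a short diagram chase.
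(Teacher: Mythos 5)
Your proof is correct and follows exactly the same route as the paper's: the reduce/forget lemma gives $\ltm \tog \tmtwo \shone \ltmtwo$, the two items of \rlem{simpfull_properties} yield $\simpfull{\ltm} = \simpfull{\tmtwo} \shone^+ \simpfull{\ltmtwo}$, and \rlem{shrinking_decrases_weight} concludes. Nothing is missing.
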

\begin{proof}
Given the step $\ltm \tobeta \ltmtwo$,
consider the corresponding step $\ltm \tog \tmtwo$,
and note that $\tmtwo \shone^+ \ltmtwo$ by
the reduce/forget lemma~(\rlem{reduce_shrink_lemma}).
Since $\ltm \tog \tmtwo \shone^+ \ltmtwo$,
by \rlem{simpfull_properties},
we have that
$\simpfull{\ltm} = \simpfull{\tmtwo} \shone^+ \simpfull{\ltmtwo}$.
Finally, by \rlem{shrinking_decrases_weight},
$\meas{\ltm}
 = \weight{\simpfull{\ltm}}
 > \weight{\simpfull{\ltmtwo}}
 = \meas{\ltmtwo}$.
\end{proof}

The following is one example that the $\meassym$-measure decreases
(see~\rexample{full_simplification} for another example):
\begin{example}
Let
  $\ltm = (\lam{\var^\bbtyp}{
            \vartwo^{\bbtyp\to\bbtyp\to\bbtyp}\,\var^\bbtyp\,\var^\bbtyp
          })
          \,
         ((\lam{\var^{\bbtyp\to\bbtyp}}{\var^{\bbtyp\to\bbtyp}\,\varthree^\bbtyp})
          \,f^{\bbtyp\to\bbtyp})$,
consider the step
$
  \ltm =
  (\lam{\var}{
    \vartwo\,\var\,\var
  })((\lam{\var}{\var\,\varthree})\,f)
  \tobeta
  (\lam{\var}{
    \vartwo\,\var\,\var
  })\,(f\,\varthree) = \ltmtwo
$, and note that
$\meas{\ltm} = \weight{\simpfull{\ltm}} = 4 > 1 = \meas{\ltmtwo}$,
since:
\[
  \simpfull{\ltm} =
    \bin{
      (\vartwo
      \,\bin{(f\,\varthree)}{f}
      \,\bin{(f\,\varthree)}{f})
    }{
      \bin{(f\,\varthree)}{f}
    }
  \HS\HS
  \simpfull{\ltmtwo} =
    \bin{
      (\vartwo
      \,(f\,\varthree)
      \,(f\,\varthree))
    }{
      f\,\varthree
    }
\]
\end{example}

\section{Reduction by degrees}
\lsec{reduction_by_degrees}

This section is of purely technical nature.
The aim is to develop tools
that we use in the following section to reason about the $\amesym$-measure.
To do so, we need to introduce witnesses of steps and
reduction sequences, treating the $\lambdaG$-calculus as
an {\em abstract rewriting system} in the sense of~\cite[Def.~8.2.2]{Terese}
or as a {\em transition system} in the sense of~\cite[Def.~1]{Mellies05}.
{\em Objects} are $\lambdaG$-terms,
{\em steps} are 5-uples $\redex = (\gctx,\var,\tm,\sctx,\tmtwo)$
witnessing the reduction step
$\of{\gctx}{(\lam{\var}{\tm})\sctx\,\tmtwo} \tog
 \of{\gctx}{\tm\sub{\var}{\tmtwo}\garb{\tmtwo}\sctx}$
under a context $\gctx$, and
{\em reductions} ($\redseq,\redseqtwo,\hdots$)
are sequences of composable steps.
Similarly, {\em forgetful steps} are triples $\redex = (\gctx,\tm,\tmtwo)$
witnessing the forgetful reduction $\of{\gctx}{\tm\garb{\tmtwo}} \shone \of{\gctx}{\tm}$,
and {\em forgetful reductions} (also written $\redseq,\redseqtwo,\hdots$)
are sequences of composable forgetful steps.
We write $\src{\redseq}$ and $\tgt{\redseq}$ respectively for the
source and target terms of $\redseq$.

For each $d \in \Natz$,
we define {\bf reduction of degree $d$} as follows:

\begin{definition}
$\tm \tod{d} \tmtwo$
if and only if
$\tm \tog \tmtwo$ by contracting a redex of degree $d$.
\end{definition}
We write $\redex : \tm \tod{d} \tmtwo$
if $\redex$ is a step witnessing a reduction step of degree $d$,
and $\redseq : \tm \rtod{d} \tmtwo$
if $\redseq$ is a reduction witnessing a sequence of
reduction steps of degree $d$.

The following results require to explicitly manipulate steps and reductions.
\SeeAppendix{
We only give sketches of the proofs for lack of space.
See~\rsec{appendix:reduction_by_degrees} in the appendix for
detailed proofs.}

\begin{proposition}[Commutation of reduction by degrees]
\lprop{commutation_by_degrees}
For any two reductions
    $\redseq : \tm_1 \rtod{d} \tm_2$
and $\redseqtwo : \tm_1 \rtod{D} \tm_3$,
there exists a term $\tm_4$
and one can construct
reductions
    $\redseqtwo/\redseq : \tm_2 \rtod{D} \tm_4$
and $\redseq/\redseqtwo : \tm_3 \rtod{d} \tm_4$
such that,
furthermore, if $d \neq D$, then
\Item{1.}
  $\redseq/\redseqtwo$
  contains at least as many steps as $\redseq$;
and
\Item{2.}
  $\redseq/\redseqtwo$ determines $\redseq$,
  that is,
  $\redseq_1/\redseqtwo = \redseq_2/\redseqtwo$
  implies $\redseq_1 = \redseq_2$.
\end{proposition}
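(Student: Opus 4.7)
The plan is to build this result on the residual theory of the orthogonal higher-order rewriting system $\lambdaG$, which was already invoked implicitly for subject reduction and confluence. For a pair of steps $\redex, \redextwo$ with a common source, I would define the residual $\redextwo/\redex$ explicitly by case analysis on the relative positions of the two contracted redexes (disjoint; nested above or below; inside the argument; inside the memory), and extend this to sequences by iterated projection: $\redseqtwo/\redseq$ is obtained by projecting $\redseqtwo$ across the steps of $\redseq$ one at a time, and analogously for $\redseq/\redseqtwo$. A cube lemma at the level of single steps then extends by induction on $|\redseq| + |\redseqtwo|$ to yield a common term $\tm_4$ and two reductions $\redseqtwo/\redseq : \tm_2 \tog^* \tm_4$ and $\redseq/\redseqtwo : \tm_3 \tog^* \tm_4$. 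This construction works for arbitrary $d, D$.

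On top of the classical residual machinery I need two preservation properties specific to $\lambdaG$. First, residuals preserve the degree of a redex: any residual of an applied $\symg$-abstraction $(\lam{\var}{\tm'})\sctx'\,\tmtwo'$ is again an applied $\symg$-abstraction whose abstraction has the same type, because substitution and duplication leave the type of the abstraction invariant. Hence every step of $\redseqtwo/\redseq$ is of degree $D$ and every step of $\redseq/\redseqtwo$ is of degree $d$, as the statement claims. Second, and crucially for this calculus, every redex of a term distinct from the one being contracted in a step $\tm \tog \tmtwo$ has \emph{at least one} residual in $\tmtwo$: both the argument (preserved inside the new wrapper $\garb{\tmtwo'}$) and the memory $\sctx'$ (merely shifted) survive the contraction, so no redex occurrence can be erased.

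For item~(1), the assumption $d \neq D$ rules out the coincidence of any step of $\redseq$ (of degree $d$) with any step of $\redseqtwo$ (of degree $D$); combined with the no-erasure property above, every step of $\redseq$ therefore has at least one residual after each step of $\redseqtwo$. Iterating along $\redseqtwo$ yields $|\redseq/\redseqtwo| \geq |\redseq|$. For item~(2), I would appeal to the standard orthogonality fact that each residual has a \emph{unique} ancestor: given $\redseq_1/\redseqtwo = \redseq_2/\redseqtwo$, I peel off the steps of $\redseqtwo$ from the right in turn, and at each stage recover the previous projection uniquely by taking ancestors, finally obtaining $\redseq_1 = \redseq_2$. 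Once again, the hypothesis $d \neq D$ is what rules out the degenerate case in which a step of $\redseq_i$ is cancelled (equal to a step of $\redseqtwo$), which would destroy injectivity of the ancestor map.

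The main obstacle, I expect, is giving a sufficiently precise treatment of residuals in $\lambdaG$ to justify the two specific properties above: tracking redex occurrences through the unusual rule $(\lam{\var}{\tm'})\sctx'\,\tmtwo' \tog \tm'\sub{\var}{\tmtwo'}\garb{\tmtwo'}\sctx'$, with its wrapper and memory manipulation, and verifying on the nose that every non-contracted redex occurrence in the source maps to at least one occurrence in the reduct and that the induced ancestor map is a well-defined function. This is mostly bookkeeping, but it cannot be cited verbatim from standard $\lambda$-calculus references because of the memorized-term machinery and the need to treat $\symg$-abstractions (whose memory $\sctx$ can be nontrivially rearranged by a surrounding step) as atomic entities from the point of view of redexes.
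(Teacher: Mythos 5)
Your proposal follows essentially the same route as the paper: both reduce the statement to the standard residual/projection machinery for orthogonal HRSs, obtain item~(1) from non-erasure of $\lambdaG$ (every non-contracted redex keeps at least one residual, thanks to the wrapper and the memory), and item~(2) from the unique-ancestor property, with $d \neq D$ ruling out cancellation. The only cosmetic difference is that the paper's detailed construction organizes the projection via a parallel multi-step relation $\ptod{d}$ (step-vs-multi-step, then sequence-vs-multi-step, then sequence-vs-sequence) rather than an induction on $|\redseq|+|\redseqtwo|$, which is the cleaner way to handle the fact that a single projected step may become several steps.
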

\begin{proof}
This is reduced to the fact that the $\lambdaG$-calculus can be understood
as an orthogonal higher-order rewriting system in the sense of Nipkow~\cite{nipkow1991higher}.
Indeed, $\redseq/\redseqtwo$ and $\redseqtwo/\redseq$ can be taken to be
the standard notion of projection based on residuals for orthogonal HRSs.
Note that
item~\Item{1.} holds because the $\lambdaG$-calculus is non-erasing
while
item~\Item{2.} is a consequence of the {\em unique ancestor} property,
\ie each redex {\em descends} from at most one redex.
\end{proof}

\begin{corollary}[Termination of reduction by degrees]
\lcoro{termination_of_tod}
The relation $\tod{d}$ is strongly normalizing.
\end{corollary}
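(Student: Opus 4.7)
The argument reduces to the finite developments theorem for orthogonal higher-order rewriting systems, exploiting the fact that a degree-$d$ contraction cannot create new redexes of degree $\geq d$.

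Fix $\tm$ and consider an arbitrary $\tod{d}$-reduction starting at $\tm$. I would first record two stability properties of degree-$d$ redexes under $\tod{d}$-steps. On one hand, residuals of degree-$d$ redexes are themselves of degree $d$: the outermost abstraction of a redex (and hence its type, and hence its degree) is preserved by the residual relation under reductions performed elsewhere. On the other hand, by the key observation already used in \rprop{simplification_is_normalization}, a step of degree $d$ cannot create any redex of degree $\geq d$; in particular, no new degree-$d$ redexes appear in the reduct.

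Combining both, a straightforward induction on the length of a $\tod{d}$-reduction from $\tm$ shows that every degree-$d$ redex occurring along it is a residual of some degree-$d$ redex originally present in $\tm$. Consequently, any such reduction is a development of the finite set $\mathcal{F}$ of degree-$d$ redex occurrences of $\tm$. Since the $\lambdaG$-calculus is an orthogonal HRS (as invoked for \rprop{subject_reduction} and \rprop{confluence}, and as the residual-based projections of \rprop{commutation_by_degrees} already presuppose), the finite developments theorem applies and guarantees that every development of $\mathcal{F}$ terminates; this is precisely strong normalization of $\tod{d}$.

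The main obstacle is the precise justification of the two stability properties, which requires some explicit bookkeeping of residuals of the kind implicit in \rprop{commutation_by_degrees}. An alternative, should one prefer not to appeal to finite developments as a black box, is to extract a concrete decreasing measure directly from the development structure of $\mathcal{F}$---for example, the multiset over $\mathcal{F}$ whose entries count the residuals of each original redex remaining in the current term---and prove termination by induction on its multiset extension of $(\Nat,>)$.
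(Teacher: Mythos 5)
Your argument is correct and coincides with the paper's primary proof: both reduce strong normalization of $\tod{d}$ to the Finite Developments theorem for orthogonal HRSs, using the observation that contracting a degree-$d$ redex cannot create redexes of degree $d$, so every $\tod{d}$-reduction is a development of the degree-$d$ redexes of the starting term. The paper additionally sketches a second route (weak normalization via $\tm \rtod{d} \simp{d}{\tm}$ combined with uniform normalization, which holds since there is no erasure), but that is only an alternative.
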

\begin{proof}
This is a consequence of the fact that HRSs enjoy the
Finite Developments property~\cite[Theorem~11.5.11]{Terese},
observing that reduction of degree $d$ does not create redexes of degree $d$.
Alternatively, it can be easily shown that $\tm \rtod{d} \simp{d}{\tm}$
and $\simp{d}{\tm}$ is in $\tod{d}$-normal form,
so $\tod{d}$ is WN.
Moreover, one can observe that $\tod{d}$ is
{\em uniformly normalizing}~\cite{KhasidashviliOO01},
given that there is no erasure, which entails that $\tod{d}$ is SN.
\end{proof}

\begin{proposition}[Lifting property for lower steps]
\lprop{retraction_of_higher_degree_steps}
Let $d < D$ and $\tm \tod{d} \tmtwo \rtod{D} \tmtwo'$.
Then there exist terms $\tm',\tmtwo''$
such that $\tm \rtod{D} \tm'$
and $\tmtwo' \rtod{D} \tmtwo''$
and $\tm' \todplus{d} \tmtwo''$.
\end{proposition}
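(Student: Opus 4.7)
The plan is to take $\tm' := \simp{D}{\tm}$ and $\tmtwo'' := \simp{D}{\tmtwo}$, i.e., the complete developments of the degree-$D$ redexes of $\tm$ and of $\tmtwo$ respectively.

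The two outer reductions are then straightforward. The reduction $\tm \rtod{D} \tm'$ is exactly the development of the degree-$D$ redexes of $\tm$, given by the key lemma underlying Proposition~\ref{prop:simplification_is_normalization}. For $\tmtwo' \rtod{D} \tmtwo''$, recall from the proof of that proposition that $\simp{D}{\tmtwo}$ contains no redex of degree $D$ and is therefore a $\tod{D}$-normal form. Since $\tmtwo \rtod{D} \tmtwo'$ by hypothesis and $\tmtwo \rtod{D} \simp{D}{\tmtwo}$ by the same lemma, and since $\tod{D}$ is confluent (the $d = D$ case of Proposition~\ref{prop:commutation_by_degrees}), we obtain $\tmtwo' \rtod{D} \simp{D}{\tmtwo} = \tmtwo''$.

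The heart of the proof is to establish $\tm' \todplus{d} \tmtwo''$. The plan is to apply Proposition~\ref{prop:commutation_by_degrees} to the two reductions starting at $\tm$: the single step $\redex : \tm \tod{d} \tmtwo$ on one side and the development $\tm \rtod{D} \simp{D}{\tm}$ on the other. This yields a term $\tmthree$ with $\simp{D}{\tm} \rtod{d} \tmthree$ and $\tmtwo \rtod{D} \tmthree$. Since $d \neq D$, the same proposition furthermore guarantees that $\simp{D}{\tm} \rtod{d} \tmthree$ has at least as many steps as $\redex$, hence at least one: $\simp{D}{\tm} \todplus{d} \tmthree$.

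It remains to identify $\tmthree$ with $\simp{D}{\tmtwo}$. The projection $\tmtwo \rtod{D} \tmthree$ contracts precisely the residuals in $\tmtwo$ of the degree-$D$ redexes of $\tm$. By the Turing/Prawitz creation property, the step $\redex$ of degree $d < D$ cannot create any degree-$D$ redex, so every degree-$D$ redex of $\tmtwo$ descends from a degree-$D$ redex of $\tm$. Consequently $\tmtwo \rtod{D} \tmthree$ contracts all degree-$D$ redexes of $\tmtwo$, leaving $\tmthree$ free of degree-$D$ redexes; by confluence of $\tod{D}$, $\tmthree = \simp{D}{\tmtwo} = \tmtwo''$. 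The hardest part of the proof will be exactly this last identification, which crucially relies on the Turing/Prawitz observation; the rest is routine bookkeeping with commutation and confluence.
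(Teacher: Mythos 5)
Your construction is essentially the paper's: both take $\tm' := \simp{D}{\tm}$, apply \rprop{commutation_by_degrees} to $\redex : \tm \tod{d} \tmtwo$ versus $\tm \rtod{D} \simp{D}{\tm}$ to obtain a term $\tmthree$ with $\simp{D}{\tm} \todplus{d} \tmthree$ and $\tmtwo \rtod{D} \tmthree$, and then identify $\tmthree$ as the relevant $\tod{D}$-normal form. The one place you diverge is in showing that $\tmthree$ contains no degree-$D$ redexes. You argue along the $\rtod{D}$ leg: the projection is a complete development of the residuals of the degree-$D$ redexes of $\tm$, and since $\redex$ has degree $d<D$ it creates no degree-$D$ redexes, so these residuals exhaust the degree-$D$ redexes of $\tmtwo$. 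This is correct, but it imports residual-theoretic facts (that the projection of a complete development is a complete development of the residuals, and that residuals preserve degree) which the paper never makes explicit. The paper instead reads the same fact off the \emph{other} leg of the square: $\simp{D}{\tm}$ is a $\tod{D}$-normal form (\rlem{simpk_is_normal_by_degrees}), and $\tod{D}$-normal forms are preserved by $\tod{d}$-reduction for $d<D$ (\rlem{reduction_does_not_create_higher_degree_redexes}), so $\tmthree$ is already a $\tod{D}$-normal form and a second application of commutation forces $\tmthree = \tmtwo''$. Both routes rest on the same Turing--Prawitz non-creation principle, but the paper's is lighter, needing only an already-proved preservation lemma rather than an analysis of what the projection contracts. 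Your derivation of $\tmtwo' \rtod{D} \simp{D}{\tmtwo}$ via uniqueness of $\tod{D}$-normal forms is fine and interchangeable with the paper's second use of commutation.
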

\begin{proof}
Note that $\tm \rtod{D} \simp{D}{\tm}$.
By~\rprop{commutation_by_degrees},
there exists a term $\tmthree$ such that
$\tmtwo \rtod{D} \tmthree$
and
$\simp{D}{\tm} \todplus{d} \tmthree$.
Again by~\rprop{commutation_by_degrees},
there exists $\tmtwo''$
such that $\tmthree \rtod{D} \tmtwo''$
and $\tmtwo' \rtod{D} \tmtwo''$.
Moreover, $\simp{D}{\tm}$ is in $\tod{D}$-normal form.
Since $\simp{D}{\tm} \rtod{d} \tmthree$ with $d < D$
and reduction does not create redexes of higher degree,
$\tmthree$ is also in $\tod{D}$-normal form,
so $\tmthree = \tmtwo''$, and we are done.
\end{proof}

\begin{proposition}[Postponement of forgetful reduction]
\lprop{retraction_before_shrinking}
For any two reductions
    $\redseq : \tm \shone^* \tm'$
and $\redseqtwo : \tm' \rtod{d} \tmtwo'$,
there exists a term $\tmtwo$
and reductions $\protract{\redseq}{\redseqtwo} : \tmtwo \shone^* \tmtwo'$
and $\retract{\redseqtwo}{\redseq} : \tm \rtod{d} \tmtwo$.
Furthermore, $\retract{\redseqtwo}{\redseq}$ determines $\redseqtwo$,
that is,
$\retract{\redseqtwo_1}{\redseq} = \retract{\redseqtwo_2}{\redseq}$
implies $\redseqtwo_1 = \redseqtwo_2$.
\end{proposition}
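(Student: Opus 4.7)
The plan is to reduce the statement to a local commutation lemma: given a single forgetful step $\redex_f : \tm \shone \tm_1$ followed by a single step $\redex' : \tm_1 \tod{d} \tm_2$, I will construct a step $\redex : \tm \tod{d} \tm_3$ together with a forgetful reduction $\tm_3 \shone^* \tm_2$, such that $\redex$ is uniquely determined by the pair $(\redex_f, \redex')$. Granting this, the full postponement claim follows by two nested inductions: first, lift a single $\tod{d}$ step past the whole of $\redseq$ by induction on the length of $\redseq$; then lift the entire $\redseqtwo$ step by step, composing the produced $\tod{d}$ steps to form $\retract{\redseqtwo}{\redseq}$ and concatenating the intermediate forgetful reductions to form $\protract{\redseq}{\redseqtwo}$. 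The style of argument parallels the residual-based reasoning already used for \rprop{commutation_by_degrees}.

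To prove the local lemma, I would write $\tm = \of{\gctx_w}{\bin{Y}{Z}}$ and $\tm_1 = \of{\gctx_w}{Y}$, and let the contracted redex in $\tm_1$ be $(\lam{\var}{A})\sctx\,B$, proceeding by case analysis on where the erased wrapper $\bin{Y}{Z}$ sits in $\tm$ relative to this redex. If $\bin{Y}{Z}$ lies disjointly from the redex (i.e., in the outer context), or inside the body $A$, or inside the memory chain $\sctx$ (either as a link of $\sctx$ or inside one of its memorized terms), then contracting the corresponding redex in $\tm$ yields a term differing from $\tm_2$ by exactly one stray occurrence of $\bin{Y}{Z}$, which is erased by a single forgetful step. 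The crucial case is when $\bin{Y}{Z}$ lies inside the argument $B$: contracting $(\lam{\var}{A})\sctx\,B$ in $\tm$ produces $\bin{A\sub{\var}{B}}{B}\sctx$, in which $B$ appears once as the new memorized term plus $k$ times inside $A\sub{\var}{B}$, where $k$ is the number of free occurrences of $\var$ in $A$; each of the $k+1$ resulting copies of $\bin{Y}{Z}$ can be erased by one forgetful step to reach $\tm_2$. The hypothetical case in which the redex would lie inside the memorized term $Z$ cannot arise, since that region is erased in $\tm_1$ and hence cannot contain $\redex'$.

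In every case, the step $\redex$ in $\tm$ is the unique ancestor of $\redex'$ under $\redex_f$: forgetful reduction induces a bijection between the redexes of $\tm$ that do not lie in the erased $Z$ and the redexes of $\tm_1$. Iterating this bijection along the whole of $\redseq$ assigns to each step of $\redseqtwo$ a unique source step in $\retract{\redseqtwo}{\redseq}$, and this assignment is invertible by tracing redexes forward through $\redseq$. Hence $\retract{\redseqtwo}{\redseq}$ determines $\redseqtwo$, which gives the final clause of the statement.

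The main obstacle will be the bookkeeping in the case where the erased wrapper sits inside the argument $B$: I need to verify that the $k+1$ residual copies of $\bin{Y}{Z}$ in $\bin{A\sub{\var}{B}}{B}\sctx$ are exactly the extra structure distinguishing the $\tm$-contraction from the $\tm_1$-contraction, so that erasing them via $\shone^*$ reproduces $\tm_2$ on the nose. A secondary subtlety is to handle the wrapper sitting in the memory chain $\sctx$ correctly, distinguishing whether it is a link of $\sctx$ itself or sits inside one of the memorized terms of a link.
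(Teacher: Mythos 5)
Your proposal is correct and follows essentially the same route as the paper: a local postponement lemma for one forgetful step against one $\tod{d}$-step (proved by case analysis on the position of the erased wrapper relative to the contracted redex, with the critical pair being the wrapper sitting as a link of the memory chain and the duplicating case being the wrapper inside the argument), lifted to sequences by two nested inductions, with injectivity of $\redseqtwo \mapsto \retract{\redseqtwo}{\redseq}$ obtained from the unique-ancestor correspondence exactly as in the paper's appendix.
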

\begin{proof}
This can be reduced to an analysis of the critical pairs
between the rewriting rules defining $\shone^{-1}$ and $\tog$.
Critical pairs are of the form
$(\lam{\var}{\tm})\sctx_1\garb{\tmtwo}\sctx_2\,\tmthree
 \shone
 (\lam{\var}{\tm})\sctx_1\sctx_2\,\tmthree
 \tog
 \tm\sub{\var}{\tmthree}\garb{\tmthree}\sctx_1\sctx_2$
and can be closed by
$(\lam{\var}{\tm})\sctx_1\garb{\tmtwo}\sctx_2\,\tmthree
 \tog
 \tm\sub{\var}{\tmthree}\garb{\tmthree}\sctx_1\garb{\tmtwo}\sctx_2
 \shone
 \tm\sub{\var}{\tmthree}\garb{\tmthree}\sctx_1\sctx_2$.
\end{proof}

The following diagrams depict the statements of the three preceding propositions:
\[
  \xymatrix@C=2.5cm{
    \tm_1 \ar_{\redseqtwo}^{D}[d] \ar_{\redseq}^{d}@{->>}[r]
    \ar|{\text{\rprop{commutation_by_degrees}}}@{}[dr]
  &
    \tm_2 \ar_{\redseqtwo/\redseq}^{D}@{.>>}[d]
  \\
    \tm_3 \ar_{\redseq/\redseqtwo}^{d}@{.>>}[r]
  &
    \tm_4
  }
  \HS\HS
  \xymatrix{
    \ar|{\text{\rprop{retraction_of_higher_degree_steps}}}@{}[drr]
    \tm \ar^{d}[d] \ar^{D}@{.>>}[rr] & & \tm' \ar^{d}@{.>}^>{+}[d]
  \\
    \tmtwo \ar^{D}@{->>}[r] & \tmtwo' \ar^{D}@{.>>}[r] & \tmtwo''
  }
  \HS\HS
  \xymatrix@C=2cm{
    \ar|{\text{\rprop{retraction_before_shrinking}}}@{}[dr]
    \tm \ar_{\redseq}@{}|{\shone^*}[r]
        \ar^{d}_{\retract{\redseqtwo}{\redseq}}@{.>>}[d]
  &
    \tm' \ar^{d}_{\redseqtwo}@{->}^>{*}[d]
  \\
    \tmtwo \ar_{\protract{\redseq}{\redseqtwo}}@{}|{\shone^*}[r]
  &
    \tmtwo'
  }
\]

\section{The $\amesym$-measure}
\lsec{a_measure}

In this section, we {\bf define the $\amesym$-measure}~(\rdef{amesym_measure})
and
we {\bf prove that it is decreasing}~(\rthm{ame_decreasing}).
We start with some preliminary notions.

A partially ordered set $(X,>)$ is {\em well-founded}
if there are no infinite decreasing
chains. $\Multi{X}$ denotes the set of {\em finite multisets} over a set $X$,
which are functions
$\mset : X \to \Natz$ such that $\mset(x) > 0$ for finitely many values of $x \in X$.
We write $\mset+\msettwo$ for the sum of multisets,
and $x \in \mset$ if $\mset(x) > 0$.
We write $\ms{x_1,\hdots,x_n}$ for the multiset of elements
$x_1,\hdots,x_n$, taking multiplicities into account.
If $X$ is a finite set and $f : X \to Y$ is a function,
we use the ``multiset builder'' notation
$\msb{f(x)}{x \in X}$ to denote the multiset
$\sum_{x \in X} [f(x)]$.
If $(X,>)$ is a partially ordered set,
we define a binary relation $\mgt^1$ on multisets
by declaring that
$\mset + \ms{x} \mgtone \mset + \msettwo$
if $x > y$ for every $y \in \msettwo$.
The {\em multiset order} induced by $(X,>)$
is the strict order relation on multisets
defined by declaring that
$\mset \mgt \msettwo$ if and only if $\mset \mathrel{(\mgtone)^+} \msettwo$.
We recall the following widely known theorem
by Dershowitz and Manna~\cite{dershowitz1979proving}:

\begin{theorem}
\lthm{dershowitz_manna}
If $(X,>)$ is well-founded,
then $(\Multi{X},\mgt)$ is well-founded.
\end{theorem}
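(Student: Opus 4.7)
The plan is to argue by contradiction. Suppose there is an infinite descending chain $\mset_1 \mgt \mset_2 \mgt \mset_3 \mgt \hdots$ in $(\Multi{X},\mgt)$. Since $\mgt$ is by definition the transitive closure of $\mgtone$, each step refines into a finite $\mgtone$-chain, and concatenating these refinements I may assume without loss of generality that the whole sequence is already an infinite $\mgtone$-chain. Concretely, for each $i$ there is a decomposition $\mset_i = \mset_i' + \ms{x_i}$ and $\mset_{i+1} = \mset_i' + \msettwo_i$ with $x_i > y$ for every $y \in \msettwo_i$.

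Next I would build a labelled forest $F$ whose nodes are occurrences of elements of $X$, arranged so that every edge goes from a node labelled $x$ to a node labelled $y$ with $x > y$. The roots of $F$ are the occurrences present in $\mset_1$, counted with multiplicity, hence finitely many. At stage $i$ the invariant is that the current leaves of $F$ correspond exactly to the occurrences in $\mset_i$; the element $x_i$ removed in the transition is therefore identified with a specific current leaf, and I attach to it one child per occurrence in $\msettwo_i$, labelled accordingly. Since $x_i$ is strictly greater than every element of $\msettwo_i$, the edge-labelling property is preserved.

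By construction, $F$ has finitely many roots (namely the cardinality of $\mset_1$) and is finitely branching (each $\msettwo_i$ is a finite multiset). Because the chain is infinite, infinitely many expansion steps take place, so $F$ has infinitely many nodes, and hence one of its finitely many component trees is infinite. By K\"onig's lemma this tree contains an infinite branch starting at its root; reading the labels along that branch yields an infinite strictly descending sequence in $(X,>)$, contradicting well-foundedness.

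The main obstacle is the bookkeeping: one must manipulate occurrences (tokens) rather than just values in $X$, so that multiplicities are tracked correctly and the $x_i$ distinguished at each step is unambiguously matched with a specific leaf of the forest. Once this is set up, the refinement from $\mgt$ to $\mgtone$, the forest construction, and the closing K\"onig's lemma step are all routine. An alternative would be Nash-Williams' minimal bad sequence argument, but the forest construction seems more transparent here.
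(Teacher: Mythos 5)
Your proof is essentially correct, but note first that the paper does not prove this theorem at all: it is recalled as a known result of Dershowitz and Manna and used as a black box. What you give is the classical tree-based argument via K\"onig's lemma, which is one of the standard routes (the others being the minimal bad sequence argument you mention, and a direct well-founded induction on $(X,>)$ showing every multiset is accessible; the latter has the advantage of being constructive, whereas your proof by contradiction plus K\"onig's lemma is inherently classical --- which is fine here, since the paper defines well-foundedness exactly as the absence of infinite descending chains).

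Two bookkeeping points go slightly beyond what you acknowledge. First, your invariant that the current leaves of $F$ are exactly the occurrences of $\mset_i$ fails as soon as some replacement multiset $\msettwo_j$ is empty: the node for $x_j$ then receives no children and remains a graph-theoretic leaf of $F$ even though it no longer occurs in $\mset_{j+1}$. You should track a set of \emph{active} occurrences separately from the leaves of $F$ and match each step against an active node; this also guarantees that each node is expanded at most once, which is what makes $F$ finitely branching. Second, the inference ``the chain is infinite, hence $F$ has infinitely many nodes'' needs a word for the same degenerate case: a step with $\msettwo_i = \msempty$ adds no nodes. But if all but finitely many steps had empty replacement, the cardinality of the multiset would strictly decrease from some point on and the chain would have to terminate; hence infinitely many steps add at least one node and $F$ is indeed infinite. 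With these two patches the K\"onig's lemma step and the final contradiction with the well-foundedness of $(X,>)$ go through as you describe.
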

As usual, $\mset \mgeq \msettwo$
stands for $(\mset = \msettwo \lor \mset \mgt \msettwo)$,
and $\mset \mleq \msettwo$ stands for $\msettwo \mgeq \mset$.
We define an operation $k \mtimes \mset$
by the recursive equations $0 \mtimes \mset \eqdef \msempty$
and $(1 + k) \mtimes \mset \eqdef \mset + k \mtimes \mset$.
The relation $\mset \mgtmap \msettwo$,
called the {\em pointwise multiset order},
is defined to hold if
$\mset$ and $\msettwo$ can be written as of the forms
$\mset = \ms{x_1,\hdots,x_n}$
and
$\msettwo = \ms{y_1,\hdots,y_n}$
in such a way that $x_i > y_i$ for all $i\in1..n$.
Observe that
if $\mset \mgtmap \msettwo$
then for all $k \in \Natz$ we have that $\mset \mgeq k \mtimes \msettwo$.
Another easy-to-check property is that
if $\mset \mgtmap \msettwo$ and $\mset$ is non-empty
then $\mset \mgt \msettwo$.

\subparagraph{A first frustrated attempt}
As mentioned in the introduction,
Turing's measure, given by
$
  \turingme{\ltm} \eqdef
    \msb{d}{\text{$\redex$ is a redex occurrence of degree $d$ in $\ltm$}}
$,
decreases when contracting the rightmost redex of
highest degree.
Our goal is to mend the $\turingmesym$-measure in such a way
that contracting {\em any} redex decreases the measure.
The difficulty is that a redex of degree $d$ may copy redexes of
a higher or equal degree $d' \geq d$.
So one can wonder: whenever a redex $\redex$ of degree $d$
makes $n$ copies of a redex $\redextwo$ of degree $d' \geq d$,
in what sense can the copies of $\redextwo$ be considered ``smaller''
than $\redextwo$?
To address this, we generalize the $\turingmesym$-measure
to a family of measures
$
\turingmetwo{D}{\ltm} \eqdef
  \msb{(d, \turingmetwo{d-1}{\ltm})}{
    \text{$\redex$ is a redex occurrence of degree $d \leq D$ in $\ltm$}
  }
$
indexed by a degree $D \in \Natz$.
Note that $\turingmetwo{0}{\ltm}$ is the empty multiset because
there are no redexes of degree $0$.

Let us try to argue that if $d \leq D$ and
$\ltm \tobetad{d} \ltmtwo$
then $\turingmetwo{D}{\ltm} \mgt \turingmetwo{D}{\ltmtwo}$.
Here $\ltm \tobetad{d} \ltmtwo$ means that $\ltm \tobeta \ltmtwo$
by contracting a redex of degree $d$.
Suppose that the contraction
of the redex $\redex : \ltm \tobetad{d} \ltmtwo$
copies a redex $\redextwo$ of degree $d'$,
where we assume that $d < d' \leq D$,
producing $n$ copies $\redextwo_1,\hdots,\redextwo_n$.
Note that the contribution of $\redextwo$ to the multiset
is $(d',\turingmetwo{d'-1}{\ltm})$,
and the contribution of each $\redextwo_i$ is $(d',\turingmetwo{d'-1}{\ltmtwo})$.
By induction on $D$, we could inductively argue that
$\turingmetwo{d'-1}{\ltm} \mgt \turingmetwo{d'-1}{\ltmtwo}$,
since $d' - 1 < d' \leq D$.
So far the property would seem to hold.

The problem with this proposal is that
a redex $\redex$ of degree $d$ may still
make copies of redexes of degree {\em exactly} $d$,
whose contribution does not necessarily decrease\footnote{For example,
in $\ltm
 = (\lam{\var^\bbtyp}{\vartwo^{\bbtyp\to\bbtyp\to\bbtyp}\,\var^\bbtyp\,\var^\bbtyp})\,
   ((\lam{\varthree^\bbtyp}{\varthree^\bbtyp})\,\varfour^\bbtyp)
 \mathrel{\tobetad{1}}
   \vartwo^{\bbtyp\to\bbtyp\to\bbtyp}
   \,((\lam{\varthree^\bbtyp}{\varthree^\bbtyp})\,\varfour^\bbtyp)
   \,((\lam{\varthree^\bbtyp}{\varthree^\bbtyp})\,\varfour^\bbtyp)
 = \ltmtwo$
the measure does not decrease,
as $\turingmetwo{1}{\ltm} = \ms{(1,[]),(1,[])} = \turingmetwo{1}{\ltmtwo}$.
}.

\subparagraph{A second frustrated attempt}
The difficulty is to deal with the situation in which
a redex $\redex$ of degree $d$ makes $n$ copies of a redex $\redextwo$
of the same degree $d$.
A key observation is that a reduction sequence $\ltm \rtobetad{d} \ltmtwo$
must be a {\em development}\footnote{Recall that a development of a set of
redexes $X$ is a reduction sequence $\ltm \tobeta^* \ltmtwo$
in which each step contracts a {\em residual} of a redex in $X$.
The residuals of a redex $\redextwo : \tm \tobeta \tmtwo$
after the contraction of a redex $\redex : \tm \tobeta \tm'$ 
are, informally speaking, the ``copies'' left of $\redextwo$ in $\tm'$.
For formal definitions see~\cite[Section~11.2]{barendregt1984}.}
of the set of redexes of degree $d$.
This is because contracting a redex of degree $d$ can only create redexes of
degree strictly less than $d$,
so any redex of degree $d$ that remains after one $\tobetad{d}$-step
must be a {\em residual} of a
preexisting redex.
This motivates our second attempt to define a measure,
consisting of two families of measures $\turingmethreea{D}{-}$ and
$\turingmethreeb{D}{-}$,
indexed by $D \in \Natz$ and defined mutually recursively:
\begin{itemize}
\item[]
  $\turingmethreea{D}{\ltm} \eqdef
    \msb{(d,\turingmethreeb{d}{\ltm})}{
      \text{$\redex$ is a $\beta$-redex occurrence of degree $d \leq D$ in $\ltm$}
    }$
\item[]
  $\turingmethreeb{D}{\ltm} \eqdef
    \msb{\turingmethreea{D-1}{\ltm'}}{
      \redseq : \ltm \rtobetad{D} \ltm'
    }$
\end{itemize}
Note that there are no redexes of degree $0$,
so $\turingmethreea{D}{\ltm}$ may not depend on $\turingmethreeb{0}{\ltm}$.
In fact, $\turingmethreeb{D}{\ltm}$ is defined only for $D \geq 1$.
The recursive definition is well-founded because
$\turingmethreea{D}{\ltm}$ may depend on
$\turingmethreeb{1}{\ltm},\hdots,\turingmethreeb{D}{\ltm}$
which in turn may only depend on $\turingmethreea{d}{\ltm'}$
for $d < D$.
The multiplicity of $\turingmethreea{D-1}{\ltm'}$
in the multiset $\turingmethreeb{D}{\ltm}$
is given by the number of reduction sequences that contract only
redexes of degree $D$, that is,
the number of different paths $\ltm \rtod{D} \ltm'$.
One important point is that, for the measure $\turingmethreeb{D}{\tm}$
to be well defined,
one needs to argue that the number of paths $\ltm \rtod{D} \ltm'$ is finite.
Since $\ltm \rtod{D} \ltm'$ is a development,
this is a consequence of the {\em finite developments} (FD)
property for orthogonal HRSs~\cite[Theorem~11.5.11]{Terese}.\footnote{Note that
FD only ensures that developments are finite. To see that
the set $\set{\redseq \ST \ltm \rtod{D} \ltm'}$ is finite, one should resort to
K\"onig's lemma, together with the fact that the STLC is finitely branching.
For a constructive proof, one can use a computable decreasing measure,
such as in de~Vrijer's proof of FD~\cite{de1985direct}.}

Let us try to argue that if $d \leq D$ and
$\ltm \tobetad{d} \ltmtwo$
then $\turingmethreea{D}{\ltm} \mgt \turingmethreea{D}{\ltmtwo}$.
On the first hand,
if a redex $\redex : \ltm \tobetad{d} \ltmtwo$
of degree $d$ copies a redex $\redextwo$ of {\em exactly} the same
degree $d$ making $n$ copies $\redextwo_1,\hdots,\redextwo_n$,
the contribution of $\redextwo$ to the multiset is $(d,\turingmethreeb{d}{\ltm})$,
whereas each $\redextwo_i$ contributes $(d,\turingmethreeb{d}{\ltmtwo})$,
and we can argue that 
$\turingmethreeb{d}{\ltm} \mgt \turingmethreeb{d}{\ltmtwo}$,
because we can injectively map each reduction sequence
$\redseq : \ltmtwo \rtobetad{d} \ltmtwo'$
to the reduction sequence
$\redex\redseq : \ltm \tobetad{d} \ltmtwo \rtobetad{d} \ltmtwo'$,
where $\redex\redseq$ denotes the composition of $\redex$ and $\redseq$.
Furthermore, there is an empty reduction sequence $\ltm \rtobetad{d} \ltm$
contributing an element $\turingmethreea{d-1}{\ltm}$ to 
$\turingmethreeb{d}{\ltm}$ but not to $\turingmethreeb{d}{\ltmtwo}$.

On the other hand,
if the contraction of a redex $\redex : \ltm \tobetad{d} \ltmtwo$
of degree $d$ copies a redex $\redextwo$ of {\em strictly} greater
degree $d' > d$ making $n$ copies $\redextwo_1,\hdots,\redextwo_n$,
the weight of $\redextwo$ is $(d',\turingmethreeb{d'}{\ltm})$
and the weight of each $\redextwo_i$ is $(d',\turingmethreeb{d'}{\ltmtwo})$,
and we would need to show that
$\turingmethreeb{d'}{\ltm} \mgt \turingmethreeb{d'}{\ltmtwo}$.
One way to do so would be
to map each reduction sequence $\redseq : \ltmtwo \rtobetad{d} \ltmtwo'$
to a reduction sequence $\redseqtwo : \ltm \rtobetad{d} \ltm'$
such that $\turingmethreea{d'-1}{\ltm'} \mgt \turingmethreea{d'-1}{\ltmtwo'}$.
However, there does not seem to be a way to rule out
the possibility that $\redseqtwo$ might erase $\redex$ and that
$\ltm' = \ltmtwo'$, which would yield
$\turingmethreea{d'-1}{\ltm'} = \turingmethreea{d'-1}{\ltmtwo'}$,
rather than a strict inequality.
The root of the problem seems again to be {\em erasure}.

\subparagraph{Definition of the $\amesym$-measure}
The $\amesym$-measure is based on the ideas described above,
but considering reduction in the $\lambdaG$-calculus rather than in
the STLC, to ensure that there is no erasure.
Informally, the $\amesym$-measure is defined by means of the two
following equations.
These equations are exactly as the ones
defining $\turingmethreea{D}{-}$ and $\turingmethreeb{D}{-}$ above,
with the only difference that they deal
with $\lambdaG$-terms and $\tog$-reduction
rather than with pure $\lambda$-terms and $\tobeta$-reduction:
\begin{itemize}
\item[]
  $\ame{D}{\tm} \eqdef
    \msb{(d,\bme{d}{\tm})}{
      \text{$\redex$ is a $\symg$-redex occurrence of degree $d \leq D$ in $\tm$}
    }$
\item[]
  $\bme{D}{\tm} \eqdef
    \msb{\ame{D-1}{\tm'}}{
      \redseq : \tm \rtod{D} \tm'
    }$
\end{itemize}
To be able to reason about these measures inductively, it will be convenient
to define an auxiliary measure $\eme{d}{\tm_0}{\tm}$
as the multiset of elements of the form $(d,\bme{d}{\tm_0})$
for each $\symg$-redex occurrence of degree {\em exactly} $d$ in $\tm$.
This auxiliary measure takes two arguments $\tm_0$ and $\tm$,
and it is defined by structural recursion on the second argument ($\tm$),
while the first argument ($\tm_0$) is used to keep track of the original
term.
Note that, with this auxiliary definition,
we can write $\ame{D}{\tm}$ as the sum
$\ame{D}{\tm} = \eme{1}{\tm}{\tm} + \hdots + \eme{D}{\tm}{\tm}$.

To define the measure formally, we start by precisely defining the codomain of the measure.

\begin{definition}[Codomain of the $\amesym$-measure]
For each $d \geq 0$, we define a set $\ameset{d}$,
and for $d \geq 1$ we define a set $\bmeset{d}$,
mutually recursively:
\begin{itemize}
\item[]
  $
  \ameset{d}
  \eqdef
  \Multi{\set{(i,b) \ST 1 \leq i \leq d,\ b \in \bmeset{i}}}
  $
  \HS
  $
  \bmeset{d}
  \eqdef
  \Multi{\ameset{d - 1}}
  $
\end{itemize}
\end{definition}

The sets $\ameset{d}$ and $\bmeset{d}$
are partially ordered by the induced multiset ordering
on their elements.
Tuples $(i,b)$ are ordered with the lexicographic order,
that is, $(i,b) > (i',b')$ if and only if
$i > i' \lor (i = i' \land b \mgt b')$.
Note that $\ameset{0} = \set{\msempty}$
and that if $d \leq d'$ then $\ameset{d} \subseteq \ameset{d'}$
and $\bmeset{d} \subseteq \bmeset{d'}$.
Moreover, $(\ameset{d},\mgt)$ and $(\bmeset{d},\mgt)$ are well-founded
partial orders by~\rthm{dershowitz_manna}.

Given typable $\lambdaG$-terms $\tm_0$, $\tm$,
and $d \in \Natz$, we define
$\eme{d}{\tm_0}{\tm} \in \ameset{d}$
and $\ame{d}{\tm} \in \ameset{d}$,
and if $d > 0$ we define $\bme{d}{\tm} \in \bmeset{d}$,
by induction on $d$ as follows.
Note that $\eme{d}{\tm_0}{\tm}$ is defined by a nested
induction on $\tm$,
and it is also defined on memories ($\eme{d}{\tm_0}{\sctx}$):

\begin{definition}[The measures $\eme{d}{-}{-}$, $\ame{d}{-}$, and $\bme{d}{-}$]
\[
  \begin{array}{r@{\,}c@{\,}l}
    \eme{d}{\tm_0}{\var}
    & \eqdef &
    \msempty
  \\
    \eme{d}{\tm_0}{\lam{\var}{\tmtwo}}
    & \eqdef &
    \eme{d}{\tm_0}{\tmtwo}
  \\
    \eme{d}{\tm_0}{\tmtwo\,\tmthree}
    & \eqdef &
    \begin{cases}
    \eme{d}{\tm_0}{\tmtwo'}
    + \eme{d}{\tm_0}{\sctx}
    + \eme{d}{\tm_0}{\tmthree}
    + \ms{(d,\bme{d}{\tm_0})}
    \\
    \hfill\text{if $\tmtwo = (\lam{\var}{\tmtwo'})\sctx$ and it is of degree $d$}
    \\
    \eme{d}{\tm_0}{\tmtwo} + \eme{d}{\tm_0}{\tmthree}
    \hfill\text{otherwise}
    \end{cases}
  \\
    \eme{d}{\tm_0}{\bin{\tmtwo}{\tmthree}}
    & \eqdef &
    \eme{d}{\tm_0}{\tmtwo} + \eme{d}{\tm_0}{\tmthree}
  \\
  \\
    \eme{d}{\tm_0}{\ctxhole}
    & \eqdef &
    \msempty
  \\
    \eme{d}{\tm_0}{\bin{\sctx}{\tm}}
    & \eqdef &
    \eme{d}{\tm_0}{\sctx} + \eme{d}{\tm_0}{\tm}
  \end{array}
\]
\[
  \begin{array}{rcll}
    \ame{d}{\tm}
    & \eqdef &
    \sum_{i=1}^{d} \eme{i}{\tm}{\tm}
  \\
    \bme{d}{\tm}
    & \eqdef &
    \msb{
      \ame{d-1}{\tm'}
    }{\redseq : \tm \rtod{d} \tm'}
  \end{array}
\]
\end{definition}
\medskip
Moreover, the {\bf $\amesym$-measure} itself is defined for $\lambda$-terms
as follows:

\begin{definition}
\ldef{amesym_measure}
If $\ltm$ is a typable $\lambda$-term,
$\amefull{\ltm} \eqdef \ame{D}{\ltm}$
where $D := \maxdeg{\ltm}$.
\end{definition}
When we write $\ame{D}{\ltm}$, we implicitly regard
$\ltm$ as a $\lambdaG$-term without any memorized terms.

From a higher-level perspective,
the $\eme{d}{\tm_0}{\tm}$ measure defined above
is the multiset of pairs of the form $(d,\bme{d}{\tm_0})$
for each redex of degree $d$ in $\tm$.
Similarly, $\ame{D}{\tm}$
is the multiset of pairs of the form
$(d,\bme{d}{\tm})$
for each redex of degree $d \leq D$ in $\tm$.
In particular, $\eme{0}{\tm_0}{\tm}$ and $\ame{0}{\tm}$
are empty multisets, because there are no redexes of degree $0$.
Two easy remarks are that
$D \leq D'$ implies $\ame{D}{\tm} \mleq \ame{D'}{\tm}$,
and that
$\eme{d}{\tm_0}{\tm\sctx} = \eme{d}{\tm_0}{\tm} + \eme{d}{\tm_0}{\sctx}$.

\begin{remark}
As mentioned in the preceding discussion, one important point is that
for $\bme{d}{-}$ to be well-defined we need to argue that the set
$\set{\redseq \ST \exists \tm'.\,\, \redseq : \tm \rtod{d} \tm'}$ is finite.
This is a consequence of~\rcoro{termination_of_tod}.
\end{remark}

\begin{example}
Let
$\Delta := \lam{\var^{\bbtyp\to\bbtyp}}{
             \var^{\bbtyp\to\bbtyp}(\var^{\bbtyp\to\bbtyp}\varthree^\bbtyp)}$
and
$W := \lam{\vartwo^\bbtyp}{\varfour^\bbtyp}$
and consider the diagram:
\[
  \xymatrix@R=-.2cm@C=.6cm{
  &
  &
    \tm_2 = \varfour\garb{W\varthree}\garb{W}
    \ar@/^.2cm/^-{1}[rd]
  \\
    \tm_0 = \Delta\,W
    \ar^-{2}[r]
    &
    \tm_1 = (W(W\varthree))\garb{W}
    \ar@/^.2cm/^-{1}[ur]
    \ar@/_.2cm/^-{1}[dr]
    &
    &
    \tm_4 = \varfour\garb{w\garb{\varthree}}\garb{W}
  \\
    &&
    \tm_3 = (W(\varfour\garb{\varthree}))\garb{W}
    \ar@/_.2cm/^-{1}[ur]
  }
\]
Then
$\ame{0}{\tm_1} = \ame{0}{\tm_2} = \ame{0}{\tm_3} = \ame{0}{\tm_4}
 = \ame{1}{\tm_4} = \ame{2}{\tm_4} = \msempty$, and:
\[
\!\!\!\!
\begin{array}{ll}
  \ame{2}{\tm_0} = \ms{(2,\bme{2}{\tm_0})}
  &
  \bme{2}{\tm_0} = \ms{\ame{1}{\tm_0}, \ame{1}{\tm_1}}
\\
  \ame{2}{\tm_1} = \ame{1}{\tm_1} = \ms{(1,\bme{1}{\tm_1}),(1,\bme{1}{\tm_1})}
  &
  \bme{1}{\tm_1} = \ms{\ame{0}{\tm_1},\ame{0}{\tm_2},\ame{0}{\tm_3},\ame{0}{\tm_4}}
\\
  \ame{2}{\tm_2} = \ame{1}{\tm_2} = \ms{(1,\bme{1}{\tm_2})}
  &
  \bme{1}{\tm_2} = \ms{\ame{0}{\tm_2},\ame{0}{\tm_4}}
\\
  \ame{2}{\tm_3} = \ame{1}{\tm_3} = \ms{(1,\bme{1}{\tm_3})}
  &
  \bme{1}{\tm_3} = \ms{\ame{0}{\tm_3},\ame{0}{\tm_4}}
\end{array}
\]
In particular,
$\ame{2}{\tm_0}
 \mgt \ame{2}{\tm_1}
 \mgt \ame{2}{\tm_2}
 \mgt \ame{2}{\tm_4}$
and
$\ame{2}{\tm_1} \mgt \ame{2}{\tm_3} \mgt \ame{2}{\tm_4}$.
\end{example}

\subparagraph{The $\amesym$-measure is decreasing}
Lastly, we show the main theorem of this section, stating that
if $\ltm \tobeta \ltmtwo$ then $\amefull{\ltm} \mgt \amefull{\ltmtwo}$.
This theorem is based on three technical results, that we call
{\em high/increase}, {\em low/decrease}, and {\em forget/decrease}:
\begin{enumerate}
\item
  {\bf High/increase}~(\rprop{upper_reduction}) establishes
  ---perhaps confusingly---
  that $\ame{d}{-}$ (non-strictly) {\bf increases} if one contracts a redex of
  higher degree $D > d$. More precisely, if $0 \leq d < D$ and $\tm \tod{D} \tm'$
  then $\ame{d}{\tm} \mleq \ame{d}{\tm'}$.
  Note that $\ame{d}{\tm}$ only looks at redexes of degree $i \leq d$,
  and contracting a redex of degree $D > d$ {\em cannot erase} a redex of
  any degree $i \leq d$, because the $\lambdaG$-calculus is non-erasing.
  Contracting a redex of degree $D$ can, at most, replicate redexes of degree $i$.
  This property is needed for a technical reason to prove the low/decrease
  property, and it relies crucially on the commutation result of the previous
  section~(\rprop{commutation_by_degrees}).
\item
  {\bf Low/decrease}~(\rprop{lower_reduction}) establishes
  that $\ame{D}{-}$ {\bf strictly} decreases if one contracts a redex of
  lower degree $d < D$. More precisely, if $1 \leq d \leq D$ and $\tm \tod{d} \tm'$
  then $\ame{D}{\tm} \mgt \ame{D}{\tm'}$.
  This is the core of the argument, and the most technically difficult part
  to prove. It relies crucially on the lifting property of the previous
  section~(\rprop{retraction_of_higher_degree_steps}).
\item
  {\bf Forget/decrease}~(\rprop{shrinking_ame}) establishes
  that forgetful reduction (non-strictly) decreases the measure.
  More precisely, if $\tm \shone \tm'$ then $\ame{d}{\tm} \mgeq \ame{d}{\tm'}$.
  This property is used as a final step in the main theorem,
  and it relies crucially on postponement of forgetful reduction,
  as studied in the previous section~(\rprop{retraction_before_shrinking}).
\end{enumerate}

Below we sketch the proofs of these three properties.
\SeeAppendix{See \rprop{appendix:upper_reduction}, \rprop{appendix:lower_reduction},
and \rprop{appendix:shrinking_ame} in the appendix for detailed proofs.}
Let us first mention a straightforward lemma.
\begin{lemma}[Measure of a substitution]
\llem{upper_substitution_lemma}
\llem{lower_substitution_lemma}
\Item{1.}
  $\eme{d}{\tm_0}{\tm} \mleq \eme{d}{\tm_0}{\tm\sub{\var}{\tmtwo}}$.
\Item{2.}
  If $\tmtwo$ is not a $\symg$-abstraction of degree $d$,
  then
  $\eme{d}{\tm_0}{\tm\sub{\var}{\tmtwo}}
   = \eme{d}{\tm_0}{\tm} + k \mtimes \eme{d}{\tm_0}{\tmtwo}$
  for some $k \in \Natz$.
\end{lemma}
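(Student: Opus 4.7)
The plan is to prove both parts simultaneously by structural induction on $\tm$, with the measures $\eme{d}{\tm_0}{-}$ for terms and memories treated uniformly. The key observation driving the argument is that, since the $\lambdaG$-calculus is non-erasing and substitution does not change types, substitution preserves every $\symg$-redex occurrence already present in $\tm$ (with its degree intact) and copies each redex of $\tmtwo$ exactly $k$ times, where $k$ is the number of free occurrences of $\var$ in $\tm$. The only extra redex occurrences that can appear are those \emph{created} at a substitution site where $\var$ sits in application position.

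First I would dispatch the easy cases. For $\tm = \var'$ with $\var' \neq \var$ and for $\tm$ a variable different from $\var$, everything is empty and both parts hold with $k=0$. For $\tm = \var$, we have $\tm\sub{\var}{\tmtwo} = \tmtwo$, so $\eme{d}{\tm_0}{\tm} = \msempty$ and the equality of part 2 holds with $k = 1$. The cases $\tm = \lam{\vartwo}{\tm'}$ and $\tm = \bin{\tm'}{\tm''}$ (as well as memory cases $\sctx = \ctxhole$ and $\sctx = \bin{\sctx'}{\tm'}$) follow directly from the recursive definition of $\eme{d}{\tm_0}{-}$ and the induction hypothesis, since these constructors commute with substitution and contribute no new redex.

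The main case is the application $\tm = \tmthree_1\,\tmthree_2$. Here I would split according to whether the head $\tmthree_1$ is a $\symg$-abstraction of degree $d$, and whether $\tmthree_1\sub{\var}{\tmtwo}$ becomes one after substitution. Case (a): if $\tmthree_1 = (\lam{\vartwo}{\tm'})\sctx$ is already a $\symg$-abstraction of degree $d$, then so is $\tmthree_1\sub{\var}{\tmtwo} = (\lam{\vartwo}{\tm'\sub{\var}{\tmtwo}})\sctx\sub{\var}{\tmtwo}$, because the type of the abstraction is unchanged by substitution. The extra summand $\ms{(d,\bme{d}{\tm_0})}$ appears on both sides, and the rest follows by IH. Case (b): if $\tmthree_1$ is not a $\symg$-abstraction of degree $d$, the only way $\tmthree_1\sub{\var}{\tmtwo}$ can be one is if $\tmthree_1$ has the shape $\var\sctx_0$ (a variable with surrounding wrappers) and $\tmtwo$ is itself a $\symg$-abstraction of degree $d$. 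For part 1, such a promotion only adds extra summands, giving the inequality; for part 2, the hypothesis that $\tmtwo$ is not a $\symg$-abstraction of degree $d$ rules out this promotion entirely, preserving the equality.

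The main obstacle I expect is correctly bookkeeping case (b): one must verify that the term $\tmthree_1\sub{\var}{\tmtwo}$ of shape $\var\sctx_0\sub{\var}{\tmtwo} = \tmtwo\,\sctx_0\sub{\var}{\tmtwo}$ becomes a $\symg$-abstraction precisely when $\tmtwo$ already is one, of matching degree $d$. Once this observation is in place, the inductive step for application reduces to an arithmetic of multisets: collect the IH contributions from $\tmthree_1$ and $\tmthree_2$, sum up the $k_1 + k_2 = k$ multiplicities from both subterms, and verify the accounting of the $\ms{(d,\bme{d}{\tm_0})}$ summand. The multiset $k \mtimes \eme{d}{\tm_0}{\tmtwo}$ naturally arises from the $k$ copies of $\tmtwo$ introduced by substitution, closing part 2.
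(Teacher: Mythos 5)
Your plan is correct and follows essentially the same route as the paper: structural induction on $\tm$ (generalized to memories), with the crux being that substitution preserves the degree of existing redexes and can only create a new degree-$d$ redex at a head of shape $\var\sctx_0$ when $\tmtwo$ is itself a $\symg$-abstraction of degree $d$ --- exactly the content of the paper's auxiliary lemma that substitution of non-$\symg$-abstractions of degree $d$ does not create $\symg$-abstractions of degree $d$. Your explicit handling of the ``promotion'' subcase in part 1 is, if anything, slightly more careful than the paper's own write-up of that case.
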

\begin{proof}
By induction on $\tm$.
\SeeAppendix{See \rlem{appendix:upper_substitution_lemma}
and \rlem{appendix:lower_substitution_lemma} in the appendix for
details.}
\end{proof}

\begin{proposition}[High/increase]
\lprop{upper_reduction}
Let $D \in \Natz$. Then the following hold:
\begin{enumerate}
\item
  If $1 \leq d < D$
  and $\tm \tod{D} \tm'$
  then $\bme{d}{\tm} \mleq \bme{d}{\tm'}$.
\item
  If $0 \leq d < D$
  and $\tm_0 \tod{D} \tm'_0$
  then $\eme{d}{\tm_0}{\tm} \mleq \eme{d}{\tm'_0}{\tm}$.
\item
  If $0 \leq d < D$
  and $\tm_0 \tod{D} \tm'_0$
  and $\tm \tod{D} \tm'$
  then $\eme{d}{\tm_0}{\tm} \mleq \eme{d}{\tm'_0}{\tm'}$.
\item
  If $0 \leq d < D$
  and $\tm \tod{D} \tm'$
  then $\ame{d}{\tm} \mleq \ame{d}{\tm'}$.
\end{enumerate}
\end{proposition}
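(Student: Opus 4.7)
The plan is to prove the four items simultaneously by induction on $d$, with $D$ arbitrary. The base case $d = 0$ is immediate: items 2, 3, and 4 hold trivially because $\eme{0}{-}{-}$ and $\ame{0}{-}$ are empty multisets, and item 1 is vacuous since it requires $d \geq 1$. In the inductive step I would derive the four items in the order $1 \to 2 \to 3 \to 4$, each depending only on earlier items at the same $d$ together with the inductive hypothesis at $d-1$.

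For item 1, given $\redex : \tm \tod{D} \tm'$ and a reduction $\redseq : \tm \rtod{d} \tm_2$ contributing $\ame{d-1}{\tm_2}$ to $\bme{d}{\tm}$, I would invoke commutation~(\rprop{commutation_by_degrees}) to obtain reductions $\redseq/\redex : \tm' \rtod{d} \tm_4$ and $\redex/\redseq : \tm_2 \rtod{D} \tm_4$. Since $d \neq D$, the map $\redseq \mapsto \redseq/\redex$ is injective, so it embeds the index set of $\bme{d}{\tm}$ into that of $\bme{d}{\tm'}$. Applying the inductive hypothesis of item~4 at $(d-1, D)$ iteratively along the steps of $\redex/\redseq$ yields $\ame{d-1}{\tm_2} \mleq \ame{d-1}{\tm_4}$, so $\bme{d}{\tm}$ is pointwise dominated by a submultiset of $\bme{d}{\tm'}$, and hence $\bme{d}{\tm} \mleq \bme{d}{\tm'}$.

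Item 2 is then essentially immediate: every redex occurrence of degree $d$ in $\tm$ contributes the same pair $(d, \bme{d}{\tm_0})$ to $\eme{d}{\tm_0}{\tm}$ (and similarly with $\tm'_0$), so the two multisets have equal cardinality and compare pointwise via $(d, \bme{d}{\tm_0}) \leq (d, \bme{d}{\tm'_0})$ in the lexicographic order, which item 1 provides. For item 3 I would decompose $\eme{d}{\tm_0}{\tm} \mleq \eme{d}{\tm'_0}{\tm} \mleq \eme{d}{\tm'_0}{\tm'}$: the first inequality is item 2, and the second reduces to showing that the number of degree-$d$ redex occurrences in $\tm'$ is at least that in $\tm$. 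This holds because the $\lambdaG$-calculus is non-erasing (no preexisting redex of any degree disappears, save the contracted one which has degree $D \neq d$) and because Turing's observation prevents the $\tod{D}$-step from creating any redex of degree $\geq D > d$. Item 4 follows by summing item 3 at each $(i, D)$ with $1 \leq i \leq d$ and $(\tm_0, \tm'_0) = (\tm, \tm')$.

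The main obstacle is item 1, where one must match reductions issuing from $\tm$ with reductions issuing from $\tm'$ in a way compatible with the multiset ordering; the injectivity and length-preservation clauses of \rprop{commutation_by_degrees} (available precisely because $d \neq D$) are the delicate ingredients that make this possible, together with the fact that the non-strict inequality $\mleq$ suffices here, as opposed to the strict ordering needed for the low/decrease direction (\rprop{lower_reduction}). The remaining items are essentially structural bookkeeping on top of item 1 and the non-erasure principle.
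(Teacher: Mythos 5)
Your plan is correct and, for items 1, 2 and 4, follows the paper's proof essentially verbatim: simultaneous induction on $d$ with the dependency order $1\to2\to3\to4$, the commutation/injectivity argument via $\redseq \mapsto \redseq/\redex$ for item 1 (using item 4 of the inductive hypothesis at $d-1$ along the projected reduction), the pointwise comparison $(d,\bme{d}{\tm_0}) \leq (d,\bme{d}{\tm'_0})$ for item 2, and the sum over $i \leq d$ for item 4.

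Where you genuinely diverge is item 3. The paper proves $\eme{d}{\tm_0}{\tm} \mleq \eme{d}{\tm'_0}{\tm'}$ directly by a nested structural induction on $\tm$, invoking the higher substitution lemma $\eme{d}{\tm_0}{\tmtwo} \mleq \eme{d}{\tm_0}{\tmtwo\sub{\var}{\tmthree}}$ in the case where $\tm$ is the contracted redex, and handling separately the case where the $\tod{D}$-step turns the function position into a $\symg$-abstraction of degree $d$ (creation of a degree-$d$ redex). You instead factor the inequality as $\eme{d}{\tm_0}{\tm} \mleq \eme{d}{\tm'_0}{\tm} \mleq \eme{d}{\tm'_0}{\tm'}$ and observe that the second comparison is purely a cardinality statement, since $\eme{d}{\tm'_0}{-}$ is a multiset of identical elements $(d,\bme{d}{\tm'_0})$; non-erasure of degree-$d$ redexes under a $\tod{D}$-step then suffices. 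This is a legitimate and arguably cleaner decomposition; its cost is that the counting claim still needs the same underlying work the paper packages into the substitution lemma and the residual analysis (each degree-$d$ redex occurrence of $\tm$ must be shown to have at least one degree-$d$ descendant in $\tm'$, injectively), so you have not avoided that lemma, only relocated it. One small remark: your appeal to Turing's observation (no creation of redexes of degree $\geq D$) is superfluous here, since for a non-strict lower bound on the count, creation of degree-$d$ redexes can only help.
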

\begin{proof}
The four items are proved simultaneously by induction on $d$,
where item~\Item{1} resorts to the \ih,
and the following items may resort to the previous items
without decreasing~$d$. Items~\Item{2} and~\Item{3}
proceed by a nested induction on $\tm$. Most cases are straightforward.

One interesting situation occurs in item~\Item{3} when
$\tm = \app{(\lam{\var}{\tmtwo})\sctx}{\tmthree}$
is the redex of degree $D$ contracted by the step $\tm \tod{D} \tm'$.
Then we resort to the first part of~\rlem{upper_substitution_lemma}.

Another interesting part of the proof is item~\Item{1}.
Let $1 \leq d < D$ and $\tm \tod{D} \tm'$
and let us show that $\bme{d}{\tm} \mleq \bme{d}{\tm'}$.
Indeed, let
$X := \set{\redseq \ST (\exists \tmtwo)\ \redseq : \tm \rtod{d} \tmtwo}$
and
$Y := \set{\redseqtwo \ST (\exists \tmtwo')\ \redseqtwo : \tm' \rtod{d} \tmtwo'}$,
and let $\redex : \tm \tod{D} \tm'$.
Using~\rprop{commutation_by_degrees},
we can define an injective function $\varphi : X \to Y$
by $\varphi(\redseq) := \redseq/\redex$.
Note that $\ame{d-1}{\tgt{\redseq}} \mleq \ame{d-1}{\tgt{\varphi(\redseq)}}$
holds for every $\redseq \in X$ using item~\Item{4} of the \ih
(noting that $1 \leq d - 1 < D$ holds because $1 \leq d < D$),
resorting to the \ih as many times as the length of the reduction
$\tmtwo \rtod{D} \tmtwo'_\redseq$.
To conclude the proof, let $Z = Y \setminus \varphi(X)$. Then:
\begin{itemize}
\item[]
  $
    \bme{d}{\tm}
  =
    \msb{\ame{d-1}{\tgt{\redseq}}}{\redseq \in X}
  \mleq^{\text{($\star$)}}
    \msb{\ame{d-1}{\tgt{\varphi(\redseq)}}}{\redseq \in X}
  =^{\text{($\star\star$)}}
    \msb{\ame{d-1}{\tgt{\redseqtwo}}}{\redseqtwo \in \varphi(X)}
  $
\item[]
  $
  \hphantom{\bme{d}{\tm}}
  \mleq
    \msb{\ame{d-1}{\tgt{\redseqtwo}}}{\redseqtwo \in \varphi(X)}
    +
    \msb{\ame{d-1}{\tgt{\redseqtwo}}}{\redseqtwo \in Z}
  =
    \msb{\ame{d-1}{\tgt{\redseqtwo}}}{\redseqtwo \in Y}
  =
    \bme{d}{\tm'}
  $
\end{itemize}
To justify the step marked with ($\star$),
note that
$\msb{\ame{d-1}{\tgt{\redseq}}}{\redseq \in X}
 = \sum_{\redseq \in X} \ms{\ame{d-1}{\tgt{\redseq}}}
 \mleq \sum_{\redseq \in X} \ms{\ame{d-1}{\tgt{\varphi(\redseq)}}}
 = \msb{\ame{d-1}{\tgt{\varphi(\redseq)}}}{\redseq \in X}$
because $\ame{d-1}{\tgt{\redseq}} \mleq \ame{d-1}{\tgt{\varphi(\redseq)}}$,
as we have already claimed.
To justify the step marked with ($\star\star$),
note that $\varphi$ is injective.
\end{proof}

\begin{proposition}[Low/decrease]
\lprop{lower_reduction}
Let $D \in \Natz$. Then the following hold:
\begin{enumerate}
\item
  \label{lower_reduction:bme_decrease}
  If $1 \leq d \leq j \leq D$
  and $\tm \tod{d} \tm'$
  then $\bme{j}{\tm} \mgt \bme{j}{\tm'}$.
\item
  \label{lower_reduction:eme_left_decrease}
  If $1 \leq d \leq j \leq D$
  and $\tm_0 \tod{d} \tm'_0$
  then $\eme{j}{\tm_0}{\tm} \mgtmap \eme{j}{\tm'_0}{\tm}$.
\item
  \label{lower_reduction:eme_right_equal_decrease}
  If $1 \leq d \leq D$
  and $\tm_0 \tod{d} \tm'_0$
  and $\tm \tod{d} \tm'$,
  then for all $\mset \in \ameset{d-1}$
  we have
  $\eme{d}{\tm_0}{\tm} \mgt \eme{d}{\tm'_0}{\tm'} + \mset$.
\item
  \label{lower_reduction:eme_right_nonequal_decrease}
  If $1 \leq d < j \leq D$
  and $\tm_0 \tod{d} \tm'_0$
  and $\tm \tod{d} \tm'$
  then $\eme{j}{\tm_0}{\tm} \mgeq \eme{j}{\tm'_0}{\tm'}$.
\item
  \label{lower_reduction:ame_decrease}
  If $1 \leq d \leq D$
  and $\tm \tod{d} \tm'$
  then $\ame{D}{\tm} \mgt \ame{D}{\tm'}$.
\end{enumerate}
\end{proposition}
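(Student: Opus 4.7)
The plan is to prove items~1--5 simultaneously by an outer induction on $D$, within which items~2, 3, and~4 use an inner structural induction on~$\tm$. Within a fixed $D$ the arguments are staged: first item~1, then items~2--4 (which invoke item~1), and finally item~5 (which invokes items~3 and~4). The outer induction on $D$ is needed because item~1 in the case $d < j$ appeals to item~5 at the strictly smaller parameter $D' = j - 1$.

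For item~1, split on $d = j$ versus $d < j$. When $d = j$, composing the given step $\redex : \tm \tod{j} \tm'$ in front of any reduction $\redseqtwo : \tm' \rtod{j} \tmtwo'$ yields an injection of the reductions generating $\bme{j}{\tm'}$ into those generating $\bme{j}{\tm}$, so $\bme{j}{\tm} \mgeq \bme{j}{\tm'}$; moreover, the empty reduction $\tm \rtod{j} \tm$ contributes an additional element $\ame{j-1}{\tm}$ outside the image of this injection, yielding strict $\bme{j}{\tm} \mgt \bme{j}{\tm'}$. When $d < j$, apply the lifting property~(\rprop{retraction_of_higher_degree_steps}) to produce, for each $\redseqtwo : \tm' \rtod{j} \tmtwo'$, a reduction $\redseq_{\redseqtwo} : \tm \rtod{j} \tmtwo$ together with a common reduct $\tmthree$ reached from $\tmtwo$ by $\todplus{d}$-steps and from $\tmtwo'$ by $\rtod{j}$-steps. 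The outer \ih for item~5 at $D' = j - 1$ gives $\ame{j-1}{\tmtwo} \mgt \ame{j-1}{\tmthree}$, while high/increase~(\rprop{upper_reduction}, item~4) gives $\ame{j-1}{\tmtwo'} \mleq \ame{j-1}{\tmthree}$; chaining yields strict pointwise dominance along the injection, hence $\bme{j}{\tm} \mgt \bme{j}{\tm'}$.

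Items~2--4 then proceed by structural induction on~$\tm$. Item~2 is essentially immediate from item~1: both multisets contain the same number of elements, one pair per $\symg$-redex of degree~$j$ in~$\tm$, with left entries $(j, \bme{j}{\tm_0})$ strictly dominating right entries $(j, \bme{j}{\tm'_0})$ in the lexicographic order. The decisive case of item~3 is top-level reduction, $\tm = (\lam{\var}{\tmtwo})\sctx\,\tmthree \tod{d} \tmtwo\sub{\var}{\tmthree}\garb{\tmthree}\sctx = \tm'$. The argument $\tmthree$ has a type of height at most $d - 1$, so it cannot be a $\symg$-abstraction of degree~$d$, and the second part of the substitution lemma~(\rlem{lower_substitution_lemma}) yields
\[
  \eme{d}{\tm'_0}{\tm'}
  \;=\;
  \eme{d}{\tm'_0}{\tmtwo} + (k{+}1) \mtimes \eme{d}{\tm'_0}{\tmthree} + \eme{d}{\tm'_0}{\sctx},
\]
while $\eme{d}{\tm_0}{\tm}$ contains the three analogous summands with respect to $\tm_0$, plus an extra pair $(d, \bme{d}{\tm_0})$ from the head $\symg$-abstraction. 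Every element on either side is a pair with first component~$d$; by item~1 the extra pair $(d, \bme{d}{\tm_0})$ strictly dominates, in the lexicographic order, every $(d, \bme{d}{\tm'_0})$ as well as every element of any $\mset \in \ameset{d-1}$, which yields $\eme{d}{\tm_0}{\tm} \mgt \eme{d}{\tm'_0}{\tm'} + \mset$. Item~4 follows the same substitution analysis with $j > d$: the head $\symg$-abstraction contributes nothing, and the duplicated degree-$j$ subredexes in $\tmthree$ are dominated by the corresponding entries of $\eme{j}{\tm_0}{\tmthree}$ via item~1. The non-root cases of items~3 and~4 combine the inner \ih on the reduced subterm with item~2 on the unchanged sibling subterms.

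Finally, item~5 is a clean assembly. Partition $\ame{D}{\tm} = \mset_{<d} + \mset_d + \mset_{>d}$ by degree, and similarly for $\tm'$. Apply item~3 with the auxiliary multiset $\mset := \mset'_{<d} \in \ameset{d-1}$ to obtain $\mset_d \mgt \mset'_d + \mset'_{<d}$; then adding $\mset_{>d}$ on both sides and using item~4's $\mset_{>d} \mgeq \mset'_{>d}$ yields $\mset_d + \mset_{>d} \mgt \ame{D}{\tm'}$, while $\ame{D}{\tm} \mgeq \mset_d + \mset_{>d}$ by discarding $\mset_{<d}$ on the left. The main obstacle will be item~3: crafting the statement so that it carries the slack $+\mset \in \ameset{d-1}$ (so that item~5 can absorb the low-degree perturbations without appealing to high/increase), together with verifying that $\tmthree$ is never a $\symg$-abstraction of degree~$d$ so that the stronger substitution identity applies, and that the single extra pair $(d, \bme{d}{\tm_0})$ provides enough dominance to cover all the $\tm_0 \leadsto \tm'_0$ replacements introduced by the substitution.
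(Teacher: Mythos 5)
Your proposal is correct and follows essentially the same route as the paper: a simultaneous induction on $D$, with item~1 split into the prepending argument for $d=j$ and the lifting-plus-high/increase argument for $d<j$ (using item~5 of the induction hypothesis at $j-1<D$), the lower substitution lemma with multiplicity $k$ in the root case of items~3 and~4, the slack $\mset\in\ameset{d-1}$ absorbed by the head redex's contribution $(d,\bme{d}{\tm_0})$, and item~5 assembled by instantiating $\mset:=\ame{d-1}{\tm'}$. The only detail you gloss over is that the non-root application cases of item~4 also rely on the fact that a step of degree $d<j$ cannot turn the function position into a $\symg$-abstraction of degree $j$ (\rlem{reduction_does_not_create_higher_degree_redexes}), which is needed to rule out a fresh degree-$j$ redex appearing on the right when the left-hand multiset at level $j$ could be empty.
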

\begin{proof}
The five items are proved simultaneously by induction on $D$,
where item~\Item{1} resorts to the \ih,
and the following items may resort to the previous items
without decreasing~$d$. Items~\Item{2}--\Item{4}
proceed by a nested induction on $\tm$.
We mention some of the interesting parts of the proof.

For item~\Item{1},
let $1 \leq d \leq j \leq D$
and $\tm \tod{d} \tm'$
and let us show that $\bme{j}{\tm} \mgt \bme{j}{\tm'}$.
Let $X := \set{\redseq \ST (\exists{\tmtwo})\ \redseq : \tm \rtod{j} \tmtwo}$
and
$Y := \set{\redseqtwo \ST (\exists{\tmtwo'})\ \redseqtwo : \tm' \rtod{j} \tmtwo'}$,
and consider two subcases:
\begin{itemize}
\item
  If $d = j$,
  let $\redex : \tm \tod{d} \tm'$,
  define an injective function $\varphi : Y \to X$
  by $\varphi(\redseqtwo) = \redex\,\redseqtwo$,
  let $Z = X \setminus \varphi(Y)$, and note that:
  \begin{itemize}
  \item[]
    $\bme{j}{\tm}
     =
        \msb{\ame{j-1}{\tgt{\redseq}}}{\redseq \in \varphi(Y)}
      + \msb{\ame{j-1}{\tgt{\redseq}}}{\redseq \in Z}
     $
  \item[]
    $
     \hphantom{\bme{j}{\tm}}
     =
        \msb{\ame{j-1}{\tgt{\redex\redseqtwo}}}{\redseqtwo \in Y}
      + \msb{\ame{j-1}{\tgt{\redseq}}}{\redseq \in Z}
     $
     \text{ since $\varphi$ is injective}
   \item[]
     $
     \hphantom{\bme{j}{\tm}}
     =
        \msb{\ame{j-1}{\tgt{\redseqtwo}}}{\redseqtwo \in Y}
      + \msb{\ame{j-1}{\tgt{\redseq}}}{\redseq \in Z}
     =
        \bme{j}{\tm'}
      + \msb{\ame{j-1}{\tgt{\redseq}}}{\redseq \in Z}
     $
  \end{itemize}
  To conclude that $\bme{j}{\tm} \mgt \bme{j}{\tm'}$,
  note that $Z$ is non-empty because it contains
  the empty reduction $\emptyseq : \tm \rtod{d} \tm$.
\item
  If $d < j$,
  we construct a function $\varphi : Y \to X$ as follows.
  By~\rprop{retraction_of_higher_degree_steps},
  for each reduction $\redseqtwo : \tm' \rtod{j} \tmtwo'$
  there exist $\tmtwo_\redseqtwo, \tmthree_\redseqtwo$,
  and reductions $\varphi(\redseqtwo) : \tm \rtod{j} \tmtwo_\redseqtwo$
  and $\tmtwo' \rtod{j} \tmthree_\redseqtwo$
  and $\tmtwo_\redseqtwo \todplus{d} \tmthree_\redseqtwo$.
  Note that for every $\redseqtwo \in Y$
  we have
  $\ame{j-1}{\tgt{\varphi(\redseqtwo)}}
   =
     \ame{j-1}{\tmtwo_\redseqtwo}
   \mgt^{\dagger}
     \ame{j-1}{\tmthree_\redseqtwo}
   \mgeq^{\ddagger}
     \ame{j-1}{\tmtwo'}
   =
     \ame{j-1}{\tgt{\redseqtwo}}$
  where $\dagger$ holds by item \ref{lower_reduction:ame_decrease} of the \ih
  observing that $1 \leq d \leq j - 1 < D$ because $d < j \leq D$,
  and $\ddagger$ holds by high/increase~(\rprop{upper_reduction})
  observing that $0 \leq j - 1 < j$.
  To conclude the proof,
  let $Z = X \setminus \varphi(Y)$, and note that:
  \begin{itemize}
  \item[]
    $
    \bme{j}{\tm}
    =
      \msb{\ame{j-1}{\tgt{\redseq}}}{\redseq \in \varphi(Y)}
    + \msb{\ame{j-1}{\tgt{\redseq}}}{\redseq \in Z}
    $
  \item[]
    $
    \hphantom{\bme{j}{\tm}}
    =
      \msb{\ame{j-1}{\tgt{\varphi(\redseqtwo)}}}{\redseqtwo \in Y}
    + \msb{\ame{j-1}{\tgt{\redseq}}}{\redseq \in Z}
    $
  \item[]
    $
    \hphantom{\bme{j}{\tm}}
      \mgeq
      \msb{\ame{j-1}{\tgt{\varphi(\redseqtwo)}}}{\redseqtwo \in Y}
      \mgt^{(\star)}
      \msb{\ame{j-1}{\tgt{\redseqtwo}}}{\redseqtwo \in Y}
      =
      \bme{j}{\tm'}
    $
  \end{itemize}
  For the step marked with ($\star$),
  note that
  $\msb{\ame{j-1}{\tgt{\varphi(\redseqtwo)}}}{\redseqtwo \in Y}
   \mgtmap
   \msb{\ame{j-1}{\tgt{\redseqtwo}}}{\redseqtwo \in Y}$
  because
  $\ame{j-1}{\tgt{\varphi(\redseqtwo)}} \mgt \ame{j-1}{\tgt{\redseqtwo}}$
  holds by the claim above
  where, moreover, $Y$ is non-empty because it contains
  the empty reduction $\emptyseq : \tm' \rtod{j} \tm'$.
\end{itemize}

Another interesting situation occurs in item~\Item{3},
when $\tm = \app{(\lam{\var}{\tmtwo})\sctx}{\tmthree}$
is the redex of degree $d$ contracted by the step $\tm \tod{d} \tm'$.
The step is of the form
$\tm
 = (\lam{\var}{\tmtwo})\sctx\,\tmthree
 \tod{d} \tmtwo\sub{\var}{\tmthree}\garb{\tmthree}\sctx
 = \tm'$.
Note that $\tmthree$ is not an abstraction of degree $d$,
because it is the argument of an abstraction of degree $d$.
So by~\rlem{lower_substitution_lemma} there exists $k \in \Natz$
such that
$\eme{d}{\tm'_0}{\tmtwo\sub{\var}{\tmthree}}
 = \eme{d}{\tm'_0}{\tmtwo} + k \mtimes \eme{d}{\tm'_0}{\tmthree}$.
The crucial observation is that
$\eme{d}{\tm_0}{\tmthree}
 \mgeq
 (1 + k) \mtimes \eme{d}{\tm'_0}{\tmthree}$,
which is because
by item~\ref{lower_reduction:eme_left_decrease}
we have that $\eme{d}{\tm_0}{\tmthree} \mgtmap \eme{d}{\tm'_0}{\tmthree}$.

Finally, for item~\Item{5},
let $1 \leq d \leq D$ and $\tm \tod{d} \tm'$
and let us show that $\ame{D}{\tm} \mgt \ame{D}{\tm'}$.
Indeed:
\begin{itemize}
\item[]
  $
    \ame{D}{\tm}
    =
      \sum_{i=1}^{D} \eme{i}{\tm}{\tm}
    \mgeq
        \eme{d}{\tm}{\tm}
      + \sum_{j=d+1}^{D} \eme{j}{\tm}{\tm}
  $
\item[]
  $
    \hphantom{\ame{D}{\tm}}
    \mgt
      \ame{d-1}{\tm'}
    + \eme{d}{\tm'}{\tm'}
    + \sum_{j=d+1}^{D} \eme{j}{\tm}{\tm}
  $
  by item~\Item{\ref{lower_reduction:eme_right_equal_decrease}},
  taking $\mset := \ame{d-1}{\tm'}$
\item[]
  $
    \hphantom{\ame{D}{\tm}}
    \mgeq
      \ame{d-1}{\tm'}
    + \eme{d}{\tm'}{\tm'}
    + \sum_{j=d+1}^{D} \eme{j}{\tm'}{\tm'}
    = \ame{D}{\tm'}$
  by item~\Item{\ref{lower_reduction:eme_right_nonequal_decrease}}.
  \qedhere
\end{itemize}
\end{proof}

\begin{proposition}[Forget/decrease]
\lprop{shrinking_ame}
Let $d \in \Natz$. Then the following hold:
\begin{enumerate}
\item
  \label{shrinking_ame:bme_decrease}
  If $\tm \shone \tm'$
  then $\bme{d}{\tm} \mgeq \bme{d}{\tm'}$.
\item
  \label{shrinking_ame:eme_left_decrease}
  If $\tm_0 \shone \tm'_0$
  then $\eme{d}{\tm_0}{\tm} \mgeq \eme{d}{\tm'_0}{\tm}$.
\item
  \label{shrinking_ame:eme_right_decrease}
  If $\tm_0 \shone \tm'_0$
  and $\tm \shone \tm'$
  then $\eme{d}{\tm_0}{\tm} \mgeq \eme{d}{\tm'_0}{\tm'}$.
\item
  \label{shrinking_ame:ame_decrease}
  If $\tm \shone \tm'$
  then $\ame{d}{\tm} \mgeq \ame{d}{\tm'}$.
\end{enumerate}
\end{proposition}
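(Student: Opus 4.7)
The plan is to prove items~(1)--(4) bundled together by induction on $d$. The dependencies are: item~(1) for $d$ relies on item~(4) for $d - 1$ via the induction hypothesis, while items~(2)--(4) for $d$ all depend on item~(1) for the same $d$. The base case $d = 0$ is trivial, since there are no redexes of degree $0$, so $\eme{0}{\cdot}{\cdot}$ and $\ame{0}{\cdot}$ are empty multisets and item~(1) is vacuous.

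The crux is item~(1). Given $\tm \shone \tm'$, I would construct an injection $\varphi$ from the reductions $\redseqtwo : \tm' \rtod{d} \tmtwo'$ indexing $\bme{d}{\tm'}$ into those $\redseq : \tm \rtod{d} \tmtwo$ indexing $\bme{d}{\tm}$. Postponement of forgetful reduction~(\rprop{retraction_before_shrinking}) provides, for each such $\redseqtwo$, a reduction $\retract{\redseqtwo}{\shone} : \tm \rtod{d} \tmtwo$ and a forgetful reduction $\tmtwo \shone^* \tmtwo'$; the uniqueness clause of that proposition makes the map $\varphi(\redseqtwo) \eqdef \retract{\redseqtwo}{\shone}$ injective. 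Iterating item~(4) of the IH along $\tmtwo \shone^* \tmtwo'$ yields $\ame{d-1}{\tgt{\varphi(\redseqtwo)}} \mgeq \ame{d-1}{\tgt{\redseqtwo}}$ for every $\redseqtwo$, and a pointwise comparison of multisets then gives
\[
  \bme{d}{\tm}
  \mgeq
  \msb{\ame{d-1}{\tgt{\varphi(\redseqtwo)}}}{\redseqtwo}
  \mgeq
  \msb{\ame{d-1}{\tgt{\redseqtwo}}}{\redseqtwo}
  =
  \bme{d}{\tm'}.
\]

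Items~(2) and~(3) follow by structural induction on $\tm$. For item~(2), every recursive case preserves $\mgeq$, and the only nontrivial summand arises when $\tm$'s outermost application is a $\symg$-redex of degree $d$, contributing $\ms{(d,\bme{d}{\tm_0})} \mgeq \ms{(d,\bme{d}{\tm'_0})}$ by item~(1) just established and the lexicographic order on pairs. For item~(3), I would first record the auxiliary fact that if $\tm \shone \tm'$ then $\eme{d}{\tm_0}{\tm} \mgeq \eme{d}{\tm_0}{\tm'}$, by structural induction on the position of the $\shone$-redex in $\tm$: that step erases a memorized subterm $\tmfive$, merely dropping a nonnegative summand $\eme{d}{\tm_0}{\tmfive}$. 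Combining with item~(2) then gives $\eme{d}{\tm_0}{\tm} \mgeq \eme{d}{\tm'_0}{\tm} \mgeq \eme{d}{\tm'_0}{\tm'}$. Item~(4) is an immediate corollary of item~(3) by summing: $\ame{d}{\tm} = \sum_{i=1}^{d} \eme{i}{\tm}{\tm} \mgeq \sum_{i=1}^{d} \eme{i}{\tm'}{\tm'} = \ame{d}{\tm'}$.

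The main obstacle is item~(1), whose proof hinges on extracting an injection from postponement of forgetful reduction and aligning the indexed summands via the IH at degree $d - 1$. The remaining items are essentially routine bookkeeping on multiset decompositions, paralleling the high/increase and low/decrease arguments established above.
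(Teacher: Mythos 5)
Your proposal is correct and follows essentially the same route as the paper: a simultaneous induction on $d$ whose crux is item~(1), handled by using postponement of forgetful reduction~(\rprop{retraction_before_shrinking}) to build an injection from the reductions indexing $\bme{d}{\tm'}$ into those indexing $\bme{d}{\tm}$, with the targets compared via item~(4) of the induction hypothesis at degree $d-1$, and the remaining items discharged by structural induction and multiset bookkeeping. The only (cosmetic) difference is that you factor item~(3) through a separate second-argument monotonicity lemma combined with item~(2), where the paper runs a single nested induction on $\tm$ handling both arguments at once.
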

\begin{proof}
The four items are proved simultaneously by induction on $D$,
where item~\Item{1} resorts to the \ih,
and the following items may resort to the previous items
without decreasing~$d$. Items~\Item{2} and \Item{3}
proceed by a nested induction on $\tm$.

The interesting part is item~\Item{1},
so let $\tm \shone \tm'$ and let us show
that $\bme{d}{\tm} \mgeq \bme{d}{\tm'}$.
Let $X := \set{\redseq \ST (\exists{\tmtwo})\ \redseq : \tm \rtod{d} \tmtwo}$
and $Y := \set{\redseqtwo \ST (\exists{\tmtwo'})\ \redseqtwo : \tm' \rtod{d} \tmtwo'}$.
Define an injective function $\varphi : Y \to X$
by $\varphi(\redseqtwo) := \retract{\redseqtwo}{\redex}$,
resorting to~\rprop{retraction_before_shrinking},
where $\retract{\redseqtwo}{\redex} : \tm \rtod{d} \tmtwo_\redseqtwo$.
and $\tmtwo_\redseqtwo \shone^* \tmtwo'$.
Note that for every $\redseqtwo \in Y$ we have
$\ame{d-1}{\tgt{\varphi(\redseqtwo)}}
 = \ame{d-1}{\tmtwo_\redseqtwo}
 \mgeq^\dagger \ame{d-1}{\tmtwo'}
 = \ame{d-1}{\tgt{\redseqtwo}}$,
where $\dagger$ holds by item~\Item{\ref{shrinking_ame:ame_decrease}}
of the \ih, observing that $d - 1 < d$.
To conclude the proof, let $Z = X \setminus \varphi(Y)$, and note that:
\begin{itemize}
\item[]
  $\bme{d}{\tm}
   =
   \msb{\ame{d-1}{\tgt{\redseq}}}{\redseq \in \varphi(Y)}
   + \msb{\ame{d-1}{\tgt{\redseq}}}{\redseq \in Z}
   \mgeq
   \msb{\ame{d-1}{\tgt{\redseq}}}{\redseq \in \varphi(Y)}
  $
\item[]
  $
   \hphantom{\bme{d}{\tm}}
    =^{(\star)}
    \msb{\ame{d-1}{\tgt{\varphi(\redseqtwo)}}}{\redseqtwo \in Y}
    \mgeq^{(\star\star)}
      \msb{\ame{d-1}{\tgt{\redseqtwo}}}{\redseqtwo \in Y}
    = \bme{d}{\tm'}
  $
\end{itemize}
For the step marked with ($\star$), note that $\varphi$ is injective.
For the step marked with ($\star\star$),
note that
$\msb{\ame{d-1}{\tgt{\varphi(\redseqtwo)}}}{\redseqtwo \in Y}
 = \sum_{\redseqtwo \in Y} \ms{\ame{d-1}{\tgt{\varphi(\redseqtwo)}}}
 \mgeq \sum_{\redseqtwo \in Y} \ms{\ame{d-1}{\tgt{\redseqtwo}}}
 = \msb{\ame{d-1}{\tgt{\redseqtwo}}}{\redseqtwo \in Y}$
because 
$\ame{d-1}{\tgt{\varphi(\redseqtwo)}} \mgeq \ame{d-1}{\tgt{\redseqtwo}}$,
as we have already justified.
\end{proof}

Finally, we prove the main theorem in this section:
\begin{theorem}
\lthm{ame_decreasing}
Let $\ltm,\ltmtwo$ be typable $\lambda$-terms
such that $\ltm \tobeta \ltmtwo$.
Then $\amefull{\ltm} > \amefull{\ltmtwo}$.
\end{theorem}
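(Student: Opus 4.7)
The plan is to reproduce, at the level of the $\amesym$-measure, the exact strategy used for the $\meassym$-measure in Theorem \ref{thm:z_measure_decreases}: lift the $\beta$-step to a corresponding $\tog$-step, close up the discrepancy via a forgetful step, and invoke the two monotonicity results from Proposition \ref{prop:lower_reduction} and Proposition \ref{prop:shrinking_ame}. Given $\ltm \tobeta \ltmtwo$, I would first consider the corresponding step $\ltm \tod{d} \tmtwo$ in the $\lambdaG$-calculus, where $d$ is the degree of the contracted redex. By the reduce/forget lemma (Lemma \ref{lemma:reduce_shrink_lemma}), $\tmtwo \shone \ltmtwo$.

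Let $D := \maxdeg{\ltm}$, so that $\amefull{\ltm} = \ame{D}{\ltm}$. Since $d \leq D$, low/decrease (Proposition \ref{prop:lower_reduction}, item \ref{lower_reduction:ame_decrease}) yields the strict inequality $\ame{D}{\ltm} \mgt \ame{D}{\tmtwo}$. Then forget/decrease (Proposition \ref{prop:shrinking_ame}, item \ref{shrinking_ame:ame_decrease}) applied to $\tmtwo \shone \ltmtwo$ gives $\ame{D}{\tmtwo} \mgeq \ame{D}{\ltmtwo}$. Chaining these two facts produces $\ame{D}{\ltm} \mgt \ame{D}{\ltmtwo}$.

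To finish, I need to relate $\ame{D}{\ltmtwo}$ with $\amefull{\ltmtwo} = \ame{D'}{\ltmtwo}$, where $D' := \maxdeg{\ltmtwo}$. Since the STLC $\beta$-reduction cannot create redexes of higher degree (the observation of Turing and Prawitz recalled in the introduction), we have $D' \leq D$, and hence by the monotonicity remark $\ame{D'}{\ltmtwo} \mleq \ame{D}{\ltmtwo}$. Assembling the chain
\[
\amefull{\ltm} \;=\; \ame{D}{\ltm} \;\mgt\; \ame{D}{\tmtwo} \;\mgeq\; \ame{D}{\ltmtwo} \;\mgeq\; \ame{D'}{\ltmtwo} \;=\; \amefull{\ltmtwo}
\]
concludes the theorem.

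I do not expect any significant obstacle: all the heavy lifting has been absorbed into the three technical propositions of the section (high/increase, low/decrease, forget/decrease) and into the reduce/forget lemma of Section \ref{section:lambdaG}. The only subtlety worth flagging is that the strict inequality comes entirely from the low/decrease step, whereas the forgetful step and the passage from index $D$ to index $D'$ are only weak inequalities; this is precisely the role played by the corresponding $\tog$-step as an intermediate term between $\ltm$ and $\ltmtwo$, and it mirrors the structure of the decreasing argument for $\meassym$.
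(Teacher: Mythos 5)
Your proposal is correct and follows exactly the paper's own argument: lift the $\beta$-step to the corresponding $\tog$-step, apply low/decrease~(\rprop{lower_reduction}) for the strict inequality, forget/decrease~(\rprop{shrinking_ame}) for the forgetful step given by the reduce/forget lemma~(\rlem{reduce_shrink_lemma}), and conclude with the monotonicity $\ame{D'}{\ltmtwo} \mleq \ame{D}{\ltmtwo}$ justified by $D' \leq D$. There is nothing to add.
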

\begin{proof}
Let $D = \maxdeg{\ltm}$ and $D' = \maxdeg{\ltmtwo}$.
Let $\ltm \tog \tmtwo$ be the step corresponding to $\ltm \tobeta \ltmtwo$.
By \rlem{reduce_shrink_lemma} note that $\tmtwo \shone \ltmtwo$.
Then:
\[
  \amefull{\ltm}
  = \ame{D}{\ltm}
  \mgt^{\text{\rprop{lower_reduction}}}
    \ame{D}{\tmtwo}
  \mgeq^{\text{\rprop{shrinking_ame}}}
    \ame{D}{\ltmtwo}
  \mgeq
    \ame{D'}{\ltmtwo}
  =
    \amefull{\ltmtwo}
\]
The last inequality holds because $D \geq D'$
since, as is well-known, contraction of a $\beta$-redex in the
simply typed $\lambda$-calculus
cannot create a redex of higher degree.
\end{proof}

\section{Conclusion}
\lsec{conclusion}

We have defined two decreasing measures for the STLC,
the $\meassym$-measure~(\rdef{meassym_measure})
and the $\amesym$-measure~(\rdef{amesym_measure}).
These measures are
decreasing (\rthm{z_measure_decreases} and \rthm{ame_decreasing} respectively)
and, to the best of our knowledge, they provide two new proofs of strong
normalization for the STLC.
Both measures are defined constructively and by purely syntactic methods,
using the $\lambdaG$-calculus as an auxiliary tool.

The problem of finding a ``straightforward'' decreasing measure for $\beta$-reduction
in the simply typed $\lambda$-calculus is posed as Problem~\#26
in the TLCA list of open problems~\cite{tlcaopen},
and as Problem~\#19
in the RTA list of open problems~\cite{rtaopen}.

One strength of the $\meassym$-measure is that its codomain is simple:
each term is mapped to a natural number.
One weakness is that the definition of the $\meassym$-measure relies on
reduction in the $\lambdaG$-calculus,
and computing the $\meassym$-measure
is at least as costly as evaluating the $\lambda$-term itself.
Measures based on Gandy's~\cite{gandy80sn,dv1987exactly} have similar
characteristics.
One question is whether the values of the $\meassym$-measure and
measures based on Gandy's
can be related. It is not immediate to establish a precise
correspondence.

On the other hand, one strength of the $\amesym$-measure
is that it shows how to extend
Turing's measure $\turingme{-}$ so that it decreases when contracting
{\em any} redex.
The proof is based on a delicate analysis of how contracting
a redex of degree $d$ may create and copy redexes of degree $d'$,
depending on whether $d < d'$, or $d = d'$, or $d > d'$.
We hope that this may provide novel insights on why the STLC is SN.
The codomain of the $\amesym$-measure is not so simple,
as the $\amesym$-measure maps each term to a structure of nested multisets.
Yet, it is ``reasonably simple'':
the fact that the partial orders $\ameset{d}$ and $\bmeset{d}$ 
are well-founded only relies on the ordinary multiset and lexicographic orderings.
The $\amesym$-measure is costly to compute; in particular
$\bme{d}{\tm}$ is defined as a sum over all reductions
$\redseq : \tm \rtod{d} \tm'$, which may produce a combinatorial explosion.
Another weakness is that our proofs make use of relatively heavy rewriting
machinery, as we have to keep explicit track of witnesses
(\eg in~\rsec{reduction_by_degrees}).

Besides the techniques mentioned in the introduction,
other proofs of SN of the STLC can be found in the literature.
For example, David~\cite{David01} gives a purely syntactic proof of
SN relying on the standardization theorem;
Loader~\cite{Loader98}, as well as Joachimski and Matthes~\cite{JoachimskiM03},
give combinatorial proofs of SN based on inductive predicates
characterizing strongly normalizing terms.
As far as we know, the only proofs that explicitly construct
decreasing measures are those based on Gandy's.

The idea of keeping ``leftover garbage''
can be traced back to at least
the works of Nederpelt~\cite{NederpeltPhdThesis} and
Klop~\cite{Tesis:Klop},
who studied non-erasing variants of (possibly) erasing rewriting systems,
in order to relate weak and strong normalization.
Many variations of these ideas have been explored in the past,
such as in de Groote's notion of $\beta_S$ reduction~\cite{Groote93}
or Neergaard and Sørensen calculus with memory~\cite{NeergaardS02}.
Instead of using the $\lambdaG$-calculus, it is possible that other
non-erasing systems may be used.
For instance, Gandy~\cite{gandy80sn} translates $\lambda$-terms to the
terms of $\lambda{I}$-calculus to avoid erasing arguments.

The definition of reduction in the $\lambdaG$-calculus, which allows
arbitrary memory in between the abstraction and the application,
is inspired by Accattoli and Kesner's work on calculi with
explicit substitutions ``at a distance''~\cite{AccattoliK10}.
This mechanism can be traced back, again, to at least the work
of Nederpelt~\cite{NederpeltPhdThesis}.

The definition of the $\lambdaG$-calculus as a means to obtain an
{\em increasing} measure was inspired by the fact that,
in explicit substitution calculi without erasure,
labeled reduction (in the sense of Lévy labels~\cite{Tesis:Levy:1978})
increases the sum of the sizes of all the labels in the term~\cite{BarenbaumB17}.

\bibliography{biblio}

\newpage
\appendix

\section{Technical appendix}

\subsection{Proofs of \rsec{lambdaG} --- The $\lambdaG$-calculus}
\lsec{appendix:lambdaG}

In this section we give detailed proofs of the results
about the $\lambdaG$-calculus stated in \rsec{lambdaG}.

\begin{remark}
\lremark{bin_is_symg_abstraction}
$\bin{\tm}{\tmtwo}$ is a $\symg$-abstraction
if and only if
$\tm$ is a $\symg$-abstraction.
\end{remark}

\begin{lemma}[Substitution lemma]
\llem{typing_substitution_lemma}
Let $\judg{\tctx,\var:\typ}{\tm}{\typtwo}$
and $\judg{\tctx}{\tmtwo}{\typ}$.
Then $\judg{\tctx}{\tm\sub{\var}{\tmtwo}}{\typtwo}$.
\end{lemma}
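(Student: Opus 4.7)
The plan is to prove this by straightforward induction on the derivation of $\judg{\tctx,\var:\typ}{\tm}{\typtwo}$, or equivalently on the structure of the term $\tm$ (since the typing rules are syntax-directed and typable terms have unique types). This is the standard substitution lemma pattern; the only novelty relative to the STLC is the wrapper case.

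First I would handle the base case where $\tm$ is a variable. If $\tm = \var^\typ$, then $\tm\sub{\var}{\tmtwo} = \tmtwo$, and since $\tm$ has type $\typtwo$ we get $\typ = \typtwo$, so $\judg{\tctx}{\tmtwo}{\typtwo}$ holds by hypothesis. If $\tm = \vartwo^{\typtwo}$ for some $\vartwo \neq \var$, then $\tm\sub{\var}{\tmtwo} = \vartwo^{\typtwo}$ and the judgment $\judg{\tctx}{\vartwo^{\typtwo}}{\typtwo}$ follows directly from the \indrulename{var} rule, using that $\vartwo : \typtwo$ must already appear in $\tctx$.

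Next the inductive cases. For an application $\tm = \tm_1\,\tm_2$, inversion of the typing rule gives $\judg{\tctx,\var:\typ}{\tm_1}{\typthree\to\typtwo}$ and $\judg{\tctx,\var:\typ}{\tm_2}{\typthree}$ for some $\typthree$; the induction hypothesis applied to $\tm_1$ and $\tm_2$ yields the judgments for their substituted versions, and then reapplying the application rule gives $\judg{\tctx}{(\tm_1\,\tm_2)\sub{\var}{\tmtwo}}{\typtwo}$. The wrapper case $\tm = \bin{\tm_1}{\tm_2}$ is entirely analogous: inversion gives $\judg{\tctx,\var:\typ}{\tm_1}{\typtwo}$ and $\judg{\tctx,\var:\typ}{\tm_2}{\typthree}$ for some $\typthree$; the IH applied to each subterm together with the wrapper typing rule (which keeps the type of the body) closes the case.

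The one case requiring a little care is the abstraction $\tm = \lam{\vartwo^{\typthree}}{\tm'}$. Using $\alpha$-renaming we may assume $\vartwo \neq \var$ and $\vartwo \notin \fv{\tmtwo}$, so that $(\lam{\vartwo}{\tm'})\sub{\var}{\tmtwo} = \lam{\vartwo}{\tm'\sub{\var}{\tmtwo}}$ and no capture occurs. Inversion gives $\judg{\tctx,\var:\typ,\vartwo:\typthree}{\tm'}{\typfour}$ with $\typtwo = \typthree\to\typfour$. To apply the induction hypothesis I need $\tmtwo$ to be typable in the extended context $\tctx,\vartwo:\typthree$, which follows from an easy weakening lemma (or can be folded into the main induction). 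The IH then yields $\judg{\tctx,\vartwo:\typthree}{\tm'\sub{\var}{\tmtwo}}{\typfour}$, and the abstraction rule concludes. The only mild obstacle is thus the bookkeeping around $\alpha$-renaming and weakening in this case, both of which are entirely standard.
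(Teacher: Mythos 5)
Your proof is correct and takes exactly the approach the paper uses: the paper's proof of this lemma is simply ``straightforward by induction on $\tm$,'' and your case analysis (variable, application, wrapper, abstraction with $\alpha$-renaming and weakening) is the standard elaboration of that induction, with the wrapper case handled correctly via the wrapper typing rule.
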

\begin{proof}
Straightforward by induction on $\tm$.
\end{proof}

\begin{proposition}[Subject reduction]
\lprop{appendix:subject_reduction}
Let $\judg{\tctx}{\tm}{\typ}$ and $\tm \tog \tmtwo$.
Then $\judg{\tctx}{\tmtwo}{\typ}$.
\end{proposition}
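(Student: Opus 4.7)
The plan is to argue by induction on the context $\gctx$ inducing the step, so that $\tm = \of{\gctx}{(\lam{\var}{\tm_1})\sctx\,\tm_2}$ and $\tmtwo = \of{\gctx}{\bin{\tm_1\sub{\var}{\tm_2}}{\tm_2}\sctx}$ for some body $\tm_1$, argument $\tm_2$, and memory $\sctx$. The base case is where $\gctx = \ctxhole$; the inductive step threads the hypothesis through one layer of $\gctx$ at a time.

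For the base case, I would first isolate a simple auxiliary memory-typing fact, proved by induction on the length of $\sctx$: for every term $\tm_3$ and type $\typ_3$, $\judg{\tctx}{\tm_3\sctx}{\typ_3}$ holds if and only if $\judg{\tctx}{\tm_3}{\typ_3}$ and every memorized term appearing inside $\sctx$ is itself well-typed under $\tctx$. Applying this fact to the typing derivation of $(\lam{\var}{\tm_1})\sctx$, the type of that subterm must coincide with the type of the abstraction $\lam{\var}{\tm_1}$, which is an arrow type $\typ_1 \to \typ$ with $\tctx, \var:\typ_1 \vdash \tm_1 : \typ$. The application rule then forces $\judg{\tctx}{\tm_2}{\typ_1}$ and yields $\typ$ as the type of the whole redex. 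The substitution lemma (\rlem{typing_substitution_lemma}) gives $\judg{\tctx}{\tm_1\sub{\var}{\tm_2}}{\typ}$, and combining this with $\judg{\tctx}{\tm_2}{\typ_1}$ via the wrapper rule produces $\judg{\tctx}{\bin{\tm_1\sub{\var}{\tm_2}}{\tm_2}}{\typ}$. Reassembling the memory $\sctx$ preserves this typing by the auxiliary fact, since the memorized subterms of $\sctx$ were already established to be well-typed.

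For the inductive step, we perform case analysis on the top constructor of $\gctx$ (abstraction, left or right application, or the body or memorized-term slot of a wrapper). In each case, we invert the typing rule at the root of $\tm$, invoke the \ih on the subcontext containing the redex to conclude that the reduced subterm has the same type as the original, and reapply the same typing rule to conclude that $\tmtwo$ has type $\typ$ under $\tctx$.

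I do not expect a serious obstacle: subject reduction is a standard exercise. The only deviation from the pure-STLC argument is the presence of the memory $\sctx$ in the redex pattern, which is cleanly isolated by the auxiliary memory-typing fact; the wrapper rule is designed precisely so that $\bin{-}{-}$ inherits the type of its body and ignores the type of its memorized term, which is exactly what is needed to type the contractum.
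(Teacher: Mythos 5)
Your proposal is correct and matches the paper's argument: the paper also proceeds by induction on the derivation of the step (equivalently, on the context enclosing the redex), invoking the substitution lemma (\rlem{typing_substitution_lemma}) in the root case. Your explicit auxiliary fact about typing through a memory $\sctx$ is a detail the paper leaves implicit in calling the proof straightforward, but it is the right observation and does not change the approach.
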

\begin{proof}
Straightforward by induction on the derivation
of the step $\tm \tog \tmtwo$, resorting to \rlem{typing_substitution_lemma}
for the base case, when there is a $\symg$-reduction step at
the root.
\end{proof}

\subsubsection{Confluence of the $\lambdaG$-calculus}

\begin{proposition}[Confluence]
\lprop{appendix:confluence}
The $\lambdaG$-calculus is confluent.
That is,
if $\tm_1 \tog^* \tm_2$ and $\tm_1 \tog^* \tm_3$,
there exists a term $\tm_4$ such that
$\tm_2 \tog^* \tm_4$ and $\tm_3 \tog^* \tm_4$.
\end{proposition}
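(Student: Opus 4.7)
The plan is to use the Tait--Martin-L\"of parallel-reduction technique, adapted to the $\lambdaG$-calculus. I would introduce a relation of \emph{parallel reduction} $\ptog$ on $\lambdaG$-terms, inductively generated by: $\var \ptog \var$; if $\tm \ptog \tm'$ then $\lam{\var}{\tm} \ptog \lam{\var}{\tm'}$; if $\tm \ptog \tm'$ and $\tmtwo \ptog \tmtwo'$ then both $\tm\,\tmtwo \ptog \tm'\,\tmtwo'$ and $\bin{\tm}{\tmtwo} \ptog \bin{\tm'}{\tmtwo'}$; and, as the key clause, if $\tm \ptog \tm'$, $\tmtwo \ptog \tmtwo'$ and $\sctx \ptog \sctx'$ (componentwise, extending $\ptog$ to memories in the obvious way), then $(\lam{\var}{\tm})\sctx\,\tmtwo \ptog \tm'\sub{\var}{\tmtwo'}\garb{\tmtwo'}\sctx'$. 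By construction $\ptog$ is reflexive and $\tog \,\subseteq\, \ptog \,\subseteq\, \tog^*$, so the reflexive-transitive closures coincide ($\ptog^* \,=\, \tog^*$), and confluence of $\tog^*$ reduces to confluence of $\ptog^*$.

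Next I would prove a substitution lemma stating that if $\tm \ptog \tm'$ and $\tmtwo \ptog \tmtwo'$ then $\tm\sub{\var}{\tmtwo} \ptog \tm'\sub{\var}{\tmtwo'}$, by a routine induction on the derivation of $\tm \ptog \tm'$ (the only subtle point being that the $\alpha$-convention lets us assume $\var$ fresh for $\tmtwo,\tmtwo'$). The heart of the argument is then the \emph{diamond property} for $\ptog$: if $\tm \ptog \tmtwo$ and $\tm \ptog \tmthree$, there exists $\tmfour$ with $\tmtwo \ptog \tmfour$ and $\tmthree \ptog \tmfour$. The witness $\tmfour$ is built schematically by ``pasting together'' the two parallel steps, and the proof proceeds by induction on $\tm$ with case analysis on the last rules of both derivations.

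The main obstacle is the case in which $\tm$ is an applied $\symg$-abstraction $(\lam{\var}{\tm_0})\sctx\,\tmtwo_0$ and both parallel steps fire the head redex, yielding $\tm_0'\sub{\var}{\tmtwo_0'}\garb{\tmtwo_0'}\sctx'$ and $\tm_0''\sub{\var}{\tmtwo_0''}\garb{\tmtwo_0''}\sctx''$: here the \ih applied to $\tm_0$, $\tmtwo_0$ and to each memorized term of $\sctx$ yields common parallel reducts, and the substitution lemma above closes the diamond. The related case in which only one side fires the head redex while the other merely reduces inside the constituents is closed by firing the head redex once more on the non-contracted side. The presence of wrappers introduces no genuine difficulty: memorized terms never block the $\symg$-redex pattern, and the memory $\sctx$ is matched linearly in the rule, so reductions inside $\sctx$ on one side are faithfully mirrored on the other. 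Once the diamond for $\ptog$ is established, a standard diagram chase lifts it to the diamond property of $\ptog^* = \tog^*$, yielding confluence. Alternatively, as already noted in the excerpt, one may bypass the construction entirely and invoke Nipkow's confluence theorem for orthogonal higher-order rewriting systems~\cite{nipkow1991higher}: the $\tog$-rule is left-linear, its left-hand side is a valid higher-order pattern admitting no critical pair with itself, so the $\lambdaG$-calculus is orthogonal and therefore confluent.
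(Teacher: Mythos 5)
Your proposal is correct and follows essentially the same route as the paper: the appendix proof of confluence is exactly the Tait--Martin-L\"of argument, with a simultaneous-reduction relation $\ptog$ extended to memories, the inclusion chain $\tog \,\subseteq\, \ptog \,\subseteq\, \tog^*$, the substitution lemma for parallel reduction as the key lemma, and the diamond property closed via the same two critical diagrams (both sides firing the head redex, and only one side firing it). The paper likewise notes the alternative of invoking orthogonality as a higher-order rewriting system in the sense of Nipkow, so nothing in your proposal departs from the authors' argument.
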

\begin{proof}
The proof can be done following standard techniques.
For example, following Tait and Martin-L\"of's technique,
we may define a notion of simultaneous reduction $\ptog$
that allows to contract many redexes simultaneously,
\ie allowing the complete development of any set of redexes on
the starting term.
Then it suffices to show that 
$\tog \mathrel{\subseteq} \ptog \mathrel{\subseteq} \tog^*$
and that $\ptog$ enjoys the diamond property,
\ie that if $\tm_1 \ptog \tm_2$ and $\tm_1 \ptog \tm_3$
there exists a term $\tm_4$
such that $\tm_2 \ptog \tm_4$ and $\tm_3 \ptog \tm_4$.
The key lemma is:
\[
  \tm \ptog \tm'
  \text{ and }
  \tmtwo \ptog \tmtwo'
  \text{ implies }
  \tm\sub{\var}{\tmtwo} \ptog \tm'\sub{\var}{\tmtwo'}
\]
The key diagrams in the proof that $\ptog$ enjoys the diamond
property are:
\[
  \xymatrix{
    \app{(\lam{\var}{\tm_1})\sctx_1\,\tmtwo_2}
    \ar@{=>}[r]
    \ar@{=>}[d]
  &
    \app{(\lam{\var}{\tm_2})\sctx_2\,\tmtwo_2}
    \ar@{=>}[d]
  \\
    \tm_3\sub{\var}{\tmtwo_3}\garb{\tmtwo_3}\sctx_3
    \ar@{=>}[r]
  &
    \tm_4\sub{\var}{\tmtwo_4}\garb{\tmtwo_4}\sctx_4
  }
  \HS\HS
  \xymatrix{
    \app{(\lam{\var}{\tm_1})\sctx_1\,\tmtwo_2}
    \ar@{=>}[r]
    \ar@{=>}[d]
  &
    \tm_2\sub{\var}{\tmtwo_2}\garb{\tmtwo_2}\sctx_2
    \ar@{=>}[d]
  \\
    \tm_3\sub{\var}{\tmtwo_3}\garb{\tmtwo_3}\sctx_3
    \ar@{=>}[r]
  &
    \tm_4\sub{\var}{\tmtwo_4}\garb{\tmtwo_4}\sctx_4
  }
\]
\end{proof}

\subsubsection{Simplification of a $\lambdaG$-term}

\begin{definition}[Generalization of notions to memories]
We generalize some of the notions to memories as follows:
\begin{enumerate}
\item
  The reduction relation $\tog$ is extended to operate on memories
  with the two following inductively defined rules:
  \begin{enumerate}
  \item If $\tm \tog \tm'$ then $\sctx\garb{\tm} \tog \sctx\garb{\tm'}$.
  \item If $\sctx \tog \sctx'$ then $\sctx\garb{\tm} \tog \sctx'\garb{\tm}$.
  \end{enumerate}
\item
  The max-degree is extended to memories as follows:
  $\maxdeg{\ctxhole} = 0$
  and
  $\maxdeg{\sctx\garb{\tm}} = \max(\maxdeg{\sctx},\maxdeg{\tm})$.
\end{enumerate}
\end{definition}

\begin{lemma}[Terms reduce to its simplification]
\llem{tm_reduces_to_simpk}
For every term $\tm$ and for all $k \geq 1$
we have that $\tm \tog^* \simpk{\tm}$.
\end{lemma}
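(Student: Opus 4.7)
The plan is to proceed by structural induction on $\tm$, bundled with a parallel clause for memories asserting $\sctx \tog^* \simp{k}{\sctx}$, so that the two mutually recursive definitions in \rdef{simplification} can be handled together.

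The easy cases are almost immediate. For $\tm = \var$ we have $\simpk{\var} = \var$ and the empty reduction works. For $\tm = \lam{\var}{\tmtwo}$ and $\tm = \bin{\tmtwo}{\tmthree}$, the definition of $\simp{k}{\cdot}$ pushes through the head constructor, so the IH applied to each immediate subterm, together with closure of $\tog$ under arbitrary contexts, yields the required reduction sequence. For memories, $\simp{k}{\ctxhole} = \ctxhole$ is trivial, and $\bin{\sctx}{\tm}$ is handled by the IH on $\sctx$ and on $\tm$ together with the extension of $\tog$ to memories given just above.

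The interesting case is the application $\tm = \tmtwo\,\tmthree$. If $\tmtwo$ is not a $\symg$-abstraction of degree $k$, then $\simpk{\tm} = \simpk{\tmtwo}\,\simpk{\tmthree}$ and the result follows from the IH plus congruence. Otherwise $\tmtwo = (\lam{\var}{\tmtwo'})\sctx$ is a $\symg$-abstraction of degree $k$, in which case $\tmtwo'$, $\tmthree$, and the memorized terms appearing in $\sctx$ are all proper subterms of $\tm$. The strong structural IH therefore supplies $\tmtwo' \tog^* \simpk{\tmtwo'}$, $\tmthree \tog^* \simpk{\tmthree}$, and $\sctx \tog^* \simp{k}{\sctx}$. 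Reducing each of these in place inside the enclosing context yields
$$\tm = (\lam{\var}{\tmtwo'})\sctx\,\tmthree \,\,\tog^*\,\, (\lam{\var}{\simpk{\tmtwo'}})\simp{k}{\sctx}\,\simpk{\tmthree},$$
and the right-hand side is still a redex of degree $k$, since types (and hence degrees of $\symg$-abstractions) are preserved under $\tog$ by subject reduction (\rprop{subject_reduction}). One final $\tog$-step contracts this redex to $\simpk{\tmtwo'}\sub{\var}{\simpk{\tmthree}}\garb{\simpk{\tmthree}}\simp{k}{\sctx}$, which is exactly $\simpk{\tm}$ by definition.

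No serious obstacle is anticipated; the only mildly delicate point is that in the application case the IH is invoked on $\tmtwo'$ and on the memorized terms inside $\sctx$, which are not immediate subterms of $\tm$ but only proper ones. This is justified either by taking the induction in strong form or, equivalently, by the simultaneous clause for memories stated at the start.
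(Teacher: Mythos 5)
Your proof is correct and follows essentially the same route as the paper's: a structural induction on $\tm$ generalized simultaneously to memories, with the only non-trivial case being the degree-$k$ redex, handled by reducing the subterms in place via the IH and then performing one final $\tog$-step at the root. The paper's proof does not even bother to note that the contractum is still a redex of degree $k$ (the $\tog$-rule applies to any applied $\symg$-abstraction regardless of degree), so your appeal to subject reduction there is harmless but not needed for this lemma — it only becomes relevant for the refined version that tracks degrees (\rlem{tm_reduces_to_simpk_by_degrees}).
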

\begin{proof}
To prove it by induction,
we generalize the statement to memories,
\ie $\sctx \tog^* \simpk{\sctx}$.
We proceed by simultaneous induction on $\tm$ and $\sctx$:
\begin{enumerate}
\item
  $\tm = \var$:
  Immediate,
  as
  $\var
   \tog^* \var
   = \simpk{\var}$ in zero steps.
\item
  $\tm = \lam{\var}{\tmtwo}$:
  Then
  $\lam{\tmtwo}
   \tog^* \lam{\var}{\simpk{\tmtwo}}
   = \simpk{\lam{\var}{\tmtwo}}$
  by \ih.
\item
  $\tm = (\lam{\var}{\tmtwo})\sctx\,\tmthree$
    where $(\lam{\var}{\tmtwo})\sctx$
    is a $\symg$-abstraction of degree $k$:
  By \ih
  $(\lam{\var}{\tmtwo})\sctx\,\tmthree
  \tog^*
      (\lam{\var}{\simpk{\tmtwo}})\simpk{\sctx}\,\simpk{\tmthree}
  \tog
    \simpk{\tmtwo}
    \sub{\var}{\simpk{\tmthree}}
    \garb{\simpk{\tmthree}}
    \simpk{\sctx}
  = \simpk{(\lam{\var}{\tmtwo})\sctx\,\tmthree}
  = \simpk{\tm}$.
\item
  $\tm = \tmtwo\,\tmthree$
    where $\tmtwo$ is not a $\symg$-abstraction of degree $k$:
  By \ih
  $\app{\tmtwo}{\tmthree}
   \tog^* \app{\simpk{\tmtwo}}{\simpk{\tmthree}}
   = \simpk{\app{\tmtwo}{\tmthree}}
   = \simpk{\tm}$.
\item
  $\tm = \bin{\tmtwo}{\tmthree}$:
  By \ih
  $\bin{\tmtwo}{\tmthree}
   \tog^* \bin{\simpk{\tmtwo}}{\simpk{\tmthree}}
   = \simpk{\bin{\tmtwo}{\tmthree}}
   = \simpk{\tm}$.
\item
  $\sctx = \ctxhole$:
  Immediate, as
  $\ctxhole
   \tog^* \ctxhole
   = \simpk{\ctxhole}$ in zero steps.
\item
  $\sctx = \sctx'\garb{\tm}$:
  By \ih
  $\sctx'\garb{\tm}
   \tog^* \simpk{\sctx'}\garb{\simpk{\tm}}
   = \simpk{\sctx'\garb{\tm}}$.
\end{enumerate}
\end{proof}

\begin{lemma}[Substitution of terms of lower type does not create abstractions]
\llem{substitution_does_not_create_abstractions}
If $\height{\typeof{\tm}} > \height{\typeof{\tmtwo}}$
and $\tm$ not a $\symg$-abstraction,
then $\tm\sub{\var}{\tmtwo}$ is not a $\symg$-abstraction.
\end{lemma}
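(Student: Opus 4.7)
The plan is to proceed by induction on the structure of $\tm$. Since $\tm$ is not a $\symg$-abstraction, the grammar of terms combined with~\rremark{bin_is_symg_abstraction} leaves three possibilities: $\tm$ is a variable; $\tm$ is an application $\tmthree\,\tmfour$; or $\tm$ is a wrapper $\bin{\tmthree}{\tmfour}$ whose body $\tmthree$ is itself not a $\symg$-abstraction. The case $\tm = \lam{\vartwo}{\tmthree}$ is automatically excluded because plain abstractions are $\symg$-abstractions (take empty memory).

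In the variable case, the height hypothesis is used precisely to rule out $\tm = \var$: were this to hold, the implicit well-typedness of the substitution would force $\typeof{\tmtwo} = \typeof{\var} = \typeof{\tm}$, contradicting $\height{\typeof{\tm}} > \height{\typeof{\tmtwo}}$. Hence $\tm = \vartwo$ with $\vartwo \neq \var$, and $\tm\sub{\var}{\tmtwo} = \vartwo$ remains a variable, hence not a $\symg$-abstraction. The application case is immediate: $\tm\sub{\var}{\tmtwo}$ is syntactically still an application $\tmthree\sub{\var}{\tmtwo}\,\tmfour\sub{\var}{\tmtwo}$, and applications are never $\symg$-abstractions, because unfolding the shape $(\lam{\vartwo}{\tmthree})\sctx$ according to the grammar of memories produces either a plain abstraction (empty $\sctx$) or a wrapper (non-empty $\sctx$), never an application.

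The wrapper case $\tm = \bin{\tmthree}{\tmfour}$ is where the induction hypothesis is invoked. From the typing rule for wrappers, $\typeof{\tmthree} = \typeof{\tm}$, hence $\height{\typeof{\tmthree}} > \height{\typeof{\tmtwo}}$; and by~\rremark{bin_is_symg_abstraction}, $\tmthree$ is not a $\symg$-abstraction. Applying the \ih to $\tmthree$ gives that $\tmthree\sub{\var}{\tmtwo}$ is not a $\symg$-abstraction, and~\rremark{bin_is_symg_abstraction} once more yields that $\tm\sub{\var}{\tmtwo} = \bin{\tmthree\sub{\var}{\tmtwo}}{\tmfour\sub{\var}{\tmtwo}}$ is not either.

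The only subtle step is the variable case, where ruling out $\tm = \var$ requires combining the height hypothesis with the implicit well-typedness of the substitution. The rest is a disciplined case analysis over the three possible shapes of a non-$\symg$-abstraction, with the wrapper clause threading the height equality through~\rremark{bin_is_symg_abstraction} to invoke the \ih.
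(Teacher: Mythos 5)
Your proof is correct and follows essentially the same route as the paper's: induction on $\tm$, ruling out $\tm=\var$ via the height hypothesis (since $\var=\vartwo$ would force $\typeof{\tm}=\typeof{\tmtwo}$), observing that applications are never $\symg$-abstractions, and threading the type equality and \rremark{bin_is_symg_abstraction} through the wrapper case to invoke the \ih. No gaps.
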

\begin{proof}
By induction on $\tm$:
\begin{enumerate}
\item
  $\tm = \vartwo$:
  We claim that $\vartwo \neq \var$.
  Indeed, note that the type of $\vartwo$ is $\typeof{\tm}$ 
  but the type of $\var$ is $\typeof{\tmtwo}$.
  By contradiction, suppose that $\var = \vartwo$.
  Then $\typeof{\tm} = \typeof{\tmtwo}$
  and in particular
  $\height{\typeof{\tm}}
   > \height{\typeof{\tmtwo}}
   = \height{\typeof{\tm}}$,
  which is impossible.
  Then we have that $\vartwo \neq \var$,
  so $\tm\sub{\var}{\tmtwo} = \vartwo\sub{\var}{\tmtwo} = \vartwo$,
  which is not a $\symg$-abstraction.
\item
  $\tm = \lam{\var}{\tm'}$:
  Impossible, since $\tm$ is not a $\symg$-abstraction by hypothesis.
\item
  $\tm = \app{\tm_1}{\tm_2}$:
  Then
  $\tm\sub{\var}{\tmtwo}
   = \app{\tm_1\sub{\var}{\tmtwo}}{\tm_2\sub{\var}{\tmtwo}}$
  is trivially not an abstraction.
\item
  $\tm = \bin{\tm_1}{\tm_2}$:
  Note that $\tm_1$ is not a $\symg$-abstraction
  by \rremark{bin_is_symg_abstraction}.
  Furthermore, note that $\typeof{\tm_1} = \typeof{\tm}$
  so in particular
  $\height{\typeof{\tm_1}}
   = \height{\typeof{\tm}}
   > \height{\typeof{\tmtwo}}$.
  We are under the conditions to apply the \ih on $\tm_1$,
  hence $\tm_1\sub{\var}{\tmtwo}$ is not a $\symg$-abstraction.
  To conclude, note that
  $\tm\sub{\var}{\tmtwo}
   = \bin{\tm_1\sub{\var}{\tmtwo}}{\tm_2\sub{\var}{\tmtwo}}$
  cannot be a $\symg$-abstraction by \rremark{bin_is_symg_abstraction}.
\end{enumerate}
\end{proof}

\begin{lemma}[Simplification does not create abstractions]
\llem{simplification_does_not_create_abstractions}
If $\height{\typeof{\tm}} \geq k$
and $\maxdeg{\tm} \leq k$
and $\tm$ is not a $\symg$-abstraction,
then $\simpk{\tm}$ is not a $\symg$-abstraction.
\end{lemma}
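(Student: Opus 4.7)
The plan is to proceed by induction on the structure of $\tm$, mirroring the recursive definition of $\simpk{-}$. In each case I would use the three hypotheses ($\height{\typeof{\tm}} \geq k$, $\maxdeg{\tm} \leq k$, and $\tm$ not a $\symg$-abstraction) together with \rremark{bin_is_symg_abstraction} to conclude.

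The case $\tm = \var$ is immediate since $\simpk{\var} = \var$ is a variable, and the case $\tm = \lam{\var}{\tmtwo}$ is vacuous because $\tm$ would itself be a $\symg$-abstraction. For the wrapper case $\tm = \bin{\tm_1}{\tm_2}$, by \rremark{bin_is_symg_abstraction} the body $\tm_1$ is not a $\symg$-abstraction. Since $\typeof{\tm_1} = \typeof{\tm}$ and $\maxdeg{\tm_1} \leq \maxdeg{\tm} \leq k$, I can apply the induction hypothesis to $\tm_1$ to conclude that $\simpk{\tm_1}$ is not a $\symg$-abstraction, and then \rremark{bin_is_symg_abstraction} again yields that $\simpk{\tm} = \bin{\simpk{\tm_1}}{\simpk{\tm_2}}$ is not a $\symg$-abstraction.

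The interesting case, and the main obstacle, is the application $\tm = \tmtwo\,\tmthree$, where the definition of $\simpk{-}$ branches depending on whether $\tmtwo$ is a $\symg$-abstraction of degree exactly $k$. The key observation is that the hypothesis $\height{\typeof{\tm}} \geq k$ rules out the first branch altogether: if $\tmtwo = (\lam{\var}{\tm'})\sctx$ had type $\typ \to \typtwo$ with $\height{\typ\to\typtwo} = k$, then $\height{\typtwo} \leq k-1$, but the type of the application $\tmtwo\,\tmthree$ is exactly $\typtwo$, contradicting $\height{\typeof{\tm}} \geq k$. Hence only the ``otherwise'' branch applies, giving $\simpk{\tm} = \simpk{\tmtwo}\,\simpk{\tmthree}$, which is an application and therefore not a $\symg$-abstraction (every $\symg$-abstraction is syntactically either an abstraction or a wrapper).

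I expect the type-height bookkeeping in the application case to be the delicate step: once one notices that the dangerous branch of the simplification is excluded by the type hypothesis, the remaining cases are direct. Overall the argument is a short structural induction, and it makes essential use of all three hypotheses: the non-$\symg$-abstraction hypothesis blocks the $\lambda$ case, the max-degree bound propagates to subterms, and the height bound is precisely what forbids the problematic reduction branch.
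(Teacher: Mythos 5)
Your proposal is correct and follows essentially the same route as the paper's proof: a structural induction where the variable and wrapper cases are routine, the abstraction case is vacuous, and the application case is settled by observing that the hypothesis $\height{\typeof{\tm}} \geq k$ makes the degree-$k$ redex branch of $\simpk{-}$ impossible (since the function part would then have type of height strictly greater than $k$), leaving only the branch that produces an application. The type-height argument you identify as the delicate step is exactly the one the paper uses.
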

\begin{proof}
By induction on $\tm$:
\begin{enumerate}
\item
  $\tm = \var$:
  Then $\simpk{\var} = \var$ is not a $\symg$-abstraction.
\item
  $\tm = \lam{\var}{\tmtwo}$:
  Impossible, since $\tm$ is not a $\symg$-abstraction by hypothesis.
\item
  $\tm = \app{(\lam{\var}{\tmtwo})\sctx}{\tmthree}$
    where $(\lam{\var}{\tmtwo})\sctx$
    is a $\symg$-abstraction of degree $k$:
  We claim that this case is impossible.
  Writing the types explicitly, we have that
  $\typeof{\tm} = \typtwo$ with
  $\height{\typtwo} = \height{\typeof{\tm}} \geq k$ by hypothesis.
  Then the type of the function must be of the form
  $\typeof{\lam{\var}{\tmtwo}} = \typ\to\typtwo$.
  But note that
  $\height{\typeof{\lam{\var}{\tmtwo}}}
  = \height{\typ\to\typtwo}
  > \height{\typtwo}
  \geq k$.
  This means that $(\lam{\var}{\tmtwo})\sctx$ cannot be of degree $k$,
  contradicting the hypothesis of this case.
\item
  $\tm = \app{\tmtwo}{\tmthree}$
    where $\tmtwo$ is not a $\symg$-abstraction of degree $k$:
  Then
  $\simpk{\tm}
   = \app{\tmtwo}{\tmthree}
   = \app{\simpk{\tmtwo}}{\simpk{\tmthree}}$
  is not a $\symg$-abstraction.
\item
  $\tm = \bin{\tmtwo}{\tmthree}$:
  Note that $\typeof{\tmtwo} = \typeof{\tm}$,
  so in particular $\height{\typeof{\tmtwo}} = \height{\typeof{\tm}} \geq k$.
  Moreover, $\maxdeg{\tmtwo} \leq \maxdeg{\tm} \leq k$
  and $\tmtwo$ is not a $\symg$-abstraction
  by \rremark{bin_is_symg_abstraction}.
  We are under the conditions to apply the \ih on $\tmtwo$,
  hence $\simpk{\tmtwo}$ is not a $\symg$-abstraction.
  To conclude, note that
  $\simpk{\tm} = \bin{\simpk{\tmtwo}}{\simpk{\tmthree}}$
  cannot be a $\symg$-abstraction by \rremark{bin_is_symg_abstraction}.
\end{enumerate}
\end{proof}

\begin{lemma}[Properties of the max-degree]
\llem{maxdeg_properties}
\quad
\begin{enumerate}
\item
  \label{maxdeg_properties:sctx}
  $\maxdeg{\tm\sctx} = \max(\maxdeg{\tm},\maxdeg{\sctx})$
\item
  \label{maxdeg_properties:substitution}
  If $\maxdeg{\tm} < k$
  and $\maxdeg{\tmtwo} < k$
  and $\height{\typeof{\tmtwo}} < k$
  then $\maxdeg{\tm\sub{\var}{\tmtwo}} < k$.
\end{enumerate}
\end{lemma}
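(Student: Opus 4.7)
The plan is to prove the two items in sequence: the first by a routine induction on the memory structure, and the second by structural induction on $\tm$ with one delicate case that invokes~\rlem{substitution_does_not_create_abstractions}.

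For item~1, I will induct on $\sctx$. The base case $\sctx = \ctxhole$ is immediate since both sides evaluate to $\maxdeg{\tm}$ (using $\maxdeg{\ctxhole} = 0$ from the extension of $\maxdeg$ to memories). In the inductive step $\sctx = \bin{\sctx'}{\tmthree}$, I will rewrite $\tm\sctx$ as $\bin{\tm\sctx'}{\tmthree}$, use the fact that a wrapper never forms a redex on its own (so its set of redexes is the union of those of its body and those of its memorized term), then combine the \ih on $\sctx'$ with the definitional equation $\maxdeg{\sctx} = \max(\maxdeg{\sctx'},\maxdeg{\tmthree})$. This is a routine calculation.

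For item~2, the plan is induction on $\tm$. The variable case follows directly from the hypothesis $\maxdeg{\tmtwo} < k$ (when $\tm = \var$) or is trivial (when $\tm \neq \var$). The abstraction and wrapper cases are immediate from the \ih. The substantive case is an application $\tm = \tmthree\,\tmfour$: the \ih applied to $\tmthree$ and $\tmfour$ already bounds the degrees of every redex lying strictly inside the substituted subterms, so the only remaining concern is the possible new redex at the root, which arises iff $\tmthree\sub{\var}{\tmtwo}$ is a $\symg$-abstraction. If $\tmthree$ was itself already a $\symg$-abstraction, then the substitution preserves both its shape and its type, so the corresponding redex has the same degree as it did in $\tm$, which is bounded by $\maxdeg{\tm} < k$.

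The main obstacle will be the remaining subcase of the application: $\tmthree$ is not a $\symg$-abstraction while $\tmthree\sub{\var}{\tmtwo}$ happens to become one, which is exactly the situation where the substitution genuinely creates a new redex. Here I will invoke the contrapositive of~\rlem{substitution_does_not_create_abstractions} to deduce $\height{\typeof{\tmthree}} \leq \height{\typeof{\tmtwo}}$. Since substitution preserves types, the type of the resulting $\symg$-abstraction equals $\typeof{\tmthree}$, so the degree of the newly created redex is $\height{\typeof{\tmthree}} \leq \height{\typeof{\tmtwo}} < k$ by the third hypothesis. Combining this bound with the \ih yields the desired conclusion $\maxdeg{\tm\sub{\var}{\tmtwo}} < k$.
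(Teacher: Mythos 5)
Your proof is correct and follows essentially the same route as the paper: item~1 by induction on $\sctx$ using $\maxdeg{\bin{\tm}{\tmtwo}} = \max(\maxdeg{\tm},\maxdeg{\tmtwo})$, and item~2 by induction on $\tm$ with \rlem{substitution_does_not_create_abstractions} doing the work in the application case. The only cosmetic difference is how you organize the application case: the paper splits on $\height{\typeof{\tmthree}} < k$ versus $\geq k$ and applies the lemma in its forward direction, whereas you split on whether $\tmthree\sub{\var}{\tmtwo}$ is a $\symg$-abstraction and use the contrapositive to extract $\height{\typeof{\tmthree}} \leq \height{\typeof{\tmtwo}} < k$; both versions correctly rely on type preservation under substitution~(\rlem{typing_substitution_lemma}) to identify the degree of the root redex.
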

\begin{proof}
Item \ref{maxdeg_properties:sctx}
is straightforward by induction on $\sctx$,
since $\maxdeg{\bin{\tm}{\tmtwo}} = \max(\maxdeg{\tm},\maxdeg{\tmtwo})$.
For item \ref{maxdeg_properties:substitution},
we proceed by induction on $\tm$:
\begin{enumerate}
\item
  $\tm = \vartwo$:
  We consider two subcases, depending on whether $\vartwo = \var$ or not.
  If $\vartwo = \var$,
  then
  $\maxdeg{\var\sub{\var}{\tmtwo}}
   = \maxdeg{\tmtwo} < k$.
  If $\vartwo \neq \var$,
  then
  $\maxdeg{\vartwo\sub{\var}{\tmtwo}}
   = \maxdeg{\vartwo}
   = \maxdeg{\tm} < k$.
\item
  $\tm = \lam{\vartwo}{\tm'}$:
  By $\alpha$-conversion we assume that $\var \neq \vartwo$.
  Note that
  $\maxdeg{\tm'}
   = \maxdeg{\lam{\vartwo}{\tm'}}
   = \maxdeg{\tm} < k$ by hypothesis.
  Then
  $\maxdeg{(\lam{\vartwo}{\tm'})\sub{\var}{\tmtwo}}
   = \maxdeg{\lam{\vartwo}{\tm'\sub{\var}{\tmtwo}}}
   = \maxdeg{\tm'\sub{\var}{\tmtwo}}
   < k$ by \ih.
\item
  $\tm = \app{\tm_1}{\tm_2}$:
  Note that
  $\maxdeg{\tm_1}
   \leq \maxdeg{\app{\tm_1}{\tm_2}}
   = \maxdeg{\tm} < k$
  and, similarly, $\maxdeg{\tm_2} < k$.
  This means that we can apply the \ih
  to obtain that
  $\maxdeg{\tm_1\sub{\var}{\tmtwo}} < k$
  and $\maxdeg{\tm_2\sub{\var}{\tmtwo}} < k$.
  We proceed by case analysis, depending on whether
  $\height{\typeof{\tm_1}} < k$ or $\height{\typeof{\tm_1}} \geq k$:
  \begin{enumerate}
  \item
    If $\height{\typeof{\tm_1}} < k$
    then, by the substitution lemma~(\rlem{typing_substitution_lemma}),
    the terms $\tm_1\sub{\var}{\tmtwo}$ and $\tm_1$ have the same type.
    In particular,
    $\app{\tm_1\sub{\var}{\tmtwo}}{\tm_2\sub{\var}{\tmtwo}}$
    cannot be a redex of degree $k$ or greater,
    since $\height{\typeof{\tm_1\sub{\var}{\tmtwo}}} < k$.
    As a consequence,
    if $\app{\tm_1\sub{\var}{\tmtwo}}{\tm_2\sub{\var}{\tmtwo}}$
    is a redex, its degree is at most $k - 1$. 
    Hence
    $\maxdeg{(\app{\tm_1}{\tm_2})\sub{\var}{\tmtwo}}
     = \maxdeg{\app{\tm_1\sub{\var}{\tmtwo}}{\tm_2\sub{\var}{\tmtwo}}}
     \leq \max(k - 1,
               \maxdeg{\tm_1\sub{\var}{\tmtwo}},
               \maxdeg{\tm_2\sub{\var}{\tmtwo}})
     < k$.
  \item
    If $\height{\typeof{\tm_1}} \geq k$,
    note that $\tm_1$ cannot be a $\symg$-abstraction,
    because then $\tm = \app{\tm_1}{\tm_2}$ would be a redex of
    degree $k$ or greater,
    but by hypothesis we know that $\maxdeg{\tm} < k$.
    Note that we are under the conditions of
    \rlem{substitution_does_not_create_abstractions},
    so we know that $\tm_1\sub{\var}{\tmtwo}$ is not a $\symg$-abstraction.
    In particular,
    $\app{\tm_1\sub{\var}{\tmtwo}}{\tm_2\sub{\var}{\tmtwo}}$
    cannot be a redex.
    Hence
    $\maxdeg{(\app{\tm_1}{\tm_2})\sub{\var}{\tmtwo}}
     = \maxdeg{\app{\tm_1\sub{\var}{\tmtwo}}{\tm_2\sub{\var}{\tmtwo}}}
     = \max(\maxdeg{\tm_1\sub{\var}{\tmtwo}},
            \maxdeg{\tm_2\sub{\var}{\tmtwo}})
     < k$.
  \end{enumerate}
\item
  $\tm = \bin{\tm_1}{\tm_2}$:
  Note that
  $\maxdeg{\tm_1} \leq \maxdeg{\bin{\tm_1}{\tm_2}} = \maxdeg{\tm} < k$
  and, similarly,
  $\maxdeg{\tm_2} < k$.
  This means that we can apply the \ih to obtain that
  $\maxdeg{\tm_1\sub{\var}{\tmtwo}} < k$
  and
  $\maxdeg{\tm_2\sub{\var}{\tmtwo}} < k$.
  Hence
  $\maxdeg{(\bin{\tm_1}{\tm_2})\sub{\var}{\tmtwo}}
   = \maxdeg{\bin{\tm_1\sub{\var}{\tmtwo}}{\tm_2\sub{\var}{\tmtwo}}}
   = \max(
       \maxdeg{\tm_1\sub{\var}{\tmtwo}},
       \maxdeg{\tm_2\sub{\var}{\tmtwo}}
     )
   < k$.
\end{enumerate}
\end{proof}

\begin{lemma}[Simplification decreases the max-degree]
\llem{simpk_maxdeg_decrease}
Suppose that $k \geq 1$.
If $\maxdeg{\tm} \leq k$
then $\maxdeg{\simpk{\tm}} < k$.
\end{lemma}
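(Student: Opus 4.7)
The plan is to proceed by structural induction on $\tm$, simultaneously proving the analogous statement for memories: if $\maxdeg{\sctx} \leq k$ then $\maxdeg{\simpk{\sctx}} < k$. The variable, lambda, wrapper, and empty-memory cases are immediate from the \ih together with the fact that $\maxdeg{\lam{\var}{\tm'}} = \maxdeg{\tm'}$, $\maxdeg{\bin{\tm_1}{\tm_2}} = \max(\maxdeg{\tm_1},\maxdeg{\tm_2})$, and similarly for memories. The two application subcases carry the real content.

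For the first application subcase, assume $\tm = (\lam{\var}{\tm'})\sctx\,\tmtwo$ is itself a redex of degree $k$, so that $\simpk{\tm} = \simpk{\tm'}\sub{\var}{\simpk{\tmtwo}}\garb{\simpk{\tmtwo}}\simpk{\sctx}$. The \ih applied to $\tm'$, $\tmtwo$, and $\sctx$ yields $\maxdeg{\simpk{\tm'}}, \maxdeg{\simpk{\tmtwo}}, \maxdeg{\simpk{\sctx}} < k$. Since the redex has degree $k$, the $\symg$-abstraction $(\lam{\var}{\tm'})\sctx$ has type of height $k$ of the form $\typ \to \typtwo$; hence $\height{\typeof{\tmtwo}} = \height{\typ} < k$, and by subject reduction $\height{\typeof{\simpk{\tmtwo}}} < k$ as well. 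Then \rlem{maxdeg_properties}(\ref{maxdeg_properties:substitution}) gives $\maxdeg{\simpk{\tm'}\sub{\var}{\simpk{\tmtwo}}} < k$, and appending the wrapper and memory preserves this bound by \rlem{maxdeg_properties}(\ref{maxdeg_properties:sctx}) together with the fact that $\maxdeg{\bin{\tm_1}{\tm_2}} = \max(\maxdeg{\tm_1},\maxdeg{\tm_2})$.

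For the second application subcase, assume $\tm = \tmtwo\,\tmthree$ where $\tmtwo$ is not a $\symg$-abstraction of degree $k$, so that $\simpk{\tm} = \simpk{\tmtwo}\,\simpk{\tmthree}$. The \ih gives $\maxdeg{\simpk{\tmtwo}}, \maxdeg{\simpk{\tmthree}} < k$, so the only way $\maxdeg{\simpk{\tm})} \geq k$ could fail is if $\simpk{\tmtwo}\,\simpk{\tmthree}$ itself were a redex of degree $\geq k$. I rule this out by splitting on whether $\tmtwo$ is a $\symg$-abstraction. If $\tmtwo$ is a $\symg$-abstraction, then since $\tm$ is a redex of degree equal to that of $\tmtwo$ and $\maxdeg{\tm} \leq k$ while this degree is not $k$, the degree is strictly less than $k$; since simplification preserves the top-level $\symg$-abstraction shape and (by subject reduction) the type, $\simpk{\tmtwo}$ is still a $\symg$-abstraction of degree $< k$. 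If instead $\tmtwo$ is not a $\symg$-abstraction, I consider two sub-subcases according to $\height{\typeof{\tmtwo}}$: when $\height{\typeof{\tmtwo}} < k$ the new application has type of height $< k$ by subject reduction, so any redex it forms has degree $< k$; when $\height{\typeof{\tmtwo}} \geq k$, \rlem{simplification_does_not_create_abstractions} applies (its hypotheses hold since $\maxdeg{\tmtwo} \leq \maxdeg{\tm} \leq k$), and so $\simpk{\tmtwo}$ remains a non-$\symg$-abstraction, preventing the application from being a redex at all.

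The main obstacle, and the reason the preceding two lemmas were established, is precisely this second subcase: one must verify that simplification cannot inadvertently create a fresh high-degree redex at the root of an application whose head was not already a developed $\symg$-abstraction. The substitution lemma \rlem{maxdeg_properties}(\ref{maxdeg_properties:substitution}) handles the bookkeeping inside the contractum, while \rlem{simplification_does_not_create_abstractions} blocks the creation of head $\symg$-abstractions of sufficient height; together they plug the only two routes by which a degree-$\geq k$ redex could appear after simplification.
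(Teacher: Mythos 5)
Your proposal is correct and follows essentially the same route as the paper: structural induction generalized to memories, with the two application subcases handled via \rlem{maxdeg_properties} for the contracted redex and via the dichotomy on $\height{\typeof{\tmtwo}}$ together with \rlem{simplification_does_not_create_abstractions} for the non-contracted application. Your extra preliminary split on whether $\tmtwo$ is a $\symg$-abstraction is a harmless reorganization of the paper's case analysis (the paper absorbs that situation into the $\height{\typeof{\tmtwo}} < k$ branch), and all the key lemmas invoked are the same.
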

\begin{proof}
Let $k \geq 1$ be such that $\maxdeg{\tm} \leq k$.
We argue that $\maxdeg{\simpk{\tm}} < k$, that is,
all the redexes in $\tm$ have degree less than $k$.
To prove it by induction,
we generalize the statement to memories,
proving also that $\maxdeg{\simpk{\sctx}} < k$.
We prove the statement simultaneously
by induction on $\tm$ and $\sctx$:
\begin{enumerate}
\item
  $\tm = \var$:
  Then
  $\simpk{\var}
   = \var$
  has no redexes, so $\maxdeg{\var} = 0 < 1 \leq k$.
\item
  $\tm = \lam{\var}{\tmtwo}$:
  Note that $\maxdeg{\tmtwo} = \maxdeg{\lam{\var}{\tmtwo}} \leq k$ 
  so by \ih 
  $\maxdeg{\simpk{\tmtwo}} < k$.
  Moreover,
  $\maxdeg{\simpk{\lam{\var}{\tmtwo}}}
   = \maxdeg{\lam{\var}{\simpk{\tmtwo}}}
   = \maxdeg{\simpk{\tmtwo}} < k$.
\item
  $\tm = \app{(\lam{\var}{\tmtwo})\sctx}{\tmthree}$
    where $(\lam{\var}{\tmtwo})\sctx$
    is a $\symg$-abstraction of degree $k$:
  Note that
  $\maxdeg{\tmtwo} \leq \maxdeg{\tm}$
  because any redex in the subterm $\tmtwo$
  is also a redex in the whole term $\tm$,
  so in particular $\maxdeg{\tmtwo} \leq k$
  and we may apply the \ih on $\tmtwo$
  to conclude that $\maxdeg{\simpk{\tmtwo}} < k$.
  Similarly, by \ih,
  we have that $\maxdeg{\simpk{\sctx}} < k$
  and $\maxdeg{\simpk{\tmthree}} < k$.

  Since $\tm$ is typable,
  its type is of the form $\typeof{\tm} = \typtwo$
  with
  $\typeof{(\lam{\var}{\tmtwo})\sctx} = \typ\to\typtwo$
  and $\typeof{\tmthree} = \typ$.
  Note that
  $\height{\typeof{\tmthree}} = \height{\typ} < \height{\typ\to\typtwo} = k$
  since $(\lam{\var}{\tmtwo})\sctx$ is of degree $k$ by hypothesis
  of this case.

  To conclude this case, note that:
  \[
    \begin{array}{rcll}
    &&
      \maxdeg{\simpk{\tm}}
    \\
    & = &
      \maxdeg{
        \simpk{\tmtwo}
        \sub{\var}{\simpk{\tmthree}}
        \garb{\simpk{\tmthree}}
        \simpk{\sctx}
      }
      \\&&\HS\text{by definition}
    \\
    & \leq &
      \max(
        \maxdeg{
          \simpk{\tmtwo}
          \sub{\var}{\simpk{\tmthree}}
        }
      , \maxdeg{\simpk{\tmthree}}
      , \maxdeg{\simpk{\sctx}}
      )
    \\&&\HS\text{by \rlem{maxdeg_properties}~(\ref{maxdeg_properties:sctx})}
    \\
    & < &
      \max(k,k,k)
    \\&&\HS\text{by \rlem{maxdeg_properties}~(\ref{maxdeg_properties:substitution}) and the \ih}
    \\
    & = &
      k
    \end{array}
  \]
  For the last inequality,
  we use the fact that $\height{\typeof{\tmthree}} < k$.
\item
  $\tm = \app{\tmtwo}{\tmthree}$
    where $\tmtwo$ is not a $\symg$-abstraction of degree $k$:
  Note that $\maxdeg{\tmtwo} \leq \maxdeg{\tm}$
  because any redex in the subterm $\tmtwo$
  is also a redex in the whole term $\tm$,
  so in particular $\maxdeg{\tmtwo} \leq k$
  and we may apply the \ih on $\tmtwo$
  to conclude that $\maxdeg{\simpk{\tmtwo}} < k$.
  Similarly, by \ih, we have that $\maxdeg{\simpk{\tmthree}} < k$.

  We proceed by case analysis, depending on whether
  $\height{\typeof{\tmtwo}} < k$ or $\height{\typeof{\tmtwo}} \geq k$:
  \begin{enumerate}
  \item
    If $\height{\typeof{\tmtwo}} < k$,
    then by \rlem{tm_reduces_to_simpk}
    we know that $\tmtwo \tog^* \simpk{\tmtwo}$
    and by subject reduction~(\rprop{subject_reduction})
    we have that $\typeof{\tmtwo} = \typeof{\simpk{\tmtwo}}$.
    In particular, $\app{\simpk{\tmtwo}}{\simpk{\tmthree}}$
    cannot be a redex of degree $k$ or greater,
    because $\height{\typeof{\simpk{\tmtwo}}} = \height{\typeof{\tmtwo}} < k$.
    That is, if $\app{\simpk{\tmtwo}}{\simpk{\tmthree}}$ is a redex,
    its degree is at most $k - 1$.
    Hence we have that
    $\maxdeg{\simpk{\tm}}
     = \maxdeg{\app{\simpk{\tmtwo}}{\simpk{\tmthree}}}
     \leq \max(k - 1, \maxdeg{\simpk{\tmtwo}}, \maxdeg{\simpk{\tmthree}})
     < k$.
  \item
    If $\height{\typeof{\tmtwo}} \geq k$,
    note that $\tmtwo$ cannot be a $\symg$-abstraction.
    Indeed, we know by hypothesis of this case
    that $\tmtwo$ is not an abstraction of degree $k$.
    Furthermore, $\tmtwo$ cannot be an abstraction of degree $k' > k$,
    because then $\tm = \app{\tmtwo}{\tmthree}$
    would be a redex of degree $k' > k$,
    but then we would have that
    $k < k' \leq \maxdeg{\tm} \leq k$,
    which is a contradiction.
    Since $\tmtwo$ is not a $\symg$-abstraction,
    $\maxdeg{\tmtwo} \leq k$, and $\height{\typeof{\tmtwo}} \geq k$,
    we are under the conditions to apply
    \rlem{simplification_does_not_create_abstractions}
    to conclude that $\simpk{\tmtwo}$ is not a $\symg$-abstraction.
    This means that
    $\app{\simpk{\tmtwo}}{\simpk{\tmthree}}$
    cannot be a redex.
    Hence we have that
    $\maxdeg{\simpk{\tm}}
     = \maxdeg{\app{\simpk{\tmtwo}}{\simpk{\tmthree}}}
     = \max(\maxdeg{\simpk{\tmtwo}},\maxdeg{\simpk{\tmthree}})
     < k$.
  \end{enumerate}
\item
  \label{simpk_maxdeg_decrease:bin}
  $\tm = \bin{\tmtwo}{\tmthree}$:
  Note that $\maxdeg{\tmtwo} \leq \maxdeg{\tm}$,
  so in particular $\maxdeg{\tmtwo} \leq k$
  and we may apply the \ih on $\tmtwo$
  to conclude that $\maxdeg{\simpk{\tmtwo}} < k$.
  Similarly, by \ih, we have that $\maxdeg{\simpk{\tmthree}} < k$.
  Hence we have that
  $\maxdeg{\simpk{\tm}}
   = \maxdeg{\bin{\simpk{\tmtwo}}{\simpk{\tmthree}}}
   = \max(\maxdeg{\simpk{\tmtwo}},\maxdeg{\simpk{\tmthree}})
   < k$.
\item
  $\sctx = \ctxhole$:
  Immediate, as $\maxdeg{\ctxhole} = 0 < 1 \leq k$.
\item
  $\sctx = \sctx'\garb{\tm}$:
  Similar to case \ref{simpk_maxdeg_decrease:bin} of this lemma.
\end{enumerate}
\end{proof}

\begin{proposition}[Simplification is normalization]
\lprop{appendix:simplification_is_normalization}
$\tm \tog^* \simpfull{\tm}$
and $\simpfull{\tm}$ is a $\tog$-normal form.
\end{proposition}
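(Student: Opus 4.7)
The plan is to derive the proposition from two separate lemmas about single-degree simplification, both proved by structural induction on $\tm$, and then iterate from the maximum degree down to $1$. Concretely, I will first establish
\[
  (L_1)\quad \tm \tog^* \simp{d}{\tm} \text{ for every } d \geq 1,
\]
and then
\[
  (L_2)\quad \text{if } d \geq 1 \text{ and } \maxdeg{\tm} \leq d \text{ then } \maxdeg{\simp{d}{\tm}} < d.
\]
Given these, if $D = \maxdeg{\tm}$, iterating $(L_1)$ yields the chain
$\tm \tog^* \simp{D}{\tm} \tog^* \simp{D-1}{\simp{D}{\tm}} \tog^* \hdots \tog^* \simpfull{\tm}$,
while iterating $(L_2)$ yields $\maxdeg{\simpfull{\tm}} < 1$, so $\simpfull{\tm}$ contains no redexes and is in $\tog$-normal form.

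For $(L_1)$, the proof is a routine induction on $\tm$ (generalised to memories $\sctx$, so that the inductive hypothesis can be applied under a memory). The only interesting case is the redex case $\tm = (\lam{\var}{\tmtwo})\sctx\,\tmthree$ of degree $d$, where one combines the IH on $\tmtwo,\sctx,\tmthree$ with a single $\tog$-step to reach the defined value of $\simp{d}{\tm}$; all other cases are straightforward congruences.

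The main obstacle is $(L_2)$. Here the delicate case is the application $\tm = \tmtwo\,\tmthree$ when $\tmtwo$ is not a $\symg$-abstraction of degree $d$: we must ensure that after simplifying inside, $\simp{d}{\tmtwo}\,\simp{d}{\tmthree}$ does not accidentally become a fresh redex of degree $\geq d$. I would split on the height of $\typeof{\tmtwo}$. If $\height{\typeof{\tmtwo}} < d$, then by subject reduction (since $\tmtwo \tog^* \simp{d}{\tmtwo}$) the type is preserved, so any new redex has degree $< d$. If $\height{\typeof{\tmtwo}} \geq d$, then $\tmtwo$ cannot be a $\symg$-abstraction at all (else $\tm$ would be a redex of degree $> d$, contradicting $\maxdeg{\tm} \leq d$), and I would prove an auxiliary lemma that simplification preserves ``not being a $\symg$-abstraction'' under these type/degree hypotheses, which in turn rests on a substitution-level lemma saying that substituting a term of strictly lower type into a non-$\symg$-abstraction yields a non-$\symg$-abstraction. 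The redex case is handled by another auxiliary lemma bounding $\maxdeg{\tm\sub{\var}{\tmtwo}}$ in terms of $\maxdeg{\tm}$, $\maxdeg{\tmtwo}$, and $\height{\typeof{\tmtwo}}$, using the crucial fact that the substituted argument has type of height strictly less than $d$ (since the contracted abstraction has degree $d$). Once these auxiliary results are in place, the inductive step for applications follows by case analysis, and the cases for variables, abstractions, and wrappers are immediate.
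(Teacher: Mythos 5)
Your proposal is correct and follows essentially the same route as the paper: the same two key lemmas ($\tm \tog^* \simp{d}{\tm}$ and $\maxdeg{\tm} \leq d \Rightarrow \maxdeg{\simp{d}{\tm}} < d$), the same iteration from degree $D$ down to $1$, and the same auxiliary results for the delicate application case (the height case split with subject reduction, the ``simplification/substitution does not create $\symg$-abstractions'' lemmas, and the bound on $\maxdeg{\tm\sub{\var}{\tmtwo}}$). No gaps.
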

\begin{proof}
Let $k$ be the max-degree of $\tm$.
For each $0 \leq i \leq k$ we define $\simp{> i}{\tm}$
as follows, by induction on $k - i$:
\[
  \begin{array}{rcll}
    \simp{> k}{\tm} & \eqdef & \tm \\
    \simp{> i}{\tm} & \eqdef & \simp{i+1}{\simp{>i+1}{\tm}}
                             & \text{for each $0 \leq i < k$}\\
  \end{array}
\]
That is,
$\simp{> i}{\tm} \eqdef
 \simp{i+1}{\hdots\simp{k-1}{\simp{k}{\tm}}}$.
Note that
$\simp{> k}{\tm} = \tm$
and
$\simp{> 0}{\tm} = \simpfull{\tm}$.
Let us prove each of the two parts of the statement:
\begin{enumerate}
\item
  To show that $\tm \tog^* \simpfull{\tm}$,
  note that
  for each $1 \leq i \leq k$
  we have that
  $\simp{> i}{\tm} \tog^* \simp{i}{\simp{> i}{\tm}} = \simp{> i - 1}{\tm}$
  by \rlem{tm_reduces_to_simpk}.
  Hence:
  \[
    \tm = \simp{> k}{\tm}
    \tog^* \simp{> k-1}{\tm}
    \hdots
    \tog^* \simp{> i}{\tm}
    \tog^* \simp{> i-1}{\tm}
    \hdots
    \tog^* \simp{> 0}{\tm}
    = \simpfull{\tm}
  \]
\item
  To show that $\simpfull{\tm}$ is a $\tog$-normal form,
  we claim that for each $0 \leq i \leq k$
  we have that $\maxdeg{\simp{> i}{\tm}} \leq i$.
  We proceed by induction on $k - i$.
  In the base case, we have that $i = k$,
  so $\maxdeg{\simp{> k}{\tm}} = \maxdeg{\tm} = k$
  since $k$ is the max-degree of $\tm$.
  For the induction step,
  let $k - i > 0$, so $0 \leq i < k$.
  By \ih we have that $\maxdeg{\simp{> i + 1}{\tm}} \leq i + 1$.
  Then
  $\maxdeg{\simp{> i}{\tm}}
   = \maxdeg{\simp{i + 1}{\simp{> i + 1}{\tm}}}
   < i + 1$
  by \rlem{simpk_maxdeg_decrease}.
  This means that $\maxdeg{\simp{> i}{\tm}} \leq i$, as required.
\end{enumerate}
\end{proof}

\subsubsection{Forgetful reduction}

The forgetful reduction relation is generalized to operate on substitution contexts
so that, for example,
$(\ctxhole\garb{\var}\garb{\vartwo})
 \shone
 (\ctxhole\garb{\vartwo})$.

\begin{lemma}[Properties of forgetful reduction]
\llem{properties_of_shrinking}
\quad
\begin{enumerate}
\item
  \label{properties_of_shrinking:sctx_left}
  If $\tm \shone \tm'$ then $\tm\sctx \shone \tm'\sctx$.
\item
  \label{properties_of_shrinking:sctx_right}
  If $\sctx \shone \sctx'$ then $\tm\sctx \shone \tm\sctx'$.
\item
  \label{properties_of_shrinking:sub_left}
  If $\tm \shone \tm'$
  then $\tm\sub{\var}{\tmtwo} \shone \tm'\sub{\var}{\tmtwo}$.
\item
  \label{properties_of_shrinking:sub_right}
  If $\tmtwo \shone \tmtwo'$
  then $\tm\sub{\var}{\tmtwo} \shone^* \tm\sub{\var}{\tmtwo'}$
  (in zero or more steps).
\end{enumerate}
\end{lemma}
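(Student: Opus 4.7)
The plan is to dispatch each of the four items by a short structural induction, treating them in the order in which they are stated. Items \Item{1} and \Item{2} are essentially bookkeeping about memories, item \Item{3} follows by compatibility since substitution does not interact with the forgetful axiom, and item \Item{4} is where a mild subtlety enters (a single forgetful step on the argument produces as many forgetful steps on the result as there are free occurrences of the bound variable, hence the ``zero or more steps'').

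For item \Item{1}, I proceed by induction on the memory $\sctx$. If $\sctx = \ctxhole$, then $\tm\sctx = \tm$ and $\tm'\sctx = \tm'$, so the conclusion is exactly the hypothesis. If $\sctx = \sctx'\garb{\tmthree}$, then $\tm\sctx = \bin{\tm\sctx'}{\tmthree}$ and $\tm'\sctx = \bin{\tm'\sctx'}{\tmthree}$; by the \ih applied to $\sctx'$ we have $\tm\sctx' \shone \tm'\sctx'$, and closing under the context $\bin{\ctxhole}{\tmthree}$ gives the required step. Item \Item{2} is entirely analogous: induction on $\sctx$ reduces to closing the single step $\sctx \shone \sctx'$ under the outer context $\tm\,\ctxhole$ (understood as context construction on memories).

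For item \Item{3}, I induct on the derivation of $\tm \shone \tm'$. The base case is $\tm = \bin{\tm_1}{\tm_2}$ with $\tm' = \tm_1$; then $\tm\sub{\var}{\tmtwo} = \bin{\tm_1\sub{\var}{\tmtwo}}{\tm_2\sub{\var}{\tmtwo}}$ and one application of the axiom gives $\tm\sub{\var}{\tmtwo} \shone \tm_1\sub{\var}{\tmtwo} = \tm'\sub{\var}{\tmtwo}$. The inductive cases (step under abstraction, under each side of an application, or under each side of a wrapper) follow by applying the \ih to the smaller derivation and closing under the appropriate one-hole context, being careful only to $\alpha$-rename so that the binder does not capture $\fv{\tmtwo}$.

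For item \Item{4}, I induct on the structure of $\tm$. If $\tm = \var$, both sides reduce respectively to $\tmtwo$ and $\tmtwo'$, and one forgetful step suffices. If $\tm$ is a different variable, both substituted terms are equal, so zero steps suffice (this is exactly why the statement is phrased with $\shone^*$). For $\tm = \lam{\vartwo}{\tm_1}$, $\tm = \tm_1\,\tm_2$, or $\tm = \bin{\tm_1}{\tm_2}$, applying the \ih to each immediate subterm gives a (possibly empty) sequence of forgetful steps, and concatenating these sequences under the appropriate one-hole context yields a sequence from $\tm\sub{\var}{\tmtwo}$ to $\tm\sub{\var}{\tmtwo'}$. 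I expect no obstacle of any depth; the only thing to watch is that item \Item{4} is not a one-step statement, because a single forgetful step in the argument is copied as many times as $\var$ occurs free in $\tm$.
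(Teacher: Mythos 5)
Your proposal is correct and follows essentially the same route as the paper: items \Item{1} and \Item{2} by induction on the memory $\sctx$, and items \Item{3} and \Item{4} by structural induction (your choice of inducting on the derivation of the step for item \Item{3} is an immaterial variation), with the same observations that item \Item{4} needs zero steps when the variable does not occur and one step per free occurrence otherwise.
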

\begin{proof}
\quad
\begin{itemize}
\item
  Items \ref{properties_of_shrinking:sctx_left}
  and \ref{properties_of_shrinking:sctx_right}
  are straightforward by induction on $\sctx$.
\item
  Items \ref{properties_of_shrinking:sub_left}
  and \ref{properties_of_shrinking:sub_right}
  are straightforward by induction on $\tm$.
  For item~\ref{properties_of_shrinking:sub_right},
  note that when $\tm = \vartwo$ with $\vartwo \neq \var$,
  we have that
  $\vartwo\sub{\var}{\tmtwo} = \vartwo
   \shone^* \vartwo = \vartwo\sub{\var}{\tmtwo'}$
  in exactly zero steps.
  Note also that more that one step of $\shone$
  may be required when $\tm$ is an application or a wrapper.
\end{itemize}
\end{proof}

\begin{lemma}[Local commutation of reduction and forgetful reduction]
\llem{tog_shone_commutation}
If $\tm \shone \tmtwo$ and $\tm \tog \tm'$,
there exists a term $\tmtwo'$
such that $\tm' \shone^+ \tmtwo'$ and $\tmtwo \tog^= \tmtwo'$,
where $\shone^+$ is the transitive closure of $\shone$,
and $\tog^=$ is the reflexive closure of $\tog$.
Graphically:
\[
  \xymatrix@C=.1cm@R=.5cm{
    \tm \ar[d] & \shone & \tmtwo \ar^{=}@{.>}[d] \\
    \tm' & \shone^+ & \tmtwo' \\
  }
\]
\end{lemma}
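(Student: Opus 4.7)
The plan is to proceed by case analysis on the relative positions of the $\tog$-redex contracted in $\tm \tog \tm'$ and the wrapper erased in $\tm \shone \tmtwo$. Write the $\tog$-step as $\of{\gctx}{(\lam{\var}{\tmthree})\sctx\,\tmfour} \tog \of{\gctx}{\tmthree\sub{\var}{\tmfour}\garb{\tmfour}\sctx}$ and the forgetful step as the erasure of some wrapper occurrence $\bin{\tm_a}{\tm_b} \shone \tm_a$. Depending on how the two patterns overlap inside $\tm$, there are essentially five cases to consider.

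First, the \emph{disjoint case}, where neither position is a subterm of the other, is closed trivially, giving one $\shone$-step on top and one $\tog$-step on the right. Next, if the $\shone$-step happens inside the abstraction body $\tmthree$ or inside the memory $\sctx$, the step is simply propagated through the substitution: performing $\tog$ first, the erased wrapper remains (possibly duplicated by substitution, but still intact in at least one position), and a single $\shone$-step in its image recovers $\tmtwo'$; this gives the square $\tmtwo \tog \tmtwo' \mathrel{{}^+\!\!\mathrel{\shone}} \tm'$ with one $\shone$ step on the right.

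The \emph{critical case} is when the $\shone$-step sits inside the argument $\tmfour$: after reducing, $\tmfour$ is duplicated $n+1$ times (once per occurrence of $\var$ in $\tmthree$, plus once in the new wrapper $\garb{\tmfour}$), so a single shrinking on the left must be matched by $n+1 \geq 1$ shrinkings on the right. This is where the $\shone^+$ in the conclusion is strictly needed (as opposed to a single $\shone$). The diagram is the one already displayed in the excerpt:
\[
  \xymatrix@C=.25cm@R=.5cm{
    \app{(\lam{\var}{\tmthree})\sctx}{\tmfour}
    \ar[d]
    & \shone &
    \app{(\lam{\var}{\tmthree})\sctx}{\tmfour'}
    \ar[d]
  \\
    \tmthree\sub{\var}{\tmfour}\garb{\tmfour}\sctx
    & \shone^+ &
    \tmthree\sub{\var}{\tmfour'}\garb{\tmfour'}\sctx
  }
\]
The inequality is strict because the wrapper copy $\garb{\tmfour}$ always contributes at least one $\shone$-step, regardless of whether $\var$ occurs in $\tmthree$. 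Formally this reduces to \rlem{properties_of_shrinking}~(\ref{properties_of_shrinking:sub_right}) applied to $\tmthree$ and to~(\ref{properties_of_shrinking:sctx_left}) applied to the wrapper and memory.

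The remaining two cases are when the erased wrapper \emph{contains} the $\tog$-redex. If the redex is in the body $\tm_a$ of the wrapper, then the shrinking and the reduction act on independent positions and the square closes as in the disjoint case. If the redex is in the memorized term $\tm_b$, then the shrinking destroys it: we have $\tmtwo = \tm_a$ and the square closes with $\tog^=$ being the reflexive arrow, since $\tm' = \of{\gctx}{\bin{\tm_a}{\tm_b'}} \shone \of{\gctx}{\tm_a} = \tmtwo$. This is exactly the right-hand diagram already displayed in the excerpt. Once this local commutation is established, the statement follows immediately. The main obstacle, as indicated, is the argument-position case, which forces the use of the transitive closure $\shone^+$ on the right; everything else is a routine context-traversal argument.
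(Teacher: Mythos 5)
Your proposal is correct and follows essentially the same route as the paper's proof: a case analysis on the relative positions of the contracted redex and the erased wrapper (which the paper packages as a structural induction on $\tm$ with nested subcases), identifying the same two key diagrams --- the argument-position case forcing $\shone^+$ via duplication of $\tmfour$, and the case of a redex inside the erased memorized term closing with the reflexive arrow. One small imprecision: when the erased wrapper sits in the abstraction body or in the memory $\sctx$, it is \emph{not} duplicated by the substitution (only the argument is), which is precisely why a single $\shone$-step suffices there; if the duplication you hedge about in parentheses actually occurred, a single step would no longer recover $\tmtwo'$.
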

\begin{proof}
By induction on $\tm$:
\begin{enumerate}
\item
  $\tm = \var$:
  Note that this case is impossible,
  since there are no steps $\var \tog \tm'$.
\item
  $\tm = \lam{\var}{\tm_1}$:
  Since $\lam{\var}{\tm_1} \tog \tm'$,
  we know that $\tm'$ must be of the form $\tm' = \lam{\var}{\tm'_1}$
  with $\tm_1 \tog \tm'_1$.
  Note that the $\shone$ step is internal,
  that is,
  $\lam{\var}{\tm_1} \shone \lam{\var}{\tmtwo_1} = \tmtwo$
  with $\tm_1 \shone \tmtwo_1$.
  By \ih there exists $\tmtwo'_1$
  such that $\tm'_1 \shone^+ \tmtwo'_1$
  and $\tmtwo_1 \tog^= \tmtwo'_1$.
  Taking $\tmtwo' := \lam{\var}{\tmtwo'_1}$ we have:
  \[
    \xymatrix@C=.25cm@R=.5cm{
      \lam{\var}{\tm_1}
      \ar[d]
      & \shone &
      \lam{\var}{\tmtwo_1}
      \ar^{=}[d]
    \\
      \lam{\var}{\tm'_1}
      & \shone^+ &
      \lam{\var}{\tmtwo'_1}
    }
  \]
\item
  $\tm = \app{\tm_1}{\tm_2}$:
  We consider three subcases, depending on whether the step
  $\app{\tm_1}{\tm_2} \tog \tm'$ is a $\tog$ step at the root,
  internal to $\tm_1$, or internal to $\tm_2$:
  \begin{enumerate}
  \item
    If the $\tog$ step is at the root,
    then $\tm_1$ is a $\symg$-abstraction
    of the form $\tm_1 = (\lam{\var}{\tm_{11}})\sctx$
    and the step is of the form
    $\tm = \app{(\lam{\var}{\tm_{11}})\sctx}{\tm_2}
      \tog \tm_{11}\sub{\var}{\tm_2}\garb{\tm_2}\sctx = \tm'$.
    Moreover, since
    $\tm = \app{(\lam{\var}{\tm_{11}})\sctx}{\tm_2} \shone \tm'$,
    we consider three further subcases, depending on whether the
    step $\tm \shone \tmtwo$
    is internal to $\tm_{11}$, internal to $\sctx$,
    or internal to $\tm_{2}$:
    \begin{enumerate}
    \item
      If the $\shone$ step is internal to $\tm_{11}$,
      then $\tmtwo = \app{(\lam{\var}{\tmtwo_{11}})\sctx}{\tm_2}$
      with $\tm_{11} \shone \tmtwo_{11}$.
      Taking
      $\tmtwo' := \tmtwo_{11}\sub{\var}{\tm_2}\garb{\tm_2}\sctx$
      we have:
      \[
        \xymatrix@C=.25cm@R=.5cm{
          \app{(\lam{\var}{\tm_{11}})\sctx}{\tm_2}
          \ar[d]
          & \shone &
          \app{(\lam{\var}{\tmtwo_{11}})\sctx}{\tm_2}
          \ar[d]
        \\
          \tm_{11}\sub{\var}{\tm_2}\garb{\tm_2}\sctx
          & \shone &
          \tmtwo_{11}\sub{\var}{\tm_2}\garb{\tm_2}\sctx
        }
      \]
      For the $\shone$ step at the bottom,
      by
      \rlem{properties_of_shrinking}~(\ref{properties_of_shrinking:sctx_left})
      it suffices to show that
      $\tm_{11}\sub{\var}{\tm_2}
       \shone
       \tmtwo_{11}\sub{\var}{\tm_2}$.
      This is a consequence of
      \rlem{properties_of_shrinking}~(\ref{properties_of_shrinking:sub_left}).
    \item
      If the $\shone$ step is internal to $\sctx$,
      then $\tmtwo = \app{(\lam{\var}{\tm_{11}})\sctx'}{\tm_2}$
      with $\sctx \shone \sctx'$.
      Taking
      $\tmtwo' := \tm_{11}\sub{\var}{\tm_2}\garb{\tm_2}\sctx'$
      we have:
      \[
        \xymatrix@C=.25cm@R=.5cm{
          \app{(\lam{\var}{\tm_{11}})\sctx}{\tm_2}
          \ar[d]
          & \shone &
          \app{(\lam{\var}{\tm_{11}})\sctx'}{\tm_2}
          \ar[d]
        \\
          \tm_{11}\sub{\var}{\tm_2}\garb{\tm_2}\sctx
          & \shone &
          \tm_{11}\sub{\var}{\tm_2}\garb{\tm_2}\sctx'
        }
      \]
      The $\shone$ step at the bottom holds
      by
      \rlem{properties_of_shrinking}~(\ref{properties_of_shrinking:sctx_right}).
    \item
      If the $\shone$ step is internal to $\tm_2$,
      then $\tmtwo = \app{(\lam{\var}{\tm_{11}})\sctx}{\tmtwo'_2}$
      with $\tm_2 \shone \tmtwo_2$.
      Taking
      $\tmtwo' := \tm_{11}\sub{\var}{\tm_2}\garb{\tmtwo'_2}\sctx$
      we have:
      \[
        \xymatrix@C=.25cm@R=.5cm{
          \app{(\lam{\var}{\tm_{11}})\sctx}{\tm_2}
          \ar[d]
          & \shone &
          \app{(\lam{\var}{\tm_{11}})\sctx}{\tmtwo_2}
          \ar[d]
        \\
          \tm_{11}\sub{\var}{\tm_2}\garb{\tm_2}\sctx
          & \shone^+ &
          \tm_{11}\sub{\var}{\tmtwo_2}\garb{\tmtwo_2}\sctx
        }
      \]
      For the bottom of the diagram, note that:
      $\tm_{11}\sub{\var}{\tm_2} \shone^* \tm_{11}\sub{\var}{\tmtwo_2}$
      by \rlem{properties_of_shrinking}~(\ref{properties_of_shrinking:sub_right}).
      Hence
      $\tm_{11}\sub{\var}{\tm_2}\garb{\tm_2}
       \shone^* \tm_{11}\sub{\var}{\tmtwo_2}\garb{\tm_2}
       \shone \tm_{11}\sub{\var}{\tmtwo_2}\garb{\tmtwo_2}$.
      Resorting to
      \rlem{properties_of_shrinking}~(\ref{properties_of_shrinking:sctx_left})
      we conclude.
    \end{enumerate}
  \item
    If the $\tog$ step is internal to $\tm_1$,
    the step is of the form
    $\app{\tm_1}{\tm_2} \tog \app{\tm'_1}{\tm_2}$
    with $\tm_1 \tog \tm'_1$.
    We consider two further subcases, depending on whether
    the $\shone$ step is
    internal to $\tm_1$ or internal to $\tm_2$:
    \begin{enumerate}
    \item
      \label{tog_shone_commutation:app_left_left}
      If the $\shone$ step is internal to $\tm_1$,
      then $\tmtwo = \app{\tmtwo_1}{\tm_2}$
      with $\tm_1 \shone \tmtwo_1$.
      By \ih there exists $\tmtwo'_1$
      such that $\tm'_1 \shone^+ \tmtwo'_1$ and $\tmtwo_1 \tog^= \tmtwo'_1$.
      Taking $\tmtwo' := \app{\tmtwo'_1}{\tm_2}$ we have:
      \[
        \xymatrix@C=.25cm@R=.5cm{
          \app{\tm_1}{\tm_2}
          \ar[d]
          & \shone &
          \ar^{=}[d]
          \app{\tmtwo_1}{\tm_2}
        \\
          \app{\tm'_1}{\tm_2}
          & \shone^+ &
          \app{\tmtwo'_1}{\tm_2}
        }
      \]
    \item
      \label{tog_shone_commutation:app_left_right}
      If the $\shone$ step is internal to $\tm_2$,
      then $\tmtwo = \app{\tm_1}{\tmtwo_2}$
      with $\tm_2 \shone \tmtwo_2$.
      Taking $\tmtwo' := \app{\tm'_1}{\tmtwo_2}$ we have:
      \[
        \xymatrix@C=.25cm@R=.5cm{
          \app{\tm_1}{\tm_2}
          \ar[d]
          & \shone &
          \ar[d]
          \app{\tm_1}{\tmtwo_2}
        \\
          \app{\tm'_1}{\tm_2}
          & \shone &
          \app{\tm'_1}{\tmtwo_2}
        }
      \]
    \end{enumerate}
  \item
    If the $\tog$ step is internal to $\tm_2$,
    the proof is similar to the previous case.
  \end{enumerate}
\item
  $\tm = \bin{\tm_1}{\tm_2}$:
  We consider two subcases, depending on whether the step
  $\bin{\tm_1}{\tm_2} \tog \tm'$
  is internal to $\tm_1$ or internal to $\tm_2$:
  \begin{enumerate}
  \item
    If the $\tog$ step is internal to $\tm_1$,
    then $\bin{\tm_1}{\tm_2} \tog \bin{\tm'_1}{\tm_2} = \tm'$
    with $\tm_1 \tog \tm'_1$.
    We consider three further subcases, depending on whether the
    step $\bin{\tm_1}{\tm_2} \shone \tmtwo$ is at the root of the wrapper,
    internal to $\tm_1$, or internal to $\tm_2$:
    \begin{enumerate}
    \item
      If the $\shone$ step is at the root of the wrapper,
      then $\bin{\tm_1}{\tm_2} \shone \tm_1 = \tmtwo$.
      Taking $\tmtwo' := \tm'_1$ we have:
      \[
        \xymatrix@C=.25cm@R=.5cm{
          \bin{\tm_1}{\tm_2}
          \ar[d]
          & \shone &
          \ar[d]
          \tm_1
        \\
          \bin{\tm'_1}{\tm_2}
          & \shone &
          \tm'_1
        }
      \]
    \item
      If the $\shone$ step is internal to $\tm_1$,
      then $\tmtwo = \bin{\tmtwo_1}{\tm_2}$ with $\tm_1 \shone \tmtwo_1$,
      and we conclude by \ih
      similarly as for case \ref{tog_shone_commutation:app_left_left}.
    \item
      If the $\shone$ step is internal to $\tm_2$,
      then $\tmtwo = \bin{\tm_1}{\tmtwo_2}$ with $\tm_2 \shone \tmtwo_2$,
      and we conclude taking $\tmtwo' := \bin{\tm'_1}{\tmtwo_2}$
      similarly as for case \ref{tog_shone_commutation:app_left_right}.
    \end{enumerate}
  \item
    If the $\tog$ step is internal to $\tm_2$,
    then $\bin{\tm_1}{\tm_2} \tog \bin{\tm_1}{\tm'_2} = \tm'$
    with $\tm_2 \tog \tm'_2$.
    We consider three further subcases, depending on whether the step
    $\bin{\tm_1}{\tm_2} \shone \tmtwo$ is at the root of the wrapper,
    internal to $\tm_1$, or internal to $\tm_2$:
    \begin{enumerate}
    \item
      If the $\shone$ step is at the root of the wrapper,
      then $\bin{\tm_1}{\tm_2} \shone \tm_1 = \tmtwo$.
      Taking $\tmtwo' := \tm_1$ we have:
      \[
        \xymatrix@C=.25cm@R=.5cm{
          \bin{\tm_1}{\tm_2}
          \ar[d]
          & \shone &
          \ar@{=}[d]
          \tm_1
        \\
          \bin{\tm_1}{\tm'_2}
          & \shone &
          \tm_1
        }
      \]
    \item
      If the $\shone$ step is internal to $\tm_1$,
      then $\bin{\tm_1}{\tm_2} \shone \bin{\tmtwo_1}{\tm_2} = \tmtwo$
      with $\tm_1 \shone \tmtwo_1$,
      and we conclude taking $\tmtwo' := \bin{\tmtwo_1}{\tm'_2}$
      similarly as for case \ref{tog_shone_commutation:app_left_right}.
    \item
      If the $\shone$ step is internal to $\tm_2$,
      then $\bin{\tm_1}{\tm_2} \shone \bin{\tm_1}{\tmtwo_2} = \tmtwo$
      with $\tm_2 \shone \tmtwo_2$,
      and we conclude by \ih
      similarly as for case \ref{tog_shone_commutation:app_left_left}.
    \end{enumerate}
  \end{enumerate}
\end{enumerate}
\end{proof}

\begin{proposition}[Forgetful reduction commutes with reduction]
\lprop{appendix:shrinking_simulation}
If $\tm \shone^+ \tmtwo$ and $\tm \tog^* \tm'$,
there exists a term $\tmtwo'$ such that
$\tm' \shone^+ \tmtwo'$ and $\tmtwo \tog^* \tmtwo'$.
Graphically:
\[
  \xymatrix@C=.1cm@R=.5cm{
    \tm \ar^{*}[d] & \shone^+ & \tmtwo \ar^{*}@{.>}[d] \\
    \tm' & \shone^+ & \tmtwo' \\
  }
\]
Furthermore,
if $\tm \shone^+ \tmtwo$ and $\tm$ is a $\tog$-normal form,
then $\tmtwo$ is also a normal form.
\end{proposition}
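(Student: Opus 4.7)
The approach is to lift the single-step local commutation lemma \rlem{tog_shone_commutation} to arbitrary-length reductions by a standard two-stage tiling argument, and to dispose of the normal-form clause separately by structural induction.

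As an intermediate step I would first prove the following: if $\tm \shone^+ \tmtwo$ and $\tm \tog \tm'$, then there exists $\tmtwo'$ such that $\tm' \shone^+ \tmtwo'$ and $\tmtwo \tog^= \tmtwo'$. The proof goes by induction on the length of the forgetful reduction. The base case is exactly \rlem{tog_shone_commutation}. For the inductive step, decompose $\tm \shone \tm_a \shone^+ \tmtwo$ and apply \rlem{tog_shone_commutation} to $\tm \shone \tm_a$ and $\tm \tog \tm'$, obtaining $\tm_a'$ with $\tm' \shone^+ \tm_a'$ and $\tm_a \tog^= \tm_a'$. If this is a zero-step equality, take $\tmtwo' := \tmtwo$ and close the diagram by composing the two $\shone^+$-reductions on the right; if it is a genuine $\tog$-step, apply the inductive hypothesis to the strictly shorter forgetful reduction $\tm_a \shone^+ \tmtwo$ and the step $\tm_a \tog \tm_a'$.

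The main statement then follows by induction on the length of $\tm \tog^* \tm'$. The empty case is immediate. For the inductive case $\tm \tog \tm_1 \tog^* \tm'$, the intermediate lemma applied to $\tm \shone^+ \tmtwo$ and $\tm \tog \tm_1$ yields $\tmtwo_1$ with $\tm_1 \shone^+ \tmtwo_1$ and $\tmtwo \tog^= \tmtwo_1$, and the inductive hypothesis applied to $\tm_1 \shone^+ \tmtwo_1$ and $\tm_1 \tog^* \tm'$ produces the desired $\tmtwo'$; composing the right edge gives $\tmtwo \tog^= \tmtwo_1 \tog^* \tmtwo'$, hence $\tmtwo \tog^* \tmtwo'$.

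For the furthermore clause, the plan is to establish by induction on $\tm$ that a single $\shone$-step preserves normal forms, and then iterate along $\shone^+$. The interesting case is an application $\tm = \tm_1\,\tm_2$: normality of $\tm$ forces both subterms to be normal and $\tm_1$ not to be a $\symg$-abstraction, so the inductive hypothesis delivers normality of the reduct provided one also knows that a $\shone$-step cannot convert a non-$\symg$-abstraction into a $\symg$-abstraction, which is a short subsidiary induction on the reduct using \rremark{bin_is_symg_abstraction}. The main obstacle of the overall proof lies in the bookkeeping of the intermediate stage: because local commutation already outputs a $\shone^+$ of unknown length on the upper edge of the tile, a direct induction on the lower edge alone fails to close the squares, and one is forced to first absorb the $\shone^+$ on the upper edge before generalizing to $\tog^*$.
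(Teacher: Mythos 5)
Your proof is correct and follows essentially the same route as the paper's: first lift the local commutation lemma (\rlem{tog_shone_commutation}) to an arbitrary $\shone^+$ against a single $\tog$-step by induction on the length of the forgetful reduction, then induct on the length of $\tm \tog^* \tm'$, and handle the normal-form clause by a separate structural induction on $\tm$. Your explicit case split on whether $\tm_a \tog^= \tm_a'$ is empty, and your observation that one needs the subsidiary fact that a $\shone$-step cannot turn a non-$\symg$-abstraction into a $\symg$-abstraction, are details the paper leaves implicit but which you supply correctly.
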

\begin{proof}
First we claim that
if $\tm \shone^+ \tmtwo$ and $\tm \tog \tm'$,
there exists a term $\tmtwo'$ such that
$\tm' \shone^+ \tmtwo'$ and $\tmtwo \tog^= \tmtwo'$.
This can be seen by induction on the number of $\shone$ steps
in a reduction sequence $\tm \shone^+ \tmtwo$, resorting to
the local commutation lemma~(\rlem{tog_shone_commutation}).

The main statement of the proposition can be seen by induction
on the number of steps in a reduction sequence $\tm \tog^* \tm'$,
resorting to the claim.

For the ``furthermore'' part in the statement,
it suffices to show that
if $\tm \shone \tmtwo$ in one step and $\tm$ is a $\tog$-normal form,
then $\tmtwo$ is also a normal form.
This is straightforward by induction on $\tm$.
\end{proof}

\begin{lemma}[Reduce/forget lemma]
\llem{appendix:reduce_shrink_lemma}
Let $\ltm \tobeta \ltmtwo$ be a $\beta$-step
and let $\ltm \tog \tmtwo$ be the corresponding step in $\lambdaG$.
Then $\tmtwo \shone \ltmtwo$.
\end{lemma}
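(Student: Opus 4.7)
The plan is to argue by direct inspection, proceeding by induction on the context $\gctx$ in which the contracted redex sits. Since $\ltm$ is a pure $\lambda$-term, it contains no wrappers, so in particular the context $\gctx$ around the contracted redex contains no wrappers, and the redex itself is of the form $(\lam{\var}{\ltm'})\,\ltm''$ with an \emph{empty} memory (\ie $\sctx = \ctxhole$ in the notation of the $\lambdaG$ reduction rule). This last point is the key simplification: in $\lambdaG$ redexes may in general have an intervening memory $\sctx$, but for a redex originating from a pure $\lambda$-term this memory is empty.

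Given this, the $\beta$-step is $\ltm = \of{\gctx}{(\lam{\var}{\ltm'})\,\ltm''} \tobeta \of{\gctx}{\ltm'\sub{\var}{\ltm''}} = \ltmtwo$, and the corresponding $\tog$-step in $\lambdaG$ (contracting the redex at the same position) is $\ltm = \of{\gctx}{(\lam{\var}{\ltm'})\,\ltm''} \tog \of{\gctx}{\bin{\ltm'\sub{\var}{\ltm''}}{\ltm''}} = \tmtwo$. The two right-hand sides differ only in that $\tmtwo$ carries one additional wrapper $\garb{\ltm''}$ precisely at the position of the contracted redex. Erasing that single memorized term by the axiom $\bin{\tm}{\tmtwo} \shone \tm$, applied under the context $\gctx$ (using closure under arbitrary contexts from \rdef{shrinking}), immediately yields $\tmtwo \shone \ltmtwo$.

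I do not anticipate any real obstacle: there are no side conditions about free variables, the substitutions on both sides are identical, and confluence or reduction properties of $\lambdaG$ are not required. The only mild care needed is to observe that a pure $\lambda$-term has no wrappers, so that the $\lambdaG$-redex inside $\gctx$ has an empty memory and the corresponding $\tog$-step differs from the $\tobeta$-step by exactly one freshly created wrapper, which is then removed by a single forgetful step.
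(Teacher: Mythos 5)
Your proposal is correct and matches the paper's argument in substance: the key observation in both is that a pure $\lambda$-term contains no wrappers, so the contracted redex has empty memory and the corresponding $\tog$-step produces exactly one extra wrapper at the redex position, which a single forgetful step (closed under the surrounding context) removes. The paper packages this as a structural induction on $\ltm$ with the root-application case doing the real work, whereas you phrase it directly via the context $\gctx$ and closure of $\shone$ under contexts; the two are interchangeable here.
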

\begin{proof}
We proceed by induction on $\ltm$:
\begin{enumerate}
\item
  $\ltm = \var$:
  Impossible, as there are no steps $\var \tobeta \ltmtwo$.
\item
  $\ltm = \lam{\var}{\ltm_1}$:
  Then the step must be of the form
  $\ltm
   = \lam{\var}{\ltm_1}
   \tobeta \lam{\var}{\ltmtwo_1}
   = \ltmtwo$
  with $\ltm_1 \tobeta \ltmtwo_1$,
  and the corresponding step must be of the form
  $\ltm
   = \lam{\var}{\ltm_1}
   \tog \lam{\var}{\tmtwo_1}
   = \tmtwo$
  where $\ltm_1 \tog \tmtwo_1$
  is the step corresponding to $\ltm_1 \tobeta \ltmtwo_1$.
  By \ih $\tmtwo_1 \shone \ltmtwo_1$,
  so
  $\tmtwo
   = \lam{\var}{\tmtwo_1}
   \shone \lam{\var}{\ltmtwo_1}
   = \ltmtwo$.
\item
  $\ltm = \ltm_1\,\ltm_2$:
  We consider three subcases, depending on whether the step is at the root,
  internal to $\ltm_1$, or internal to $\ltm_2$:
  \begin{enumerate}
  \item
    If the step is at the root,
    the step must be of the form
    $\ltm
     = (\lam{\var}{\ltm_{11}})\,\ltm_2
     \tobeta \ltm_{11}\sub{\var}{\ltm_2}$
    with $\ltm_1 = \lam{\var}{\ltm_{11}}$,
    and the corresponding step is
    $\ltm
     = \app{(\lam{\var}{\ltm_{11}})}{\ltm_2}
     \tog
     \ltm_{11}\sub{\var}{\ltm_2}\garb{\ltm_2}
     = \tmtwo$.
    Then:
    \[
      \begin{array}{rcl}
        \tmtwo
        & = &
        \ltm_{11}\sub{\var}{\ltm_2}\garb{\ltm_2}
      \\
        & \shone &
        \ltm_{11}\sub{\var}{\ltm_2}
      \\
        & = &
        \ltmtwo
      \end{array}
    \]
  \item
    If the step is internal to $\ltm_1$,
    the step must be of the form
    $\app{\ltm_1}{\ltm_2} \tobeta \app{\ltmtwo_1}{\ltm_2} = \ltmtwo$
    with $\ltm_1 \tobeta \ltmtwo_1$,
    and the corresponding step is
    $\app{\ltm_1}{\ltm_2} \tog \app{\tmtwo_1}{\ltm_2} = \tmtwo$
    where $\ltm_1 \tog \tmtwo_1$
    is the step corresponding to $\ltm_1 \tog \ltmtwo_1$.
    By \ih we have that $\tmtwo_1 \shone \ltmtwo_1$,
    so
    $\tmtwo
     = \app{\tmtwo_1}{\ltm_2}
     \shone \app{\ltmtwo_1}{\ltm_2}
     = \ltmtwo$.
  \item
    If the step is internal to $\ltm_2$,
    the proof is similar to the previous case.
  \end{enumerate}
\end{enumerate}
\end{proof}

\subsection{Proofs of \rsec{reduction_by_degrees} --- Reduction by degrees}
\lsec{appendix:reduction_by_degrees}

In this section we give detailed proofs of the results
about reduction by degrees stated in \rsec{lambdaG}.

\begin{remark}
$\tm\sctx$ is in $\tod{d}$-normal form
if and only if
$\tm$ and $\sctx$ are in $\tod{d}$-normal form.
\end{remark}

\begin{definition}[Steps and reduction sequences]
A {\em step of degree $d$} ---or just {\em step} if clear from the context-----
is formally a $5$-uple $\redex = (\gctx,\var^\typ,\tm,\sctx,\tmtwo)$
where $\gctx$ is an arbitrary context
and $\lam{\var^\typ}{\tm}$ is an abstraction of degree $d$.
The {\em source} of $\redex$ is
$\src{\redex} \eqdef \of{\gctx}{(\lam{\var}{\tm})\sctx\,\tmtwo}$
and its {\em target} is
$\tgt{\redex} \eqdef \of{\gctx}{\tm\sub{\var}{\tmtwo}\garb{\tmtwo}\sctx}$.
We write $\redex : \tm \tod{d} \tmtwo$ to mean that $\redex$ is a step
of degree $d$ with source $\tm$ and target $\tmtwo$.

A {\em forgetful step} ---or just {\em step} if clear from the context-----
is formally a triple $\redex = (\gctx,\tm,\tmtwo)$
where $\gctx$ is an arbitrary context and $\tm,\tmtwo$ are terms.
The {\em source} of $\redex$ is
$\src{\redex} \eqdef \of{\gctx}{\bin{\tm}{\tmtwo}}$
and its {\em target} is
$\tgt{\redex} \eqdef \of{\gctx}{\tm}$.
We write $\redex : \tm \shone \tmtwo$ to mean that $\redex$ is a
forgetful step of degree with source $\tm$ and target $\tmtwo$.

Steps of degree $d$ are generalized to {\em reduction sequences of degree $d$}
(and, respectively, {\em forgetful reduction sequences}),
which are sequences of composable steps of the corresponding kind.
Formally, a reduction sequence is a pair
$\redseq = ((\tm_0,\hdots,\tm_n),(\redex_1,\hdots,\redex_n))$
where $(\tm_0,\hdots,\tm_n)$ is a sequence of $n+1$ terms 
and $(\redex_1,\hdots,\redex_n)$ is a sequence of $n$ steps
$\src{\redex_i} = \tm_{i-1}$ and $\tgt{\redex_i} = \tm_{i}$
for all $i\in1..n$.
The notions of source and target are extended to reduction sequences
by declaring $\src{\redseq} = \tm_0$ and $\tgt{\redseq} = \tm_n$.
We write $\redseq : \tm \rtod{d} \tmtwo$ to mean that $\redseq$ is
a reduction sequence of degree $d$ with source $\tm$ and target $\tmtwo$.
Similarly,
we write $\redseq : \tm \shone^* \tmtwo$ to mean that $\redseq$ is
a forgetful reduction sequence with source $\tm$ and target $\tmtwo$

A step $\redex$ can be implicitly treated as the one-step reduction sequence
$((\src{\redex},\tgt{\redex}),\redex)$.
If $\tgt{\redseq} = \src{\redseqtwo}$,
we write $\redseq\,\redseqtwo$ for their composition,
defined as expected.
\end{definition}

\begin{definition}[Simultaneous reduction of degree $d$]
We define a relation $\tm \ptod{d} \tm'$,
meaning that there is a {\em multi-step} of degree $d$
from $\tm$ to $\tm'$,
inductively by the following rules:
\[
  \indrule{\rulePVar}{
    \emptyPremise
  }{
    \var \ptod{d} \var
  }
  \indrule{\rulePAbs}{
    \tm \ptod{d} \tm'
  }{
    \lam{\var}{\tm} \ptod{d} \lam{\var}{\tm'}
  }
  \indrule{\rulePAppOne}{
    \tm \ptod{d} \tm'
    \HS
    \tmtwo \ptod{d} \tmtwo'
  }{
    \tm\,\tmtwo \ptod{d} \tm'\,\tmtwo'
  }
\]
\[
  \indrule{\rulePAppTwo}{
    \tm \ptod{d} \tm'
    \HS
    \sctx \ptod{d} \sctx'
    \HS
    \tmtwo \ptod{d} \tmtwo'
    \HS
    \text{$\lam{\var}{\tm}$ is of degree $d$}
  }{
    (\lam{\var}{\tm})\sctx\,\tmtwo
    \ptod{d}
    \tm'\sub{\var}{\tmtwo'}\garb{\tmtwo'}\sctx'
  }
  \indrule{\rulePBin}{
    \tm \ptod{d} \tm'
    \HS
    \tmtwo \ptod{d} \tmtwo'
  }{
    \bin{\tm}{\tmtwo} \ptod{d} \bin{\tm'}{\tmtwo'}
  }
\]
\[
  \indrule{\rulePCtxHole}{
    \emptyPremise
  }{
    \ctxhole
    \ptod{d}
    \ctxhole
  }
  \indrule{\rulePCtxBin}{
    \sctx \ptod{d} \sctx'
    \HS
    \tm \ptod{d} \tm'
  }{
    \bin{\sctx}{\tm}
    \ptod{d}
    \bin{\sctx'}{\tm'}
  }
\]
If $\mstep$ is the derivation witnessing
a multi-step $\tm \ptodplus{d} \tm'$
We say that $\mstep$ is empty if it does not use the rule $\rulePAppTwo$.
We write $\mstep : \tm \ptodplus{d} \tm'$ if $\mstep$
uses the rule $\rulePAppTwo$ at least once.
\end{definition}

\begin{remark}[Simultaneous reduction of terms with memory]
\lremark{simultaneous_reduction_of_terms_with_garbage}
\quad
\begin{enumerate}
\item
  $\tm\sctx \ptod{d} \tmtwo$
  if and only if $\tmtwo$ is of the form $\tm'\sctx'$
  where $\tm \ptod{d} \tm'$
  and $\sctx \ptod{d} \sctx'$.
\item
  Furthermore, the set of derivations
  $\mstep : \tm\sctx \ptod{d} \tm'\sctx'$
  is in bijective correspondence
  with the set of pairs of derivations
  $\mstep_1 : \tm \ptod{d} \tm'$
  and
  $\mstep_2 : \sctx \ptod{d} \sctx'$.
\end{enumerate}
\end{remark}

\begin{lemma}[Properties of simultaneous reduction by degrees]
\llem{properties_of_ptod}
\quad
\begin{enumerate}
\item
  For each step $\redex : \tm \tod{d} \tm'$
  there is a multi-step $\steptomstep{\redex} : \tm \ptod{d} \tm'$.
\item
  For each multi-step $\mstep : \tm \ptod{d} \tm'$
  there is a reduction sequence $\msteptoderiv{\mstep} : \tm \rtod{d} \tm'$.
  Moreover, if $\mstep$ is non-empty,
  then $\msteptoderiv{\mstep}$ contains at least one step.
\item
  Reflexivity: $\tm \ptod{d} \tm$.
\item
  Substitution:
  If $\tm \ptod{d} \tm'$ and $\tmtwo \ptod{d} \tmtwo'$
  then $\tm\sub{\var}{\tmtwo} \ptod{d} \tm'\sub{\var}{\tmtwo'}$.
\end{enumerate}
\end{lemma}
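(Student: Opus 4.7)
\bigskip

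The plan is to prove the four items essentially in the order given, each by a structural induction on the relevant object. The proofs are standard for a parallel-reduction style relation, so my focus will be on signalling which cases need care (particularly the interaction with memories $\sctx$) and which results from earlier in the section get used.

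For item~1, I would proceed by induction on the witness $\redex = (\gctx,\var,\tm,\sctx,\tmtwo)$, or equivalently on the context $\gctx$. When $\gctx = \ctxhole$, the step contracts a redex at the root, so $\src{\redex} = (\lam{\var}{\tm})\sctx\,\tmtwo$ and $\tgt{\redex} = \tm\sub{\var}{\tmtwo}\garb{\tmtwo}\sctx$; I apply rule $\rulePAppTwo$, supplying reflexivity derivations (item~3, used forward) for the premises concerning $\tm$, $\sctx$, and $\tmtwo$. For the inductive cases, the step occurs inside a proper subterm and I use the corresponding congruence rule ($\rulePAbs$, $\rulePAppOne$, $\rulePBin$, or the context rules for memories), combined again with reflexivity on the untouched siblings.

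For item~2, I would induct on the derivation $\mstep : \tm \ptod{d} \tm'$. The rules $\rulePVar$ and $\rulePCtxHole$ give the empty reduction sequence. The congruence rules $\rulePAbs$, $\rulePAppOne$, $\rulePBin$, $\rulePCtxBin$ compose the reduction sequences obtained by IH, using context closure of $\tod{d}$ (sequentially reducing first one side and then the other). The only interesting rule is $\rulePAppTwo$: I first reduce in the three components via the IH, obtaining $(\lam{\var}{\tm})\sctx\,\tmtwo \rtod{d} (\lam{\var}{\tm'})\sctx'\,\tmtwo'$, and then contract the head redex of degree $d$ to reach $\tm'\sub{\var}{\tmtwo'}\garb{\tmtwo'}\sctx'$. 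The ``moreover'' clause follows because if $\mstep$ uses $\rulePAppTwo$ at least once, at that point $\msteptoderiv{\mstep}$ contains the root contraction step. Item~3 is an immediate induction on $\tm$ (and simultaneously on $\sctx$), using at each node the congruence rule with reflexive premises and never invoking $\rulePAppTwo$.

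Item~4 is the main one and I would handle it by induction on the derivation $\mstep_1 : \tm \ptod{d} \tm'$. The base case $\rulePVar$ splits in two: if $\tm = \var$, then $\tm' = \var$ and both substitutions yield exactly $\tmtwo$ and $\tmtwo'$, so the hypothesis $\tmtwo \ptod{d} \tmtwo'$ is precisely what we need; if $\tm = \vartwo \neq \var$, both substitutions are $\vartwo$ and we conclude by reflexivity (item~3). The congruence cases $\rulePAbs$, $\rulePAppOne$, $\rulePBin$, and the context cases are routine using the IH and $\alpha$-conversion to avoid capture. The delicate case is $\rulePAppTwo$, where the substitution must commute with the $\beta$-style rewrite at the root: given $(\lam{\vartwo}{\tmthree})\sctx\,\tmfour \ptod{d} \tmthree'\sub{\vartwo}{\tmfour'}\garb{\tmfour'}\sctx'$ with the abstraction of degree $d$, after substituting $[\var:=\tmtwo]$ on the left the head is still $(\lam{\vartwo}{\tmthree\sub{\var}{\tmtwo}})\sctx\sub{\var}{\tmtwo}\,\tmfour\sub{\var}{\tmtwo}$, still of degree $d$ (degree depends only on types, which are preserved by substitution and by $\ptod{d}$ via subject reduction, \rprop{subject_reduction}, combined with item~2). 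I then apply $\rulePAppTwo$ using the IH on the three premises and reduce to $\tmthree'\sub{\var}{\tmtwo'}\sub{\vartwo}{\tmfour'\sub{\var}{\tmtwo'}}\garb{\tmfour'\sub{\var}{\tmtwo'}}\sctx'\sub{\var}{\tmtwo'}$. The expected obstacle is the standard substitution lemma $\tmthree'\sub{\vartwo}{\tmfour'}\sub{\var}{\tmtwo'} = \tmthree'\sub{\var}{\tmtwo'}\sub{\vartwo}{\tmfour'\sub{\var}{\tmtwo'}}$ (assuming $\vartwo \notin \fv{\tmtwo'}$ by $\alpha$-renaming), which identifies the two sides and closes the case. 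Once this matches, everything else in item~4 is bookkeeping via \rremark{simultaneous_reduction_of_terms_with_garbage} for the memory component.
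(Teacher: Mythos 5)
Your proposal is correct and follows essentially the same route as the paper, which simply states that all four items are ``straightforward by induction''; you supply the inductions (on the context for item~1, on the derivation for items~2 and~4, on the term/memory for item~3) with the right auxiliary facts (reflexivity, type preservation under substitution and reduction to justify the degree side-condition of $\rulePAppTwo$, and the standard substitution-composition identity). No gaps.
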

\begin{proof}
All items are straightforward by induction.
\end{proof}

\begin{lemma}[Commutation of simultaneous reduction by degrees]
\llem{commutation_of_simultaneous_reduction_by_degrees}
Let $d, D \in \Natz$.
Given a step $\redex : \tm_1 \tod{d} \tm_2$
and a multi-step $\msteptwo : \tm_1 \ptod{D} \tm_3$,
there exists a term $\tm_4$,
a multi-step $\msteptwo/\redex : \tm_2 \ptod{D} \tm_4$ 
and a multi-step $\redex/\msteptwo : \tm_3 \ptod{d} \tm_4$.
Graphically:
\[
  \xymatrix@C=.5cm@R=.5cm{
    \tm_1 \ar^{d}[r] \ar@{=>}_{D}[d] & \tm_2 \ar@{=>}^{D}[d] \\
    \tm_3 \ar@{=>}_{d}[r] & \tm_4 \\
  }
\]
Furthermore:
\begin{enumerate}
\item
  If $d \neq D$ then $\redex/\msteptwo$ is non-empty,
  \ie $\redex/\msteptwo : \tm_3 \ptodplus{d} \tm_4$.
\item
  If $d \neq D$,
  the first step of $\msteptoderiv{\redex/\msteptwo}$
  determines the step $\redex$.
  More precisely,
  suppose that $\msteptoderiv{\redex_1/\msteptwo}$
  and $\msteptoderiv{\redex_2/\msteptwo}$
  start with the same step. Then $\redex_1 = \redex_2$.
\end{enumerate}
\end{lemma}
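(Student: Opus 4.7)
The plan is to prove the lemma by structural induction on the derivation of $\msteptwo : \tm_1 \ptod{D} \tm_3$, doing a sub-case analysis on the position of the redex contracted by $\redex$. The base case $\rulePVar$ is vacuous since $\var$ contains no redex, and $\rulePCtxHole$ is analogous. The unary/binary closure rules $\rulePAbs$, $\rulePAppOne$, $\rulePBin$, $\rulePCtxBin$ are handled uniformly: I locate the subterm of $\tm_1$ in which $\redex$ takes place, apply the \ih to the matching subderivation of $\msteptwo$, and reassemble both projections by reapplying the same rule, using reflexivity of $\ptod{-}$ (\rlem{properties_of_ptod}) on the subterms that $\redex$ does not touch.

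The only interesting rule is $\rulePAppTwo$, where $\tm_1 = (\lam{\var}{\tm})\sctx\,\tmtwo$ with $\lam{\var}{\tm}$ of degree $D$, and $\msteptwo$ simultaneously contracts this head redex and reduces its components via $\mstep_1 : \tm \ptod{D} \tm'$, $\mstep_2 : \sctx \ptod{D} \sctx'$, $\mstep_3 : \tmtwo \ptod{D} \tmtwo'$. I further sub-case on the position of $\redex$. If $\redex$ contracts the head redex, this forces $d = D$; I close the square by taking $\tm_4 = \tm'\sub{\var}{\tmtwo'}\garb{\tmtwo'}\sctx'$, with $\msteptwo/\redex$ obtained by assembling $\mstep_1, \mstep_2, \mstep_3$ via $\rulePAppOne$/$\rulePBin$ and the substitution lemma of~\rlem{properties_of_ptod}, and $\redex/\msteptwo$ taken to be reflexivity. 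If $\redex$ is instead internal to $\tm$, $\sctx$, or $\tmtwo$, I apply the \ih to the corresponding sub-multi-step to produce the internal residuals, then propagate them across the substitution using \rlem{properties_of_ptod}(4) (and \rremark{simultaneous_reduction_of_terms_with_garbage} to combine memories).

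For the ``furthermore'' clauses, assume $d \neq D$. Since every $\rulePAppTwo$ application inside $\msteptwo$ contracts a redex of degree $D$ while $\redex$ has degree $d$, the position of $\redex$ is never the ``top'' of any $\rulePAppTwo$-application encountered in the induction. Because the $\lambdaG$-calculus is non-erasing, the occurrence of $\redex$ survives in $\tm_3$—possibly duplicated, when its position lies inside a component like $\tmtwo$ that is copied by the substitution of a $\rulePAppTwo$ step. Tracing the inductive construction, this means that $\redex/\msteptwo$ always introduces at least one use of $\rulePAppTwo$, proving non-emptiness (clause 1). For clause 2, the positions of the contracted residuals in $\tm_3$ are determined by the positions of $\redex$ and of the redexes contracted by $\msteptwo$; by left-linearity and absence of critical pairs in the $\lambdaG$-calculus (as noted after~\rprop{subject_reduction}), each redex occurrence in $\tm_3$ has a unique ancestor in $\tm_1$. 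So if $\msteptoderiv{\redex_1/\msteptwo}$ and $\msteptoderiv{\redex_2/\msteptwo}$ begin with a step at the same position in $\tm_3$, that position descends to a unique redex in $\tm_1$, forcing $\redex_1 = \redex_2$.

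The main obstacle will be the $\rulePAppTwo$ case, where the substitution performed by $\msteptwo$ duplicates $\tmtwo$, so residuals of an internal $\redex$ living inside $\tmtwo$ are multiplied. Keeping the two constructed multi-steps well-typed and matching up the residuals relies on a careful combination of the substitution lemma of \rlem{properties_of_ptod} with~\rremark{simultaneous_reduction_of_terms_with_garbage}; the rest of the argument is essentially standard Tait--Martin-L\"of bookkeeping.
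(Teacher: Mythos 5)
Your overall strategy coincides with the paper's: a Tait--Martin-L\"of style induction (the paper inducts on the source term generalized to memories, you induct on the derivation of $\msteptwo$, which is the same case split), with the substitution property of $\ptod{d}$ doing the work in the $\rulePAppTwo$ case, non-erasure giving clause~1, and unique ancestors giving clause~2. However, your case analysis has a concrete hole, and it is precisely the case that carries the whole lemma when $d \neq D$. You lump $\rulePAppOne$ with the congruence rules and handle it by ``locating the subterm of $\tm_1$ in which $\redex$ takes place''; but $\rulePAppOne$ has no side condition, so $\tm_1$ may be $(\lam{\var}{\tmthree})\sctx\,\tmtwo$ with the head $\symg$-abstraction of degree $d$, $\redex$ contracting this \emph{root} redex, while $\msteptwo$ merely passes over it componentwise via $\rulePAppOne$. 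Indeed, whenever $d \neq D$ this configuration is unavoidable: a $D$-multi-step cannot contract a $d$-redex, so $\msteptwo$ must use $\rulePAppOne$ there. In this case $\redex$ does not live in a proper subterm and there is no subderivation to feed the \ih; you must instead decompose $\msteptwo$ into components (via the remark on multi-steps of terms with memory), close the right side with the substitution clause of \rlem{properties_of_ptod}, and close the bottom with a single application of $\rulePAppTwo$ of degree $d$ on $\tm_3$. This is the paper's case yielding exactly one $\rulePAppTwo$ in $\redex/\msteptwo$, which is the base of your ``tracing the inductive construction'' argument for non-emptiness and the base case of the unique-ancestor argument for clause~2; without it, both ``furthermore'' clauses rest on nothing.

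The rest of the proposal is sound and matches the paper: the $\rulePAppTwo$ subcases (head redex forcing $d = D$ and closed by reflexivity; $\redex$ internal to the body, the memory, or the duplicated argument, closed by the \ih plus the substitution lemma), the appeal to the wrapper $\garb{\tmtwo}$ to guarantee that a residual of $\redex$ survives even when $\var \notin \fv{\tm}$, and the unique-ancestor argument for injectivity. Once you add the missing $\rulePAppOne$-with-root-redex case, the proof goes through as in the paper.
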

\begin{proof}
We prove a more general version of the statement including memories,
\ie we prove that for
$\redex : \anon_1 \tod{d} \anon_2$ and $\msteptwo : \anon_1 \ptod{D} \anon_3$,
there exist $\anon_4$
and $\msteptwo/\redex : \anon_2 \ptod{D} \anon_4$ 
and $\redex/\msteptwo : \anon_3 \ptod{d} \anon_4$,
where $\anon_1, \anon_2, \anon_3, \anon_4$
stand for either terms or memories.
We proceed by induction on $\anon_1$:
\begin{enumerate}
\item
  $\tm_1 = \var$:
  Impossible, as there are no reduction steps $\var \tod{d} \tm_2$.
\item
  $\tm_1 = \lam{\var}{\tmtwo_1}$:
  Then
  $\redex : \tm_1 = \lam{\var}{\tmtwo_1} \tod{d} \lam{\var}{\tmtwo_2} = \tm_2$
  with $\tmtwo_1 \tod{d} \tmtwo_2$
  and
  $\redextwo$ must be derived from the $\rulePAbs$ rule,
  so
  $\redextwo : \tm_1 = \lam{\var}{\tmtwo_1} \ptod{d} \lam{\var}{\tmtwo_3} = \tm_3$
  with $\tmtwo_1 \ptod{d} \tmtwo_3$.
  By \ih we have the diagram on the left, and we can construct the
  one on the right:
  \[
    \xymatrix@C=.5cm@R=.5cm{
      \tmtwo_1
        \ar^{d}[r] \ar@{=>}_{D}[d]
    &
      \tmtwo_2
        \ar@{=>}^{D}[d]
    \\
      \tmtwo_3
        \ar@{=>}_{d}[r]
    &
      \tmtwo_4
    }
    \HS
    \xymatrix@C=.5cm@R=.5cm{
      \lam{\var}{\tmtwo_1}
        \ar^{d}[r] \ar@{=>}_{D}[d]
    &
      \lam{\var}{\tmtwo_2}
        \ar@{=>}^{D}[d]
    \\
      \lam{\var}{\tmtwo_3}
        \ar@{=>}_{d}[r]
    &
      \lam{\var}{\tmtwo_4}
    }
  \]
  Furthermore, if $d \neq D$, using the \ih
  it is easy to show that $\redex/\msteptwo$ is non-empty
  and that $\redex/\msteptwo$ determines $\redex$.
\item
  $\tm_1 = \tmtwo_1\,\tmthree_1$:
  We consider three subcases, depending on whether
  $\redex$ is at the root, internal to $\tmtwo_1$, or internal to $\tmthree_1$:
  \begin{enumerate}
  \item
    If $\redex$ is at the root:
    Then $\tmtwo_1$ is a $\symg$-abstraction of degree $d$,
    \ie of the form $\tmtwo_1 = (\lam{\var}{\tmfour_1})\sctx_1$,
    and $\redex : \tm_1 = (\lam{\var}{\tmfour_1})\sctx_1\,\tmthree_1
                  \tod{d}
                  \tmfour_1\sub{\var}{\tmthree_1}\garb{\tmthree_1}\sctx_1
                  = \tm_2$.
    We consider two further subcases, depending on whether
    $\redextwo$ is derived using the $\rulePAppOne$ or the $\rulePAppTwo$ rule:
    \begin{enumerate}
    \item
      \label{commutation_of_simultaneous_reduction_by_degrees:app:root:app1}
      If $\redextwo$ is derived using the $\rulePAppOne$ rule:
      Then by~\rremark{simultaneous_reduction_of_terms_with_garbage}
      we have that
      $\redextwo : \tm_1 = (\lam{\var}{\tmfour_1})\sctx_1\,\tmthree_1
                   \ptod{D}
                   (\lam{\var}{\tmfour_3})\sctx_3\,\tmthree_3 = \tm_3$
      where $\tmfour_1 \ptod{D} \tmfour_3$
      and $\sctx_1 \ptod{D} \sctx_3$
      and $\tmthree_1 \ptod{D} \tmthree_3$.
      By~\rlem{properties_of_ptod} we can construct the following diagram,
      using reflexivity and $\rulePAppTwo$ on the bottom:
      \[
        \xymatrix@C=.5cm@R=.5cm{
          (\lam{\var}{\tmfour_1})\sctx_1\,\tmthree_1
            \ar^-{d}[r] \ar@{=>}_{D}[d]
        &
          \tmfour_1\sub{\var}{\tmthree_1}\garb{\tmthree_1}\sctx_1
            \ar@{=>}^{D}[d]
        \\
          (\lam{\var}{\tmfour_3})\sctx_3\,\tmthree_3
            \ar@{=>}_-{d}[r]
        &
          \tmfour_3\sub{\var}{\tmthree_3}\garb{\tmthree_3}\sctx_3
        }
      \]
    \item
      \label{commutation_of_simultaneous_reduction_by_degrees:app:root:app2}
      If $\redextwo$ is derived using the $\rulePAppTwo$ rule:
      Then note that $d = D$
      and we have that
      $\redextwo : \tm_1 = (\lam{\var}{\tmfour_1})\sctx_1\,\tmthree_1
                   \ptod{D}
                   \tmfour_3\sub{\var}{\tmthree_3}\garb{\tmthree_3}\sctx_3 = \tm_3$
      where $\tmfour_1 \ptod{D} \tmfour_3$
      and $\sctx_1 \ptod{D} \sctx_3$
      and $\tmthree_1 \ptod{D} \tmthree_3$.
      By~\rlem{properties_of_ptod} we can construct the following diagram,
      using reflexivity on the bottom:
      \[
        \xymatrix@C=.5cm@R=.5cm{
          (\lam{\var}{\tmfour_1})\sctx_1\,\tmthree_1
            \ar^-{d}[r] \ar@{=>}_{D}[d]
        &
          \tmfour_1\sub{\var}{\tmthree_1}\garb{\tmthree_1}\sctx_1
            \ar@{=>}^{D}[d]
        \\
          \tmfour_3\sub{\var}{\tmthree_3}\garb{\tmthree_3}\sctx_3
            \ar@{=>}_-{d}[r]
        &
          \tmfour_3\sub{\var}{\tmthree_3}\garb{\tmthree_3}\sctx_3
        }
      \]
    \end{enumerate}
  \item
    If $\redex$ is internal to $\tmtwo_1$:
    Then $\redex : \tm_1 = \tmtwo_1\,\tmthree_1 \tod{d} \tmtwo_2\,\tmthree_1$
    with $\tmtwo_1 \tod{d} \tmtwo_2$.
    We consider two further subcases, depending on whether
    $\redextwo$ is derived using the $\rulePAppOne$ or the $\rulePAppTwo$ rule:
    \begin{enumerate}
    \item
      If $\redextwo$ is derived using the $\rulePAppOne$ rule:
      Then $\redextwo : \tm_1 = \tmtwo_1\,\tmthree_1
                        \ptod{d} \tmtwo_3\,\tmthree_3 = \tm_3$
      with $\tmtwo_1 \ptod{d} \tmtwo_3$
      and $\tmthree_1 \ptod{d} \tmthree_3$.
      By \ih we have the diagram on the left, and we can construct the
      one on the right:
      \[
        \xymatrix@C=.5cm@R=.5cm{
          \tmtwo_1
            \ar^{d}[r] \ar@{=>}_{D}[d]
        &
          \tmtwo_2
            \ar@{=>}^{D}[d]
        \\
          \tmtwo_3
            \ar@{=>}_{d}[r]
        &
          \tmtwo_4
        }
        \HS
        \xymatrix@C=.5cm@R=.5cm{
          \tmtwo_1\,\tmthree_1
            \ar^{d}[r] \ar@{=>}_{D}[d]
        &
          \tmtwo_2\,\tmthree_1
            \ar@{=>}^{D}[d]
        \\
          \tmtwo_3\,\tmthree_3
            \ar@{=>}_{d}[r]
        &
          \tmtwo_4\,\tmthree_3
        }
      \]
    \item
      If $\redextwo$ is derived using the $\rulePAppTwo$ rule:
      Then $\tmtwo_1$ is a $\symg$-abstraction of degree $D$,
      \ie of the form $\tmtwo_1 = (\lam{\var}{\tmfour_1})\sctx_1$,
      and by~\rremark{simultaneous_reduction_of_terms_with_garbage}
      we have that
      $\redextwo : \tm_1 = (\lam{\var}{\tmfour_1})\sctx_1\,\tmthree_1
                   \ptod{D}
                   \tmfour_3\sub{\var}{\tmthree_3}\garb{\tmthree_3}\sctx_3$
      where $\tmfour_1 \ptod{D} \tmfour_3$
      and $\sctx_1 \ptod{D} \sctx_3$
      and $\tmthree_1 \ptod{D} \tmthree_3$.
      Moreover, since we know
      $\tmtwo_1 = (\lam{\var}{\tmfour_1})\sctx_1 \tod{d} \tmtwo_2$
      we consider two further subcases, depending on whether the step
      $\tmtwo_1 \tod{d} \tmtwo_2$
      is internal to $\tmfour_1$ or internal to $\sctx_1$.
      These subcases are similar; we only give the proof
      for the case in which the step is internal to $\tmfour_1$.
      In such case
      $\tmtwo_1 = (\lam{\var}{\tmfour_1})\sctx_1
          \tod{d} (\lam{\var}{\tmfour_2})\sctx_1$
      with $\tmfour_1 \tod{d} \tmfour_2$.
      By \ih we have the diagram on the left, and we can construct the
      one on the right,
      using~\rlem{properties_of_ptod}.
      On the right of the diagram, use $\rulePAppTwo$.
      On the bottom of the diagram,
      note that $\tmthree_3 \ptod{d} \tmthree_3$ by reflexivity
      so $\tmfour_3\sub{\var}{\tmthree_3} \ptod{d} \tmfour_4\sub{\var}{\tmthree_3}$:
      \[
        \xymatrix@C=.5cm@R=.5cm{
          \tmfour_1
            \ar^{d}[r] \ar@{=>}_{D}[d]
        &
          \tmfour_2
            \ar@{=>}^{D}[d]
        \\
          \tmfour_3
            \ar@{=>}_{d}[r]
        &
          \tmfour_4
        }
        \HS
        \xymatrix@C=.5cm@R=.5cm{
          (\lam{\var}{\tmfour_1})\sctx_1\,\tmthree_1
            \ar^-{d}[r] \ar@{=>}_{D}[d]
        &
          (\lam{\var}{\tmfour_2})\sctx_1\,\tmthree_1
            \ar@{=>}^{D}[d]
        \\
          \tmfour_3\sub{\var}{\tmthree_3}\garb{\tmthree_3}\sctx_3
            \ar@{=>}_-{d}[r]
        &
          \tmfour_4\sub{\var}{\tmthree_3}\garb{\tmthree_3}\sctx_3
        }
      \]
    \end{enumerate}
  \item
    If $\redex$ is internal to $\tmthree_1$:
    Then $\redex : \tm_1 = \tmtwo_1\,\tmthree_1 \tod{d} \tmtwo_1\,\tmthree_2$
    with $\tmthree_1 \tod{d} \tmthree_2$.
    We consider two further subcases, depending on whether
    $\redextwo$ is derived using the $\rulePAppOne$ or the $\rulePAppTwo$ rule:
    \begin{enumerate}
    \item
      If $\redextwo$ is derived using the $\rulePAppOne$ rule:
      Then $\redextwo : \tm_1 = \tmtwo_1\,\tmthree_1
                        \ptod{d} \tmtwo_3\,\tmthree_3 = \tm_3$
      with $\tmtwo_1 \ptod{d} \tmtwo_3$
      and $\tmthree_1 \ptod{d} \tmthree_3$.
      By \ih we have the diagram on the left, and we can construct the
      one on the right:
      \[
        \xymatrix@C=.5cm@R=.5cm{
          \tmthree_1
            \ar^{d}[r] \ar@{=>}_{D}[d]
        &
          \tmthree_2
            \ar@{=>}^{D}[d]
        \\
          \tmthree_3
            \ar@{=>}_{d}[r]
        &
          \tmthree_4
        }
        \HS
        \xymatrix@C=.5cm@R=.5cm{
          \tmtwo_1\,\tmthree_1
            \ar^{d}[r] \ar@{=>}_{D}[d]
        &
          \tmtwo_1\,\tmthree_2
            \ar@{=>}^{D}[d]
        \\
          \tmtwo_3\,\tmthree_3
            \ar@{=>}_{d}[r]
        &
          \tmtwo_3\,\tmthree_4
        }
      \]
    \item
      \label{commutation_of_simultaneous_reduction_by_degrees:app:right:app2}
      If $\redextwo$ is derived using the $\rulePAppTwo$ rule:
      Then $\tmtwo_1$ is a $\symg$-abstraction of degree $D$,
      \ie of the form $\tmtwo_1 = (\lam{\var}{\tmfour_1})\sctx_1$,
      and by~\rremark{simultaneous_reduction_of_terms_with_garbage}
      we have that
      $\redextwo : \tm_1 = (\lam{\var}{\tmfour_1})\sctx_1\,\tmthree_1
                   \ptod{D}
                   \tmfour_3\sub{\var}{\tmthree_3}\garb{\tmthree_3}\sctx_3$
      where $\tmfour_1 \ptod{D} \tmfour_3$
      and $\sctx_1 \ptod{D} \sctx_3$
      and $\tmthree_1 \ptod{D} \tmthree_3$.
      By \ih we have the diagram on the left, and we can construct the
      one on the right.
      On the right of the diagram, use $\rulePAppTwo$.
      On the bottom of the diagram,
      note that $\tmfour_3 \ptod{d} \tmfour_3$ by reflexivity
      so $\tmfour_3\sub{\var}{\tmthree_3} \ptod{d} \tmfour_3\sub{\var}{\tmthree_4}$:
      \[
        \xymatrix@C=.5cm@R=.5cm{
          \tmthree_1
            \ar^{d}[r] \ar@{=>}_{D}[d]
        &
          \tmthree_2
            \ar@{=>}^{D}[d]
        \\
          \tmthree_3
            \ar@{=>}_{d}[r]
        &
          \tmthree_4
        }
        \HS
        \xymatrix@C=.5cm@R=.5cm{
          (\lam{\var}{\tmfour_1})\sctx_1\,\tmthree_1
            \ar^-{d}[r] \ar@{=>}_{D}[d]
        &
          (\lam{\var}{\tmfour_1})\sctx_1\,\tmthree_2
            \ar@{=>}^{D}[d]
        \\
          \tmfour_3\sub{\var}{\tmthree_3}\garb{\tmthree_3}\sctx_3
            \ar@{=>}_-{d}[r]
        &
          \tmfour_3\sub{\var}{\tmthree_4}\garb{\tmthree_4}\sctx_3
        }
      \]
    \end{enumerate}
  \end{enumerate}
  Furthermore, note that if $d \neq D$,
  then the multi-step $\redex/\msteptwo : \tm_3 \ptod{d} \tm_4$
  at the bottom of the diagram
  must be non-empty.
  Indeed,
  case~\ref{commutation_of_simultaneous_reduction_by_degrees:app:root:app1},
  uses exactly one occurrence of the $\rulePAppTwo$ rule
  to construct $\redex/\msteptwo$.
  Case~\ref{commutation_of_simultaneous_reduction_by_degrees:app:root:app2}
  is impossible, because in such case $d = D$.
  In the remaining cases, the bottom of the diagram is constructed
  by resorting to the \ih,
  which means that $\redex/\msteptwo$ is non-empty.
  An important observation is that in
  case~\ref{commutation_of_simultaneous_reduction_by_degrees:app:right:app2}
  the argument is not erased,
  because it is always kept as a memorized term.

  Furthermore, if $d \neq D$,
  to see that the first step of $\msteptoderiv{\redex/\msteptwo}$
  determines the step $\redex$,
  consider the first step $\redexthree$ of $\msteptoderiv{\redex/\msteptwo}$
  and note that it its $\lambda$-abstraction can be uniquely traced back
  to the $\lambda$-abstraction of $\redex$ (\ie it has a unique ancestor).
  Indeed,
  in case~\ref{commutation_of_simultaneous_reduction_by_degrees:app:root:app1}
  the step at the bottom has $\redex$ as its unique ancestor.
  Case~\ref{commutation_of_simultaneous_reduction_by_degrees:app:root:app2}
  is impossible, because in such case $d = D$.
  In the remaining cases, it suffices to resort to the \ih.
\item
  \label{commutation_of_simultaneous_reduction_by_degrees:bin}
  $\tm_1 = \bin{\tmtwo_1}{\tmthree_1}$:
  We consider two subcases, depending on whether
  $\redex$ is internal to $\tmtwo_1$ or internal to $\tmthree_1$:
  \begin{enumerate}
  \item
    If $\redex$ is internal to $\tmtwo_1$:
    Then $\redex : \tm_1 = \bin{\tmtwo_1}{\tmthree_1}
                   \tod{d} \bin{\tmtwo_2}{\tmthree_1} = \tm_2$.
    Note that $\redextwo$ must be derived using the $\rulePBin$ rule,
    so $\redextwo : \tm_1 = \bin{\tmtwo_1}{\tmthree_1}
                    \ptod{d} \bin{\tmtwo_3}{\tmthree_3} = \tm_3$.
    By \ih we have the diagram on the left, and we can construct the
    one on the right:
    \[
      \xymatrix@C=.5cm@R=.5cm{
        \tmtwo_1
          \ar^{d}[r] \ar@{=>}_{D}[d]
      &
        \tmtwo_2
          \ar@{=>}^{D}[d]
      \\
        \tmtwo_3
          \ar@{=>}_{d}[r]
      &
        \tmtwo_4
      }
      \HS
      \xymatrix@C=.5cm@R=.5cm{
        \bin{\tmtwo_1}{\tmthree_1}
          \ar^-{d}[r] \ar@{=>}_{D}[d]
      &
        \bin{\tmtwo_2}{\tmthree_1}
          \ar@{=>}^{D}[d]
      \\
        \bin{\tmtwo_3}{\tmthree_3}
          \ar@{=>}_-{d}[r]
      &
        \bin{\tmtwo_4}{\tmthree_3}
      }
    \]
  \item
    If $\redex$ is internal to $\tmthree_1$:
    Similar to the previous case.
  \end{enumerate}
  Furthermore, if $d \neq D$,
  using the \ih it is easy to show that $\redex/\msteptwo$ is non-empty
  and that $\redex/\msteptwo$ determines $\redex$.
\item
  $\sctx_1 = \ctxhole$:
  Impossible, as there are no steps $\ctxhole \tod{d} \sctx_2$.
\item
  $\sctx_1 = \bin{\sctx'_1}{\tm_1}$:
  Similar to case~\ref{commutation_of_simultaneous_reduction_by_degrees:bin}.
\end{enumerate}
\end{proof}

\begin{proposition}[Commutation of reduction by degrees]
\lprop{appendix:commutation_by_degrees}
Let $d, D \in \Natz$. Then $\tod{d}$ and $\tod{D}$ commute.
More precisely,
given reduction sequences
    $\redseq : \tm_1 \rtod{d} \tm_2$
and $\redseqtwo : \tm_1 \rtod{D} \tm_3$,
there exists a term $\tm_4$
and reduction sequences
    $\redseqtwo/\redseq : \tm_2 \rtod{D} \tm_4$
and $\redseq/\redseqtwo : \tm_3 \rtod{d} \tm_4$.
Graphically:
\[
  \xymatrix@C=.5cm@R=.5cm{
    \tm_1
      \ar@{->>}^-{d}[r] \ar@{->>}_-{D}[d]
  &
    \tm_2
      \ar^-{D}@{.>>}[d]
  \\
    \tm_3
      \ar@{.>>}_-{d}[r]
  &
    \tm_4
  \\
  }
\]
The reduction sequence $\redseq/\redseqtwo$
is called the {\em projection of $\redseq$ after $\redseqtwo$}
and symmetrically for $\redseqtwo/\redseq$.
Furthermore:
\begin{enumerate}
\item
  If $d \neq D$, then $\redseq/\redseqtwo$
  contains at least as many steps as $\redseq$.
\item
  If $d \neq D$,
  then $\redseq/\redseqtwo$ determines $\redseq$.
  More precisely,
  if $\redseq_1/\redseqtwo = \redseq_2/\redseqtwo$
  then $\redseq_1 = \redseq_2$.
\end{enumerate}
\end{proposition}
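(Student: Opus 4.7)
The plan is to lift the single-cell commutation result of \rlem{commutation_of_simultaneous_reduction_by_degrees} to arbitrary reduction sequences by a tiling argument. I will decompose $\redseq = \redex_1 \cdots \redex_n$ and $\redseqtwo = \redextwo_1 \cdots \redextwo_m$, and build the projections by processing the steps of $\redseqtwo$ one at a time. Setting $\redseq^{(0)} \eqdef \redseq$, for each $j = 1, \ldots, m$ I will project $\redseq^{(j-1)}$ through the single step $\redextwo_j$ by iteratively applying \rlem{commutation_of_simultaneous_reduction_by_degrees} along the steps of $\redseq^{(j-1)}$: at each inner stage, I apply the lemma to the next single step of $\redseq^{(j-1)}$ versus the standing multi-step (initially $\steptomstep{\redextwo_j}$), obtaining a fresh multi-step on the right edge and a multi-step contribution on the bottom edge. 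The bottom contributions, converted to reduction sequences via $\msteptoderiv{-}$ from \rlem{properties_of_ptod}, concatenate to form $\redseq^{(j)}$. Finally, I define $\redseq/\redseqtwo \eqdef \redseq^{(m)}$ and take $\redseqtwo/\redseq$ as the concatenation of the right-edge reductions across all $m$ outer iterations.

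For the two furthermore clauses, assume $d \neq D$. Clause~1 follows because each inner application of \rlem{commutation_of_simultaneous_reduction_by_degrees} produces a non-empty bottom multi-step (by the non-emptiness clause of that lemma), so each step of $\redseq^{(j-1)}$ contributes at least one step to $\redseq^{(j)}$; hence $|\redseq^{(j)}| \geq |\redseq^{(j-1)}|$, and by transitivity $|\redseq/\redseqtwo| \geq |\redseq|$. Clause~2 is proved by induction on $m$. Within a single outer iteration, the first step of $\redseq^{(j)}$ uniquely determines the first step of $\redseq^{(j-1)}$ by the final clause of \rlem{commutation_of_simultaneous_reduction_by_degrees}, and the length of the corresponding multi-step contribution inside $\redseq^{(j)}$ is then also determined; peeling off successive initial segments of $\redseq^{(j)}$ thus recovers $\redseq^{(j-1)}$. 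Iterating from $j = m$ down to $j = 0$ shows that $\redseq$ is uniquely recoverable from $\redseq/\redseqtwo$, yielding the required injectivity.

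The hard part will be verifying that the first-step-determines property truly propagates across the concatenation of multi-steps along the bottom edge of each tiled row. Because the bottom of row $j$ is a concatenation of several multi-steps $\msteptoderiv{\redex^{(j-1)}_i/\msteptwo_{i-1}}$ of a priori unknown lengths, the algorithm that recovers $\redseq^{(j-1)}$ from $\redseq^{(j)}$ must identify the boundaries between successive contributions using only the first-step information and the known $\redextwo_j$. This amounts to threading the unique-ancestor property of the residual structure through every inner iteration, which is routine in spirit but requires careful bookkeeping to set up rigorously.
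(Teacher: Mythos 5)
Your proposal is correct and follows essentially the same route as the paper: an outer iteration over the steps of $\redseqtwo$ (each lifted to a multi-step via $\steptomstep{-}$) with an inner iteration over the steps of the current $\redseq^{(j-1)}$, tiling with \rlem{commutation_of_simultaneous_reduction_by_degrees} and flattening the bottom contributions with $\msteptoderiv{-}$, then using non-emptiness for clause~1 and the first-step-determines property for clause~2. The boundary-identification worry you raise resolves exactly as you suggest: once the first step of a bottom block determines the source step $\redex$, the entire block $\msteptoderiv{\redex/\msteptwo}$ (and hence its length) is a function of $\redex$ and $\msteptwo$, so equal projections force equal first blocks, which can be cancelled before applying the induction hypothesis.
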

\begin{proof}
Recall that
$\tod{d} \,\,\subseteq\,\, \ptod{d} \,\,\subseteq\,\, \rtod{d}$.
by \rlem{properties_of_ptod}.
We prove this in two stages.

First, given a reduction sequence $\redseq : \tm_1 \rtod{d} \tm_2$
and a multi-step $\msteptwo : \tm_1 \ptod{D} \tm_3$,
we claim that there exists a term $\tm_4$
and constructing $\redseq/\msteptwo : \tm_3 \rtod{d} \tm_4$
and $\msteptwo/\redseq : \tm_2 \ptod{D} \tm_4$ as follows,
by induction on $\redseq$,
resorting to~\rlem{commutation_of_simultaneous_reduction_by_degrees}
for the constructions of $\redex/\msteptwo$ and $\msteptwo/\redex$.
\[
  \begin{array}{rcl}
    \emptyseq/\msteptwo
  & \eqdef &
    \emptyseq
  \\
    (\redex\,\redseq')/\msteptwo
  & \eqdef &
    \msteptoderiv{\redex/\msteptwo}(\redseq'/(\msteptwo/\redex))
  \end{array}
  \HS
  \begin{array}{rcl}
    \msteptwo/\emptyseq
  & \eqdef &
    \msteptwo
  \\
    \msteptwo/(\redex\,\redseq')
  & \eqdef &
    (\msteptwo/\redex)/\redseq'
  \end{array}
\]
Recall from~\rlem{properties_of_ptod}
that if $\mstep : \tmthree \ptod{d} \tmthree'$
is a multi-step,
then $\msteptoderiv{\mstep} : \tmthree \rtod{d} \tmthree'$
denotes a reduction sequence.
The inductive cases correspond to the following diagram:
\[
  \xymatrix@C=2cm{
    \ar^{\redex}[r]
    \ar@{=>}_{\msteptwo}[d]
  & 
    \ar@{->>}^{\redseq'}[r]
    \ar@{=>}_{\msteptwo/\redex}[d]
  &
    \ar@{=>}_{(\msteptwo/\redex)/\redseq'}[d]
  \\
    \ar@{->>}_{\msteptoderiv{\redex/\msteptwo}}[r]
  &
    \ar@{->>}_{\redseq'/(\msteptwo/\redex)}[r]
  &
  }
\]
For the general case, we proceed by induction on $\redseqtwo$
resorting to the previous construction for the constructions of
$\steptomstep{\redextwo}/\redseq$ and $\redseq/\steptomstep{\redextwo}$:
\[
  \begin{array}{rcl}
    \redseq/\emptyseq
  & \eqdef &
    \redseq
  \\
    \redseq/(\redextwo\,\redseqtwo')
  & \eqdef &
    (\redseq/\steptomstep{\redextwo})/\redseqtwo'
  \end{array}
  \HS
  \begin{array}{rcl}
    \emptyseq/\redseq
  & \eqdef &
    \emptyseq
  \\
    (\redextwo\,\redseqtwo')/\redseq
  & \eqdef &
    \msteptoderiv{\steptomstep{\redextwo}/\redseq}\,
    (\redseqtwo'/(\redseq/\steptomstep{\redextwo})
  \end{array}
\]
Recall from~\rlem{properties_of_ptod}
that if $\redex : \tmthree \tod{d} \tmthree'$
is a step,
then $\steptomstep{\redex} : \tmthree \ptod{d} \tmthree'$
denotes a multi-step.
The inductive cases correspond to the following diagram:
\[
  \xymatrix@C=2cm{
    \ar@{->}_{\redextwo}[d]
    \ar@{->>}^{\redseq}[r]
  &
    \ar^{\msteptoderiv{\steptomstep{\redextwo}/\redseq}}[d]
  \\
    \ar@{->>}_{\redseqtwo'}[d]
    \ar@{->>}^{\redseq/\steptomstep{\redextwo}}[r]
  &
    \ar@{->>}^{\redseqtwo'/(\redseq/\steptomstep{\redextwo})}[d]
  \\
    \ar@{->>}^{(\redseq/\steptomstep{\redextwo})/\redseqtwo'}[r]
  &
  }
\]
Furthermore, if $d \neq D$,
note that $\redex/\msteptwo : \tm \ptodplus{d} \tmtwo$
by~\rlem{commutation_of_simultaneous_reduction_by_degrees},
so $\msteptoderiv{\redex/\msteptwo} : \tm \todplus{d} \tmtwo$
by~\rlem{properties_of_ptod}.
Then by induction on $\redseq$, we can show that
$\redseq/\msteptwo$ contains at least as many steps as $\redseq$.
Finally, by induction on $\redseqtwo$,
we can show that
$\redseq/\redseqtwo$ contains at least as many steps as $\redseq$.

Furthermore, to see that
$\redseq/\redseqtwo$ determines $\redseq$,
we proceed in stages:
\begin{enumerate}
\item
  First,
  if $\msteptoderiv{\redex_1/\msteptwo}$
  and $\msteptoderiv{\redex_2/\msteptwo}$
  start with the same step, then $\redex_1 = \redex_2$
  by~\rlem{commutation_of_simultaneous_reduction_by_degrees}.
\item
  Second, we can see that if $\redseq_1/\msteptwo = \redseq_2/\msteptwo$
  then $\redseq_1 = \redseq_2$
  by induction on $\redseq_1$.
  Note that if $\redseq_1/\msteptwo = \redseq_2/\msteptwo$
  then $\redseq_1$ and $\redseq_2$ are either both empty
  or both non-empty,
  because if $\redseq_1 = \redex_1\,\redseq'_1$
  then by definition
  $\redseq_1/\msteptwo =
   \msteptoderiv{\redex_1/\msteptwo}\,(\redseq'_1/(\msteptwo/\redex_1))$
  and~\rlem{commutation_of_simultaneous_reduction_by_degrees}
  ensures that $\redex_1/\msteptwo$ is non-empty whenever $d \neq D$,
  so $\redseq_2/\msteptwo$ is non-empty,
  and hence $\redseq_2$ is non-empty.
  The base case is immediate.
  For the induction step, when $\redseq_1 = \redex_1\,\redseq'_1$
  and $\redseq_2 = \redex_2\,\redseq'_2$
  we have that $\redseq_1/\msteptwo = \redseq_2/\msteptwo$,
  so by definition
  $\msteptoderiv{\redex_1/\msteptwo}\,(\redseq'_1/(\msteptwo/\redex_1))
  = \msteptoderiv{\redex_2/\msteptwo}\,(\redseq'_2/(\msteptwo/\redex_2))$.
  As before~\rlem{commutation_of_simultaneous_reduction_by_degrees}
  ensures that $\msteptoderiv{\redex_1/\msteptwo}$
  and $\msteptoderiv{\redex_2/\msteptwo}$ are non-empty,
  so they must start with the same step.
  Hence by~\rlem{commutation_of_simultaneous_reduction_by_degrees}
  we have that $\redex_1 = \redex_2$.
  This in turn implies that
  $\redseq'_1/(\msteptwo/\redex_1) = \redseq'_2/(\msteptwo/\redex_2)$,
  so by \ih $\redseq'_1 = \redseq'_2$.
\item
  Finally, by induction on $\redseqtwo$
  we can see that if $\redseq_1/\redseqtwo = \redseq_2/\redseqtwo$
  then $\redseq_1 = \redseq_2$,
  resorting to the previous item.
\end{enumerate}
\end{proof}

\begin{lemma}[A term reduces to its simplification, by degrees]
\llem{tm_reduces_to_simpk_by_degrees}
For every term $\tm$ and for all $d \geq 1$
we have that $\tm \rtod{d} \simpd{\tm}$.
\end{lemma}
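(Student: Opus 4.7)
The plan is to refine the proof of \rlem{tm_reduces_to_simpk}, strengthening it so that every step in the constructed reduction sequence has degree exactly $d$. As in that proof, I would generalize the statement to memories, simultaneously proving that $\sctx \rtod{d} \simpd{\sctx}$ for every memory $\sctx$, and then proceed by simultaneous induction on the structure of $\tm$ and $\sctx$.

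The case analysis mirrors \rdef{simplification} exactly. For $\tm = \var$ and $\sctx = \ctxhole$ the reduction is empty. For $\tm = \lam{\var}{\tmtwo}$, $\tm = \bin{\tmtwo}{\tmthree}$, $\sctx = \bin{\sctx'}{\tmtwo}$, and for the ``otherwise'' application case $\tm = \tmtwo\,\tmthree$ where $\tmtwo$ is not a $\symg$-abstraction of degree $d$, one simply applies the inductive hypothesis to each proper subterm: the resulting $\tod{d}$-steps remain $\tod{d}$-steps after being embedded inside the surrounding constructor, since the degree of a redex depends only on the type of its abstraction and not on the context. The critical case is when $\tm = (\lam{\var}{\tmtwo})\sctx\,\tmthree$ with $(\lam{\var}{\tmtwo})\sctx$ of degree $d$. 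Here we first use the inductive hypothesis to get
\[
(\lam{\var}{\tmtwo})\sctx\,\tmthree
\rtod{d}
(\lam{\var}{\simpd{\tmtwo}})\simpd{\sctx}\,\simpd{\tmthree},
\]
and then perform one more step contracting the head redex, yielding
$\simpd{\tmtwo}\sub{\var}{\simpd{\tmthree}}\garb{\simpd{\tmthree}}\simpd{\sctx} = \simpd{\tm}$.

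The only point that requires a brief justification is that this final head step really is of degree $d$. This is immediate, however: by subject reduction (\rprop{subject_reduction}), reducing inside $\tmtwo$ and $\sctx$ does not change the type of $\lam{\var}{\tmtwo}$, so $(\lam{\var}{\simpd{\tmtwo}})\simpd{\sctx}$ is still a $\symg$-abstraction of degree $d$. I do not foresee any real obstacle: the proof is essentially a bookkeeping refinement of the existing argument for \rlem{tm_reduces_to_simpk}, the only difference being that we track the degree of each contracted redex, which is already transparent in every case of that proof.
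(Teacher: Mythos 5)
Your proposal is correct and matches the paper's own argument, which simply observes that the proof of \rlem{tm_reduces_to_simpk} already contracts only redexes of degree exactly $d$ (the head step in the key case being of degree $d$ by the case hypothesis, preserved by subject reduction). The explicit justification you give for the degree of the final head step is a welcome detail that the paper leaves implicit.
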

\begin{proof}
The proof is essentially the same proof as that of~\rlem{tm_reduces_to_simpk},
noting that whenever a redex is contracted, its degree is exactly $d$.
\end{proof}

\begin{lemma}[Substitution of $\tod{d}$-normal forms]
\llem{substitution_of_tod_normal_forms}
\quad
\begin{enumerate}
\item
  \label{substitution_of_tod_normal_forms:abstractions}
  If $\tm$ and $\tmtwo$ are not $\symg$-abstractions of degree $d$,
  then $\tm\sub{\var}{\tmtwo}$ is not a $\symg$-abstraction of degree $d$.
\item
  \label{substitution_of_tod_normal_forms:preservation}
  Let $\tm$ and $\tmtwo$ be terms in $\tod{d}$-normal form
  such that $\tmtwo$ is not an abstraction of degree $d$.
  Then $\tm\sub{\var}{\tmtwo}$ is in $\tod{d}$-normal form.
\end{enumerate}
\end{lemma}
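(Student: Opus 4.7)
The plan is to prove item 1 first by structural induction on $\tm$, and then to use it as the key tool in item 2, which also proceeds by structural induction on $\tm$. Item 1 is needed specifically for the application case of item 2, where we must rule out that substitution creates a new $\symg$-abstraction of degree $d$ at a function position.

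For item 1, I would proceed by induction on $\tm$. The variable case splits on whether $\tm = \var$ (then $\tm\sub{\var}{\tmtwo} = \tmtwo$, and we appeal to the hypothesis on $\tmtwo$) or $\tm \neq \var$ (then the substitution is a variable, trivially not a $\symg$-abstraction). For the abstraction case $\tm = \lam{\vartwo}{\tm'}$, since $\tm$ is itself a $\symg$-abstraction and by hypothesis its degree $\height{\typeof{\tm}}$ is not $d$, and since substitution preserves typing (\rlem{typing_substitution_lemma}), the resulting term $\lam{\vartwo}{\tm'\sub{\var}{\tmtwo}}$ has the same type and hence the same degree as a $\symg$-abstraction, which is not $d$. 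The application case is immediate since the substitution yields an application, not a $\symg$-abstraction. The wrapper case $\tm = \bin{\tm_1}{\tm_2}$ is handled by applying \rremark{bin_is_symg_abstraction} twice: from the hypothesis we infer that $\tm_1$ is not a $\symg$-abstraction of degree $d$ (since $\bin{\tm_1}{\tm_2}$ is a $\symg$-abstraction iff $\tm_1$ is, and they share the same type); we apply the \ih to $\tm_1$; and then we apply \rremark{bin_is_symg_abstraction} in reverse to conclude that $\bin{\tm_1\sub{\var}{\tmtwo}}{\tm_2\sub{\var}{\tmtwo}}$ is not a $\symg$-abstraction of degree $d$.

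For item 2, I would also proceed by induction on $\tm$. The variable case is immediate. The abstraction and wrapper cases apply the \ih to the subterms (which are in $\tod{d}$-normal form since $\tm$ is) and observe that neither constructor creates redexes at the root. The application case $\tm = \app{\tm_1}{\tm_2}$ is the crux: since $\tm$ is in normal form, $\tm_1$ and $\tm_2$ are in normal form and, moreover, $\tm_1$ cannot be a $\symg$-abstraction of degree $d$ (else $\tm$ itself would be a redex of degree $d$). By \ih, $\tm_1\sub{\var}{\tmtwo}$ and $\tm_2\sub{\var}{\tmtwo}$ are in normal form. To conclude that the application $\app{\tm_1\sub{\var}{\tmtwo}}{\tm_2\sub{\var}{\tmtwo}}$ is itself not a redex of degree $d$, I invoke item 1 with $\tm_1$ and $\tmtwo$: since neither is a $\symg$-abstraction of degree $d$, neither is $\tm_1\sub{\var}{\tmtwo}$. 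I expect this interplay between the two items in the application case to be the main subtlety, since it is precisely the mechanism that prevents substitution from creating fresh $\symg$-abstractions of degree $d$ at function positions; without item 1, the inductive step in the application case would fail.
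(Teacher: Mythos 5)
Your proposal is correct and follows essentially the same route as the paper's proof: both items by structural induction on $\tm$, with item 1 invoked in the application case of item 2 (together with the observation that $\tm_1$ cannot be a $\symg$-abstraction of degree $d$ when $\tm_1\,\tm_2$ is in $\tod{d}$-normal form) to rule out a created root redex. The case analyses, including the use of the typing substitution lemma in the abstraction case and of \rremark{bin_is_symg_abstraction} in the wrapper case, match the paper's argument.
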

\begin{proof}
We prove the two items separately:
\begin{enumerate}
\item
  By induction on $\tm$:
  \begin{enumerate}
  \item
    $\tm = \var$: 
    Then $\tm\sub{\var}{\tmtwo} = \tmtwo$, which is not a $\symg$-abstraction of
    degree $d$ by hypothesis.
  \item
    $\tm = \vartwo \neq \var$: 
    Then $\tm\sub{\var}{\tmtwo} = \vartwo$ is not a $\symg$-abstraction.
  \item
    $\tm = \lam{\vartwo}{\tm'}$: 
    By $\alpha$-conversion
    we may assume that $\vartwo \notin \set{\var}\cup\fv{\tmtwo}$.
    Note that $\tm$ is a $\symg$-abstraction but, by hypothesis, it cannot
    be of degree $d$.
    By the substitution lemma~(\rlem{typing_substitution_lemma})
    we have that
    $\typeof{\tm\sub{\var}{\tmtwo}}
     = \typeof{\lam{\vartwo}{\tm'\sub{\var}{\tmtwo}}}
     = \typeof{\lam{\vartwo}{\tm'}}$,
    so $\tm\sub{\var}{\tmtwo} = \lam{\vartwo}{\tm'\sub{\var}{\tmtwo}}$
    is a $\symg$-abstraction, but it is not of degree $d$.
  \item
    $\tm = \tm_1\,\tm_2$: 
    Then
    $\tm\sub{\var}{\tmtwo}
     = \tm_1\sub{\var}{\tmtwo}\,\tm_2\sub{\var}{\tmtwo}$ 
    is an application,
    hence not a $\symg$-abstraction.
  \item
    $\tm = \bin{\tm_1}{\tm_2}$: 
    Since
    $\tm$ is not a $\symg$-abstraction of degree $d$,
    we have that
    $\tm_1$ is also not a $\symg$-abstraction of degree $d$.
    By \ih,
    $\tm_1\sub{\var}{\tmtwo}$ is not a $\symg$-abstraction of degree $d$,
    so
    $\tm\sub{\var}{\tmtwo}
     = \bin{\tm_1\sub{\var}{\tmtwo}}{\tm_2\sub{\var}{\tmtwo}}$
    is not a $\symg$-abstraction of degree $d$.
  \end{enumerate}
\item
  By induction $\tm$:
  \begin{enumerate}
  \item
    $\tm = \var$:
    Then $\tm\sub{\var}{\tmtwo} = \tmtwo$ is in $\tod{d}$-normal form.
  \item
    $\tm = \vartwo \neq \var$:
    Then $\tm\sub{\var}{\tmtwo} = \vartwo$ is in $\tod{d}$-normal form.
  \item
    $\tm = \lam{\vartwo}{\tm'}$:
    By $\alpha$-conversion
    we may assume that $\vartwo \notin \set{\var}\cup\fv{\tmtwo}$.
    By \ih, $\tm'\sub{\var}{\tmtwo}$ is in $\tod{d}$-normal form,
    so $\tm\sub{\var}{\tmtwo} = \lam{\vartwo}{\tm'\sub{\var}{\tmtwo}}$
    is also in $\tod{d}$-normal form.
  \item
    $\tm = \tm_1\,\tm_2$:
    By \ih, $\tm_1\sub{\var}{\tmtwo}$ and $\tm_2\sub{\var}{\tmtwo}$
    are in $\tod{d}$-normal form.
    To show that the whole term
    $\tm\sub{\var}{\tmtwo}
     = \tm_1\sub{\var}{\tmtwo}\,\tm_2\sub{\var}{\tmtwo}$
    is a normal form,
    we are only left to show that the term does not have
    a $\tod{d}$-redex at the root,
    \ie that $\tm_1\sub{\var}{\tmtwo}$ is not a $\symg$-abstraction
    of degree $d$.
    Note that $\tm_1$ cannot be a $\symg$-abstraction of degree $d$,
    for otherwise $\tm = \tm_1\,\tm_2$ would be a redex of degree $d$,
    but we know by hypothesis that $\tm$ is in $\tod{d}$-normal form.
    Hence by item~\ref{substitution_of_tod_normal_forms:abstractions}
    of this lemma,
    $\tm_1\sub{\var}{\tmtwo}$ is not a $\symg$-abstraction of degree $d$,
    as required.
  \item
    $\tm = \bin{\tm_1}{\tm_2}$:
    By \ih, $\tm_1\sub{\var}{\tmtwo}$ and $\tm_2\sub{\var}{\tmtwo}$
    are in $\tod{d}$-normal form,
    so $\tm\sub{\var}{\tmtwo}
        = \bin{\tm_1\sub{\var}{\tmtwo}}{\tm_2\sub{\var}{\tmtwo}}$
    is also in $\tod{d}$-normal form.
  \end{enumerate}
\end{enumerate}
\end{proof}

\begin{lemma}[Simplification does not create abstractions, by degrees]
\llem{simpk_does_not_create_abstractions_by_degrees}
If $\tm$ is not a $\symg$-abstraction of degree $d$,
then $\simpd{\tm}$ is not a $\symg$-abstraction of degree $d$.
\end{lemma}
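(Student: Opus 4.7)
My plan is to prove this by structural induction on $\tm$, closely mirroring the pattern of \rlem{simplification_does_not_create_abstractions} but replacing the ``lower type height'' condition with the stricter ``not of degree exactly $d$'' condition. The key algebraic fact I will rely on is that simplification preserves the type of a term, which follows from subject reduction~(\rprop{subject_reduction}) together with \rlem{tm_reduces_to_simpk} (giving $\tm \tog^* \simpd{\tm}$); hence, whenever $\simpd{\tm}$ is a $\symg$-abstraction, its degree equals $\height{\typeof{\tm}}$. I will also use \rremark{bin_is_symg_abstraction}, extended in the obvious way to memories ($\tm\sctx$ is a $\symg$-abstraction iff $\tm$ is).

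The variable and abstraction cases are straightforward. For $\tm = \var$ we have $\simpd{\var} = \var$, which is not a $\symg$-abstraction. For $\tm = \lam{\vartwo}{\tm'}$, the hypothesis says $\height{\typeof{\tm}} \neq d$, and since simplification preserves the type, $\simpd{\tm} = \lam{\vartwo}{\simpd{\tm'}}$ is a $\symg$-abstraction of the same degree, which is not $d$. For $\tm = \bin{\tm_1}{\tm_2}$, the hypothesis gives (by the extended remark) that $\tm_1$ is not a $\symg$-abstraction of degree $d$. The \ih delivers the same for $\simpd{\tm_1}$, and another application of the extended remark shows $\simpd{\tm} = \bin{\simpd{\tm_1}}{\simpd{\tm_2}}$ is not a $\symg$-abstraction of degree $d$ either.

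The interesting case is $\tm = \tm_1\,\tm_2$. If $\tm_1$ is not a $\symg$-abstraction of degree $d$, then $\simpd{\tm} = \simpd{\tm_1}\,\simpd{\tm_2}$ is an application, hence not a $\symg$-abstraction at all. The genuinely nontrivial subcase is when $\tm_1 = (\lam{\var}{\tm'_1})\sctx$ is of degree $d$, so
\[
  \simpd{\tm} \;=\; \simpd{\tm'_1}\sub{\var}{\simpd{\tm_2}}\garb{\simpd{\tm_2}}\simpd{\sctx}.
\]
By the extended remark, this is a $\symg$-abstraction of degree $d$ iff $\simpd{\tm'_1}\sub{\var}{\simpd{\tm_2}}$ is. I plan to invoke \rlem{substitution_of_tod_normal_forms}~(\ref{substitution_of_tod_normal_forms:abstractions}): it suffices to show that neither $\simpd{\tm'_1}$ nor $\simpd{\tm_2}$ is a $\symg$-abstraction of degree $d$. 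The key type-height argument is that, writing $\typeof{\lam{\var}{\tm'_1}} = \typ\to\typtwo$ with $\height{\typ\to\typtwo} = d$, we get $\height{\typ} < d$ and $\height{\typtwo} < d$; hence $\tm_2$ (of type $\typ$) and $\tm'_1$ (of type $\typtwo$), whenever they are $\symg$-abstractions, cannot be of degree $d$. The \ih then yields the same conclusion for $\simpd{\tm_2}$ and $\simpd{\tm'_1}$.

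The main obstacle I anticipate is precisely this application subcase, which requires two separate ingredients from the surrounding development: the substitution lemma for $\tod{d}$-normal forms at the abstraction level (to rule out the substitution turning into a $\symg$-abstraction of degree $d$) and the type-preservation bookkeeping to verify that the bodies and arguments of a degree-$d$ $\symg$-abstraction have strictly smaller type height, so that the \ih actually applies. Once these are combined, the rest of the proof is mechanical.
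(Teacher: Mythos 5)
Your proof is correct and follows essentially the same route as the paper's: structural induction with the same case split, the same use of type preservation (via \rlem{tm_reduces_to_simpk} and subject reduction) to handle the abstraction case and to bound the heights of the body and argument in the degree-$d$ redex case, and the same appeal to \rlem{substitution_of_tod_normal_forms} to close that case. The only cosmetic difference is that in the redex subcase the paper concludes directly that $\simpd{\tmtwo}$ and $\simpd{\tmthree}$ have type height $<d$ (so they cannot be $\symg$-abstractions of degree $d$), whereas you reach the same conclusion by first noting this for $\tmtwo$ and $\tmthree$ and then invoking the induction hypothesis; both are valid.
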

\begin{proof}
By induction on $\tm$:
\begin{enumerate}
\item
  $\tm = \var$:
  Then $\simpd{\tm} = \var$ is not a $\symg$-abstraction of degree $d$.
\item
  $\tm = \lam{\vartwo}{\tmtwo}$:
  Note that $\tm$ is an abstraction but, by hypothesis, it cannot be
  of degree $d$.
  By the fact that a term reduces to its simplification~(\rlem{tm_reduces_to_simpk})
  and by the substitution lemma~(\rlem{typing_substitution_lemma})
  we know that $\typeof{\simpd{\tm}}$ is not of degree $d$,
  so in particular it cannot be a $\symg$-abstraction of degree $d$.
\item
  $\tm = \app{(\lam{\vartwo}{\tmtwo})\sctx}{\tmthree}$,
  where $(\lam{\vartwo}{\tmtwo})\sctx$ is a $\symg$-abstraction of degree $d$:
  Then
  $\simpd{\tm}
   = \simpd{\tmtwo}\sub{\var}{\simpd{\tmthree}}\garb{\simpd{\tmthree}}\simpd{\sctx}$.
  To show that this term is not a $\symg$-abstraction of degree $d$,
  it suffices to show that $\simpd{\tmtwo}\sub{\var}{\simpd{\tmthree}}$
  is not a $\symg$-abstraction of degree $d$.
  Note that the abstraction $\lam{\vartwo}{\tmtwo}$
  is of type $\typ\to\typtwo$
  where $\typeof{\tmtwo} = \typtwo$ and $\typeof{\tmthree} = \typ$.
  In particular, since the abstraction $\lam{\vartwo}{\tmtwo}$
  is of degree $d$, we have that $\height{\typ\to\typtwo} = d$.
  Furthermore,
  by the fact that a term reduces to its simplification~(\rlem{tm_reduces_to_simpk})
  and by subject reduction~(\rprop{subject_reduction}),
  we know that
  $\height{\typeof{\simpd{\tmtwo}}}
   = \height{\typeof{\tmtwo}}
   = \height{\typtwo} < d$
  and
  $\height{\typeof{\simpd{\tmthree}}}
  = \height{\typeof{\tmthree}}
  = \height{\typ} < d$.
  In particular, $\simpd{\tmtwo}$ and $\simpd{\tmthree}$
  cannot be $\symg$-abstractions of degree $d$.
  Finally
  by~\rlem{substitution_of_tod_normal_forms}(\ref{substitution_of_tod_normal_forms:abstractions})
  this means that $\simpd{\tmtwo}\sub{\var}{\simpd{\tmthree}}$
  is not a $\symg$-abstraction of degree $d$,
  as required.
\item
  $\tm = \app{\tmtwo}{\tmthree}$,
  where $\tmtwo$ is not a $\symg$-abstraction of degree $d$:
  Then
  $\simpd{\tm} = \app{\simpd{\tmtwo}}{\simpd{\tmthree}}$
  is an application, hence not a $\symg$-abstraction of degree $d$.
\item
  $\tm = \bin{\tmtwo}{\tmthree}$:
  Since $\tm$ is not a $\symg$-abstraction of degree $d$,
  we know that $\tmtwo$ is also not a $\symg$-abstraction of degree $d$.
  By \ih,
  $\simpd{\tmtwo}$ is not a $\symg$-abstraction of degree $d$,
  so
  $\simpd{\tm} = \bin{\simpd{\tmtwo}}{\simpd{\tmthree}}$
  is not a $\symg$-abstraction of degree $d$.
\end{enumerate}
\end{proof}

\begin{lemma}[The simplification of a term is normal, by degrees]
\llem{simpk_is_normal_by_degrees}
$\simpd{\tm}$ is in $\tod{d}$-normal form.
\end{lemma}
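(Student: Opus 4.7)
The plan is to proceed by structural induction on $\tm$, simultaneously generalizing the statement to memories (so that $\simpd{\sctx}$ is also in $\tod{d}$-normal form). The variable and wrapper cases are immediate from the induction hypothesis, since neither a variable nor a wrapper on normal components can have a redex at the root. The abstraction case $\tm = \lam{\var}{\tmtwo}$ reduces to the induction hypothesis on $\tmtwo$. The memory cases ($\sctx = \ctxhole$ and $\sctx = \sctx'\garb{\tmtwo}$) are similarly straightforward.

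The substantive cases are the two application cases. If $\tm = \tmtwo\,\tmthree$ with $\tmtwo$ not a $\symg$-abstraction of degree $d$, then $\simpd{\tm} = \simpd{\tmtwo}\,\simpd{\tmthree}$; by the induction hypothesis both factors are normal, and I would invoke \rlem{simpk_does_not_create_abstractions_by_degrees} to conclude that $\simpd{\tmtwo}$ is still not a $\symg$-abstraction of degree $d$, so no new redex of degree $d$ is formed at the root.

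The delicate case is when $\tm = (\lam{\var}{\tmtwo})\sctx\,\tmthree$ and $(\lam{\var}{\tmtwo})\sctx$ is a $\symg$-abstraction of degree $d$, so that
\[
  \simpd{\tm} =
  \simpd{\tmtwo}\sub{\var}{\simpd{\tmthree}}\garb{\simpd{\tmthree}}\simpd{\sctx}.
\]
By the induction hypothesis $\simpd{\tmtwo}$, $\simpd{\tmthree}$, $\simpd{\sctx}$ are all in $\tod{d}$-normal form. The main point is to argue that $\simpd{\tmthree}$ is not a $\symg$-abstraction of degree $d$: since the redex has degree $d$, the type of $\tmthree$ has height strictly less than $d$, and subject reduction together with \rlem{tm_reduces_to_simpk} yield that $\simpd{\tmthree}$ has a type of height less than $d$, so it cannot be a $\symg$-abstraction of degree $d$. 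Then \rlem{substitution_of_tod_normal_forms}~(\ref{substitution_of_tod_normal_forms:preservation}) gives that $\simpd{\tmtwo}\sub{\var}{\simpd{\tmthree}}$ is in $\tod{d}$-normal form, and appending the memory $\garb{\simpd{\tmthree}}\simpd{\sctx}$ preserves normality by the remark stating that $\tm\sctx$ is normal iff both $\tm$ and $\sctx$ are normal.

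The main obstacle I anticipate is precisely this control of types and $\symg$-abstractions of degree $d$ when performing the substitution: one must invoke both parts of \rlem{substitution_of_tod_normal_forms} (to know that the substituted argument is not a $\symg$-abstraction of degree $d$ and that the substitution then preserves normality), and then combine with the memory remark to close the case cleanly. All remaining subtleties are handled by the prior lemmas already proved in the appendix.
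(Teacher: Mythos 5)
Your proposal is correct and follows essentially the same route as the paper's proof: structural induction on $\tm$ generalized to memories, with the non-redex application case closed by \rlem{simpk_does_not_create_abstractions_by_degrees} and the degree-$d$ redex case closed by the type-height argument on $\simpd{\tmthree}$ (via \rlem{tm_reduces_to_simpk} and subject reduction) followed by \rlem{substitution_of_tod_normal_forms}~(\ref{substitution_of_tod_normal_forms:preservation}). The only cosmetic slip is attributing the fact that $\simpd{\tmthree}$ is not a $\symg$-abstraction of degree $d$ to the substitution lemma rather than to the type-height argument you had already given, which does the actual work.
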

\begin{proof}
By induction on $\tm$,
generalizing the statement also for memories,
\ie showing that $\simpd{\sctx}$ is in $\tod{d}$-normal form:
\begin{enumerate}
\item
  $\tm = \var$:
  Then $\simpd{\tm} = \var$ is in $\tod{d}$-normal form.
\item
  $\tm = \lam{\var}{\tmtwo}$:
  Then $\simpd{\tm} = \lam{\var}{\simpd{\tmtwo}}$
  is in $\tod{d}$-normal form
  because $\simpd{\tmtwo}$ is in $\tod{d}$-normal form by \ih.
\item
  $\tm = \app{(\lam{\var}{\tmtwo})\sctx}{\tmthree}$,
  where $(\lam{\var}{\tmtwo})\sctx$ is a $\symg$-abstraction of degree $d$:
  Then
  $\simpd{\tm}
  = \simpd{\tmtwo}\sub{\var}{\simpd{\tmthree}}
      \garb{\simpd{\tmthree}}
      \simpd{\sctx}$.
  Note that, by \ih,
  $\simpd{\tmtwo}$, $\simpd{\tmthree}$, and $\simpd{\sctx}$
  are in $\tod{d}$-normal form.

  Since $(\lam{\var}{\tmtwo})\sctx$ is an abstraction of degree $d$,
  we know that $\height{\typeof{\tmthree}} < d$.
  Hence
  by the fact that a term reduces to its simplification~(\rlem{tm_reduces_to_simpk})
  and by subject reduction~(\rprop{subject_reduction})
  we know that $\height{\typeof{\simpd{\tmthree}}} < d$.
  In particular, $\simpd{\tmthree}$ is not an abstraction of degree $d$.
  This allows us to apply~\rlem{substitution_of_tod_normal_forms}(\ref{substitution_of_tod_normal_forms:preservation})
  to conclude that $\simpd{\tmtwo}\sub{\var}{\simpd{\tmthree}}$
  is in $\tod{d}$-normal form.
  This lets us conclude that
  $\simpd{\tmtwo}\sub{\var}{\simpd{\tmthree}}
     \garb{\simpd{\tmthree}}
     \simpd{\sctx}$
  is in $\tod{d}$-normal form.
\item
  $\tm = \app{\tmtwo}{\tmthree}$,
  where $\tmtwo$ is not a $\symg$-abstraction of degree $d$.
  Then $\simpd{\tm} = \app{\simpd{\tmtwo}}{\simpd{\tmthree}}$,
  where by \ih we have that $\simpd{\tmtwo}$ and $\simpd{\tmthree}$
  are in $\tod{d}$-normal form,
  and by \rlem{simpk_does_not_create_abstractions_by_degrees}
  we have that
  $\simpd{\tmtwo}$ is not a $\symg$-abstraction of degree $d$.
  Hence $\simpd{\tm}$ is in $\tod{d}$-normal form.
\item
  $\tm = \bin{\tmtwo}{\tmthree}$:
  Then $\simpd{\tm} = \bin{\simpd{\tmtwo}}{\simpd{\tmthree}}$
  and we conclude by \ih.
\item
  $\sctx = \ctxhole$:
  Immediate, as $\simpd{\ctxhole} = \ctxhole$ is in $\tod{d}$-normal form.
\item
  $\sctx = \bin{\sctx}{\tm}$:
  Then $\simpd{\sctx} = \bin{\simpd{\sctx}}{\simpd{\tm}}$
  and we conclude by \ih.
\end{enumerate}
\end{proof}

\begin{lemma}[Reduction does not create redexes of higher degree]
\llem{reduction_does_not_create_higher_degree_redexes}
Let $d \leq D$ and suppose that $\tm \tod{d} \tmtwo$.
\begin{enumerate}
\item
  \label{reduction_does_not_create_higher_degree_redexes:abstractions}
  If $\tm$ is not a $\symg$-abstraction of degree $D$,
  then $\tmtwo$ is not a $\symg$-abstraction of degree $D$.
\item
  \label{reduction_does_not_create_higher_degree_redexes:preservation}
  If $\tm$ is in $\tod{D}$-normal form,
  then $\tmtwo$ is also in $\tod{D}$-normal form.
\end{enumerate}
\end{lemma}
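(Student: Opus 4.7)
The plan is to establish the two items in succession, with item~1 proved first and then used as a lemma for item~2. Both items yield to an induction on the position of the contracted redex inside $\tm$, and item~2 also relies on the already-proved~\rlem{substitution_of_tod_normal_forms} (instantiated with its parameter $d$ set to our $D$).

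For item~1, I would analyse where the contracted redex sits inside $\tm$. Internal steps preserve the kind of the outermost constructor of $\tm$: the variable case is vacuous, an abstraction reduces to an abstraction and an application to an application (neither of which can turn into a $\symg$-abstraction of a different degree), and for a wrapper $\bin{\tm_1}{\tm_2}$ I would appeal to~\rremark{bin_is_symg_abstraction}: a step internal to $\tm_2$ leaves the body $\tm_1$ unchanged, while a step internal to $\tm_1$ is handled by the induction hypothesis on $\tm_1 \tod{d} \tm'_1$. The only substantial case is a root contraction of an application: $\tm = (\lam{\var}{\tm'})\sctx\,\tmthree \tod{d} \tm'\sub{\var}{\tmthree}\garb{\tmthree}\sctx = \tmtwo$. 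Here the contracted $\symg$-abstraction has type $\typ\to\typtwo$ of height $d \leq D$, so $\height{\typtwo} < d \leq D$; by subject reduction~(\rprop{subject_reduction}), $\tmtwo$ has type $\typtwo$, so if it were a $\symg$-abstraction its degree would be $\height{\typtwo} < D$, ruling out degree $D$.

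For item~2, I would again inspect where the step sits. If $\tm$ is an abstraction or a wrapper, or if $\tm = \tm_1\,\tm_2$ with the step internal to the argument $\tm_2$, the induction hypothesis suffices directly. If $\tm = \tm_1\,\tm_2$ and the step is internal to the function, say $\tm_1 \tod{d} \tm'_1$, the induction hypothesis gives $\tm'_1$ in $\tod{D}$-normal form; moreover, since $\tm$ was in $\tod{D}$-normal form, $\tm_1$ was not a $\symg$-abstraction of degree $D$, so item~1 ensures that neither is $\tm'_1$, making $\tm'_1\,\tm_2$ a $\tod{D}$-normal form. Finally, if the step is at the root of an application, $\tm = (\lam{\var}{\tm'})\sctx\,\tmthree \tod{d} \tm'\sub{\var}{\tmthree}\garb{\tmthree}\sctx$, all of $\tm'$, $\sctx$, $\tmthree$ are already in $\tod{D}$-normal form, and $\height{\typeof{\tmthree}} < d \leq D$ rules out $\tmthree$ being a $\symg$-abstraction of degree $D$. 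Instantiating~\rlem{substitution_of_tod_normal_forms}, part~(\ref{substitution_of_tod_normal_forms:preservation}), with its parameter set to $D$ then yields $\tm'\sub{\var}{\tmthree}$ in $\tod{D}$-normal form, and appending the memory $\garb{\tmthree}\sctx$ preserves this.

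The main obstacle will be handling item~1 cleanly: the wrapper case rests on the observation that $\bin{\tm_1}{\tm_2}$ has the same type as $\tm_1$, so that being a $\symg$-abstraction of degree $D$ transfers between the two, and the root case works only because the height of the target type is strictly less than the degree of the contracted redex --- this is precisely where the hypothesis $d \leq D$ is used. Once item~1 is available, item~2 becomes essentially mechanical, driven by the substitution lemma and a single invocation of item~1 in the function-internal subcase.
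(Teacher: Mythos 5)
Your proof is correct and follows essentially the same structure as the paper's: induction on the position of the contracted redex, with item~1 established first and then invoked in the function-internal subcase of item~2, and with \rlem{substitution_of_tod_normal_forms} (instantiated at $D$) doing the work in the root case of item~2. The one place you genuinely diverge is the root case of item~1: the paper shows that neither the body nor the argument of the contracted redex is a $\symg$-abstraction of degree $D$ and then applies part~(\ref{substitution_of_tod_normal_forms:abstractions}) of \rlem{substitution_of_tod_normal_forms} to the substitution, whereas you argue directly that the reduct has type $\typtwo$ with $\height{\typtwo} < d \leq D$ by subject reduction, so it cannot be a $\symg$-abstraction of degree $D$ regardless of its shape. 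Your variant is sound (the degree of a $\symg$-abstraction is determined by its type) and slightly more economical, since it dispenses with one invocation of the auxiliary substitution lemma in item~1; the paper's route is more uniform with the structure of item~2, where the type-based shortcut is unavailable and the substitution lemma is genuinely needed.
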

\begin{proof}
We prove the two items independently:
\begin{enumerate}
\item
  By induction on $\tm$:
  \begin{enumerate}
  \item
    $\tm = \var$:
    This case is impossible, as there are no reduction steps $\tm \tod{d} \tmtwo$.
  \item
    $\tm = \lam{\var}{\tm'}$:
    Note that $\tm$ is a $\symg$-abstraction, so by \ih it cannot be of
    degree $D$, that is, $\height{\typeof{\tm}} \neq D$.
    By subject reduction~(\rprop{subject_reduction})
    we have that $\height{\typeof{\tmtwo}} = \height{\typeof{\tm}} \neq D$,
    so $\tmtwo$ cannot be a $\symg$-abstraction of degree $D$.
  \item
    $\tm = \tm_1\,\tm_2$:
    We consider three subcases, depending on whether the reduction is
    at the root, internal to $\tm_1$, internal to $\tm_2$:
    \begin{enumerate}
    \item
      If the reduction is at the root:
      Then $\tm_1 = (\lam{\var}{\tm'_1})\sctx$ is an abstraction of degree $d$,
      and the step is of the form
      $\tm
       = (\lam{\var}{\tm'_1})\sctx\,\tm_2
       \tod{d}
         \tm'_1\sub{\var}{\tm_2}\garb{\tm_2}\sctx
       = \tm'$.
      Note that $\lam{\var}{\tm'_1}$ is an abstraction of degree $d$,
      so its type is of the form $\typ \to \typtwo$ with
      $\height{\typ \to \typtwo} = d$.
      The type of the body of the abstraction
      is $\typeof{\tm'_1} = \typtwo$,
      so $\height{\typeof{\tm'_1}} = \height{\typtwo} < d \leq D$,
      and the type of the argument of the abstraction
      is $\typeof{\tm_2} = \typ$,
      so $\height{\typeof{\tm_2}} = \height{\typ} < d \leq D$.
      This means that $\tm'_1$ and $\tm_2$ cannot be
      $\symg$-abstractions of degree $D$.
      Hence by \rlem{substitution_of_tod_normal_forms}(\ref{substitution_of_tod_normal_forms:abstractions})
      we have that $\tm'_1\sub{\var}{\tm_2}$
      is not a $\symg$-abstractions of degree $D$.
      From this we conclude that $\tm' = \tm'_1\sub{\var}{\tm_2}\garb{\tm_2}\sctx$
      is not a $\symg$-abstractions of degree $D$.
    \item
      If the reduction is internal to $\tm_1$:
      Then the step is of the form
      $\tm = \tm_1\,\tm_2 \tod{d} \tmtwo_1\,\tm_2 = \tmtwo$
      with $\tm_1 \tod{d} \tmtwo_1$.
      Note that $\tmtwo$ is an application,
      and hence not a $\symg$-abstraction of degree $D$.
    \item
      If the reduction is internal to $\tm_2$:
      Then the step is of the form
      $\tm = \tm_1\,\tm_2 \tod{d} \tm_1\,\tmtwo_2 = \tmtwo$
      with $\tm_2 \tod{d} \tmtwo_2$.
      Note that $\tmtwo$ is an application,
      and hence not a $\symg$-abstraction of degree $D$.
    \end{enumerate}
  \item
    $\tm = \bin{\tm_1}{\tm_2}$:
    We consider two subcases, depending on whether the reduction 
    is internal to $\tm_1$ or internal to $\tm_2$:
    \begin{enumerate}
    \item
      If the reduction is internal to $\tm_1$:
      Then the step is of the form
      $\tm = \bin{\tm_1}{\tm_2} \tod{d} \bin{\tmtwo_1}{\tm_2} = \tmtwo$
      with $\tm_1 \tod{d} \tmtwo_1$.
      By hypothesis $\tm$ is not a $\symg$-abstraction of degree $D$,
      so $\tm_1$ is also not a $\symg$-abstraction of degree $D$.
      By \ih $\tmtwo_1$ is not a $\symg$-abstraction of degree $D$,
      so we conclude that
      $\tmtwo = \bin{\tmtwo_1}{\tm_2}$
      is not a $\symg$-abstraction of degree $D$.
    \item
      If the reduction is internal to $\tm_2$:
      Then the step is of the form
      $\tm = \bin{\tm_1}{\tm_2} \tod{d} \bin{\tm_1}{\tmtwo_2} = \tmtwo$
      with $\tm_2 \tod{d} \tmtwo_2$.
      By hypothesis $\tm$ is not a $\symg$-abstraction of degree $D$,
      so $\tm_1$ is also not a $\symg$-abstraction of degree $D$.
      Hence $\tmtwo = \bin{\tm_1}{\tmtwo_2}$
      is not a $\symg$-abstraction of degree $D$.
    \end{enumerate}
  \end{enumerate}
\item
  By induction on $\tm$:
  \begin{enumerate}
  \item
    $\tm = \var$:
    This case is impossible,
    as there are no reduction steps $\tm \tod{d} \tmtwo$.
  \item
    $\tm = \lam{\var}{\tm'}$:
    Straightforward resorting to the \ih.
  \item
    $\tm = \tm_1\,\tm_2$:
    We consider three subcases, depending on whether the reduction is
    at the root, internal to $\tm_1$, internal to $\tm_2$:
    \begin{enumerate}
    \item
      If the reduction is at the root:
      Then $\tm_1 = (\lam{\var}{\tm'_1})\sctx$ is an abstraction of degree $d$,
      and the step is of the form
      $\tm
       = (\lam{\var}{\tm'_1})\sctx\,\tm_2
       \tod{d}
         \tm'_1\sub{\var}{\tm_2}\garb{\tm_2}\sctx
       = \tmtwo$.
      Note that by hypothesis, $\tm = (\lam{\var}{\tm'_1})\sctx\,\tm_2$
      is in $\tod{D}$-normal form,
      which means in particular that $\tm'_1$, $\sctx$ and $\tm_2$
      are in $\tod{D}$-normal form.
      Moreover, since $\lam{\var}{\tm'_1}$ is an abstraction of degree $d$,
      its type is of the form $\typ\to\typtwo$ with
      $\height{\typ\to\typtwo} = d$.
      Moreover, its argument $\tm_2$ is such that
      $\typeof{\tm_2} = \typ$,
      so $\height{\typeof{\tm_2}} = \height{\typ} < d \leq D$.
      In particular, $\tm_2$ cannot be an abstraction of degree $D$.
      By~\rlem{substitution_of_tod_normal_forms}
      this implies that $\tm'_1\sub{\var}{\tm_2}$ is in $\tod{D}$-normal form.
      Finally, this means that $\tm'_1\sub{\var}{\tm_2}\garb{\tm_2}\sctx$
      must also be in $\tod{D}$-normal form.
    \item
      If the reduction is internal to $\tm_1$:
      Then the step is of the form
      $\tm = \tm_1\,\tm_2 \tod{d} \tmtwo_1\,\tm_2 = \tmtwo$.
      By hypothesis $\tm = \tm_1\,\tm_2$ is in $\tod{D}$-normal form,
      so we kwow that $\tm_1$ and $\tm_2$ must be in $\tod{D}$-normal form
      and, moreover, that $\tm_1$ is not a $\symg$-abstraction of degree $D$. 
      By \ih, we have that $\tmtwo_1$ is a $\tod{D}$-normal form.
      Moreover, by item~\ref{reduction_does_not_create_higher_degree_redexes:abstractions}
      of this lemma, we have that $\tmtwo_1$ is not a $\symg$-abstraction
      of degree $D$.
      Hence we conclude that $\tmtwo = \tmtwo_1\,\tm_2$
      is in $\tod{D}$-normal form.
    \item
      If the reduction is internal to $\tm_2$:
      Then the step is of the form
      $\tm = \tm_1\,\tm_2 \tod{d} \tm_1\,\tmtwo_2 = \tmtwo$.
      By hypothesis $\tm = \tm_1\,\tm_2$ is in $\tod{D}$-normal form,
      so we kwow that $\tm_1$ and $\tm_2$ must be in $\tod{D}$-normal form
      and, moreover, that $\tm_1$ is not a $\symg$-abstraction of degree $D$. 
      By \ih, we have that $\tmtwo_2$ is a $\tod{D}$-normal form.
      Hence we conclude that $\tmtwo = \tmtwo_1\,\tm_2$
      is in $\tod{D}$-normal form.
    \end{enumerate}
  \item
    $\tm = \bin{\tm_1}{\tm_2}$:
    Straightforward resorting to the \ih.
  \end{enumerate}
\end{enumerate}
\end{proof}

\begin{proposition}[Lifting property for lower steps]
\lprop{appendix:retraction_of_higher_degree_steps}
Let $d < D$ and suppose that $\tm \tod{d} \tmtwo \rtod{D} \tmtwo'$.
Then there exist a term $\tm'$ and a term $\tmtwo''$
such that $\tm \rtod{D} \tm'$
and $\tmtwo' \rtod{D} \tmtwo''$
and $\tm' \todplus{d} \tmtwo''$ in at least one step.
Graphically:
\[
  \xymatrix@C=.5cm@R=.5cm{
    \tm \ar^-{d}[r]
        \ar@{.>>}_{D}[dd]
  &
    \tmtwo \ar@{->>}^{D}[d]
  \\
  &
    \tmtwo' \ar@{.>>}^{D}[d]
  \\
    \tm' \ar@{.>>}^{d}[r]
  &
    \tmtwo''
  }
\]
\end{proposition}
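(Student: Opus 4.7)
The plan is to use the full simplification $\simp{D}{\tm}$ as the witness for $\tm'$ and then apply the commutation proposition twice, exploiting the fact that $\simp{D}{\tm}$ is already in $\tod{D}$-normal form so no further $\tod{D}$ work remains.

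More concretely, I would first invoke the fact (see \rlem{tm_reduces_to_simpk_by_degrees} / \rprop{simplification_is_normalization}) that $\tm \rtod{D} \simp{D}{\tm}$ and that $\simp{D}{\tm}$ is a $\tod{D}$-normal form, so I take $\tm' \eqdef \simp{D}{\tm}$. Now I apply the commutation proposition \rprop{commutation_by_degrees} to the pair $\redex : \tm \tod{d} \tmtwo$ and $\redseqtwo : \tm \rtod{D} \simp{D}{\tm}$: since $d \neq D$, item~\Item{1} of that proposition guarantees that the projection $\redex/\redseqtwo$ contains at least as many steps as $\redex$ itself, i.e.\ at least one $\tod{d}$ step. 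This produces a term $\tmthree$ with $\tmtwo \rtod{D} \tmthree$ and $\simp{D}{\tm} \todplus{d} \tmthree$.

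Next I apply \rprop{commutation_by_degrees} a second time, this time to the two reductions $\tmtwo \rtod{D} \tmthree$ and $\tmtwo \rtod{D} \tmtwo'$ obtained so far (both of degree $D$); this yields a term $\tmtwo''$ such that $\tmthree \rtod{D} \tmtwo''$ and $\tmtwo' \rtod{D} \tmtwo''$. To finish, I need to show $\tmthree = \tmtwo''$, so that the composition $\tm' = \simp{D}{\tm} \todplus{d} \tmthree = \tmtwo''$ gives the required bottom edge. For this I observe that $\simp{D}{\tm}$ is in $\tod{D}$-normal form, and that the reduction $\simp{D}{\tm} \todplus{d} \tmthree$ consists of $\tod{d}$-steps with $d < D$; since reduction of degree $d$ cannot create redexes of higher degree (\rlem{reduction_does_not_create_higher_degree_redexes}), the term $\tmthree$ is still in $\tod{D}$-normal form, which forces $\tmthree \rtod{D} \tmtwo''$ to be empty and therefore $\tmthree = \tmtwo''$.

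The main obstacle I anticipate is not the construction itself, which is essentially a standard tiling argument, but rather making sure that the constructed $\tod{d}$ reduction is non-empty: this is exactly why item~\Item{1} of \rprop{commutation_by_degrees} (the ``at least as many steps'' property, which ultimately relies on the non-erasing nature of the $\lambdaG$-calculus) must be cited explicitly. The appeal to preservation of $\tod{D}$-normal forms under lower-degree reduction is where Turing's observation that contracting a redex cannot create redexes of higher degree is used, in the form adapted to the $\lambdaG$-calculus.
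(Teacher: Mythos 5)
Your proposal is correct and follows essentially the same route as the paper's proof: take $\tm' := \simp{D}{\tm}$, apply the commutation proposition twice, and conclude $\tmthree = \tmtwo''$ from the fact that $\simp{D}{\tm}$ is a $\tod{D}$-normal form and lower-degree reduction preserves $\tod{D}$-normal forms. Your explicit appeal to item~\Item{1} of the commutation proposition to guarantee non-emptiness of the projected $\tod{d}$-reduction is exactly the point the paper relies on as well.
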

\begin{proof}
Take $\tm' := \simp{D}{\tm}$.
By the fact that a term reduces
to its simplification~(\rlem{tm_reduces_to_simpk_by_degrees})
we have that $\tm \rtod{D} \simp{D}{\tm}$.
Appyling commutation~(\rprop{commutation_by_degrees})
on the reduction sequences
$\tm \tod{d} \tmtwo$
and
$\tm \rtod{d} \simp{D}{\tm}$,
we have that
there exists a term $\tmthree$ such that
$\tmtwo \rtod{D} \tmthree$
and
$\simp{D}{\tm} \todplus{d} \tmthree$ in at least one step.
Applying the commutation theorem again,
this time on the reduction sequences
$\tmtwo \rtod{D} \tmthree$
and
$\tmtwo \rtod{D} \tmtwo'$
we have that there exists a term $\tmtwo''$
such that $\tmthree \rtod{D} \tmtwo''$
and $\tmtwo' \rtod{D} \tmtwo''$.
The situation is:
\[
  \xymatrix@C=.5cm@R=.5cm{
    \tm \ar^-{d}[rr]
        \ar@{->>}_{D}[dd]
  &
  &
    \tmtwo \ar@{->>}^{D}[d]
           \ar@{->>}_{D}[ldd]
  \\
  &&
    \tmtwo' \ar@{->>}^{D}[d]
  \\
    \simp{D}{\tm} \ar@{->>}^{d}[r]
  &
    \tmthree \ar@{->>}^{D}[r]
  &
    \tmtwo''
  }
\]
By~\rlem{simpk_is_normal_by_degrees}
we know that $\simp{D}{\tm}$ is in $\tod{D}$-normal form,
and since $\simp{D}{\tm} \rtod{d} \tmthree$ with $d < D$,
by~\rlem{reduction_does_not_create_higher_degree_redexes}~(\ref{reduction_does_not_create_higher_degree_redexes:preservation})
we have that $\tmthree$ is in $\tod{D}$-normal form,
so $\tmthree = \tmtwo''$, which concludes the proof.
\end{proof}

\begin{lemma}[Local postponement of forgetful reduction]
\llem{tog_shone_reverse_commutation}
If $\redex : \tm \shone \tmtwo$ is a forgetful step and
$\redextwo : \tmtwo \tod{d} \tmtwo'$ is a reduction step of degree $d$,
there exists a term $\tm'$,
a forgetful reduction $\protract{\redex}{\redextwo} : \tm' \shone^* \tmtwo'$
and a step $\retract{\redextwo}{\redex} : \tm \tod{d} \tm'$.
Graphically:
\[
  \xymatrix@C=.1cm@R=.5cm{
    \tm \ar_{d}@{.>}[d] & \shone & \tmtwo \ar^{d}[d] \\
    \tm' & \shone^* & \tmtwo' \\
  }
\]
Furthermore,
the step $\retract{\redextwo}{\redex}$ determines the step $\redextwo$.
More precisely,
if $\retract{\redextwo}{\redex} = \retract{\redexthree}{\redex}$
then $\redextwo = \redexthree$.
\end{lemma}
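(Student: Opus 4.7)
My plan is to proceed by structural induction on $\tm$ (equivalently, on the context in which $\redex$ acts), doing a case analysis on the relative positions of the forgetful step $\redex : \tm \shone \tmtwo$ and the reduction step $\redextwo : \tmtwo \tod{d} \tmtwo'$. If the two redexes lie in disjoint subterms of $\tm$ (and, in particular, if $\redex$ happens inside the erased subterm of a wrapper that is not touched by $\redextwo$, or vice versa), they permute trivially: contract the $\tog$-redex first at the corresponding position in $\tm$, then perform the single forgetful step, closing the diagram with $\shone$ in exactly one step.

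The genuinely interesting cases arise when $\redextwo$ fires at an applied $\symg$-abstraction whose structure is disturbed by $\redex$. Following the hint in the proof sketch, the main critical-pair situation is $\tm = \gctx[(\lam{\var}{\tmthree})\sctx_1\garb{\tmfour}\sctx_2\,\tmfive]$, where $\redex$ erases $\tmfour$ to expose the $\symg$-redex $(\lam{\var}{\tmthree})\sctx_1\sctx_2\,\tmfive$ contracted by $\redextwo$. Here the key observation is that $(\lam{\var}{\tmthree})\sctx_1\garb{\tmfour}\sctx_2$ is already a $\symg$-abstraction of the same degree $d$ as $(\lam{\var}{\tmthree})\sctx_1\sctx_2$, because the notion of $\symg$-abstraction allows arbitrary memory in between. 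Hence the retracted step $\retract{\redextwo}{\redex} : \tm \tod{d} \gctx[\tmthree\sub{\var}{\tmfive}\garb{\tmfive}\sctx_1\garb{\tmfour}\sctx_2]$ exists, and a single forgetful step $\protract{\redex}{\redextwo}$ closes the diagram.

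The remaining overlap cases are when $\redex$ lies strictly inside one of the components of a redex contracted by $\redextwo$: inside the body $\tmthree$, inside an already-present memorized term of $\sctx$, or inside the argument $\tmfive$. The first two of these are handled by items \Item{1}, \Item{2}, \Item{3} of \rlem{properties_of_shrinking}, using that $\tm' \shone \tmtwo$ implies $\tm'\sub{\var}{\tmfive} \shone \tmtwo\sub{\var}{\tmfive}$ and that $\shone$ propagates under memories. The case $\tmfive \shone \tmfive'$ is the one that forces us to take $\protract{\redex}{\redextwo}$ to be a $\shone^*$-sequence rather than a single step, because after firing the redex the argument appears both in the substituted body and in the fresh wrapper, possibly with several occurrences; the needed sequence of forgetful steps is supplied by item~\Item{4} of \rlem{properties_of_shrinking}. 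The purely congruent cases (when $\redextwo$ is under a $\lambda$, application, or wrapper, with $\redex$ lying in a sibling position) reduce immediately to the inductive hypothesis.

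For the furthermore clause, the point is that a forgetful step neither creates nor merges $\symg$-redex occurrences: it only deletes a memorized subterm. Thus the $\symg$-abstraction fired by $\redextwo$ in $\tmtwo$ has a canonically determined ancestor in $\tm$, which is precisely the $\symg$-abstraction fired by $\retract{\redextwo}{\redex}$. Given $\retract{\redextwo}{\redex}$, one reads off the position of its contracted redex in $\tm$, tracks it forward through $\redex$ to recover the position in $\tmtwo$, and hence recovers $\redextwo$ uniquely. The main obstacle in this proof is the bookkeeping in the case where the forgetful step lies inside the argument of the redex, which is why item~\Item{4} of \rlem{properties_of_shrinking} is essential; once that lemma is in place, every case closes with a small diagram.
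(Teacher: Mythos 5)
Your proposal is correct and follows essentially the same route as the paper's proof: a structural induction on $\tm$ with a case analysis on the relative positions of the two steps, the critical pair being the erasure of a memorized term sitting between the abstraction and its argument (resolved because $\symg$-abstractions tolerate arbitrary interposed memory), the argument-position case forcing $\shone^*$ via \rlem{properties_of_shrinking}, and the uniqueness clause justified by the fact that the lifted redex has a unique, position-traceable correspondent (the paper phrases this as the non-overlapping of the closing diagrams).
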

\begin{proof}
By induction on $\tm$:
\begin{enumerate}
\item
  $\tm = \var$:
  Impossible, as there are no reduction steps $\var \shone \tmtwo$.
\item
  $\tm = \lam{\var}{\tm_1}$:
  The steps must be of the form
  $\redex : \tm = \lam{\var}{\tm_1} \shone \lam{\var}{\tmtwo_1} = \tmtwo$
  with $\tm_1 \shone \tmtwo_1$,
  and
  $\redextwo : \tmtwo = \lam{\var}{\tmtwo_1} \tod{d} \lam{\var}{\tmtwo'_1} = \tmtwo'$
  with $\tmtwo_1 \tod{d} \tmtwo'_1$.
  By \ih we have the diagram on the left, so we can construct the one
  on the right:
  \[
  \xymatrix@C=.1cm@R=.5cm{
    \tm_1 \ar_{d}@{.>}[d] & \shone & \tmtwo_1 \ar^{d}[d] \\
    \tm'_1 & \shone^* & \tmtwo'_1 \\
  }
  \hspace{1cm}
  \xymatrix@C=.1cm@R=.5cm{
    \lam{\var}{\tm_1} \ar_{d}@{.>}[d] & \shone & \lam{\var}{\tmtwo_1} \ar^{d}[d] \\
    \lam{\var}{\tm'_1} & \shone^* & \lam{\var}{\tmtwo'_1} \\
  }
  \]
  By \ih, the step $\tm_1 \tod{d} \tm'_1$
  determines the step $\tmtwo_1 \tod{d} \tmtwo'_1$,
  which implies that
  the step $\lam{\var}{\tm_1} \tod{d} \lam{\var}{\tm'_1}$
  determines the step $\lam{\var}{\tmtwo_1} \tod{d} \lam{\var}{\tmtwo'_1}$.
\item
  If $\tm = \tm_1\,\tm_2$:
  We consider two subcases, depending on whether
  the step $\redex : \tm \shone \tmtwo$
  is internal to $\tm_1$ or internal to $\tm_2$.
  \begin{enumerate}
  \item
    If $\redex$ is internal to $\tm_1$,
    then $\redex : \tm = \tm_1\,\tm_2 \shone \tmtwo_1\,\tm_2 = \tmtwo$
    where $\tm_1 \shone \tmtwo_1$.
    We consider three further subcases, depending on whether
    the step $\redextwo : \tmtwo = \tmtwo_1\,\tm_2 \tod{d} \tmtwo'$
    is at the root, internal to $\tmtwo_1$, or internal to $\tm_2$:
    \begin{enumerate}
    \item
      If the $\redextwo$ is at the root of $\tmtwo = \tmtwo_1\,\tm_2$:
      Then $\tmtwo_1$
      is a $\symg$-abstraction of degree $d$,
      \ie of the form $\tmtwo_1 = (\lam{\var}{\tmtwo_{11}})\sctx$,
      and $\redextwo$ is of the form
      $\redextwo : (\lam{\var}{\tmtwo_{11}})\sctx\,\tm_2
                   \tod{d}
                   \tmtwo_{11}\sub{\var}{\tm_2}\garb{\tm_2}\sctx$.
      We consider three subcases, depending on the form of the step
      $\redex_1 : \tm_1 \shone (\lam{\var}{\tmtwo_{11}})\sctx$:
      \begin{enumerate}
      \item
        If $\redex_1$ is of the form
        $\tm_1 = (\lam{\var}{\tm_{11}})\sctx
                 \shone
                 (\lam{\var}{\tmtwo_{11}})\sctx
                 = \tmtwo_1$
        where $\tm_{11} \shone \tmtwo_{11}$,
        we can choose
        $\tm' := \tm_{11}\sub{\var}{\tm_2}\garb{\tm_2}\sctx$,
        according to the diagram:
        \[
        \xymatrix@C=.1cm@R=.5cm{
          (\lam{\var}{\tm_{11}})\sctx\,\tm_2
          \ar_{d}@{.>}[d]
        & \shone &
          (\lam{\var}{\tmtwo_{11}})\sctx\,\tm_2
          \ar^{d}[d]
        \\
          \tm_{11}\sub{\var}{\tm_2}\garb{\tm_2}\sctx
        & \shone &
          \tmtwo_{11}\sub{\var}{\tm_2}\garb{\tm_2}\sctx
        }
        \]
        Here we use the fact that
        $\tm_{11} \shone \tmtwo_{11}$
        implies
        $\tm_{11}\sub{\var}{\tm_2} \shone \tmtwo_{11}\sub{\var}{\tm_2}$,
        as stated in \rlem{properties_of_shrinking}.
      \item
        If $\redex_1$
        is of the form
        $\tm_1 = (\lam{\var}{\tmtwo_{11}})\sctx_1\garb{\tm_3}\sctx_2
                 \shone
                 (\lam{\var}{\tmtwo_{11}})\sctx_1\garb{\tmtwo_3}\sctx_2
                 = \tmtwo_1$
        with $\tm_3 \shone \tmtwo_3$,
        we can choose
        $\tm' := \tm_{11}\sub{\var}{\tm_2}\garb{\tm_2}\sctx_1\garb{\tm_3}\sctx_2$,
        according to the diagram:
        \[
        \xymatrix@C=.1cm@R=.5cm{
          (\lam{\var}{\tm_{11}})\sctx_1\garb{\tm_3}\sctx_2\,\tm_2
          \ar_{d}@{.>}[d]
        & \shone &
          (\lam{\var}{\tm_{11}})\sctx_1\garb{\tmtwo_3}\sctx_2\,\tm_2
          \ar^{d}[d]
        \\
          \tm_{11}\sub{\var}{\tm_2}\garb{\tm_2}\sctx_1\garb{\tm_3}\sctx_2
        & \shone &
          \tm_{11}\sub{\var}{\tm_2}\garb{\tm_2}\sctx_1\garb{\tmtwo_3}\sctx_2
        }
        \]
      \item
        If $\redex_1$
        is of the form
        $\tm_1 = (\lam{\var}{\tmtwo_{11}})\sctx_1\garb{\tm_3}\sctx_2
                 \shone
                 (\lam{\var}{\tmtwo_{11}})\sctx_1\sctx_2
                 = \tmtwo_1$,
        we can choose
        $\tm' := \tm_{11}\sub{\var}{\tm_2}\garb{\tm_2}\sctx_1\garb{\tm_3}\sctx_2$,
        according to the diagram:
        \[
        \xymatrix@C=.1cm@R=.5cm{
          (\lam{\var}{\tm_{11}})\sctx_1\garb{\tm_3}\sctx_2\,\tm_2
          \ar_{d}@{.>}[d]
        & \shone &
          (\lam{\var}{\tm_{11}})\sctx_1\sctx_2\,\tm_2
          \ar^{d}[d]
        \\
          \tm_{11}\sub{\var}{\tm_2}\garb{\tm_2}\sctx_1\garb{\tm_3}\sctx_2
        & \shone &
          \tm_{11}\sub{\var}{\tm_2}\garb{\tm_2}\sctx_1\sctx_2
        }
        \]
      \end{enumerate}
    \item
      If $\redextwo$ is internal to $\tmtwo_1$:
      Then $\redextwo$ must be of the form
      $\redextwo : \tmtwo = \tmtwo_1\,\tm_2 \tod{d} \tmtwo'_1\,\tm_2$
      with $\tmtwo_1 \tod{d} \tmtwo'_1$.
      By \ih we have the diagram on the left, so we can construct the one
      on the right:
      \[
      \xymatrix@C=.1cm@R=.5cm{
        \tm_1 \ar_{d}@{.>}[d] & \shone & \tmtwo_1 \ar^{d}[d] \\
        \tm'_1 & \shone^* & \tmtwo'_1 \\
      }
      \hspace{1cm}
      \xymatrix@C=.1cm@R=.5cm{
        \tm_1\,\tm_2 \ar_{d}@{.>}[d] & \shone & \tmtwo_1\,\tm_2 \ar^{d}[d] \\
        \tm'_1\,\tm_2 & \shone^* & \tmtwo'_1\,\tm_2 \\
      }
      \]
    \item
      If $\redextwo$ is internal to $\tm_2$:
      Then $\redextwo$ must be of the form
      $\redextwo : \tmtwo = \tmtwo_1\,\tm_2 \tod{d} \tmtwo_1\,\tm'_2$
      with $\tm_2 \tod{d} \tm'_2$.
      Then we can choose $\tm' := \tm_1\,\tm'_2$,
      according to the diagram:
      \[
      \xymatrix@C=.1cm@R=.5cm{
        \tm_1\,\tm_2
        \ar_{d}@{.>}[d]
      & \shone &
        \tmtwo_1\,\tm_2
        \ar^{d}[d]
      \\
        \tm_1\,\tm'_2
      & \shone &
        \tmtwo_1\,\tm'_2
      }
      \]
    \end{enumerate}
  \item
    If $\redex$ is internal to $\tm_2$,
    then $\redex : \tm = \tm_1\,\tm_2 \shone \tm_1\,\tmtwo_2 = \tmtwo$
    where $\tm_2 \shone \tmtwo_2$.
    We consider three further subcases, depending on whether
    the step $\redextwo$ is at the root,
    internal to $\tm_1$ or internal to $\tmtwo_2$:
    \begin{enumerate}
    \item
      If $\redextwo$ is at the root of $\tm_1\,\tmtwo_2$:
      Then $\tm_1$ is a $\symg$-abstraction of degree $d$,
      \ie of the form $\tm_1 = (\lam{\var}{\tm_{11}})\sctx$,
      and the step is of the form
      $\redextwo : \tmtwo
                   = (\lam{\var}{\tm_{11}})\sctx\,\tmtwo_2
                   \tod{d} \tm_{11}\sub{\var}{\tmtwo_2}\garb{\tmtwo_2}\sctx
                   = \tmtwo'$.
      Then we can choose
      $\tm' := \tm_{11}\sub{\var}{\tm_2}\garb{\tm_2}\sctx$,
      according to the diagram:
      \[
      \xymatrix@C=.1cm@R=.5cm{
        (\lam{\var}{\tm_{11}})\sctx\,\tm_2
        \ar_{d}@{.>}[d]
      & \shone &
        (\lam{\var}{\tm_{11}})\sctx\,\tmtwo_2
        \ar^{d}[d]
      \\
        \tm_{11}\sub{\var}{\tm_2}\garb{\tm_2}\sctx
      & \shone^* &
        \tm_{11}\sub{\var}{\tmtwo_2}\garb{\tmtwo_2}\sctx
      }
      \]
      Here we use the fact that
      $\tm_2 \shone \tmtwo_2$
      implies
      $\tm_{11}\sub{\var}{\tm_2} \shone^* \tm_{11}\sub{\var}{\tmtwo_2}$,
      as stated in \rlem{properties_of_shrinking}.
    \item
      If $\redextwo$ is internal to $\tm_1$:
      Then $\redextwo : \tmtwo = \tm_1\,\tmtwo_2 \tod{d} \tm'_1\,\tmtwo_2 = \tmtwo'$
      with $\tm_1 \tod{d} \tm'_1$,
      and we can choose $\tm' := \tm'_1\,\tm_2$,
      according to the diagram:
      \[
      \xymatrix@C=.1cm@R=.5cm{
        \tm_1\,\tm_2
        \ar_{d}@{.>}[d]
      & \shone &
        \tm_1\,\tmtwo_2
        \ar^{d}[d]
      \\
        \tm'_1\,\tm_2
      & \shone &
        \tm'_1\,\tmtwo_2
      }
      \]
    \item
      If $\redextwo$ is internal to $\tmtwo_2$:
      Then $\redextwo : \tm_1\,\tmtwo_2 \tod{d} \tm_1\,\tmtwo'_2$
      with $\tmtwo_2 \tod{d} \tmtwo'_2$.
      By \ih we have the diagram on the left, so we can construct the one
      on the right:
      \[
      \xymatrix@C=.1cm@R=.5cm{
        \tm_2 \ar_{d}@{.>}[d] & \shone & \tmtwo_2 \ar^{d}[d] \\
        \tm'_2 & \shone^* & \tmtwo'_2 \\
      }
      \hspace{1cm}
      \xymatrix@C=.1cm@R=.5cm{
        \tm_1\,\tm_2 \ar_{d}@{.>}[d] & \shone & \tm_1\,\tmtwo_2 \ar^{d}[d] \\
        \tm_1\,\tm'_2 & \shone^* & \tm_1\,\tmtwo'_2 \\
      }
      \]
    \end{enumerate}
  \end{enumerate}
  Furthermore, to see that the step
  $\retract{\redextwo}{\redex} : \tm \tod{d} \tm'$
  determines the step
  $\redextwo : \tm \tod{d} \tm'$,
  it suffices to note that there are no overlappings between the
  diagrams,
  \ie if the step $\redextwo$ and the step $\retract{\redextwo}{\redex}$
  are fixed, then no more than one of the cases above applies.
\item
  If $\tm = \bin{\tm_1}{\tm_2}$:
  We consider three subcases, depending on whether
  the step $\redex : \tm \shone \tmtwo$ is at the root,
  internal to $\tm_1$ or internal to $\tm_2$:
  \begin{enumerate}
  \item
    If the step $\redex$ is at the root:
    Then $\redex$ is of the form
    $\redex : \tm = \bin{\tm_1}{\tm_2} \shone \tm_1 = \tmtwo$
    and $\redextwo$ is of the form
    $\redextwo : \tmtwo \tod{d} \tmtwo'$.
    Then we can choose $\tm' := \bin{\tmtwo'}{\tm_2}$,
    according to the diagram:
    \[
    \xymatrix@C=.1cm@R=.5cm{
      \tmtwo\,\tm_2
      \ar_{d}@{.>}[d]
    & \shone &
      \tmtwo
      \ar^{d}[d]
    \\
      \bin{\tmtwo'}{\tm_2}
    & \shone &
      \tmtwo'\,\tm_2
    }
    \]
  \item
    If the step $\redex$ is internal to $\tm_1$:
    Then $\redex$ is of the form
    $\redex : \tm = \bin{\tm_1}{\tm_2} \shone \bin{\tmtwo_1}{\tm_2} = \tmtwo$
    with $\tm_1 \shone \tmtwo_1$.
    We consider two subcases, depending on whether the step $\redextwo$
    is internal to $\tmtwo_1$ or internal to $\tm_2$:
    \begin{enumerate}
    \item
      If $\redex$ is internal to $\tmtwo_1$:
      Then $\redex : \tmtwo
                     = \bin{\tmtwo_1}{\tm_2}
                     \tod{d} \bin{\tmtwo'_1}{\tm_2}
                     = \tmtwo'$
      with $\tmtwo_1 \tod{d} \tmtwo'_1$.
      By \ih we have the diagram on the left, so we can construct the one
      on the right:
      \[
      \xymatrix@C=.1cm@R=.5cm{
        \tm_1 \ar_{d}@{.>}[d] & \shone & \tmtwo_1 \ar^{d}[d] \\
        \tm'_1 & \shone^* & \tmtwo'_1 \\
      }
      \hspace{1cm}
      \xymatrix@C=.1cm@R=.5cm{
        \tm_1\garb{\tm_2} \ar_{d}@{.>}[d] & \shone & \tm_1\garb{\tmtwo_2} \ar^{d}[d] \\
        \tm_1\garb{\tm'_2} & \shone^* & \tm_1\garb{\tmtwo'_2} \\
      }
      \]
    \item
      If $\redex$ is internal to $\tm_2$:
      Then $\redex : \tmtwo
                     = \bin{\tmtwo_1}{\tm_2}
                     \tod{d} \bin{\tmtwo_1}{\tm'_2}
                     = \tmtwo'$
      with $\tm_2 \tod{d} \tm'_2$
      and we can choose $\tm' = \tm_1\garb{\tm'_2}$,
      according to the diagram:
      \[
      \xymatrix@C=.1cm@R=.5cm{
        \tm_1\garb{\tm_2}
        \ar_{d}@{.>}[d]
      & \shone &
        \tmtwo_1\garb{\tm_2}
        \ar^{d}[d]
      \\
        \tm_1\garb{\tm'_2}
      & \shone &
        \tmtwo_1\garb{\tm'_2}
      }
      \]
    \end{enumerate}
  \item
    If the step $\redex$ is internal to $\tm_2$:
    Symmetric to the previous case.
  \end{enumerate}
  Furthermore, to see that the step
  $\retract{\redextwo}{\redex} : \tm \tod{d} \tm'$
  determines the step
  $\redextwo : \tm \tod{d} \tm'$,
  it suffices to note that there are no overlappings between the
  diagrams,
  \ie if the step $\redextwo$ and the step $\retract{\redextwo}{\redex}$
  are fixed, then no more than one of the cases above applies.
\end{enumerate}
\end{proof}

\begin{proposition}[Postponement of forgetful reduction]
\lprop{appendix:retraction_before_shrinking}
Let $\redseq : \tm \shone^* \tm'$ be a forgetful reduction sequence and
let $\redseqtwo : \tm' \rtod{d} \tmtwo'$ be a reduction sequence of degree $d$.
Then there exist a term $\tmtwo$
and reduction sequences $\protract{\redseq}{\redseqtwo} : \tmtwo \shone^* \tmtwo'$
and $\retract{\redseqtwo}{\redseq} : \tm \rtod{d} \tmtwo$.
Graphically:
\[
  \xymatrix@C=.2cm@R=.5cm{
    \tm \ar@{.>>}_{d}[d] & \shone^* & \tm' \ar@{->>}^{d}[d] \\
    \tmtwo & \shone^* & \tmtwo' \\
  }
\]
Furthermore, $\retract{\redseqtwo}{\redseq}$ determines $\redseqtwo$,
that is,
More precisely,
$\retract{\redseqtwo_1}{\redseq} = \retract{\redseqtwo_2}{\redseq}$
then $\redseqtwo_1 = \redseqtwo_2$.
\end{proposition}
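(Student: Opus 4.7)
The plan is to lift the local postponement lemma~(\rlem{tog_shone_reverse_commutation}) to sequences on both sides, in two stages. The local lemma already handles one forgetful step against one reduction step, so the work is bookkeeping with inductively defined projections.

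First, I would establish an intermediate version: given $\redseq : \tm \shone^* \tm'$ and a single step $\redextwo : \tm' \tod{d} \tmtwo'$, one can construct a term $\tmtwo$, a single step $\retract{\redextwo}{\redseq} : \tm \tod{d} \tmtwo$, and a forgetful sequence $\protract{\redseq}{\redextwo} : \tmtwo \shone^* \tmtwo'$. The proof is by induction on the length of $\redseq$. The empty case is immediate. For $\redseq = \redex_1\,\redseq_2$ with $\redex_1 : \tm \shone \tm_1$, apply the \ih to $\redseq_2$ and $\redextwo$ to obtain $\retract{\redextwo}{\redseq_2} : \tm_1 \tod{d} \tm''$ and $\protract{\redseq_2}{\redextwo} : \tm'' \shone^* \tmtwo'$; then close the top square with~\rlem{tog_shone_reverse_commutation} applied to $\redex_1$ and $\retract{\redextwo}{\redseq_2}$, and concatenate the resulting forgetful tail with $\protract{\redseq_2}{\redextwo}$.

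Second, I would extend to a reduction sequence $\redseqtwo : \tm' \rtod{d} \tmtwo'$ of arbitrary length by induction on the number of steps. For $\redseqtwo = \redextwo_1\,\redseqtwo'$ where $\redextwo_1 : \tm' \tod{d} \tm_*$, first apply the intermediate lemma to $\redseq$ and $\redextwo_1$, obtaining a term $\tmtwo_1$, a step $\tm \tod{d} \tmtwo_1$, and a new forgetful sequence $\tmtwo_1 \shone^* \tm_*$; then invoke the \ih on this forgetful sequence against $\redseqtwo'$ to obtain $\tmtwo_1 \rtod{d} \tmtwo$ and $\tmtwo \shone^* \tmtwo'$, and prepend the single step $\tm \tod{d} \tmtwo_1$.

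For determinism, observe that each step of $\redseqtwo$ contributes exactly one step to $\retract{\redseqtwo}{\redseq}$, so the projection preserves length; hence $\retract{\redseqtwo_1}{\redseq} = \retract{\redseqtwo_2}{\redseq}$ forces $\redseqtwo_1$ and $\redseqtwo_2$ to have the same length. The intermediate lemma inherits determinism from~\rlem{tog_shone_reverse_commutation} by the induction on $\redseq$ above: each projected step is constructed by one invocation of the local lemma, which is deterministic. In the main induction, equality of projections forces equality of the first projected step, which by intermediate determinism forces $\redextwo_1 = \redextwo'_1$; the tails then coincide by the \ih applied to $\protract{\redseq}{\redextwo_1}$ and the remaining reductions. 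I do not foresee a substantial obstacle: the combinatorial content lives entirely in the local lemma, and what remains is a straightforward pasting of commutative squares.
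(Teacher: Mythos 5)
Your proposal is correct and follows essentially the same route as the paper's proof: an intermediate lemma lifting a single $\tod{d}$-step over a forgetful sequence by induction on that sequence (closing each square with the local postponement lemma), followed by an outer induction on the reduction sequence, with determinism propagated in the same three stages (local step, single step over a sequence, full sequences) using the fact that each reduction step projects to exactly one step.
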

\begin{proof}
First,
if $\redseq : \tm \shone^* \tm'$ is a forgetful reduction sequence and
and $\redextwo : \tm' \tod{d} \tmtwo'$ is a single step of degree $d$,
we can construct a forgetful reduction sequence $\protract{\redseq}{\redextwo}$
and a step $\retract{\redextwo}{\redseq}$ of degree $d$ 
by induction on $\redseq$
as follows, resorting to~\rlem{tog_shone_reverse_commutation}
for the constructions of $\protract{\redex}{(\retract{\redextwo}{\redseq'})}$
and $\retract{(\retract{\redextwo}{\redseq'})}{\redex}$:
\[
  \begin{array}{rcl}
    \protract{\emptyseq}{\redextwo}
  & \eqdef &
    \emptyseq
  \\
    \protract{(\redex\,\redseq')}{\redextwo}
  & \eqdef &
    (\protract{\redex}{(\retract{\redextwo}{\redseq'})})
    (\protract{\redseq'}{\redextwo})
  \\
  \end{array}
  \HS
  \begin{array}{rcl}
    \retract{\redextwo}{\emptyseq}
  & \eqdef &
    \redextwo
  \\
    \retract{\redextwo}{(\redex\,\redseq')}
  & \eqdef &
    \retract{(\retract{\redextwo}{\redseq'})}{\redex}
  \\
  \end{array}
\]
The inductive cases correspond to the following diagram:
\[
  \xymatrix@C=2cm{
    \ar^{\redex}[r]
    \ar@{->}_{\retract{(\retract{\redextwo}{\redseq'})}{\redex}}[d]
  & 
    \ar@{->>}^{\redseq'}[r]
    \ar@{->}_{\retract{\redextwo}{\redseq'}}[d]
  &
    \ar_{\redextwo}[d]
  \\
    \ar@{->>}_{\protract{\redex}{(\retract{\redextwo}{\redseq'})}}[r]
  &
    \ar@{->>}_{\protract{\redseq'}{\redextwo}}[r]
  &
  }
\]
For the general case, we proceed by induction on $\redseqtwo$,
resorting to the previous construction for the constructions of
$\protract{\redseq}{\redextwo}$ and $\retract{\redextwo}{\redseq}$:
\[
  \begin{array}{rcl}
    \protract{\redseq}{\emptyseq}
  & \eqdef &
    \redseq
  \\
    \protract{\redseq}{(\redextwo\,\redseqtwo')}
  & \eqdef &
    \protract{(\protract{\redseq}{\redextwo})}{\redseqtwo'}
  \end{array}
  \HS
  \begin{array}{rcl}
    \retract{\emptyseq}{\redseq}
  & \eqdef &
    \emptyseq
  \\
    \retract{(\redextwo\,\redseqtwo')}{\redseq}
  & \eqdef &
    (\retract{\redextwo}{\redseq})
    (\retract{\redseqtwo'}{(\protract{\redseq}{\redextwo})})
  \end{array}
\]
The inductive cases correspond to the following diagram:
\[
  \xymatrix@C=2cm{
    \ar@{->}_{\retract{\redextwo}{\redseq}}[d]
    \ar@{->>}^{\redseq}[r]
  &
    \ar^{\redextwo}[d]
  \\
    \ar@{->}_{\retract{\redseqtwo'}{(\protract{\redseq}{\redextwo})}}[d]
    \ar@{->>}^{\protract{\redseq}{\redextwo}}[r]
  &
    \ar@{->>}^{\redseqtwo'}[d]
  \\
    \ar@{->>}^{\protract{(\protract{\redseq}{\redextwo})}{\redseqtwo'}}[r]
  &
  }
\]
Furthermore, to see that $\retract{\redseqtwo}{\redseq}$ determines $\redseqtwo$,
we proceed in three stages:
\begin{enumerate}
\item
  First, if
  $\retract{\redextwo_1}{\redex} = \retract{\redextwo_2}{\redex}$
  then $\redextwo_1 = \redextwo_2$
  by~\rlem{tog_shone_reverse_commutation}.
\item
  Second, by induction on $\redseq$,
  it is easy to see that
  if $\retract{\redextwo_1}{\redseq} = \retract{\redextwo_2}{\redseq}$
  then $\redextwo_1 = \redextwo_2$.
\item
  Finally,
  we can see that
  if $\retract{\redseqtwo_1}{\redseq} = \retract{\redseqtwo_2}{\redseq}$
  then $\redseqtwo_1 = \redseqtwo_2$
  by induction on $\redseqtwo_1$.
  Note that $\retract{\redseqtwo_1}{\redseq} = \retract{\redseqtwo_2}{\redseq}$
  then $\redseqtwo_1$ and $\redseqtwo_2$
  are either both empty or both non-empty.
  The base case is immediate.
  For the induction step,
  we have that $\redseqtwo_1 = \redextwo_1\,\redseqtwo'_1$
  and $\redseqtwo_2 = \redextwo_2\,\redseqtwo'_2$;
  then note that if
  $\retract{(\redextwo_1\,\redseqtwo'_1)}{\redseq} =
   \retract{(\redextwo_2\,\redseqtwo'_2)}{\redseq}$
  then by definition
  $(\retract{\redextwo_1}{\redseq})\,(\retract{{\redseqtwo'_1}}{(\protract{\redseq}{\redextwo})})
  = (\retract{\redextwo_2}{\redseq})\,(\retract{{\redseqtwo'_2}}{(\protract{\redseq}{\redextwo})})$
  so we have that
  $\retract{\redextwo_1}{\redseq} = \retract{\redextwo_2}{\redseq}$
  which, resorting to the previous item,
  means that $\redextwo_1 = \redextwo_2$,
  and
  we also have that
  $\retract{{\redseqtwo'_1}}{(\protract{\redseq}{\redextwo})} =
   \retract{{\redseqtwo'_2}}{(\protract{\redseq}{\redextwo})}$
  which by \ih implies $\redseqtwo'_1 = \redseqtwo'_2$.
\end{enumerate}
\end{proof}

\subsection{Proofs of \rsec{reduction_by_degrees} --- The $\amesym$-measure}
\lsec{appendix:a_measure}

In this section we give detailed proofs of the results
about reduction by degrees stated in \rsec{a_measure}.

\begin{lemma}[Properties of the pointwise multiset order]
\llem{properties_of_mgtmap}
\quad
\begin{enumerate}
\item
  If $\mset_1 \mgtmap \msettwo_1$
  and $\mset_2 \mgtmap \msettwo_2$
  then $\mset_1 + \mset_2 \mgtmap \msettwo_1 + \msettwo_2$.
\item
  If $\mset \mgtmap \msettwo$
  then for all $k \in \Natz$ we have that $\mset \mgeq k \mtimes \msettwo$.
  In particular, taking $k = 1$, $\mset \mgeq \msettwo$.
\item
  If $\mset \mgtmap \msettwo$
  and $\mset$ is non-empty
  then $\mset \mgt \msettwo$.
\end{enumerate}
\end{lemma}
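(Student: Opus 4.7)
The plan is to unfold the definition of $\mgtmap$ and reduce each item to an elementary manipulation on pairs of equal-length lists. Item~\Item{1} is immediate: given witnessing representations $\mset_1 = \ms{x_1, \ldots, x_n}$, $\msettwo_1 = \ms{y_1, \ldots, y_n}$ with $x_i > y_i$ for all $i$, and $\mset_2 = \ms{x'_1, \ldots, x'_m}$, $\msettwo_2 = \ms{y'_1, \ldots, y'_m}$ with $x'_j > y'_j$ for all $j$, concatenating the two pairs of lists produces a witness of length $n+m$ for $\mset_1 + \mset_2 \mgtmap \msettwo_1 + \msettwo_2$.

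For item~\Item{2}, I would build an explicit $\mgtone$-chain from $\mset$ to $k \mtimes \msettwo$. Writing $\mset = \ms{x_1, \ldots, x_n}$ and $\msettwo = \ms{y_1, \ldots, y_n}$ with $x_i > y_i$, apply one $\mgtone$-step per index $i \in 1..n$, each step removing $x_i$ from the current multiset and inserting $k$ copies of $y_i$. This is a legitimate $\mgtone$-step because $x_i > y_i$ implies $x_i > y$ for every $y$ in the multiset consisting of $k$ copies of $y_i$ (the condition holds even vacuously when $k = 0$). After $n$ such steps the current multiset is exactly $k \mtimes \msettwo$, so $\mset \mgt k \mtimes \msettwo$ whenever $n \geq 1$; and when $n = 0$, the definition of $\mgtmap$ forces both $\mset$ and $\msettwo$ to be empty, hence $\mset = \msempty = k \mtimes \msettwo$ and equality holds. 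In either case $\mset \mgeq k \mtimes \msettwo$.

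Item~\Item{3} falls out as a refinement of the previous argument: when $\mset$ is non-empty we have $n \geq 1$, so the chain constructed for item~\Item{2} contains at least one $\mgtone$-step and therefore witnesses a strict inequality $\mset \mgt k \mtimes \msettwo$. Specializing to $k = 1$ gives $\mset \mgt \msettwo$, as required. There is no real obstacle here; the only mildly delicate points are checking that the $\mgtone$-step is well-formed when $k = 0$ (which works vacuously) and recognizing that the corner case $n = 0$ in item~\Item{2} forces both multisets to be empty by the equal-length requirement in the definition of $\mgtmap$, so it reduces to reflexivity.
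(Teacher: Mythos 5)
Your proof is correct and follows essentially the same route as the paper: the paper proves item~2 by induction on the cardinality of $\mset$, peeling off one pair $x > y$ at a time, which unrolls to exactly the explicit chain of $\mgtone$-steps you construct (including the vacuous case $k=0$ and the empty-witness corner case), and items~1 and~3 are handled the same way in both. Nothing further is needed.
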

\begin{proof}
The first item is straightforward.
For the second item, 
suppose that $\mset \mgtmap \msettwo$
and proceed by induction on the cardinality of $\mset$.
If $\mset$ is empty, then $\mset = \msempty = \msettwo$,
so $\mset = \msempty \mgeq \msempty = k \mtimes \msempty = k \mtimes \msettwo$.
If $\mset$ is non-empty,
then we can write $\mset = \ms{x} + \mset'$
and $\msettwo = \ms{y} + \msettwo'$
in such a way that $x > y$ and $\mset' \mgtmap \msettwo'$.
By \ih we have that $\mset' \mgeq k \mtimes \msettwo'$,
so
$\mset
 = \ms{x} + \mset'
 \mgt k \mtimes \ms{y} + \mset'
 \mgeq k \mtimes \ms{y} + k \mtimes \msettwo'
 = k \mtimes (\ms{y} + \msettwo')
 = k \mtimes \msettwo$.
The third item is similar to the second.
\end{proof}

\begin{lemma}[Higher substitution lemma]
\llem{appendix:upper_substitution_lemma}
Let $\tm,\tmtwo$ be typable terms and let $\var$ be a variable.
Then $\eme{d}{\tm_0}{\tm} \mleq \eme{d}{\tm_0}{\tm\sub{\var}{\tmtwo}}$.
\end{lemma}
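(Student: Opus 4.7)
The plan is to proceed by structural induction on $\tm$, simultaneously generalizing the statement so that it also covers memories, i.e., proving $\eme{d}{\tm_0}{\sctx} \mleq \eme{d}{\tm_0}{\sctx\sub{\var}{\tmtwo}}$ in parallel. The underlying intuition is that the capture-avoiding substitution $\tm\sub{\var}{\tmtwo}$ is non-erasing on the syntactic skeleton of $\tm$: every redex occurrence that contributes to $\eme{d}{\tm_0}{\tm}$ gives rise to a corresponding redex occurrence in $\tm\sub{\var}{\tmtwo}$ contributing the very same weight $(d,\bme{d}{\tm_0})$ (crucially, the first argument $\tm_0$ is fixed, so the weight does not depend on $\tm$ at all). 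Substitution may \emph{add} further redexes, but those only enlarge the resulting multiset, which is compatible with the non-strict inequality $\mleq$ that we are trying to establish.

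The base cases are immediate. If $\tm = \var$ then $\eme{d}{\tm_0}{\var} = \msempty$, which is trivially $\mleq \eme{d}{\tm_0}{\tmtwo}$. If $\tm = \vartwo$ with $\vartwo\neq\var$, then substitution leaves $\tm$ unchanged and equality holds. For $\tm = \lam{\vartwo}{\tm'}$ (with $\vartwo\notin\set{\var}\cup\fv{\tmtwo}$ by $\alpha$-renaming) and for $\tm = \bin{\tm_1}{\tm_2}$, the measure propagates additively through the constructors, so a direct appeal to the inductive hypothesis on the immediate subterms closes these cases. The memory cases ($\sctx=\ctxhole$ and $\sctx=\bin{\sctx'}{\tm'}$) are likewise routine.

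The interesting case is $\tm = \tm_1\,\tm_2$, where the definition of $\eme{d}{\tm_0}{-}$ branches according to whether $\tm_1$ is a $\symg$-abstraction of degree $d$. If it is, say $\tm_1 = (\lam{\vartwo}{\tm'_1})\sctx$, then by the typing substitution lemma ($\rlem{typing_substitution_lemma}$) $\tm_1\sub{\var}{\tmtwo} = (\lam{\vartwo}{\tm'_1\sub{\var}{\tmtwo}})\sctx\sub{\var}{\tmtwo}$ is still a $\symg$-abstraction of the same type, hence of degree $d$, so both sides contribute the same extra singleton $\ms{(d,\bme{d}{\tm_0})}$, and we conclude by the inductive hypothesis applied componentwise to $\tm'_1$, $\sctx$, and $\tm_2$. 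If $\tm_1$ is not a $\symg$-abstraction of degree $d$, then $\eme{d}{\tm_0}{\tm} = \eme{d}{\tm_0}{\tm_1} + \eme{d}{\tm_0}{\tm_2}$; the inductive hypothesis gives $\eme{d}{\tm_0}{\tm_1} + \eme{d}{\tm_0}{\tm_2} \mleq \eme{d}{\tm_0}{\tm_1\sub{\var}{\tmtwo}} + \eme{d}{\tm_0}{\tm_2\sub{\var}{\tmtwo}}$, and this right-hand side is in turn $\mleq \eme{d}{\tm_0}{\tm\sub{\var}{\tmtwo}}$, because the definition of $\eme{d}{\tm_0}{\tm\sub{\var}{\tmtwo}}$ either equals this sum or differs from it by an extra summand $\ms{(d,\bme{d}{\tm_0})}$.

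The only conceptual subtlety is precisely this last point: substitution may \emph{create} a redex of degree $d$ at the head of $\tm_1\,\tm_2$ when $\tm_1$ reduces (under substitution) to a $\symg$-abstraction of degree $d$, e.g., when $\tm_1 = \var$ and $\tmtwo$ is itself such an abstraction. Since we are proving a non-strict inequality this creation phenomenon is benign, so no genuine obstacle arises; the entire argument is bookkeeping on the recursive definition of $\eme{d}{-}{-}$.
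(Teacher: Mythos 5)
Your proof is correct and follows essentially the same route as the paper's: structural induction on $\tm$ generalized to memories, with the observation that substitution preserves (by type invariance) every degree-$d$ redex occurrence, each contributing the fixed weight $(d,\bme{d}{\tm_0})$, and can only add new ones, which is harmless for a non-strict inequality. Your explicit handling of the head-redex-creation subtlety in the non-redex application case is in fact slightly more careful than the paper's own write-up of that step.
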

\begin{proof}
We generalize the lemma for the case
in which $\tm$ may also be a memory.
That is, we prove that
if $\anon$ is a term or a memory,
$\tmtwo$ is a term, and $\var$ is a variable
then $\eme{d}{\tm_0}{\anon} \mleq \eme{d}{\tm_0}{\anon\sub{\var}{\tmtwo}}$.
We proceed by induction on $\anon$:
\begin{enumerate}
\item
  $\tm = \var$:
  Then
  $\eme{d}{\tm_0}{\var}
   = \msempty
   \mleq \eme{d}{\tm_0}{\tmtwo}
   = \eme{d}{\tm_0}{\var\sub{\var}{\tmtwo}}$.
\item
  $\tm = \vartwo \neq \var$:
  Then
  $\eme{d}{\tm_0}{\vartwo}
   \mleq \eme{d}{\tm_0}{\vartwo}
   = \eme{d}{\tm_0}{\vartwo\sub{\var}{\tmtwo}}$.
\item
  $\tm = \lam{\vartwo}{\tm'}$:
  By $\alpha$-conversion, we may assume
  that $\vartwo \notin \set{\var} \cup \fv{\tmtwo}$.
  Then
  $\eme{d}{\tm_0}{\lam{\vartwo}{\tm'}}
    = \eme{d}{\tm_0}{\tm'}
    \mleq \eme{d}{\tm_0}{\tm'\sub{\var}{\tmtwo}}
    = \eme{d}{\tm_0}{(\lam{\vartwo}{\tm'})\sub{\var}{\tmtwo}}$
  by \ih.
\item
  If $\tm = \app{(\lam{\var}{\tm_1})\sctx}{\tm_2}$ is a redex of degree $d$:
  Then
  $\eme{d}{\tm_0}{\app{(\lam{\var}{\tm_1})\sctx}{\tm_2}}
   =   \eme{d}{\tm_0}{\tm_1}
     + \eme{d}{\tm_0}{\sctx}
     + \eme{d}{\tm_0}{\tm_2}
     + \ms{(d,\bme{d}{\tm_0})}
   \mleq
       \eme{d}{\tm_0}{\tm_1\sub{\var}{\tmtwo}}
     + \eme{d}{\tm_0}{\sctx\sub{\var}{\tmtwo}}
     + \eme{d}{\tm_0}{\tm_2\sub{\var}{\tmtwo}}
     + \ms{(d,\bme{d}{\tm_0})}
   =  \eme{d}{\tm_0}{((\lam{\vartwo}{\tm_1})\sctx\,\tm_2)\sub{\var}{\tmtwo}}$
  by \ih.
\item
  \label{upper_substitution_lemma:application_non_redex}
  If $\tm = \app{\tm_1}{\tm_2}$ is not a redex of degree $d$:
  Then
  $\eme{d}{\tm_0}{\tm_1\,\tm_2}
   =   \eme{d}{\tm_0}{\tm_1}
     + \eme{d}{\tm_0}{\tm_2}
   \mleq
       \eme{d}{\tm_0}{\tm_1\sub{\var}{\tmtwo}}
     + \eme{d}{\tm_0}{\tm_2\sub{\var}{\tmtwo}}
   = \eme{d}{\tm_0}{(\tm_1\,\tm_2)\sub{\var}{\tmtwo}}$
  by \ih.
\item
  $\tm = \bin{\tm_1}{\tm_2}$:
  Similar to case~\ref{upper_substitution_lemma:application_non_redex}.
\item
  $\sctx = \ctxhole$:
  Then
  $\eme{d}{\tm_0}{\ctxhole}
   \mleq \eme{d}{\tm_0}{\ctxhole}
   = \eme{d}{\tm_0}{\ctxhole\sub{\var}{\tmtwo}}$.
\item
  $\sctx = \bin{\sctx_1}{\tm}$:
  Similar to case~\ref{upper_substitution_lemma:application_non_redex}.
\end{enumerate}
\end{proof}

\begin{proposition}[High/increase]
\lprop{appendix:upper_reduction}
Let $D \in \Natz$. Then the following hold:
\begin{enumerate}
\item
  \label{upper_reduction:bme_increase}
  If $1 \leq d < D$
  and $\tm \tod{D} \tm'$
  then $\bme{d}{\tm} \mleq \bme{d}{\tm'}$.
\item
  \label{upper_reduction:eme_left_increase}
  If $0 \leq d < D$
  and $\tm_0 \tod{D} \tm'_0$
  then $\eme{d}{\tm_0}{\tm} \mleq \eme{d}{\tm'_0}{\tm}$.
\item
  \label{upper_reduction:eme_right_increase}
  If $0 \leq d < D$
  and $\tm_0 \tod{D} \tm'_0$
  and $\tm \tod{D} \tm'$
  then $\eme{d}{\tm_0}{\tm} \mleq \eme{d}{\tm'_0}{\tm'}$.
\item
  \label{upper_reduction:ame_increase}
  If $0 \leq d < D$
  and $\tm \tod{D} \tm'$
  then $\ame{d}{\tm} \mleq \ame{d}{\tm'}$.
\end{enumerate}
\end{proposition}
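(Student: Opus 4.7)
The plan is to prove the four items simultaneously by induction on $d$. At each fixed value of $d$ the items are ordered $(1) \Rightarrow (2) \Rightarrow (3) \Rightarrow (4)$: items $(2)$--$(4)$ may freely invoke the earlier items at the same $d$, whereas item $(1)$ only appeals to item $(4)$ at strictly smaller values of $d$ via the outer inductive hypothesis. Items $(2)$ and $(3)$ additionally proceed by a nested structural induction on $\tm$, generalized so that $\tm$ also ranges over memories $\sctx$ with the obvious extension of $\eme{d}{\tm_0}{-}$. The base case $d = 0$ is trivial: items $(2)$, $(3)$, $(4)$ reduce to $\msempty \mleq \msempty$, and item $(1)$ does not apply since it requires $d \geq 1$.

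For the inductive step, item $(1)$ is the substantive case and is handled exactly as sketched in the body of \rprop{upper_reduction}. Given $\redex : \tm \tod{D} \tm'$, let $X$ and $Y$ be the sets of reductions of degree $d$ starting at $\tm$ and $\tm'$ respectively, and define $\varphi : X \to Y$ by $\varphi(\redseq) := \redseq/\redex$. This map is well defined by \rprop{appendix:commutation_by_degrees}, and it is injective because the hypothesis $d \neq D$ lets us recover $\redseq$ from its projection by the same proposition. For each $\redseq \in X$, \rprop{appendix:commutation_by_degrees} also yields a $D$-reduction $\tgt{\redseq} \rtod{D} \tgt{\varphi(\redseq)}$, and iterating item $(4)$ of the outer inductive hypothesis (at $d - 1 < D$) along the steps of that reduction gives $\ame{d-1}{\tgt{\redseq}} \mleq \ame{d-1}{\tgt{\varphi(\redseq)}}$. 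Summing pointwise over $\redseq \in X$ and then discarding the surplus elements of $Y \setminus \varphi(X)$ yields $\bme{d}{\tm} \mleq \bme{d}{\tm'}$.

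Items $(2)$ and $(3)$ follow by straightforward structural induction on $\tm$, extended to memories. Most cases are routine: variables contribute $\msempty$, abstractions and wrappers push the inequality to subterms via the nested inductive hypothesis, and application cases that are not redexes of degree $d$ split additively. The one delicate case is in item $(3)$ when $\tm = (\lam{\var}{\tmtwo})\sctx\,\tmthree$ is itself the redex contracted by $\tm \tod{D} \tm'$; since $D > d$, this redex contributes no summand of the form $\ms{(d,\bme{d}{\tm_0})}$ to $\eme{d}{\tm_0}{\tm}$, and after unfolding $\eme{d}{\tm'_0}{\tm'}$ along $\tmtwo\sub{\var}{\tmthree}\garb{\tmthree}\sctx$ the required bound reduces to the higher substitution lemma~\rlem{appendix:upper_substitution_lemma}, the nested inductive hypothesis for the subterms $\tmtwo$, $\tmthree$, and $\sctx$, and item $(2)$ applied at the refreshed reference term $\tm'_0$. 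Item $(4)$ is then immediate: by definition $\ame{d}{\tm} = \sum_{i=1}^{d} \eme{i}{\tm}{\tm}$, and each summand is controlled by item $(3)$ with $\tm_0 := \tm$ and $\tm'_0 := \tm'$.

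The principal obstacle is item $(1)$. Injectivity of $\varphi$ is a clean corollary of \rprop{appendix:commutation_by_degrees}; the subtle point is the iterated application of item $(4)$ of the inductive hypothesis along the projected $D$-reduction $\tgt{\redseq} \rtod{D} \tgt{\varphi(\redseq)}$, whose length is not a priori bounded. The pointwise bound $\ame{d-1}{\tgt{\redseq}} \mleq \ame{d-1}{\tgt{\varphi(\redseq)}}$ is therefore really obtained by a secondary induction on that length, embedded inside the outer induction on $d$. Beyond this, the remaining verifications are careful bookkeeping with the multiset-builder notation and with the case splits inherited from the definition of $\eme{d}{-}{-}$.
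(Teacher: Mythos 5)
Your proposal is correct and follows essentially the same route as the paper: a simultaneous induction on $d$ with the items stratified as $(1)\Rightarrow(2)\Rightarrow(3)\Rightarrow(4)$ at each level, the injective projection $\varphi(\redseq) := \redseq/\redex$ from commutation for item~(1) with the pointwise bound obtained by iterating item~(4) of the outer hypothesis along the projected $D$-reduction, and the higher substitution lemma together with item~(2) for the root-redex case of item~(3). The only detail you gloss over — that a $D$-step inside the left subterm of an application may \emph{create} a redex of degree $d$ — is harmless for the direction $\mleq$, since it only adds a summand to the right-hand side.
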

\begin{proof}
We prove a more general version of the statement:
in items
\ref{upper_reduction:eme_left_increase} and
\ref{upper_reduction:eme_right_increase}
we allow $\tm$ to be either a term or a memory.
For example, the statement of item~\ref{upper_reduction:eme_left_increase}
is generalized as follows:
  if $1 \leq d < D$
  and $\tm_0 \tod{d} \tm'_0$
  then $\eme{d}{\tm_0}{\anon} \mleq \eme{d}{\tm'_0}{\anon}$,
  where $\anon$ is either a term or a memory.

We prove all items simultaneously by induction on $d$.
Note that:
item~1. resorts to the \ih;
item~2. resorts to item~1. (without decreasing $d$);
item~3. resorts to items~1. and~2. (without decreasing $d$);
item~4. resorts to item~3. (without necessarily decreasing $d$).
\begin{enumerate}
\item
  Let $1 \leq d < D$
  and $\tm \tod{D} \tm'$.
  We argue that $\bme{d}{\tm} \mleq \bme{d}{\tm'}$.
  Let $X$ and $Y$ be the sets of reduction sequences
  $X = \set{\redseq \ST (\exists \tmtwo)\ \redseq : \tm \rtod{d} \tmtwo}$
  and
  $Y = \set{\redseqtwo \ST (\exists \tmtwo')\ \redseqtwo : \tm' \rtod{d} \tmtwo'}$.
  Note that, by definition,
  $\bme{d}{\tm} = \msb{\ame{d-1}{\tgt{\redseq}}}{\redseq \in X}$
  and
  $\bme{d}{\tm'} = \msb{\ame{d-1}{\tgt{\redseqtwo}}}{\redseqtwo \in Y}$.
  We construct a function $\varphi : X \to Y$ as follows.
  Consider a reduction step $\redex : \tm \tod{D} \tm'$;
  note that there may be more than one such step, but we know by hypothesis
  that there is at least one.
  By commutation~(\rprop{commutation_by_degrees}),
  given a reduction sequence $\redseq \in X$,
  \ie $\redseq : \tm \rtod{d} \tmtwo$
  there exists a term $\tmtwo'_\redseq$
  and reduction sequences $\redseq/\redex : \tm' \rtod{d} \tmtwo'_\redseq$
  and $\redex/\redseq : \tmtwo \rtod{D} \tmtwo'_\redseq$.
  In particular, $\redseq/\redex \in Y$,
  and we can define $\varphi(\redseq) := \redseq/\redex$.
  Moreover, $\varphi$ is injective,
  because if $\redseq_1,\redseq_2 \in X$
  are such that $\redseq_1/\redex = \redseq_2/\redex$
  then by commutation~(\rprop{commutation_by_degrees})
  we have that $\redseq_1 = \redseq_2$, given that $d < D$.

  First, we claim that
  $\ame{d-1}{\tgt{\redseq}} \mleq \ame{d-1}{\tgt{\varphi(\redseq)}}$
  for every $\redseq \in X$.
  If $d = 1$, this is immediate
  since
  $\ame{d-1}{\tgt{\redseq}}
   = \ame{0}{\tgt{\redseq}}
   = \msempty
   = \ame{0}{\tgt{\varphi(\redseq)}}
   = \ame{d - 1}{\tgt{\varphi(\redseq)}}$.
  Assume now that $d > 1$.
  Then we have that:
  \[
    \begin{array}{rcll}
      \ame{d-1}{\tgt{\redseq}}
    & = &
      \ame{d-1}{\tmtwo}
    \\
    & \mleq &
      \ame{d-1}{\tmtwo'_\redseq}
      & \text{by item~\ref{upper_reduction:ame_increase} of the \ih}
    \\
    & = &
      \ame{d-1}{\tgt{(\redseq/\redex)}}
    \\
    & = &
      \ame{d-1}{\varphi(\redseq)}
    \end{array}
  \]
  To be able to apply item~\ref{upper_reduction:ame_increase} of the \ih,
  observe that $1 \leq d - 1 < D$ holds because $1 \leq d < D$.
  We resort to the \ih as many times as the length of the reduction
  $\tmtwo \rtod{D} \tmtwo'_\redseq$.
  To conclude the proof, let $Z = Y \setminus \varphi(X)$,
  so that $Y = \varphi(X) \uplus Z$, and note that:
  \[
    \begin{array}{rcll}
      \bme{d}{\tm}
    & = &
      \msb{\ame{d-1}{\tgt{\redseq}}}{\redseq \in X}
    \\
    & \mleq &
      \msb{\ame{d-1}{\tgt{\varphi(\redseq)}}}{\redseq \in X}
      \HS\text{($\star$)}
    \\
    & = &
      \msb{\ame{d-1}{\tgt{\redseqtwo}}}{\redseqtwo \in \varphi(X)}
      \HS\text{($\star\star$)}
    \\
    & \mleq &
      \msb{\ame{d-1}{\tgt{\redseqtwo}}}{\redseqtwo \in \varphi(X)}
      +
      \msb{\ame{d-1}{\tgt{\redseqtwo}}}{\redseqtwo \in Z}
    \\
    & = &
      \msb{\ame{d-1}{\tgt{\redseqtwo}}}{\redseqtwo \in \varphi(X) \uplus Z}
    \\
    & = &
      \msb{\ame{d-1}{\tgt{\redseqtwo}}}{\redseqtwo \in Y}
    \\
    & = &
      \bme{d}{\tm'}
    \end{array}
  \]
  To justify the step marked with ($\star$),
  note that
  $\msb{\ame{d-1}{\tgt{\redseq}}}{\redseq \in X}
   = \sum_{\redseq \in X} \ms{\ame{d-1}{\tgt{\redseq}}}
   \mleq \sum_{\redseq \in X} \ms{\ame{d-1}{\tgt{\varphi(\redseq)}}}
   = \msb{\ame{d-1}{\tgt{\varphi(\redseq)}}}{\redseq \in X}$
  because $\ame{d-1}{\tgt{\redseq}} \mleq \ame{d-1}{\tgt{\varphi(\redseq)}}$,
  as we have already claimed.
  To justify the step marked with ($\star\star$),
  note that $\varphi$ is injective, so $X$ and $\varphi(X)$ have the
  same cardinality.
\item
  Let $0 \leq d < D$
  and $\tm_0 \tod{D} \tm'_0$.
  We argue that $\eme{d}{\tm_0}{\anon} \mleq \eme{d}{\tm'_0}{\anon}$,
  where $\anon$ is either a term ($\anon = \tm$)
  or a memory ($\anon = \sctx$).
  We proceed by induction on $\anon$:
  \begin{enumerate}
  \item
    $\tm = \var$:
    Then
    $\eme{d}{\tm_0}{\var}
     = \msempty
     \mleq \msempty
     = \eme{d}{\tm'_0}{\var}$.
  \item
    $\tm = \lam{\var}{\tmtwo}$:
    Then
    $\eme{d}{\tm_0}{\lam{\var}{\tmtwo}}
    = \eme{d}{\tm_0}{\tmtwo}
    \mleq \eme{d}{\tm'_0}{\tmtwo}
    = \eme{d}{\tm'_0}{\lam{\var}{\tmtwo}}$
    by the internal \ih.
  \item
    If $\tm = (\lam{\var}{\tmtwo})\sctx\,\tmthree$ is a redex of degree $d$:
    Then:
    \[
      \begin{array}{rcll}
        \eme{d}{\tm_0}{\tm}
      & = &
        \eme{d}{\tm_0}{(\lam{\var}{\tmtwo})\sctx\,\tmthree}
      \\
      & = &
          \eme{d}{\tm_0}{\tmtwo}
        + \eme{d}{\tm_0}{\sctx}
        + \eme{d}{\tm_0}{\tmthree}
        + \ms{(d,\bme{d}{\tm_0})}
      \\
      & \mleq &
          \eme{d}{\tm'_0}{\tmtwo}
        + \eme{d}{\tm'_0}{\sctx}
        + \eme{d}{\tm'_0}{\tmthree}
        + \ms{(d,\bme{d}{\tm_0})}
        & \text{by the internal \ih}
      \\
        & \mleq &
          \eme{d}{\tm'_0}{\tmtwo}
        + \eme{d}{\tm'_0}{\sctx}
        + \eme{d}{\tm'_0}{\tmthree}
        + \ms{(d,\bme{d}{\tm'_0})}
        & \text{by item~\ref{upper_reduction:bme_increase}}
      \\
        & = &
          \eme{d}{\tm'_0}{(\lam{\var}{\tmtwo})\sctx\,\tmthree}
      \\
        & = &
          \eme{d}{\tm'_0}{\tm}
      \end{array}
    \]
  \item
    \label{upper_reduction:eme_left_increase:application_non_redex}
    If $\tm = \tmtwo\,\tmthree$ is not a redex of degree $d$:
    Then:
    \[
      \begin{array}{rcll}
        \eme{d}{\tm_0}{\tm}
      & = &
        \eme{d}{\tm_0}{\tmtwo\,\tmthree}
      \\
      & = &
          \eme{d}{\tm_0}{\tmtwo}
        + \eme{d}{\tm_0}{\tmthree}
      \\
        & \mleq &
          \eme{d}{\tm'_0}{\tmtwo}
        + \eme{d}{\tm'_0}{\tmthree}
        & \text{by the internal \ih}
      \\
        & = &
          \eme{d}{\tm'_0}{\tmtwo\,\tmthree}
      \end{array}
    \]
  \item
    $\tm = \bin{\tmtwo}{\tmthree}$:
    Similar to case~\ref{upper_reduction:eme_left_increase:application_non_redex}.
  \item
    $\sctx = \ctxhole$:
    Then
    $\eme{d}{\tm_0}{\ctxhole}
     = \msempty \mleq \msempty
     = \eme{d}{\tm'_0}{\ctxhole}$.
  \item
    $\sctx = \bin{\sctx_1}{\tm}$:
    Similar to case~\ref{upper_reduction:eme_left_increase:application_non_redex}.
  \end{enumerate}
\item
  Let $0 \leq d < D$
  and $\tm_0 \tod{D} \tm'_0$
  and let $\anon,\anon'$ be 
  and $\anon \tod{D} \anon'$
  where $\anon,\anon'$
  are either terms ($\anon = \tm$ and $\anon = \tm'$)
  or memories ($\anon = \sctx$ and $\anon = \sctx'$).
  We argue that $\eme{d}{\tm_0}{\anon} \mleq \eme{d}{\tm'_0}{\anon'}$.
  We proceed by induction on $\anon$:
  \begin{enumerate}
  \item
    $\tm = \var$:
    Impossible, as there are no steps $\var \tod{D} \tm'$.
  \item
    $\tm = \lam{\var}{\tmtwo}$:
    Then the step is of the form
    $\tm
     = \lam{\var}{\tmtwo}
     \tod{D} \lam{\var}{\tmtwo'}
     = \tm'$
    with $\tmtwo \tod{D} \tmtwo'$,
    so
    $\eme{d}{\tm_0}{\lam{\var}{\tmtwo}}
    = \eme{d}{\tm_0}{\tmtwo}
    \mleq \eme{d}{\tm'_0}{\tmtwo'}
    = \eme{d}{\tm'_0}{\lam{\var}{\tmtwo'}}$
    by the internal \ih.
  \item
    If $\tm = \app{(\lam{\var}{\tmtwo})\sctx}{\tmthree}$
    is the redex of degree $D$ contracted by the step $\tm \tod{D} \tm'$:
    Then the step is of the form
    $\tm
     = \app{(\lam{\var}{\tmtwo})\sctx}{\tmthree}
     \tod{D} \tmtwo\sub{\var}{\tmthree}\garb{\tmthree}\sctx
     = \tm'$.
    Note that $\tm$ is not a redex of degree $d$ because $d < D$,
    so:
    \[
      \begin{array}{rcll}
        \eme{d}{\tm_0}{\tm}
      & = &
        \eme{d}{\tm_0}{\app{(\lam{\var}{\tmtwo})\sctx}{\tmthree}}
      \\
      & = &
          \eme{d}{\tm_0}{\tmtwo}
        + \eme{d}{\tm_0}{\sctx}
        + \eme{d}{\tm_0}{\tmthree}
      \\
      & \mleq &
          \eme{d}{\tm'_0}{\tmtwo}
        + \eme{d}{\tm'_0}{\sctx}
        + \eme{d}{\tm'_0}{\tmthree}
        & \text{by item~\ref{upper_reduction:eme_left_increase}}
      \\
      & = &
          \eme{d}{\tm'_0}{\tmtwo}
        + \eme{d}{\tm'_0}{\tmthree}
        + \eme{d}{\tm'_0}{\sctx}
      \\
      & \mleq &
          \eme{d}{\tm'_0}{\tmtwo\sub{\var}{\tmthree}}
        + \eme{d}{\tm'_0}{\tmthree}
        + \eme{d}{\tm'_0}{\sctx}
        & \text{by~\rlem{upper_substitution_lemma}}
      \\
      & = &
        \eme{d}{\tm'_0}{\tmtwo\sub{\var}{\tmthree}\garb{\tmthree}\sctx}
      \\
      & = &
        \eme{d}{\tm'_0}{\tm'}
      \end{array}
    \]
  \item
    If $\tm = \app{(\lam{\var}{\tmtwo})\sctx}{\tmthree}$
    is a redex of degree $d$:
    Note that $\tm$ is not a redex of degree $D$ because $d < D$.
    We consider three subcases, depending on whether the step
    $\tm \tod{D} \tm'$ is internal to $\tmtwo$,
    internal to $\sctx$, or internal to $\tmthree$.
    All these subcases are similar;
    we only give the proof for the case in which
    the step is internal to $\tmtwo$.
    Then:
    \[
      \begin{array}{rcll}
        \eme{d}{\tm_0}{\tm}
      & = &
        \eme{d}{\tm_0}{\app{(\lam{\var}{\tmtwo})\sctx}{\tmthree}}
      \\
      & = &
          \eme{d}{\tm_0}{\tmtwo}
        + \eme{d}{\tm_0}{\sctx}
        + \eme{d}{\tm_0}{\tmthree}
        + \ms{(d,\bme{d}{\tm_0})}
      \\
      & \mleq &
          \eme{d}{\tm'_0}{\tmtwo'}
        + \eme{d}{\tm_0}{\sctx}
        + \eme{d}{\tm_0}{\tmthree}
        + \ms{(d,\bme{d}{\tm_0})}
        & \text{by the internal \ih}
      \\
      & \mleq &
          \eme{d}{\tm'_0}{\tmtwo'}
        + \eme{d}{\tm'_0}{\sctx}
        + \eme{d}{\tm'_0}{\tmthree}
        + \ms{(d,\bme{d}{\tm_0})}
        & \text{by item~\ref{upper_reduction:eme_left_increase}}
      \\
      & \mleq &
          \eme{d}{\tm'_0}{\tmtwo'}
        + \eme{d}{\tm'_0}{\sctx}
        + \eme{d}{\tm'_0}{\tmthree}
        + \ms{(d,\bme{d}{\tm'_0})}
        & \text{by item~\ref{upper_reduction:bme_increase}}
      \\
      & = &
        \eme{d}{\tm'_0}{\app{(\lam{\var}{\tmtwo'})\sctx}{\tmthree}}
      \\
      & = &
        \eme{d}{\tm'_0}{\tm'}
      \end{array}
    \]
  \item
    \label{upper_reduction:eme_right_increase:application_non_redex}
    If $\tm = \app{\tmtwo}{\tmthree}$
    is not the redex contracted by the step $\tm \tod{D} \tm'$
    nor a redex of degree $d$:
    We consider two subcases, depending on whether the step
    $\tm \tod{D} \tm'$ is internal to $\tmtwo$ or internal to $\tmthree$:
    \begin{enumerate}
    \item
      If the step is internal to $\tmtwo$,
      then the step is of the form
      $\tm
       = \tmtwo\,\tmthree
       \tod{D} \tmtwo'\,\tmthree
       = \tm'$
      with $\tmtwo \tod{D} \tmtwo'$.
      We know that $\tmtwo$ is not a $\symg$-abstraction of degree $d$,
      but $d < D$, so it may be the case that $\tmtwo'$
      is a $\symg$-abstraction of degree $d$,
      \ie reduction at degree $D$ may create an abstraction of degree $d < D$.
      We consider two further subcases, depending on whether
      $\tmtwo'$ is a $\symg$-abstraction of degree $d$ or not:
      \begin{enumerate}
      \item
        If $\tmtwo' = (\lam{\var}{\tmtwo''})\sctx$ 
        is a $\symg$-abstraction of degree $d$, then:
        \[
          \begin{array}{rcll}
            \eme{d}{\tm_0}{\tm}
          & = &
            \eme{d}{\tm_0}{\tmtwo\,\tmthree}
          \\
          & = &
              \eme{d}{\tm_0}{\tmtwo}
            + \eme{d}{\tm_0}{\tmthree}
          \\
          & \mleq &
              \eme{d}{\tm'_0}{(\lam{\var}{\tmtwo''})\sctx}
            + \eme{d}{\tm_0}{\tmthree}
            & \text{by the internal \ih}
          \\
          & \mleq &
              \eme{d}{\tm'_0}{(\lam{\var}{\tmtwo''})\sctx}
            + \eme{d}{\tm'_0}{\tmthree}
            & \text{by item~\ref{upper_reduction:eme_left_increase}}
          \\
          & \mleq &
              \eme{d}{\tm'_0}{(\lam{\var}{\tmtwo''})\sctx}
            + \eme{d}{\tm'_0}{\tmthree}
            + \ms{(d,\bme{d}{\tm'_0})}
          \\
          & = &
              \eme{d}{\tm'_0}{(\lam{\var}{\tmtwo''})\sctx\,\tmthree}
          \\
          & = &
              \eme{d}{\tm'_0}{\tm'}
          \end{array}
        \]
      \item
        \label{upper_reduction:eme_right_increase:application_non_redex:non_creation}
        If $\tmtwo'$ is not a $\symg$-abstraction of degree $d$, then:
        \[
          \begin{array}{rcll}
            \eme{d}{\tm_0}{\tm}
          & = &
            \eme{d}{\tm_0}{\tmtwo\,\tmthree}
          \\
          & = &
              \eme{d}{\tm_0}{\tmtwo}
            + \eme{d}{\tm_0}{\tmthree}
          \\
          & \mleq &
              \eme{d}{\tm'_0}{\tmtwo'}
            + \eme{d}{\tm_0}{\tmthree}
            & \text{by the internal \ih}
          \\
          & \mleq &
              \eme{d}{\tm'_0}{\tmtwo'}
            + \eme{d}{\tm'_0}{\tmthree}
            & \text{by item~\ref{upper_reduction:eme_left_increase}}
          \\
          & = &
              \eme{d}{\tm'_0}{\tmtwo'\,\tmthree}
          \\
          & = &
              \eme{d}{\tm'_0}{\tm'}
          \end{array}
        \]
      \end{enumerate}
    \item
      If the step is internal to $\tmthree$:
      Similar to case~\ref{upper_reduction:eme_right_increase:application_non_redex:non_creation}.
    \end{enumerate}
  \item
    $\tm = \bin{\tmtwo}{\tmthree}$:
    Similar to case~\ref{upper_reduction:eme_right_increase:application_non_redex}.
  \item
    $\sctx = \ctxhole$:
    Impossible, as there are no reduction steps $\ctxhole \tod{D} \sctx'$.
  \item
    $\sctx = \bin{\sctx_1}{\tm}$:
    Similar to case~\ref{upper_reduction:eme_right_increase:application_non_redex}.
  \end{enumerate}
\item
  Let $0 \leq d < D$
  and $\tm \tod{D} \tm'$.
  We argue that $\ame{d}{\tm} \mleq \ame{d}{\tm'}$.
  Indeed:
  \[
    \begin{array}{rcll}
      \ame{d}{\tm}
    & = &
      \sum_{i=1}^{d} \eme{i}{\tm}{\tm}
    \\
    & \mleq &
      \sum_{i=1}^{d} \eme{i}{\tm'}{\tm'}
      & \text{by item~\ref{upper_reduction:eme_right_increase},
              resorting to the \ih when $i < d$}
    \\
    & = &
      \ame{d}{\tm'}
    \end{array}
  \]
  Note that for the value $i = d$, we resort directly to
  item~\ref{upper_reduction:eme_right_increase} and not to the \ih.
\end{enumerate}
\end{proof}

\begin{lemma}[Substitution of degree $d$ does not create abstractions]
\llem{substitution_of_degree_d_does_not_create_abstractions}
If $\tm$ and $\tmtwo$ are not $\symg$-abstractions of degree $d$,
then $\tm\sub{\var}{\tmtwo}$ is not a $\symg$-abstraction of degree $d$.
\end{lemma}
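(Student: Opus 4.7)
The plan is to proceed by induction on the structure of $\tm$, exploiting the fact that capture-avoiding substitution preserves the type of a term (an immediate consequence of the substitution lemma for typing). The five cases split according to whether $\tm$ is a variable, abstraction, application, or wrapper.

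First I would handle the trivial cases. If $\tm = \var$, then $\tm\sub{\var}{\tmtwo} = \tmtwo$, which is not a $\symg$-abstraction of degree $d$ by hypothesis on $\tmtwo$. If $\tm = \vartwo \neq \var$, then $\tm\sub{\var}{\tmtwo} = \vartwo$ is a variable and hence not a $\symg$-abstraction at all. If $\tm = \tm_1\,\tm_2$, then $\tm\sub{\var}{\tmtwo}$ is an application, which is not a $\symg$-abstraction.

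For $\tm = \lam{\vartwo}{\tm_1}$, note that $\tm$ itself is a $\symg$-abstraction (with empty memory), so the hypothesis tells us its degree is not $d$, i.e., $\height{\typeof{\tm}} \neq d$. Since substitution preserves types, $\tm\sub{\var}{\tmtwo} = \lam{\vartwo}{\tm_1\sub{\var}{\tmtwo}}$ has the same type as $\tm$, so it is a $\symg$-abstraction but not of degree $d$.

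The only case requiring the induction hypothesis is $\tm = \bin{\tm_1}{\tm_2}$. By \rremark{bin_is_symg_abstraction}, $\tm$ is a $\symg$-abstraction iff $\tm_1$ is, and in that case they share the same type and hence the same degree. So the hypothesis that $\tm$ is not a $\symg$-abstraction of degree $d$ entails that $\tm_1$ is not a $\symg$-abstraction of degree $d$. Applying the induction hypothesis to $\tm_1$, the term $\tm_1\sub{\var}{\tmtwo}$ is not a $\symg$-abstraction of degree $d$. Invoking \rremark{bin_is_symg_abstraction} once more on $\tm\sub{\var}{\tmtwo} = \bin{\tm_1\sub{\var}{\tmtwo}}{\tm_2\sub{\var}{\tmtwo}}$ concludes the case. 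There is no real obstacle here; the proof is straightforward once one observes that types (and hence degrees of $\symg$-abstractions) are preserved under substitution, and that the $\symg$-abstraction structure of a wrapper is controlled entirely by its body.
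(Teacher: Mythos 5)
Your proof is correct and follows essentially the same route as the paper's: induction on $\tm$, with the variable and application cases immediate, the abstraction case handled by type preservation under substitution (\rlem{typing_substitution_lemma}), and the induction hypothesis invoked only in the wrapper case via \rremark{bin_is_symg_abstraction}. No gaps.
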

\begin{proof}
By induction on $\tm$:
\begin{enumerate}
\item
  $\tm = \var$:
  Then $\tm\sub{\var}{\tmtwo} = \tmtwo$ is not a $\symg$-abstraction
  of degree $d$.
\item
  $\tm = \vartwo \neq \var$:
  Then $\tm\sub{\var}{\tmtwo} = \vartwo$ is not a $\symg$-abstraction
  of degree $d$.
\item
  $\tm = \lam{\vartwo}{\tm'}$:
  Then $\tm$ is a $\symg$-abstraction
  but, by hypothesis, we know that it cannot be of degree $d$.
  Hence
  $\tm\sub{\var}{\tmtwo} 
   = \lam{\vartwo}{\tm'\sub{\var}{\tmtwo}}$.
  By the substitution lemma~(\rlem{typing_substitution_lemma})
  $\tm$ and $\tm\sub{\var}{\tmtwo}$ have the same type,
  so $\lam{\vartwo}{\tm'\sub{\var}{\tmtwo}}$
  is a $\symg$-abstraction but it is not of degree $d$.
\item
  $\tm = \tm_1\,\tm_2$:
  Then
  $\tm\sub{\var}{\tmtwo} 
  = \tm_1\sub{\var}{\tmtwo}\,\tm_2\sub{\var}{\tmtwo}$
  is an application, hence not a $\symg$-abstraction of degree $d$.
\item
  $\tm = \bin{\tm_1}{\tm_2}$:
  Since $\tm$ is not a $\symg$-abstraction of degree $d$,
  we have that $\tm_1$ is also not a $\symg$-abstraction of degree $d$.
  By \ih, $\tm_1\sub{\var}{\tmtwo}$
  is not a $\symg$-abstraction of degree $d$.
  So
  $\tm\sub{\var}{\tmtwo} = \bin{\tm_1\sub{\var}{\tmtwo}}{\tm_2\sub{\var}{\tmtwo}}$
  is not a $\symg$-abstraction of degree $d$.
\end{enumerate}
\end{proof}

\begin{lemma}[Lower substitution lemma]
\llem{appendix:lower_substitution_lemma}
Let $\tm,\tmtwo$ be typable terms and let $\var$ be a variable
such that $\tmtwo$ is not a $\symg$-abstraction of degree $d$.
Then there exists $k \in \Natz$ such that
$\eme{d}{\tm_0}{\tm\sub{\var}{\tmtwo}}
 = \eme{d}{\tm_0}{\tm} + k \mtimes \eme{d}{\tm_0}{\tmtwo}$.
\end{lemma}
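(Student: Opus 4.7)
The plan is to proceed by structural induction on $\tm$, generalizing the statement so that $\tm$ may be either a term or a memory $\sctx$, following the same pattern used for \rlem{appendix:upper_substitution_lemma}. Intuitively, the witness $k$ will count the number of free occurrences of $\var$ in $\tm$, since each such occurrence gets replaced by a fresh copy of $\tmtwo$ whose redexes of degree $d$ contribute an extra copy of $\eme{d}{\tm_0}{\tmtwo}$ to the measure. The cases $\tm = \var$ (take $k=1$) and $\tm = \vartwo \neq \var$ (take $k=0$) are immediate, and the cases for $\lam{\vartwo}{\tm'}$, $\bin{\tm_1}{\tm_2}$, and memories $\ctxhole, \bin{\sctx}{\tm}$ follow by applying the \ih to each immediate subterm and summing the resulting multiplicities.

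The only delicate case is $\tm = \tm_1\,\tm_2$, where we must check that the indicator term $\ms{(d,\bme{d}{\tm_0})}$ in the definition of $\eme{d}{\tm_0}{-}$ contributes either to both sides or to neither. The key observation is that $\tm_1\,\tm_2$ is a redex of degree $d$ iff $\tm_1$ is a $\symg$-abstraction of degree $d$, and the substituted term $\tm_1\sub{\var}{\tmtwo}\,\tm_2\sub{\var}{\tmtwo}$ is a redex of degree $d$ iff $\tm_1\sub{\var}{\tmtwo}$ is a $\symg$-abstraction of degree $d$. Since by hypothesis $\tmtwo$ is not a $\symg$-abstraction of degree $d$, we may invoke \rlem{substitution_of_degree_d_does_not_create_abstractions} to conclude that $\tm_1$ is a $\symg$-abstraction of degree $d$ if and only if $\tm_1\sub{\var}{\tmtwo}$ is. (In the forward direction, substitution preserves being an abstraction and, by the substitution lemma~\rlem{typing_substitution_lemma}, preserves the type.) Hence the $\ms{(d,\bme{d}{\tm_0})}$ summand is either present on both sides or on neither.

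In both subcases we then apply the \ih to $\tm_1$ and $\tm_2$ to obtain $k_1, k_2 \in \Natz$ witnessing the equalities for the subterms, and take $k := k_1 + k_2$. The calculation, using that the measure distributes additively over the relevant constructors and that $(k_1 + k_2) \mtimes \eme{d}{\tm_0}{\tmtwo} = k_1 \mtimes \eme{d}{\tm_0}{\tmtwo} + k_2 \mtimes \eme{d}{\tm_0}{\tmtwo}$ by the definition of $\mtimes$, closes the case.

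The main obstacle is making sure the status of $\tm_1\,\tm_2$ as a redex of degree $d$ is preserved under substitution; this is precisely what \rlem{substitution_of_degree_d_does_not_create_abstractions} ---used together with the hypothesis that $\tmtwo$ is not a $\symg$-abstraction of degree $d$--- is designed to deliver. Without this hypothesis the lemma would fail: substituting a $\symg$-abstraction of degree $d$ for the head variable of an application could create new redexes of degree $d$, breaking the additive structure on the right-hand side.
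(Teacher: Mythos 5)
Your proposal is correct and follows essentially the same route as the paper's proof: structural induction on $\tm$ generalized to memories, with the application case split according to whether the head is a $\symg$-abstraction of degree $d$, using \rlem{substitution_of_degree_d_does_not_create_abstractions} together with the typing substitution lemma to see that the $\ms{(d,\bme{d}{\tm_0})}$ summand appears on both sides or on neither, and taking $k$ to be the sum of the witnesses from the subterms. The only cosmetic difference is that in the redex case the paper splits the induction hypothesis over the three components (body, memory, argument) yielding $k_1+k_2+k_3$, which is equivalent to your treatment.
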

\begin{proof}
We generalize the lemma for the case
in which $\tm$ may also be a memory.
That is, we prove that
if $\anon$ is a term or a memory,
$\tmtwo$ is a term, and $\var$ is a variable
such that $\tmtwo$ is not a $\symg$-abstraction of degree $d$,
then there exists $k \in \Natz$ such that
$\eme{d}{\tm_0}{\anon\sub{\var}{\tmtwo}}
 = \eme{d}{\tm_0}{\anon} + k \mtimes \eme{d}{\tm_0}{\tmtwo}$.
We proceed by induction on $\anon$:
\begin{enumerate}
\item
  $\tm = \var$:
  Take $k := 1$.
  Then
  $\eme{d}{\tm_0}{\tm\sub{\var}{\tmtwo}}
  = \eme{d}{\tm_0}{\var\sub{\var}{\tmtwo}}
  = \eme{d}{\tm_0}{\tmtwo}
  = \msempty + 1 \mtimes \eme{d}{\tm_0}{\tmtwo}
  = \eme{d}{\tm_0}{\var} + 1 \mtimes \eme{d}{\tm_0}{\tmtwo}
  = \eme{d}{\tm_0}{\tm} + 1 \mtimes \eme{d}{\tm_0}{\tmtwo}$.
\item
  $\tm = \vartwo \neq \var$:
  Take $k := 0$.
  Then
  $\eme{d}{\tm_0}{\tm\sub{\var}{\tmtwo}}
  = \eme{d}{\tm_0}{\vartwo\sub{\var}{\tmtwo}}
  = \eme{d}{\tm_0}{\vartwo}
  = \eme{d}{\tm_0}{\vartwo} + 0 \mtimes \eme{d}{\tm_0}{\tmtwo}
  = \eme{d}{\tm_0}{\tm} + 0 \mtimes \eme{d}{\tm_0}{\tmtwo}$.
\item
  $\tm = \lam{\vartwo}{\tm'}$:
  By $\alpha$-conversion we may assume
  that $\vartwo \notin \set{\var} \cup \fv{\tmtwo}$.
  Resorting to the \ih,
  we have
  $\eme{d}{\tm_0}{\tm\sub{\var}{\tmtwo}}
  = \eme{d}{\tm_0}{\lam{\vartwo}{\tm'\sub{\var}{\tmtwo}}}
  = \eme{d}{\tm_0}{\tm'\sub{\var}{\tmtwo}}
  = \eme{d}{\tm_0}{\tm'} + k \mtimes \eme{d}{\tm_0}{\tmtwo}
  = \eme{d}{\tm_0}{\lam{\vartwo}{\tm'}} + k \mtimes \eme{d}{\tm_0}{\tmtwo}
  = \eme{d}{\tm_0}{\tm} + k \mtimes \eme{d}{\tm_0}{\tmtwo}$.
\item
  $\tm = (\lam{\var}{\tm_1})\sctx\,\tm_2$
  where $(\lam{\var}{\tm_1})\sctx$ is a $\symg$-abstraction of degree $d$.
  Note that, by the substitution lemma~(\rlem{typing_substitution_lemma})
  we have that
  $(\lam{\var}{\tm_1\sub{\var}{\tmtwo}})(\sctx\sub{\var}{\tmtwo})$
  is also an abstraction of degree $d$.
  Then by \ih there exist $k_1, k_2, k_3 \in \Natz$
  such that:
  \[
    \begin{array}{rcl}
    &&
      \eme{d}{\tm_0}{\tm\sub{\var}{\tmtwo}}
    \\
    & = &
      \eme{d}{\tm_0}{
        (\lam{\var}{\tm_1\sub{\var}{\tmtwo}})
        (\sctx\sub{\var}{\tmtwo})
        (\tm_2\sub{\var}{\tmtwo})
      }
    \\
    & = &
        \eme{d}{\tm_0}{\lam{\var}{\tm_1\sub{\var}{\tmtwo}}}
      + \eme{d}{\tm_0}{\sctx\sub{\var}{\tmtwo}}
      + \eme{d}{\tm_0}{\tm_2\sub{\var}{\tmtwo}}
      + \ms{(d,\bme{d}{\tm_0})}
    \\
    & = &
        (\eme{d}{\tm_0}{\lam{\var}{\tm_1}} + k_1 \mtimes \eme{d}{\tm_0}{\tmtwo})
      +  (\eme{d}{\tm_0}{\sctx} + k_2 \mtimes \eme{d}{\tm_0}{\tmtwo})
      \\&&
      +\ (\eme{d}{\tm_0}{\tm_2} + k_3 \mtimes \eme{d}{\tm_0}{\tmtwo})
      +  \ms{(d,\bme{d}{\tm_0})}
    \\
    & = &
        \eme{d}{\tm_0}{\lam{\var}{\tm_1}}
      + \eme{d}{\tm_0}{\sctx}
      + \eme{d}{\tm_0}{\tm_2}
      +  \ms{(d,\bme{d}{\tm_0})}
      + (k_1 + k_2 + k_3) \mtimes \eme{d}{\tm_0}{\tmtwo}
    \\
    & = &
        \eme{d}{\tm_0}{(\lam{\var}{\tm_1})\sctx\,\tm_2}
      + (k_1 + k_2 + k_3) \mtimes \eme{d}{\tm_0}{\tmtwo}
    \\
    & = &
        \eme{d}{\tm_0}{\tm}
      + (k_1 + k_2 + k_3) \mtimes \eme{d}{\tm_0}{\tmtwo}
    \end{array}
  \]
  So taking $k := k_1 + k_2 + k_3$ we are done.
\item
  \label{lower_substitution_lemma:application_non_redex}
  $\tm = \tm_1\,\tm_2$
  where $\tm_1$ is not a $\symg$-abstraction of degree $d$:
  Note by \rlem{substitution_of_degree_d_does_not_create_abstractions}
  that $\tm_1\sub{\var}{\tm_2}$
  is not a $\symg$-abstraction of degree $d$.
  Then by \ih there exist $k_1, k_2 \in \Natz$ such that:
  \[
    \begin{array}{rcl}
      \eme{d}{\tm_0}{\tm\sub{\var}{\tmtwo}}
    & = &
      \eme{d}{\tm_0}{\tm_1\sub{\var}{\tmtwo}\,\tm_2\sub{\var}{\tmtwo}}
    \\
    &&
        \eme{d}{\tm_0}{\tm_1\sub{\var}{\tmtwo}}
      + \eme{d}{\tm_0}{\tm_2\sub{\var}{\tmtwo}}
    \\
    & = &
        (\eme{d}{\tm_0}{\tm_1} + k_1 \mtimes \eme{d}{\tm_0}{\tmtwo})
      + (\eme{d}{\tm_0}{\tm_2} + k_2 \mtimes \eme{d}{\tm_0}{\tmtwo})
    \\
    & = &
        \eme{d}{\tm_0}{\tm_1} + \eme{d}{\tm_0}{\tm_2}
      + (k_1 + k_2) \mtimes \eme{d}{\tm_0}{\tmtwo})
    \\
    & = &
        \eme{d}{\tm_0}{\tm_1\,\tm_2}
      + (k_1 + k_2) \mtimes \eme{d}{\tm_0}{\tmtwo})
    \\
    \end{array}
  \]
  So taking $k := k_1 + k_2$ we are done.
\item
  $\tm = \bin{\tm_1}{\tm_2}$:
  Similar to the previous case.
\item
  $\sctx = \ctxhole$:
  Take $k := 0$.
  Then
  $\eme{d}{\tm_0}{\ctxhole\sub{\var}{\tmtwo}}
  = \eme{d}{\tm_0}{\ctxhole}
  = \eme{d}{\tm_0}{\ctxhole} + 0 \mtimes \eme{d}{\tm_0}{\tmtwo}$.
\item
  $\sctx = \bin{\sctx'}{\tm}$:
  Similar to
  case~\ref{lower_substitution_lemma:application_non_redex}.
\end{enumerate}
\end{proof}

\begin{proposition}[Low/decrease]
\lprop{appendix:lower_reduction}
Let $D \in \Natz$. Then the following hold:
\begin{enumerate}
\item
  \label{lower_reduction:bme_decrease}
  If $1 \leq d \leq j \leq D$
  and $\tm \tod{d} \tm'$
  then $\bme{j}{\tm} \mgt \bme{j}{\tm'}$.
\item
  \label{lower_reduction:eme_left_decrease}
  If $1 \leq d \leq j \leq D$
  and $\tm_0 \tod{d} \tm'_0$
  then $\eme{j}{\tm_0}{\tm} \mgtmap \eme{j}{\tm'_0}{\tm}$.
\item
  \label{lower_reduction:eme_right_equal_decrease}
  If $1 \leq d \leq D$
  and $\tm_0 \tod{d} \tm'_0$
  and $\tm \tod{d} \tm'$,
  then for all $\mset \in \ameset{d-1}$
  we have
  $\eme{d}{\tm_0}{\tm} \mgt \eme{d}{\tm'_0}{\tm'} + \mset$.
\item
  \label{lower_reduction:eme_right_nonequal_decrease}
  If $1 \leq d < j \leq D$
  and $\tm_0 \tod{d} \tm'_0$
  and $\tm \tod{d} \tm'$
  then $\eme{j}{\tm_0}{\tm} \mgeq \eme{j}{\tm'_0}{\tm'}$.
\item
  \label{lower_reduction:ame_decrease}
  If $1 \leq d \leq D$
  and $\tm \tod{d} \tm'$
  then $\ame{D}{\tm} \mgt \ame{D}{\tm'}$.
\end{enumerate}
\end{proposition}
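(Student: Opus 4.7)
The plan is to prove all five items simultaneously by induction on $D$. The overall dependency pattern is dictated by the structure of the statement: item~\Item{1} resorts to the \ih (on a strictly smaller $D$) via item~\Item{\ref{lower_reduction:ame_decrease}}; items~\Item{2}--\Item{4} may invoke item~\Item{1} (and previously-established items) \emph{at the same} $D$, provided the degree index is not decreased beyond the hypothesis of the invocation; and item~\Item{5} follows by an algebraic manipulation from items~\Item{\ref{lower_reduction:eme_right_equal_decrease}} and~\Item{\ref{lower_reduction:eme_right_nonequal_decrease}}. Items~\Item{2}, \Item{3}, \Item{4} themselves will be proved by a nested induction on the structure of $\tm$ (generalized to memories~$\sctx$ in the customary way, so that the inductive hypothesis is available for the children of wrappers and applications).

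For item~\Item{1} I would split on whether $d = j$ or $d < j$. In the equal case, given a fixed $\redex : \tm \tod{d} \tm'$, the map $\varphi(\redseqtwo) \eqdef \redex\,\redseqtwo$ injects $Y = \set{\redseqtwo : \tm' \rtod{j} \cdot}$ into $X = \set{\redseq : \tm \rtod{j} \cdot}$, and the empty reduction $\emptyseq : \tm \rtod{j} \tm$ lies in $X \setminus \varphi(Y)$, which supplies the strict inequality since the extra contribution $\ame{j-1}{\tm}$ is a genuine element of a non-empty multiset difference. In the unequal case $d < j$ the injection $\varphi : Y \to X$ comes from the lifting property~(\rprop{retraction_of_higher_degree_steps}), which produces for each $\redseqtwo : \tm' \rtod{j} \tmtwo'$ a reduction $\varphi(\redseqtwo) : \tm \rtod{j} \tmtwo_\redseqtwo$ together with $\tmtwo' \rtod{j} \tmthree_\redseqtwo$ and $\tmtwo_\redseqtwo \todplus{d} \tmthree_\redseqtwo$; then the \ih applied to item~\Item{\ref{lower_reduction:ame_decrease}} with parameter $j-1 < D$ gives $\ame{j-1}{\tmtwo_\redseqtwo} \mgt \ame{j-1}{\tmthree_\redseqtwo}$, and high/increase~(\rprop{upper_reduction}) gives $\ame{j-1}{\tmthree_\redseqtwo} \mgeq \ame{j-1}{\tmtwo'}$. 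Assembling these inequalities into the multiset builder yields $\msb{\ame{j-1}{\tgt{\varphi(\redseqtwo)}}}{\redseqtwo \in Y} \mgtmap \msb{\ame{j-1}{\tgt{\redseqtwo}}}{\redseqtwo \in Y}$, which by~\rlem{properties_of_mgtmap} is a strict inequality because $Y$ is non-empty (it contains the empty reduction).

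For item~\Item{2} the nested induction on $\tm$ is straightforward: in every constructor case we simply recombine the pointwise-comparisons produced by the internal \ih, and the only case that needs attention is when the subterm under analysis is itself a $\symg$-redex of degree $j$, where we use item~\Item{1} (at parameter $j$) to produce the required $\mgtmap$ comparison of the coordinate $(j,\bme{j}{\tm_0}) \mgt (j,\bme{j}{\tm'_0})$. Item~\Item{4} is similar but uses item~\Item{\ref{lower_reduction:eme_left_decrease}} together with the observation that a step at degree $d$ can create redexes only of strictly smaller degree, so no genuine new summands of degree~$j > d$ appear on the right-hand side. The principal obstacle---and the technical heart of the whole proposition---lies in item~\Item{3}, specifically in the root-case where $\tm = (\lam{\var}{\tmtwo})\sctx\,\tmthree$ is precisely the redex being contracted, so $\tm' = \tmtwo\sub{\var}{\tmthree}\garb{\tmthree}\sctx$. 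There $\tmthree$ is the argument of an abstraction of degree $d$, hence of type-height strictly below $d$, and in particular $\tmthree$ is not a $\symg$-abstraction of degree $d$; the lower substitution lemma~(\rlem{lower_substitution_lemma}) therefore yields some $k \in \Natz$ with $\eme{d}{\tm'_0}{\tmtwo\sub{\var}{\tmthree}} = \eme{d}{\tm'_0}{\tmtwo} + k \mtimes \eme{d}{\tm'_0}{\tmthree}$. Combining this with the crucial consequence of item~\Item{\ref{lower_reduction:eme_left_decrease}} that $\eme{d}{\tm_0}{\tmthree} \mgtmap \eme{d}{\tm'_0}{\tmthree}$ and hence $\eme{d}{\tm_0}{\tmthree} \mgeq (1+k) \mtimes \eme{d}{\tm'_0}{\tmthree}$, one can absorb the $k$ copies of $\eme{d}{\tm'_0}{\tmthree}$ created by substitution into the single copy of $\eme{d}{\tm_0}{\tmthree}$ appearing on the left, while the surviving $(d,\bme{d}{\tm_0}) \mgt (d,\bme{d}{\tm'_0})$ token plus the freedom to add an arbitrary $\mset \in \ameset{d-1}$ is paid for by item~\Item{1}.

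With items~\Item{1}--\Item{4} in hand, item~\Item{5} follows by decomposing $\ame{D}{\tm} = \sum_{i=1}^{D} \eme{i}{\tm}{\tm}$, isolating the summand with $i = d$ and using item~\Item{\ref{lower_reduction:eme_right_equal_decrease}} with the slack $\mset \eqdef \ame{d-1}{\tm'}$ to create a strict inequality large enough to absorb $\ame{d-1}{\tm'}$, and using item~\Item{\ref{lower_reduction:eme_right_nonequal_decrease}} to dominate the remaining summands $\sum_{j = d+1}^{D} \eme{j}{\tm}{\tm} \mgeq \sum_{j=d+1}^{D} \eme{j}{\tm'}{\tm'}$; recombining these two inequalities rebuilds $\ame{D}{\tm'}$ on the right. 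Throughout, the bookkeeping requirement is that no invocation of item~\Item{\ref{lower_reduction:ame_decrease}} at the current inductive level is used with parameter $D$; all recursive appeals to item~\Item{5} go through item~\Item{1} applied at $j - 1 < D$, so the induction is well-founded.
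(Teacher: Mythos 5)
Your proposal matches the paper's own proof in essentially every respect: the simultaneous induction on $D$, the case split $d=j$ versus $d<j$ in item~1 with the injections $\varphi(\redseqtwo)=\redex\,\redseqtwo$ and the one built from the lifting property together with high/increase, the strictness coming from the empty reduction, the lower substitution lemma with the $(1+k)$-absorption via item~2 and the pointwise order in the root case of item~3, and the final decomposition of $\ame{D}{\tm}$ with $\mset := \ame{d-1}{\tm'}$. The bookkeeping you note about the induction being well-founded (all appeals to item~5 going through item~1 at $j-1<D$) is exactly the paper's arrangement, so there is nothing to add.
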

\begin{proof}
We prove a more general version of the statement:
in items 
\ref{lower_reduction:eme_left_decrease},
\ref{lower_reduction:eme_right_equal_decrease}, and
\ref{lower_reduction:eme_right_nonequal_decrease}
we allow $\tm$ to be either a term or a memory.
For example, the statement of item~\ref{lower_reduction:eme_left_decrease}
is generalized as follows:
  if $1 \leq d \leq j \leq D$
  and $\tm_0 \tod{d} \tm'_0$
  then $\eme{j}{\tm_0}{\anon} \mgtmap \eme{j}{\tm'_0}{\anon}$,
  where $\anon$ is either a term or a memory.

We prove all items simultaneously by induction on $D$.
Note that:
item~1. resorts to the \ih;
item~2. resorts to item~1. (without decreasing $D$);
items~3. and~4. resort to items~1. and~2. (without decreasing $D$);
item~5. resorts to items~3. and~4. (without decreasing $D$).
\begin{enumerate}
\item
  Let $1 \leq d \leq j \leq D$
  and $\tm \tod{d} \tm'$.
  We argue that $\bme{j}{\tm} \mgt \bme{j}{\tm'}$.
  Let $X$ and $Y$
  be the sets of reduction sequences
  $X := \set{\redseq \ST (\exists{\tmtwo})\ \redseq : \tm \rtod{j} \tmtwo}$
  and
  $Y := \set{\redseqtwo \ST (\exists{\tmtwo'})\ \redseqtwo : \tm' \rtod{j} \tmtwo'}$.
  Note that, by definition,
  $\bme{j}{\tm} =
   \msb{\ame{j-1}{\tgt{\redseq}}}{\redseq \in X}$
  and
  $\bme{j}{\tm'} =
   \msb{\ame{j-1}{\tgt{\redseqtwo}}}{\redseqtwo \in Y}$.
  We consider two subcases, depending on whether $d = j$ or $d < j$:
  \begin{enumerate}
  \item
    If $d = j$,
    let $\redex$ be the step $\redex : \tm \tod{d} \tm'$.
    We construct a function $\varphi : Y \to X$
    given by $\varphi(\redseqtwo) = \redex\,\redseqtwo$.
    Observe that $\varphi$ is injective
    and that if $\redseqtwo : \tm' \rtod{d} \tmtwo'$
    then $\redex\,\redseqtwo : \tm \rtod{d} \tmtwo'$,
    and in particular $\redseqtwo$ and $\varphi(\redseqtwo)$
    have the same target.
    Let $Z = X \setminus \varphi(Y)$, so that $X = \varphi(Y) \uplus Z$.
    Note that:
    \[
      \begin{array}{rcll}
        \bme{j}{\tm}
        & = &
        \msb{\ame{j-1}{\tgt{\redseq}}}{\redseq \in X}
      \\
        & = &
        \msb{\ame{j-1}{\tgt{\redseq}}}{\redseq \in \varphi(Y) \uplus Z}
      \\
        & = &
           \msb{\ame{j-1}{\tgt{\redseq}}}{\redseq \in \varphi(Y)}
         + \msb{\ame{j-1}{\tgt{\redseq}}}{\redseq \in Z}
      \\
        & = &
           \msb{\ame{j-1}{\tgt{\redex\redseqtwo}}}{\redseqtwo \in Y}
         + \msb{\ame{j-1}{\tgt{\redseq}}}{\redseq \in Z}
        & \text{since $\varphi$ is injective}
      \\
        & = &
           \msb{\ame{j-1}{\tgt{\redseqtwo}}}{\redseqtwo \in Y}
         + \msb{\ame{j-1}{\tgt{\redseq}}}{\redseq \in Z}
      \\
        & = &
           \bme{j}{\tm'}
         + \msb{\ame{j-1}{\tgt{\redseq}}}{\redseq \in Z}
      \end{array}
    \]
    By this chain of equations,
    in order to conclude that $\bme{j}{\tm} \mgt \bme{j}{\tm'}$,
    it suffices to show that $Z$ is non-empty.
    Indeed, let $\emptyseq : \tm \rtod{d} \tm$ be the empty reduction sequence.
    Then $\emptyseq \in X \setminus \varphi(Y)$, so $\emptyseq \in Z$.
  \item
    If $d < j$,
    we construct a function $\varphi : Y \to X$
    using ~\rprop{retraction_of_higher_degree_steps}.
    More precisely, since $d < j$ and $\tm \tod{d} \tm'$,
    for each reduction sequence $\redseqtwo : \tm' \rtod{j} \tmtwo'$
    by~\rprop{retraction_of_higher_degree_steps}
    there exist a term $\tmtwo_\redseqtwo$ and a term $\tmthree_\redseqtwo$,
    such that there is a reduction sequence
    $\varphi(\redseqtwo) : \tm \rtod{j} \tmtwo_\redseqtwo$
    and such that $\tmtwo' \rtod{j} \tmthree_\redseqtwo$,
    and $\tmtwo_\redseqtwo \todplus{d} \tmthree_\redseqtwo$ in at least one step.

    First, we claim that
    $\ame{j-1}{\tgt{\varphi(\redseqtwo)}} \mgt \ame{j-1}{\tgt{\redseqtwo}}$
    for every $\redseqtwo \in Y$.
    Indeed:
    \[
      \begin{array}{rcll}
        \ame{j-1}{\tgt{\varphi(\redseqtwo)}}
        & = &
        \ame{j-1}{\tmtwo_\redseqtwo}
      \\
        & \mgt &
        \ame{j-1}{\tmthree_\redseqtwo}
        & \text{by item \ref{lower_reduction:ame_decrease} of the \ih}
      \\
        & \mgeq &
        \ame{j-1}{\tmtwo'}
        & \text{by high/increase~(\rprop{upper_reduction}(\ref{upper_reduction:ame_increase}))}
      \\
        & = &
        \ame{j-1}{\tgt{\redseqtwo}}
      \end{array}
    \]
    To be able to apply item \ref{lower_reduction:ame_decrease} of the \ih,
    observe that
    we have that $1 \leq d \leq j - 1 < D$
    holds because we know $d < j \leq D$.
    We resort to the \ih
    as many times as the length of the reduction
    $\tmtwo_\redseqtwo \todplus{d} \tmthree_\redseqtwo$.
    The inequality is strict because this reduction contains at least one step.
    To be able to apply the high/increase property,
    observe that $0 \leq j - 1 < j$.
    We resort to this lemma as many times as the length
    of the reduction $\tmtwo' \rtod{j} \tmthree_\redseqtwo$,
    which may be empty.
    To conclude the proof,
    let $Z = X \setminus \varphi(Y)$, so that $X = \varphi(Y) \uplus Z$,
    and note that:
    \[
      \begin{array}{rcll}
        \bme{j}{\tm}
        & = &
        \msb{\ame{j-1}{\tgt{\redseq}}}{\redseq \in X}
      \\
        & = &
        \msb{\ame{j-1}{\tgt{\redseq}}}{\redseq \in \varphi(Y) \uplus Z}
      \\
        & = &
           \msb{\ame{j-1}{\tgt{\redseq}}}{\redseq \in \varphi(Y)}
         + \msb{\ame{j-1}{\tgt{\redseq}}}{\redseq \in Z}
      \\
        & = &
           \msb{\ame{j-1}{\tgt{\varphi(\redseqtwo)}}}{\redseqtwo \in Y}
         + \msb{\ame{j-1}{\tgt{\redseq}}}{\redseq \in Z}
      \\
        & \mgeq &
           \msb{\ame{j-1}{\tgt{\varphi(\redseqtwo)}}}{\redseqtwo \in Y}
      \\
        & \mgt &
           \msb{\ame{j-1}{\tgt{\redseqtwo}}}{\redseqtwo \in Y}
        & \text{($\star$)}
      \\
        & = &
           \bme{j}{\tm'}
      \end{array}
    \]
    For the step marked with ($\star$),
    note that
    $\msb{\ame{j-1}{\tgt{\varphi(\redseqtwo)}}}{\redseqtwo \in Y}
     \mgtmap
     \msb{\ame{j-1}{\tgt{\redseqtwo}}}{\redseqtwo \in Y}$
    because
    $\ame{j-1}{\tgt{\varphi(\redseqtwo)}} \mgt \ame{j-1}{\tgt{\redseqtwo}}$
    holds by the claim above.
    Moreover, $Y$ is non-empty because the empty reduction sequence
    $\emptyseq : \tm' \rtod{j} \tm'$ is in $Y$,
    so we may resort to \rlem{properties_of_mgtmap}.
  \end{enumerate}
\item
  Let $1 \leq d \leq j \leq D$
  and $\tm_0 \tod{d} \tm'_0$.
  We argue that $\eme{j}{\tm_0}{\anon} \mgtmap \eme{j}{\tm'_0}{\anon}$,
  where $\anon$
  is either a term ($\anon = \tm$)
  or a memory ($\anon = \sctx$).
  We proceed by induction on $\anon$:
  \begin{enumerate}
  \item
    $\tm = \var$:
    Then
    $\eme{j}{\tm_0}{\var}
     = \msempty
     \mgtmap \msempty
     = \eme{j}{\tm'_0}{\var}$.
  \item
    $\tm = \lam{\var}{\tmtwo}$:
    Then
    $\eme{j}{\tm_0}{\lam{\var}{\tmtwo}}
    = \eme{j}{\tm_0}{\tmtwo}
    = \eme{j}{\tm'_0}{\tmtwo}
    = \eme{j}{\tm'_0}{\lam{\var}{\tmtwo}}$
    by the internal \ih.
  \item
    If $\tm = \app{(\lam{\var}{\tmtwo})\sctx}{\tmthree}$
    is a redex of degree $j$:
    Then:
      \[
        \begin{array}{rcll}
        &&
          \eme{j}{\tm_0}{(\lam{\var}{\tmtwo})\sctx\,\tmthree}
        \\
        & = &
            \eme{j}{\tm_0}{\tmtwo}
          + \eme{j}{\tm_0}{\sctx}
          + \eme{j}{\tm_0}{\tmthree}
          + \ms{(j,\bme{j}{\tm_0})}
        \\
        & \mgtmap &
            \eme{j}{\tm'_0}{\tmtwo}
          + \eme{j}{\tm'_0}{\sctx}
          + \eme{j}{\tm'_0}{\tmthree}
          + \ms{(j,\bme{j}{\tm_0})}
          \\&&\HS\text{by the internal \ih}
        \\
        & \mgtmap &
            \eme{j}{\tm'_0}{\tmtwo}
          + \eme{j}{\tm'_0}{\sctx}
          + \eme{j}{\tm'_0}{\tmthree}
          + \ms{(j,\bme{j}{\tm'_0})}
          \\&&\HS\text{
                    since $\bme{j}{\tm_0} \mgt \bme{j}{\tm'_0}$
                    by item \ref{lower_reduction:bme_decrease}}
        \\
        & = &
            \eme{j}{\tm'_0}{(\lam{\var}{\tmtwo})\sctx\,\tmthree}
        \end{array}
      \]
      Note that we can resort to
      item~\ref{lower_reduction:bme_decrease},
      because $1 \leq d \leq j \leq D$
      and $\tm_0 \tod{d} \tm'_0$.
  \item
    If $\tm = \app{\tmtwo}{\tmthree}$ is not a redex of degree $j$:
    Then
    $\eme{j}{\tm_0}{\tmtwo\,\tmthree}
     = \eme{j}{\tm_0}{\tmtwo} + \eme{j}{\tm_0}{\tmthree}
     \mgtmap \eme{j}{\tm'_0}{\tmtwo} + \eme{j}{\tm'_0}{\tmthree}
     = \eme{j}{\tm'_0}{\tmtwo\,\tmthree}$
    by the internal \ih.
  \item
    $\tm = \bin{\tmtwo}{\tmthree}$:
    Then
    $\eme{j}{\tm_0}{\bin{\tmtwo}{\tmthree}}
     = \eme{j}{\tm_0}{\tmtwo} + \eme{j}{\tm_0}{\tmthree}
     \mgtmap \eme{j}{\tm'_0}{\tmtwo} + \eme{j}{\tm'_0}{\tmthree}
     = \eme{j}{\tm'_0}{\bin{\tmtwo}{\tmthree}}$
    by the internal \ih.
  \item
    $\sctx = \ctxhole$:
    Then
    $\eme{j}{\tm_0}{\ctxhole}
    = \msempty
    \mgtmap \msempty
    =
    \eme{j}{\tm'_0}{\ctxhole}$.
  \item
    $\sctx = \bin{\sctx_1}{\tm}$:
    $\eme{j}{\tm_0}{\bin{\sctx_1}{\tm}}
    = \eme{j}{\tm_0}{\sctx_1} + \eme{j}{\tm_0}{\tm}
    \mgtmap \eme{j}{\tm'_0}{\sctx_1} + \eme{j}{\tm'_0}{\tm}$
    by \ih.
  \end{enumerate}
\item
  Let $1 \leq d \leq D$
  and $\tm_0 \tod{d} \tm'_0$
  and $\anon \tod{d} \anon'$,
  where $\anon,\anon'$
  are either terms ($\anon = \tm$ and $\anon = \tm'$)
  or memories ($\anon = \sctx$ and $\anon = \sctx'$).
  We argue that for all $\mset \in \ameset{d-1}$
  we have
  $\eme{d}{\tm_0}{\anon} \mgt \eme{d}{\tm'_0}{\anon'} + \mset$.
  We proceed by induction on $\anon$:
  \begin{enumerate}
  \item
    $\tm = \var$:
    Impossible, as there are no reduction steps $\var \tod{d} \tm'$.
  \item
    $\tm = \lam{\var}{\tmtwo}$:
    Then the step is of the form
    $\tm
     = \lam{\var}{\tmtwo}
     \tod{d} \lam{\var}{\tmtwo'}
     = \tm'$
    with $\tmtwo \tod{d} \tmtwo'$.
    Let $\mset \in \ameset{d-1}$.
    Then
    $\eme{d}{\tm_0}{\tm}
    = \eme{d}{\tm_0}{\lam{\var}{\tmtwo}}
    = \eme{d}{\tm_0}{\tmtwo}
    \mgt \eme{d}{\tm'_0}{\tmtwo'} + \mset
    = \eme{d}{\tm'_0}{\lam{\var}{\tmtwo'}} + \mset
    = \eme{d}{\tm'_0}{\tm'} + \mset$
    by the internal \ih.
  \item
    If $\tm = \app{(\lam{\var}{\tmtwo})\sctx}{\tmthree}$
    is the redex of degree $d$ contracted by the step $\tm \tod{d} \tm'$:
    If the reduction step is at the root,
    then $(\lam{\var}{\tmtwo})\sctx$
    is a $\symg$-abstraction of degree $d$,
    and the step is of the form
    $\tm
     = (\lam{\var}{\tmtwo})\sctx\,\tmthree
     \tod{d} \tmtwo\sub{\var}{\tmthree}\garb{\tmthree}\sctx
     = \tm'$.
    Given that $(\lam{\var}{\tmtwo})\sctx$
    is a $\symg$-abstraction of degree $d$,
    the type of its argument $\tmthree$ is of height strictly less than $d$,
    that is, $\height{\typeof{\tmthree}} < d$.
    In particular, $\tmthree$ is not an abstraction of degree $d$
    so we may apply
    the lower substitution lemma~(\rlem{lower_substitution_lemma})
    which ensures that there exists $k \in \Natz$
    such that
    $\eme{d}{\tm'_0}{\tmtwo\sub{\var}{\tmthree}} =
     \eme{d}{\tm'_0}{\tmtwo} + k \mtimes \eme{d}{\tm'_0}{\tmthree}$.
    Furthermore, observe that
    $\eme{d}{\tm_0}{\tmthree}
     \mgeq
     (1 + k) \mtimes \eme{d}{\tm'_0}{\tmthree}$.
    Indeed,
    by item~\ref{lower_reduction:eme_left_decrease}
    we have that $\eme{d}{\tm_0}{\tmthree} \mgtmap \eme{d}{\tm'_0}{\tmthree}$,
    so by \rlem{properties_of_mgtmap}
    $\eme{d}{\tm_0}{\tmthree}
     \mgeq (1 + k) \mtimes \eme{d}{\tm'_0}{\tmthree}$.
    To conclude this case:
    \[
      \begin{array}{rcll}
      &&
        \eme{d}{\tm_0}{\tm}
      \\
      & = &
        \eme{d}{\tm_0}{(\lam{\var}{\tmtwo})\sctx\,\tmthree}
      \\
      & = &
          \eme{d}{\tm_0}{\tmtwo}
        + \eme{d}{\tm_0}{\sctx}
        + \eme{d}{\tm_0}{\tmthree}
        + \ms{(d,\bme{d}{\tm_0})}
      \\
      & \mgeq &
          \eme{d}{\tm'_0}{\tmtwo}
        + \eme{d}{\tm'_0}{\sctx}
        + \eme{d}{\tm_0}{\tmthree}
        + \ms{(d,\bme{d}{\tm_0})}
        \\&&\HS\text{by item~\ref{lower_reduction:eme_left_decrease}
                     and \rlem{properties_of_mgtmap}}
      \\
      & \mgeq &
          \eme{d}{\tm'_0}{\tmtwo}
        + \eme{d}{\tm'_0}{\sctx}
        + (1 + k) \mtimes \eme{d}{\tm'_0}{\tmthree}
        + \ms{(d,\bme{d}{\tm_0})}
        \\&&\HS\text{since
                     $\eme{d}{\tm_0}{\tmthree}
                      \mgeq (1 + k) \mtimes \eme{d}{\tm'_0}{\tmthree}$}
      \\
      & \mgt &
          \eme{d}{\tm'_0}{\tmtwo}
        + \eme{d}{\tm'_0}{\sctx}
        + (1 + k) \mtimes \eme{d}{\tm'_0}{\tmthree}
        + \mset
        \\&&\HS\text{since $\mset \in \ameset{d-1}$,
                     so $\ms{(d,\bme{d}{\tm_0})} \mgt \mset$}
      \\
      & = &
          \eme{d}{\tm'_0}{\tmtwo}
        + k \mtimes \eme{d}{\tm'_0}{\tmthree}
        + \eme{d}{\tm'_0}{\tmthree}
        + \eme{d}{\tm'_0}{\sctx}
        + \mset
      \\
      & = &
          \eme{d}{\tm'_0}{\tmtwo\sub{\var}{\tmthree}}
        + \eme{d}{\tm'_0}{\tmthree}
        + \eme{d}{\tm'_0}{\sctx}
        + \mset
      \\
      & = &
          \eme{d}{\tm'_0}{\tmtwo\sub{\var}{\tmthree}\garb{\tmthree}\sctx}
        + \mset
      \\
      & = &
          \eme{d}{\tm'_0}{\tm'}
        + \mset
      \end{array}
    \]
  \item
    If $\tm = \app{(\lam{\var}{\tmtwo})\sctx}{\tmthree}$
    is a redex of degree $d$,
    but not the redex contracted by the step $\tm \tod{d} \tm'$:
    There are three subcases, depending on whether the 
    step $\tm \tod{d} \tm'$ is internal to $\tmtwo$, internal to $\sctx$,
    or internal to $\tmthree$.
    All these subcases are similar;
    we only give the proof for the case in which
    the step is internal to $\tmtwo$.
    Then the step is of the form
    $\tm
     = \app{(\lam{\var}{\tmtwo})\sctx}{\tmthree}
     \tod{d} \app{(\lam{\var}{\tmtwo'})\sctx}{\tmthree}
     = \tm'$
    with $\tmtwo \tod{d} \tmtwo'$,
    and we have:
    \[
      \begin{array}{rcll}
      &&
        \eme{d}{\tm_0}{\tm}
      \\
      & = &
        \eme{d}{\tm_0}{\app{(\lam{\var}{\tmtwo})\sctx}{\tmthree}}
      \\
      & = &
          \eme{d}{\tm_0}{\tmtwo}
        + \eme{d}{\tm_0}{\sctx}
        + \eme{d}{\tm_0}{\tmthree}
        + \ms{(d,\bme{d}{\tm_0})}
      \\
      & \mgt &
          \eme{d}{\tm'_0}{\tmtwo'}
        + \eme{d}{\tm_0}{\sctx}
        + \eme{d}{\tm_0}{\tmthree}
        + \ms{(d,\bme{d}{\tm_0})}
        + \mset
        & \text{by the internal \ih}
      \\
      & \mgeq &
          \eme{d}{\tm'_0}{\tmtwo'}
        + \eme{d}{\tm'_0}{\sctx}
        + \eme{d}{\tm'_0}{\tmthree}
        + \ms{(d,\bme{d}{\tm_0})}
        + \mset
        & \text{by item~\ref{lower_reduction:eme_left_decrease}
                and \rlem{properties_of_mgtmap}}
      \\
      & \mgeq &
          \eme{d}{\tm'_0}{\tmtwo'}
        + \eme{d}{\tm'_0}{\sctx}
        + \eme{d}{\tm'_0}{\tmthree}
        + \ms{(d,\bme{d}{\tm'_0})}
        + \mset
        & \text{by item~\ref{lower_reduction:bme_decrease}}
      \\
      & = &
          \eme{d}{\tm'_0}{(\lam{\var}{\tmtwo'})\sctx\,\tmthree}
        + \mset
      \\
      & = &
          \eme{d}{\tm'_0}{\tm'}
        + \mset
      \end{array}
    \]
  \item
    \label{lower_reduction:eme_right_equal_decrease:application_non_redex}
    If $\tm = \app{\tmtwo}{\tmthree}$
    is not a redex of degree $d$:
    There are two subcases, depending on whether the 
    step $\tm \tod{d} \tm'$ is internal to $\tmtwo$ or internal to $\tmthree$:
    \begin{enumerate}
    \item
      If the step is internal to $\tmtwo$,
      then the step is of the form
      $\tm
       = \app{\tmtwo}{\tmthree}
       \tod{d} \app{\tmtwo'}{\tmthree}
       = \tm'$
      with $\tmtwo \tod{d} \tmtwo'$.
      Note that $\tmtwo$ is not a $\symg$-abstraction of degree $d$
      (because $\tmtwo\,\tmthree$ is not a redex of degree $d$).
      Hence by \rlem{reduction_does_not_create_higher_degree_redexes}
      $\tmtwo'$ is not a $\symg$-abstraction of degree $d$.
      Then:
      \[
        \begin{array}{rcll}
          \eme{d}{\tm_0}{\tm}
        & = &
          \eme{d}{\tm_0}{\tmtwo\,\tmthree}
        \\
        & = &
          \eme{d}{\tm_0}{\tmtwo} + \eme{d}{\tm_0}{\tmthree}
        \\
        & \mgt &
          \eme{d}{\tm'_0}{\tmtwo'} + \eme{d}{\tm_0}{\tmthree} + \mset
          & \text{by the internal \ih}
        \\
        & \mgeq &
          \eme{d}{\tm'_0}{\tmtwo'} + \eme{d}{\tm'_0}{\tmthree} + \mset
          & \text{by item~\ref{lower_reduction:eme_left_decrease}
                  and \rlem{properties_of_mgtmap}}
        \\
        & = &
          \eme{d}{\tm'_0}{\tmtwo'\,\tmthree} + \mset
        \\
        & = &
          \eme{d}{\tm'_0}{\tm'} + \mset
        \end{array}
      \]
    \item
      If the step is internal to $\tmthree$:
      Similar to the previous case.
    \end{enumerate}
  \item
    $\tm = \bin{\tmtwo}{\tmthree}$:
    Similar to case~\ref{lower_reduction:eme_right_equal_decrease:application_non_redex}.
  \item
    $\sctx = \ctxhole$:
    Impossible, as there are no reduction steps $\ctxhole \tod{d} \sctx'$.
  \item
    $\sctx = \bin{\sctx_1}{\tm}$:
    Similar to case~\ref{lower_reduction:eme_right_equal_decrease:application_non_redex}
  \end{enumerate}
\item
  Let $1 \leq d < j \leq D$
  and $\tm_0 \tod{d} \tm_0'$
  and $\anon \tod{d} \anon'$,
  where $\anon,\anon'$
  are either terms ($\anon = \tm$ and $\anon = \tm'$)
  or memories ($\anon = \sctx$ and $\anon = \sctx'$).
  We argue that $\eme{j}{\tm_0}{\anon} \mgeq \eme{j}{\tm'_0}{\anon'}$.
  We proceed by induction on $\anon$:
  \begin{enumerate}
  \item
    $\tm = \var$:
    Impossible, as there are no reduction steps $\var \tod{d} \tm'$.
  \item
    $\tm = \lam{\var}{\tmtwo}$:
    Then the step is of the form
    $\tm
     = \lam{\var}{\tmtwo}
     \tod{d} \lam{\var}{\tmtwo'}
     = \tm'$
    with $\tmtwo \tod{d} \tmtwo'$.
    Then
    $\eme{j}{\tm_0}{\tm}
     = \eme{j}{\tm_0}{\lam{\var}{\tmtwo}}
     = \eme{j}{\tm_0}{\tmtwo}
     \mgeq \eme{j}{\tm'_0}{\tmtwo'}
     = \eme{j}{\tm'_0}{\lam{\var}{\tmtwo'}}
     = \eme{j}{\tm'_0}{\tm'}$
    by the internal \ih.
  \item
    If $\tm = (\lam{\var}{\tmtwo})\sctx\,\tmthree$
    is the redex of degree $d$ contracted by the step $\tm \tod{d} \tm'$:
    Then the step is of the form
    $\tm
     = (\lam{\var}{\tmtwo})\sctx\,\tmthree
     \tod{d} \tmtwo\sub{\var}{\tmthree}\garb{\tmthree}\sctx
     = \tm'$.
    Recall that by hypothesis $d < j$,
    and note that the abstraction $(\lam{\var}{\tmtwo})\sctx$
    is of degree $d$,
    so the type of the argument $\tmthree$ must be of height less than $d$,
    that is, $\height{\typeof{\tmthree}} < d < j$.
    In particular, the argument $\tmthree$ cannot be a $\symg$-abstraction
    of degree $j$,
    so we may apply the lower substitution lemma~(\rlem{lower_substitution_lemma}),
    which ensures that there exists $k \in \Natz$
    such that
    $\eme{j}{\tm'_0}{\tmtwo\sub{\var}{\tmthree}}
     = \eme{j}{\tm'_0}{\tmtwo} + k \mtimes \eme{j}{\tm'_0}{\tmthree}$.
    Furthermore, observe that
    $\eme{j}{\tm_0}{\tmthree}
     \mgeq (1 + k) \mtimes \eme{j}{\tm'_0}{\tmthree}$.
    Indeed by item~\ref{lower_reduction:eme_left_decrease}
    $\eme{j}{\tm_0}{\tmthree}
     \mgtmap \eme{j}{\tm'_0}{\tmthree}$
    so by \rlem{properties_of_mgtmap}
    $\eme{j}{\tm_0}{\tmthree}
     \mgeq (1 + k) \mtimes \eme{j}{\tm'_0}{\tmthree}$.
    Moreover, note that $\tm$ is not a redex of degree $j$, so:
    \[
      \begin{array}{rcll}
        \eme{j}{\tm_0}{\tm}
      & = &
        \eme{j}{\tm_0}{(\lam{\var}{\tmtwo})\sctx\,\tmthree}
      \\
      & = &
          \eme{j}{\tm_0}{\tmtwo}
        + \eme{j}{\tm_0}{\sctx}
        + \eme{j}{\tm_0}{\tmthree}
      \\
      & \mgeq &
          \eme{j}{\tm'_0}{\tmtwo}
        + \eme{j}{\tm'_0}{\sctx}
        + \eme{j}{\tm_0}{\tmthree}
        \HS\text{by item~\ref{lower_reduction:eme_left_decrease}
                     and~\rlem{properties_of_mgtmap}}
      \\
      & \mgeq &
          \eme{j}{\tm'_0}{\tmtwo}
        + \eme{j}{\tm'_0}{\sctx}
        + (1 + k) \mtimes \eme{j}{\tm'_0}{\tmthree}
        \\&&\HS\text{since
                     $\eme{j}{\tm_0}{\tmthree}
                      \mgeq (1 + k) \mtimes \eme{j}{\tm'_0}{\tmthree}$}
      \\
      & = &
          \eme{j}{\tm'_0}{\tmtwo}
        + k \mtimes \eme{j}{\tm'_0}{\tmthree}
        + \eme{j}{\tm'_0}{\tmthree}
        + \eme{j}{\tm'_0}{\sctx}
      \\
      & = &
          \eme{j}{\tm'_0}{\tmtwo\sub{\var}{\tmthree}}
        + \eme{j}{\tm'_0}{\tmthree}
        + \eme{j}{\tm'_0}{\sctx}
      \\
      & = &
        \eme{j}{\tm'_0}{\tmtwo\sub{\var}{\tmthree}\garb{\tmthree}\sctx}
      \\
      & = &
        \eme{j}{\tm_0}{\tm'}
      \end{array}
    \]
  \item
    If $\tm = (\lam{\var}{\tmtwo})\sctx\,\tmthree$
    is a redex of degree $j$:
    Note that the step $\tm \tod{d} \tm'$ cannot be at the root,
    because $d < j$, so the redex at the root is not of degree $d$.
    There are three subcases, depending on whether the step is
    internal to $\tmtwo$, internal to $\sctx$, or internal to $\tmthree$.
    All these subcases are similar; we only give the proof
    for the case in which the step is internal to $\tmtwo$.
    Then the step is of the form
    $\tm
     = (\lam{\var}{\tmtwo})\sctx\,\tmthree
     \tod{d} (\lam{\var}{\tmtwo'})\sctx\,\tmthree
     = \tm'$
    with $\tmtwo \tod{d} \tmtwo'$, and we have:
    \[
      \begin{array}{rcll}
      &&
        \eme{j}{\tm_0}{\tm}
      \\
      & = &
        \eme{j}{\tm_0}{(\lam{\var}{\tmtwo})\sctx\,\tmthree}
      \\
      & = &
          \eme{j}{\tm_0}{\tmtwo}
        + \eme{j}{\tm_0}{\sctx}
        + \eme{j}{\tm_0}{\tmthree}
        + \ms{(j,\bme{j}{\tm_0})}
      \\
      & \mgeq &
          \eme{j}{\tm'_0}{\tmtwo'}
        + \eme{j}{\tm_0}{\sctx}
        + \eme{j}{\tm_0}{\tmthree}
        + \ms{(j,\bme{j}{\tm_0})}
        & \text{by the internal \ih}
      \\
      & \mgeq &
          \eme{j}{\tm'_0}{\tmtwo'}
        + \eme{j}{\tm'_0}{\sctx}
        + \eme{j}{\tm'_0}{\tmthree}
        + \ms{(j,\bme{j}{\tm_0})}
        & \text{by item~\ref{lower_reduction:eme_left_decrease}
                    and~\rlem{properties_of_mgtmap}}
      \\
      & \mgeq &
          \eme{j}{\tm'_0}{\tmtwo'}
        + \eme{j}{\tm'_0}{\sctx}
        + \eme{j}{\tm'_0}{\tmthree}
        + \ms{(j,\bme{j}{\tm'_0})}
        & \text{by item~\ref{lower_reduction:bme_decrease}}
      \\
      & = &
        \eme{j}{\tm'_0}{(\lam{\var}{\tmtwo'})\sctx\,\tmthree}
      \\
      & = &
        \eme{j}{\tm'_0}{\tm'}
      \end{array}
    \]
  \item
    \label{lower_reduction:eme_right_nonequal_decrease:application_non_redex}
    If $\tm = \tmtwo\,\tmthree$
    is not the redex contracted by the step $\tm \tod{d} \tm'$
    nor a redex of degree $j$:
    There are two subcases, depending on whether the step
    $\tm \tod{d} \tm'$ is internal to $\tmtwo$ or internal to $\tmthree$:
    \begin{enumerate}
    \item
      If the step is internal to $\tmtwo$,
      then the step is of the form
      $\tm
       = \tmtwo\,\tmthree
       \tod{d} \tmtwo'\,\tmthree
       = \tm'$
      with $\tmtwo \tod{d} \tmtwo'$.
      Note that $\tmtwo$ is not a $\symg$-abstraction of degree $j$,
      (because $\tmtwo\,\tmthree$ is not a redex of degree $j$).
      Moreover $d < j$
      so by \rlem{reduction_does_not_create_higher_degree_redexes}
      we have that $\tmtwo'$ is not a $\symg$-abstraction of degree $j$.
      Then:
      \[
        \begin{array}{rcll}
          \eme{j}{\tm_0}{\tm}
        & = &
          \eme{j}{\tm_0}{\tmtwo\,\tmthree}
        \\
        & = &
          \eme{j}{\tm_0}{\tmtwo} + \eme{j}{\tm_0}{\tmthree}
        \\
        & \mgeq &
          \eme{j}{\tm'_0}{\tmtwo'} + \eme{j}{\tm_0}{\tmthree}
          & \text{by the internal \ih}
        \\
        & \mgeq &
          \eme{j}{\tm'_0}{\tmtwo'} + \eme{j}{\tm'_0}{\tmthree}
          & \text{by item~\ref{lower_reduction:eme_left_decrease}
                      and~\rlem{properties_of_mgtmap}}
        \\
        & = &
          \eme{j}{\tm'_0}{\tmtwo'\,\tmthree}
        \\
        & = &
          \eme{j}{\tm'_0}{\tm'}
        \end{array}
      \]
    \item
      If the step is internal to $\tmthree$:
      Similar to the previous case.
    \end{enumerate}
  \item
    $\tm = \bin{\tmtwo}{\tmthree}$:
    Similar to case~\ref{lower_reduction:eme_right_nonequal_decrease:application_non_redex}.
  \item
    $\sctx = \ctxhole$:
    Impossible, as there are no reduction steps $\ctxhole \tod{d} \sctx'$.
  \item
    $\sctx = \bin{\sctx_1}{\tm}$:
    Similar to case~\ref{lower_reduction:eme_right_nonequal_decrease:application_non_redex}.
  \end{enumerate}
\item
  Let $1 \leq d \leq D$
  and $\tm \tod{d} \tm'$.
  We argue that $\ame{D}{\tm} \mgt \ame{D}{\tm'}$.
  Indeed:
  \[
    \begin{array}{rcll}
      \ame{D}{\tm}
    & = &
      \sum_{i=1}^{D} \eme{i}{\tm}{\tm}
    \\
    & = &
        \ame{d-1}{\tm}
      + \eme{d}{\tm}{\tm}
      + (\sum_{j=d+1}^{D} \eme{j}{\tm}{\tm})
    \\
    & \mgeq &
        \eme{d}{\tm}{\tm}
      + (\sum_{j=d+1}^{D} \eme{j}{\tm}{\tm})
      \\&&\HS\text{removing the first term}
    \\
    & \mgt &
        \ame{d-1}{\tm'}
      + \eme{d}{\tm'}{\tm'}
      + (\sum_{j=d+1}^{D} \eme{j}{\tm}{\tm})
      \\&&\HS\text{by item~\ref{lower_reduction:eme_right_equal_decrease},
                   taking $\mset := \ame{d-1}{\tm'}$}
    \\
    & \mgeq &
        \ame{d-1}{\tm'}
      + \eme{d}{\tm'}{\tm'}
      + (\sum_{j=d+1}^{D} \eme{j}{\tm'}{\tm'})
      \\&&\HS\text{by item~\ref{lower_reduction:eme_right_nonequal_decrease}}
    \\
    & = &
      \ame{D}{\tm'}
    \end{array}
  \]
\end{enumerate}
\end{proof}

\begin{proposition}[Forget/decrease]
\lprop{appendix:shrinking_ame}
Let $d \in \Natz$. Then the following hold:
\begin{enumerate}
\item
  \label{shrinking_ame:bme_decrease}
  If $\tm \shone \tm'$
  then $\bme{d}{\tm} \mgeq \bme{d}{\tm'}$.
\item
  \label{shrinking_ame:eme_left_decrease}
  If $\tm_0 \shone \tm'_0$
  then $\eme{d}{\tm_0}{\tm} \mgeq \eme{d}{\tm'_0}{\tm}$.
\item
  \label{shrinking_ame:eme_right_decrease}
  If $\tm_0 \shone \tm'_0$
  and $\tm \shone \tm'$
  then $\eme{d}{\tm_0}{\tm} \mgeq \eme{d}{\tm'_0}{\tm'}$.
\item
  \label{shrinking_ame:ame_decrease}
  If $\tm \shone \tm'$
  then $\ame{d}{\tm} \mgeq \ame{d}{\tm'}$.
\end{enumerate}
\end{proposition}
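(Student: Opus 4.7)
The plan is to prove all four items simultaneously by induction on $d$, following the same architecture used in \rprop{upper_reduction} and \rprop{lower_reduction}: item~\Item{\ref{shrinking_ame:bme_decrease}} invokes item~\Item{\ref{shrinking_ame:ame_decrease}} of the induction hypothesis at $d-1$; items~\Item{\ref{shrinking_ame:eme_left_decrease}} and~\Item{\ref{shrinking_ame:eme_right_decrease}} may use item~\Item{\ref{shrinking_ame:bme_decrease}} at the current $d$ and proceed by a nested induction on $\tm$, generalized also to memories; and item~\Item{\ref{shrinking_ame:ame_decrease}} follows by summing item~\Item{\ref{shrinking_ame:eme_right_decrease}} over degrees $1,\hdots,d$, instantiated with $\tm_0 := \tm$ and $\tm'_0 := \tm'$.

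The heart of the argument is item~\Item{\ref{shrinking_ame:bme_decrease}}. Given a forgetful step $\redex : \tm \shone \tm'$, I would put $X := \set{\redseq \ST \exists\tmtwo.\, \redseq : \tm \rtod{d} \tmtwo}$ and $Y := \set{\redseqtwo \ST \exists\tmtwo'.\, \redseqtwo : \tm' \rtod{d} \tmtwo'}$, and use postponement of forgetful reduction~(\rprop{retraction_before_shrinking}) to lift each $\redseqtwo \in Y$ to a reduction $\retract{\redseqtwo}{\redex} : \tm \rtod{d} \tmtwo_\redseqtwo$ with $\tmtwo_\redseqtwo \shone^* \tgt{\redseqtwo}$. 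Setting $\varphi(\redseqtwo) := \retract{\redseqtwo}{\redex}$ yields a map $\varphi : Y \to X$, which is injective by the ``furthermore'' clause of \rprop{retraction_before_shrinking}. Iterating item~\Item{\ref{shrinking_ame:ame_decrease}} of the \ih along the forgetful reduction $\tmtwo_\redseqtwo \shone^* \tgt{\redseqtwo}$ (valid because $d-1 < d$) delivers $\ame{d-1}{\tgt{\varphi(\redseqtwo)}} \mgeq \ame{d-1}{\tgt{\redseqtwo}}$ for every $\redseqtwo \in Y$. Letting $Z := X \setminus \varphi(Y)$, the bound $\bme{d}{\tm} = \msb{\ame{d-1}{\tgt{\redseq}}}{\redseq \in \varphi(Y)} + \msb{\ame{d-1}{\tgt{\redseq}}}{\redseq \in Z} \mgeq \msb{\ame{d-1}{\tgt{\varphi(\redseqtwo)}}}{\redseqtwo \in Y} \mgeq \msb{\ame{d-1}{\tgt{\redseqtwo}}}{\redseqtwo \in Y} = \bme{d}{\tm'}$ then closes the case (the first $\mgeq$ drops the non-negative $Z$ contribution and uses injectivity of $\varphi$, the second $\mgeq$ is pointwise).

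For items~\Item{\ref{shrinking_ame:eme_left_decrease}} and~\Item{\ref{shrinking_ame:eme_right_decrease}}, I would proceed by structural induction on $\tm$ with a parallel induction on memories. The cases for variables, abstractions, non-redex applications, wrappers where the step is internal, and memory cons are pure componentwise sums in which the hypotheses add up cleanly; when $\tm$ is an applied $\symg$-abstraction of degree $d$, the extra summand $\ms{(d,\bme{d}{\tm_0})}$ is dominated by $\ms{(d,\bme{d}{\tm'_0})}$ thanks to item~\Item{\ref{shrinking_ame:bme_decrease}} at the current $d$. The only genuinely new case is in item~\Item{\ref{shrinking_ame:eme_right_decrease}} when the forgetful step fires at the root of a wrapper, $\tm = \bin{\tm_1}{\tm_2} \shone \tm_1 = \tm'$: here $\eme{d}{\tm_0}{\bin{\tm_1}{\tm_2}} = \eme{d}{\tm_0}{\tm_1} + \eme{d}{\tm_0}{\tm_2} \mgeq \eme{d}{\tm_0}{\tm_1} \mgeq \eme{d}{\tm'_0}{\tm_1}$, where the first inequality drops a non-negative summand and the second invokes item~\Item{\ref{shrinking_ame:eme_left_decrease}} at the current $d$.

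The main obstacle is item~\Item{\ref{shrinking_ame:bme_decrease}}: one must orient the lifting correctly (postponement goes from $\tm'$-reductions \emph{up} to $\tm$-reductions, so $\varphi$ lands in $X$ rather than $Y$), keep the injectivity at hand to justify the reindexing of the sum, and make sure the invocation of item~\Item{\ref{shrinking_ame:ame_decrease}} is at the strictly smaller index $d-1$ so that the simultaneous induction remains well-founded.
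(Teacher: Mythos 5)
Your proposal is correct and follows essentially the same route as the paper's proof: a simultaneous induction on $d$ in which item~\Item{1} lifts each reduction from $\tm'$ to $\tm$ via \rprop{retraction_before_shrinking} and iterates item~\Item{4} of the \ih at $d-1$, items~\Item{2}--\Item{3} proceed by a nested structural induction (with the root-of-wrapper case discharged by dropping the memorized summand and invoking item~\Item{2}), and item~\Item{4} sums item~\Item{3} over all degrees. The construction of $\varphi$, its injectivity, and the multiset chain of inequalities all match the paper's argument.
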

\begin{proof}
We prove a more general version of the statement:
in items 
\ref{shrinking_ame:eme_left_decrease} and
\ref{shrinking_ame:eme_right_decrease}
we allow $\tm$ to be either a term or a memory.
For example, the statement of item~\ref{lower_reduction:eme_left_decrease}
is generalized as follows:
  if $\tm_0 \shone \tm'_0$
  then $\eme{j}{\tm_0}{\anon} \mgeq \eme{j}{\tm'_0}{\anon}$,
  where $\anon$ is either a term or a memory.

We prove all items simultaneously by induction on $d$.
Note that:
item~1. resorts to the \ih;
item~2. resorts to item~1. (without decreasing $d$);
item~3. resorts to items~1. and~2. (without decreasing $d$);
item~4. resorts to item~3. (without necessarily decreasing $d$).
\begin{enumerate}
\item
  Let $\tm \shone \tm'$.
  We argue that $\bme{d}{\tm} \mgeq \bme{d}{\tm'}$.
  Let $X$ and $Y$
  be the sets of reduction sequences
  $X := \set{\redseq \ST (\exists{\tmtwo})\ \redseq : \tm \rtod{d} \tmtwo}$
  and
  $Y := \set{\redseqtwo \ST (\exists{\tmtwo'})\ \redseqtwo : \tm' \rtod{d} \tmtwo'}$.
  Note that, by definition,
  $\bme{j}{\tm} =
   \msb{\ame{d-1}{\tgt{\redseq}}}{\redseq \in X}$
  and
  $\bme{j}{\tm'} =
   \msb{\ame{d-1}{\tgt{\redseqtwo}}}{\redseqtwo \in Y}$.
  We construct a function $\varphi : Y \to X$ as follows.
  Consider a forgetful step $\redex : \tm \shone \tm'$;
  there may be more than one such step, but there is at least one
  by hypothesis.
  By postponement of forgetful reduction~\rprop{retraction_before_shrinking},
  for each reduction sequence $\redseqtwo : \tm' \rtod{d} \tmtwo'$
  there exists a term $\tmtwo_\redseqtwo$
  such that $\tmtwo_\redseqtwo \shone^* \tmtwo'$
  and
  a reduction sequence
  $\retract{\redseqtwo}{\redex} : \tm \rtod{d} \tmtwo_\redseqtwo$.
  In particular, $(\retract{\redseqtwo}{\redex}) \in X$,
  and we can define $\varphi(\redseqtwo) := \retract{\redseqtwo}{\redex}$.
  Moreover, $\varphi$ is injective
  because if $\redseqtwo_1,\redseqtwo_2 \in Y$
  are such that $\retract{\redseqtwo_1}{\redex} = \retract{\redseqtwo_2}{\redex}$
  then by~\rprop{retraction_before_shrinking}
  we have that $\redseqtwo_1 = \redseqtwo_2$.

  First, we claim that
  $\ame{d-1}{\tgt{\varphi(\redseqtwo)}} \mgeq \ame{d-1}{\tgt{\redseqtwo}}$
  for every $\redseqtwo \in Y$. Indeed:
  \[
    \begin{array}{rcll}
      \ame{d-1}{\tgt{\varphi(\redseqtwo)}}
    & = &
      \ame{d-1}{\tmtwo_\redseqtwo}
    \\
    & \mgeq &
      \ame{d-1}{\tmtwo'}
      & \text{by item~\ref{shrinking_ame:ame_decrease} of the \ih}
    \\
    & = &
      \ame{d-1}{\tgt{\redseqtwo}}
    \end{array}
  \]
  To be able to apply item~\ref{shrinking_ame:ame_decrease} of the \ih,
  observe that we have $d - 1 < d$.
  We apply the \ih as many times as the number of forgetful steps
  in $\tmtwo_\redseqtwo \shone^* \tmtwo'$.

  To conclude the proof, let $Z = X \setminus \varphi(Y)$,
  so that $X = \varphi(Y) \uplus Z$,
  and note that:
  \[
    \begin{array}{rcll}
      \bme{d}{\tm}
    & = &
      \msb{\ame{d-1}{\tgt{\redseq}}}{\redseq \in X}
    \\
    & = &
      \msb{\ame{d-1}{\tgt{\redseq}}}{\redseq \in \varphi(Y) \uplus Z}
    \\
    & = &
        \msb{\ame{d-1}{\tgt{\redseq}}}{\redseq \in \varphi(Y)}
      + \msb{\ame{d-1}{\tgt{\redseq}}}{\redseq \in Z}
    \\
    & \mgeq &
      \msb{\ame{d-1}{\tgt{\redseq}}}{\redseq \in \varphi(Y)}
    \\
    & = &
      \msb{\ame{d-1}{\tgt{\varphi(\redseqtwo)}}}{\redseqtwo \in Y}
      & \text{($\star$)}
    \\
    & \mgeq &
      \msb{\ame{d-1}{\tgt{\redseqtwo}}}{\redseqtwo \in Y}
      & \text{($\star\star$)}
    \\
    & = &
      \bme{d}{\tm'}
    \end{array}
  \]
  To justify the step marked with ($\star$), note that $\varphi$ is injective,
  so $Y$ and $\varphi(Y)$ have the same cardinality.
  To justify the step marked with ($\star\star$),
  note that
  $\msb{\ame{d-1}{\tgt{\varphi(\redseqtwo)}}}{\redseqtwo \in Y}
   = \sum_{\redseqtwo \in Y} \ms{\ame{d-1}{\tgt{\varphi(\redseqtwo)}}}
   \mgeq \sum_{\redseqtwo \in Y} \ms{\ame{d-1}{\tgt{\redseqtwo}}}
   = \msb{\ame{d-1}{\tgt{\redseqtwo}}}{\redseqtwo \in Y}$
  because 
  $\ame{d-1}{\tgt{\varphi(\redseqtwo)}} \mgeq \ame{d-1}{\tgt{\redseqtwo}}$,
  as we have already justified.
\item
  Let $\tm_0 \shone \tm'_0$ and
  let $\anon$ be either a term or a memory.
  We argue that $\eme{d}{\tm_0}{\anon} \mgeq \eme{d}{\tm'_0}{\anon}$.
  We proceed by induction on $\anon$:
  \begin{enumerate}
  \item
    $\tm = \var$:
    Then
    $\eme{d}{\tm_0}{\var}
     = \msempty
     \mgeq \msempty
     = \eme{d}{\tm'_0}{\var}$.
  \item
    $\tm = \lam{\var}{\tmtwo}$:
    Then
    $\eme{d}{\tm_0}{\lam{\var}{\tmtwo}}
     = \eme{d}{\tm_0}{\tmtwo}
     \mgeq \eme{d}{\tm'_0}{\tmtwo}
     = \eme{d}{\tm'_0}{\lam{\var}{\tmtwo}}$
    by the internal \ih.
  \item
    If $\tm = (\lam{\var}{\tmtwo})\sctx\,\tmthree$ is a redex of degree $d$:
    Then:
    \[
      \begin{array}{rcll}
        \eme{d}{\tm_0}{\tm}
      & = &
        \eme{d}{\tm_0}{(\lam{\var}{\tmtwo})\sctx\,\tmthree}
      \\
      & = &
          \eme{d}{\tm_0}{\tmtwo}
        + \eme{d}{\tm_0}{\sctx}
        + \eme{d}{\tm_0}{\tmthree}
        + \ms{(d,\bme{d}{\tm_0})}
      \\
      & \mgeq &
          \eme{d}{\tm'_0}{\tmtwo}
        + \eme{d}{\tm'_0}{\sctx}
        + \eme{d}{\tm'_0}{\tmthree}
        + \ms{(d,\bme{d}{\tm_0})}
        & \text{by the internal \ih}
      \\
      & \mgeq &
          \eme{d}{\tm'_0}{\tmtwo}
        + \eme{d}{\tm'_0}{\sctx}
        + \eme{d}{\tm'_0}{\tmthree}
        + \ms{(d,\bme{d}{\tm'_0})}
        & \text{by item~\ref{shrinking_ame:bme_decrease}}
      \\
      & = &
        \eme{d}{\tm'_0}{(\lam{\var}{\tmtwo})\sctx\,\tmthree}
      \\
      & = &
        \eme{d}{\tm'_0}{\tm}
      \end{array}
    \]
  \item
    \label{shrinking_ame:eme_left_decrease:application_non_redex}
    If $\tm = \tmtwo\,\tmthree$ is not a redex of degree $d$:
    Then:
    \[
      \begin{array}{rcll}
        \eme{d}{\tm_0}{\tm}
      & = &
        \eme{d}{\tm_0}{\tmtwo\,\tmthree}
      \\
      & = &
          \eme{d}{\tm_0}{\tmtwo}
        + \eme{d}{\tm_0}{\tmthree}
      \\
      & \mgeq &
          \eme{d}{\tm'_0}{\tmtwo}
        + \eme{d}{\tm'_0}{\tmthree}
        & \text{by the internal \ih}
      \\
      & = &
        \eme{d}{\tm'_0}{\tm}
      \end{array}
    \]
  \item
    $\tm = \bin{\tmtwo}{\tmthree}$:
    Similar to case~\ref{shrinking_ame:eme_left_decrease:application_non_redex}.
  \item
    $\sctx = \ctxhole$:
    Then
    $\eme{d}{\tm_0}{\ctxhole}
     = \msempty
     \mgeq \msempty
     = \eme{d}{\tm'_0}{\ctxhole}$.
  \item
    $\sctx = \bin{\sctx_1}{\tm}$:
    Similar to case~\ref{shrinking_ame:eme_left_decrease:application_non_redex}.
  \end{enumerate}
\item
  Let $\tm_0 \shone \tm'_0$
  and $\anon \shone \anon'$,
  where $\anon$ and $\anon'$
  are either terms ($\anon = \tm$ and $\anon = \tm'$)
  or memories ($\anon = \sctx$ and $\anon = \sctx'$).
  We argue that $\eme{d}{\tm_0}{\anon} \mgeq \eme{d}{\tm'_0}{\anon'}$.
  We proceed by induction on $\anon$:
  \begin{enumerate}
  \item
    $\tm = \var$: 
    Impossible, as there are no forgetful steps $\var \shone \tm'$.
  \item
    $\tm = \lam{\var}{\tmtwo}$: 
    Then
    $\tm
     = \lam{\var}{\tmtwo}
     \shone \lam{\var}{\tmtwo'}
     = \tm'$
    with $\tmtwo \shone \tmtwo'$
    and
    $\eme{d}{\tm_0}{\tm}
    = \eme{d}{\tm_0}{\lam{\var}{\tmtwo}}
    = \eme{d}{\tm_0}{\tmtwo}
    \mgeq \eme{d}{\tm'_0}{\tmtwo'}
    \mgeq \eme{d}{\tm'_0}{\lam{\var}{\tmtwo'}}
    \mgeq \eme{d}{\tm'_0}{\tm'}$
    by the internal \ih.
  \item
    If $\tm = (\lam{\var}{\tmtwo})\sctx\,\tmthree$ is a redex of degree $d$: 
    There are three subcases, depending on whether the forgetful step
    $\tm \shone \tm'$ is internal to $\tmtwo$, internal to $\sctx$,
    or internal to $\tmthree$.
    All these subcases are similar; we only give the proof
    for the case in which the step is internal to $\tmtwo$.
    Then
    $\tm
     = (\lam{\var}{\tmtwo})\sctx\,\tmthree
     \shone (\lam{\var}{\tmtwo'})\sctx\,\tmthree
     = \tm'$
    with $\tmtwo \shone \tmtwo'$
    and:
    \[
      \begin{array}{rcll}
        \eme{d}{\tm_0}{\tm}
      & = &
        \eme{d}{\tm_0}{(\lam{\var}{\tmtwo})\sctx\,\tmthree}
      \\
      & = &
          \eme{d}{\tm_0}{\tmtwo}
        + \eme{d}{\tm_0}{\sctx}
        + \eme{d}{\tm_0}{\tmthree}
        + \ms{(d,\bme{d}{\tm_0})}
      \\
      & \mgeq &
          \eme{d}{\tm'_0}{\tmtwo'}
        + \eme{d}{\tm_0}{\sctx}
        + \eme{d}{\tm_0}{\tmthree}
        + \ms{(d,\bme{d}{\tm_0})}
        & \text{by the internal \ih}
      \\
      & \mgeq &
          \eme{d}{\tm'_0}{\tmtwo'}
        + \eme{d}{\tm'_0}{\sctx}
        + \eme{d}{\tm'_0}{\tmthree}
        + \ms{(d,\bme{d}{\tm_0})}
        & \text{by item~\ref{shrinking_ame:eme_left_decrease}}
      \\
      & \mgeq &
          \eme{d}{\tm'_0}{\tmtwo'}
        + \eme{d}{\tm'_0}{\sctx}
        + \eme{d}{\tm'_0}{\tmthree}
        + \ms{(d,\bme{d}{\tm'_0})}
        & \text{by item~\ref{shrinking_ame:bme_decrease}}
      \\
      & = &
        \eme{d}{\tm'_0}{(\lam{\var}{\tmtwo'})\sctx\,\tmthree}
      \\
      & = &
        \eme{d}{\tm'_0}{\tm'}
      \end{array}
    \]
  \item
    \label{shrinking_ame:eme_right_decrease:application_non_redex}
    If $\tm = \tmtwo\,\tmthree$ is not a redex of degree $d$: 
    There are two subcases, depending on whether the forgetful step
    $\tm \shone \tm'$ is internal to $\tmtwo$ or internal to $\tmthree$.
    These subcases are similar; we only give the proof
    for the case in which the step is internal to $\tmtwo$.
    Then $\tm = \tmtwo\,\tmthree \shone \tmtwo'\,\tmthree = \tm'$
    with $\tmtwo \shone \tmtwo'$ and:
    \[
      \begin{array}{rcll}
        \eme{d}{\tm_0}{\tm}
      & = &
        \eme{d}{\tm_0}{\tmtwo\,\tmthree}
      \\
      & = &
          \eme{d}{\tm_0}{\tmtwo}
        + \eme{d}{\tm_0}{\tmthree}
      \\
      & \mgeq &
          \eme{d}{\tm'_0}{\tmtwo'}
        + \eme{d}{\tm_0}{\tmthree}
        & \text{by the internal \ih}
      \\
      & \mgeq &
          \eme{d}{\tm'_0}{\tmtwo'}
        + \eme{d}{\tm'_0}{\tmthree}
        & \text{by item~\ref{shrinking_ame:eme_left_decrease}}
      \\
      & = &
        \eme{d}{\tm'_0}{\tmtwo'\,\tmthree}
      \\
      & = &
        \eme{d}{\tm'_0}{\tm'}
      \end{array}
    \]
  \item
    \label{shrinking_ame:eme_right_decrease:bin}
    $\tm = \bin{\tmtwo}{\tmthree}$:
    There are three subcases, depending on whether the forgetful step
    $\tm \shone \tm'$ is at the root, internal to $\tmtwo$, or internal
    to $\tmthree$:
    \begin{enumerate}
    \item
      If the step is at the root:
      Then the step is of the form
      $\tm = \bin{\tmtwo}{\tmthree} \shone \tmtwo = \tm'$
      and:
      \[
        \begin{array}{rcll}
          \eme{d}{\tm_0}{\tm}
        & = &
          \eme{d}{\tm_0}{\bin{\tmtwo}{\tmthree}}
        \\
        & = &
            \eme{d}{\tm_0}{\tmtwo}
          + \eme{d}{\tm_0}{\tmthree}
        \\
        & \mgeq &
          \eme{d}{\tm_0}{\tmtwo}
        \\
        & \mgeq &
          \eme{d}{\tm'_0}{\tmtwo}
          & \text{by item~\ref{shrinking_ame:eme_left_decrease}}
        \\
        & = &
          \eme{d}{\tm'_0}{\tm'}
        \end{array}
      \]
    \item
      If the step is internal to $\tmtwo$:
      Similar to case~\ref{shrinking_ame:eme_right_decrease:application_non_redex}.
    \item
      If the step is internal to $\tmthree$:
      Similar to case~\ref{shrinking_ame:eme_right_decrease:application_non_redex}.
    \end{enumerate}
  \item
    $\sctx = \ctxhole$:
    Impossible, as there are no forgetful steps $\ctxhole \shone \sctx'$.
  \item
    $\sctx = \bin{\sctx_1}{\tm}$:
    Similar to case~\ref{shrinking_ame:eme_right_decrease:bin}.
  \end{enumerate}
\item
  Let $\tm \shone \tm'$.
  We argue that $\ame{d}{\tm} \mgeq \ame{d}{\tm'}$.
  Indeed:
  \[
    \begin{array}{rcll}
      \ame{d}{\tm}
    & = &
      \sum_{i=1}^{d} \eme{d}{\tm}{\tm}
    \\
    & \mgeq &
      \sum_{i=1}^{d} \eme{d}{\tm'}{\tm'}
      & \text{by item~\ref{shrinking_ame:eme_right_decrease},
              resorting to the \ih when $i < d$}
    \\
    & = &
      \ame{d}{\tm'}
    \end{array}
  \]
  Note that for the value $i = d$, we resort directly to
  item~\ref{shrinking_ame:eme_right_decrease} and not to the \ih.
\end{enumerate}
\end{proof}

\end{document}